% Generated from paper.lhs using lhs2tex (https://hackage.haskell.org/package/lhs2tex)

% Options for packages loaded elsewhere
\PassOptionsToPackage{unicode}{hyperref}
\PassOptionsToPackage{hyphens}{url}
\documentclass[
  11pt]{ociamthesis}
\usepackage{lmodern}
\usepackage{amssymb,amsmath}
\usepackage{ifxetex,ifluatex}
\ifnum 0\ifxetex 1\fi\ifluatex 1\fi=0 % if pdftex
  \usepackage[T1]{fontenc}
  \usepackage[utf8]{inputenc}
  \usepackage{textcomp} % provide euro and other symbols
\else % if luatex or xetex
  \usepackage{unicode-math}
  \defaultfontfeatures{Scale=MatchLowercase}
  \defaultfontfeatures[\rmfamily]{Ligatures=TeX,Scale=1}
\fi
% Use upquote if available, for straight quotes in verbatim environments
\IfFileExists{upquote.sty}{\usepackage{upquote}}{}
\IfFileExists{microtype.sty}{% use microtype if available
  \usepackage[]{microtype}
  \UseMicrotypeSet[protrusion]{basicmath} % disable protrusion for tt fonts
}{}
\makeatletter
\@ifundefined{KOMAClassName}{% if non-KOMA class
  \IfFileExists{parskip.sty}{%
    \usepackage{parskip}
  }{% else
    \setlength{\parindent}{0pt}
    \setlength{\parskip}{6pt plus 2pt minus 1pt}}
}{% if KOMA class
  \KOMAoptions{parskip=half}}
\makeatother
\usepackage{xcolor}
\IfFileExists{xurl.sty}{\usepackage{xurl}}{} % add URL line breaks if available
\IfFileExists{bookmark.sty}{\usepackage{bookmark}}{\usepackage{hyperref}}
\hypersetup{
  pdftitle={Understanding Profunctor Optics : a representation theorem},
  pdfauthor={Guillaume Boisseau},
  hidelinks,
  pdfcreator={LaTeX via pandoc}}
\urlstyle{same} % disable monospaced font for URLs
\setlength{\emergencystretch}{3em} % prevent overfull lines
\providecommand{\tightlist}{%
  \setlength{\itemsep}{0pt}\setlength{\parskip}{0pt}}
\setcounter{secnumdepth}{5}
%% ODER: format ==         = "\mathrel{==}"
%% ODER: format /=         = "\neq "
%
%
\makeatletter
\@ifundefined{lhs2tex.lhs2tex.sty.read}%
  {\@namedef{lhs2tex.lhs2tex.sty.read}{}%
   \newcommand\SkipToFmtEnd{}%
   \newcommand\EndFmtInput{}%
   \long\def\SkipToFmtEnd#1\EndFmtInput{}%
  }\SkipToFmtEnd

\newcommand\ReadOnlyOnce[1]{\@ifundefined{#1}{\@namedef{#1}{}}\SkipToFmtEnd}
\usepackage{amstext}
\usepackage{amssymb}
\usepackage{stmaryrd}
\DeclareFontFamily{OT1}{cmtex}{}
\DeclareFontShape{OT1}{cmtex}{m}{n}
  {<5><6><7><8>cmtex8
   <9>cmtex9
   <10><10.95><12><14.4><17.28><20.74><24.88>cmtex10}{}
\DeclareFontShape{OT1}{cmtex}{m}{it}
  {<-> ssub * cmtt/m/it}{}

\DeclareFontShape{OT1}{cmtt}{bx}{n}
  {<5><6><7><8>cmtt8
   <9>cmbtt9
   <10><10.95><12><14.4><17.28><20.74><24.88>cmbtt10}{}
\DeclareFontShape{OT1}{cmtex}{bx}{n}
  {<-> ssub * cmtt/bx/n}{}
	% NEU

\newcommand{\Conid}[1]{\mathit{#1}}
\newcommand{\Varid}[1]{\mathit{#1}}
\newcommand{\anonymous}{\kern0.06em \vbox{\hrule\@width.5em}}

% suggested by Neil Mitchell

\usepackage{polytable}

%mathindent has to be defined
\@ifundefined{mathindent}%
  {\newdimen\mathindent\mathindent\leftmargini}%
  {}%

\def\resethooks{%
  \global\let\SaveRestoreHook\empty
  \global\let\ColumnHook\empty}
\newcommand*{\savecolumns}[1][default]%
  {\g@addto@macro\SaveRestoreHook{\savecolumns[#1]}}
\newcommand*{\restorecolumns}[1][default]%
  {\g@addto@macro\SaveRestoreHook{\restorecolumns[#1]}}
\newcommand*{\aligncolumn}[2]%
  {\g@addto@macro\ColumnHook{\column{#1}{#2}}}

\resethooks

\newcommand{\onelinecommentchars}{\quad-{}- }
\newcommand{\commentbeginchars}{\enskip\{-}
\newcommand{\commentendchars}{-\}\enskip}

\newcommand{\visiblecomments}{%
  \let\onelinecomment=\onelinecommentchars
  \let\commentbegin=\commentbeginchars
  \let\commentend=\commentendchars}

\newcommand{\invisiblecomments}{%
  \let\onelinecomment=\empty
  \let\commentbegin=\empty
  \let\commentend=\empty}

\visiblecomments

\newlength{\blanklineskip}
\setlength{\blanklineskip}{0.66084ex}

\newcommand{\hsindent}[1]{\quad}% default is fixed indentation
\let\hspre\empty
\let\hspost\empty

\EndFmtInput
\makeatother
%
%
%
%
%
%
% This package provides two environments suitable to take the place
% of hscode, called "plainhscode" and "arrayhscode". 
%
% The plain environment surrounds each code block by vertical space,
% and it uses \abovedisplayskip and \belowdisplayskip to get spacing
% similar to formulas. Note that if these dimensions are changed,
% the spacing around displayed math formulas changes as well.
% All code is indented using \leftskip.
%
% Changed 19.08.2004 to reflect changes in colorcode. Should work with
% CodeGroup.sty.
%
\ReadOnlyOnce{polycode.fmt}%
\makeatletter

\newcommand{\hsnewpar}[1]%
  {{\parskip=0pt\parindent=0pt\par\vskip #1\noindent}}

% can be used, for instance, to redefine the code size, by setting the
% command to \small or something alike
\newcommand{\hscodestyle}{}

% The command \sethscode can be used to switch the code formatting
% behaviour by mapping the hscode environment in the subst directive
% to a new LaTeX environment.

\newcommand{\sethscode}[1]%
  {\expandafter\let\expandafter\hscode\csname #1\endcsname
   \expandafter\let\expandafter\endhscode\csname end#1\endcsname}

% "compatibility" mode restores the non-polycode.fmt layout.

%
  {\par\noindent
   \advance\leftskip\mathindent
   \hscodestyle
   \let\\=\@normalcr
   \let\hspre\(\let\hspost\)%
   \pboxed}%
  {\endpboxed\)%
   \par\noindent
   \ignorespacesafterend}

% "plain" mode is the proposed default.
% It should now work with \centering.
% This required some changes. The old version
% is still available for reference as oldplainhscode.

%
  {\hsnewpar\abovedisplayskip
   \advance\leftskip\mathindent
   \hscodestyle
   \let\hspre\(\let\hspost\)%
   \pboxed}%
  {\endpboxed%
   \hsnewpar\belowdisplayskip
   \ignorespacesafterend}

  {\hsnewpar\abovedisplayskip
   \advance\leftskip\mathindent
   \hscodestyle
   \let\\=\@normalcr
   \(\pboxed}%
  {\endpboxed\)%
   \hsnewpar\belowdisplayskip
   \ignorespacesafterend}

% Here, we make plainhscode the default environment.

\newcommand{\plainhs}{\sethscode{plainhscode}}

\plainhs

% The arrayhscode is like plain, but makes use of polytable's
% parray environment which disallows page breaks in code blocks.

%
  {\hsnewpar\abovedisplayskip
   \advance\leftskip\mathindent
   \hscodestyle
   \let\\=\@normalcr
   \(\parray}%
  {\endparray\)%
   \hsnewpar\belowdisplayskip
   \ignorespacesafterend}

% The mathhscode environment also makes use of polytable's parray 
% environment. It is supposed to be used only inside math mode 
% (I used it to typeset the type rules in my thesis).

%
  {\parray}{\endparray}

% texths is similar to mathhs, but works in text mode.

%
  {\(\parray}{\endparray\)}

% The framed environment places code in a framed box.

\def\codeframewidth{\arrayrulewidth}
\RequirePackage{calc}

  {\parskip=\abovedisplayskip\par\noindent
   \hscodestyle
   \arrayrulewidth=\codeframewidth
   \tabular{@{}|p{\linewidth-2\arraycolsep-2\arrayrulewidth-2pt}|@{}}%
   \hline\framedhslinecorrect\\{-1.5ex}%
   \let\endoflinesave=\\
   \let\\=\@normalcr
   \(\pboxed}%
  {\endpboxed\)%
   \framedhslinecorrect\endoflinesave{.5ex}\hline
   \endtabular
   \parskip=\belowdisplayskip\par\noindent
   \ignorespacesafterend}

\newcommand{\framedhslinecorrect}[2]%
  {#1[#2]}

% The inlinehscode environment is an experimental environment
% that can be used to typeset displayed code inline.

%
  {\(\def\column##1##2{}%
   \let\>\undefined\let\<\undefined\let\\\undefined
   \newcommand\>[1][]{}\newcommand\<[1][]{}\newcommand\\[1][]{}%
   \def\fromto##1##2##3{##3}%
   }{\) }%

% The joincode environment is a separate environment that
% can be used to surround and thereby connect multiple code
% blocks.

%
  {\let\orighscode=\hscode
   \let\origendhscode=\endhscode
   \def\endhscode{\def\hscode{\endgroup\def\@currenvir{hscode}\\}\begingroup}
   \orighscode\def\hscode{\endgroup\def\@currenvir{hscode}}}%
  {\origendhscode
   \global\let\hscode=\orighscode
   \global\let\endhscode=\origendhscode}%

\makeatother
\EndFmtInput
%
%
%
% First, let's redefine the forall, and the dot.
%
%
% This is made in such a way that after a forall, the next
% dot will be printed as a period, otherwise the formatting
% of `comp_` is used. By redefining `comp_`, as suitable
% composition operator can be chosen. Similarly, period_
% is used for the period.
%
\ReadOnlyOnce{forall.fmt}%
\makeatletter

% The HaskellResetHook is a list to which things can
% be added that reset the Haskell state to the beginning.
% This is to recover from states where the hacked intelligence
% is not sufficient.

\let\HaskellResetHook\empty
\newcommand*{\AtHaskellReset}[1]{%
  \g@addto@macro\HaskellResetHook{#1}}
\newcommand*{\HaskellReset}{\HaskellResetHook}

\newcommand\hsforall{\global\let\hsdot=\hsperiodonce}
\newcommand*\hsperiodonce[2]{#2\global\let\hsdot=\hscompose}
\newcommand*\hscompose[2]{#1}

\AtHaskellReset{\global\let\hsdot=\hscompose}

% In the beginning, we should reset Haskell once.
\HaskellReset

\makeatother
\EndFmtInput
\ReadOnlyOnce{exists.fmt}%
\makeatletter

\newcommand\hsexists{\global\let\hsdot=\hsperiodonce}

\AtHaskellReset{\global\let\hsdot=\hscompose}

% In the beginning, we should reset Haskell once.
\HaskellReset

\makeatother
\EndFmtInput

\def\commentbegin{\quad\{\ }
\def\commentend{\}}

\usepackage[toc,page]{appendix}
\usepackage[style=alphabetic]{biblatex}

\usepackage{amsthm}
\usepackage[nameinlink]{cleveref}
\theoremstyle{plain}
\newtheorem{thm}{Theorem}[chapter]
\newtheorem{lemma}[thm]{Lemma}
\newtheorem{crly}[thm]{Corollary}
\newtheorem{prop}[thm]{Proposition}
\theoremstyle{definition}
\newtheorem{defn}[thm]{Definition}
\newtheorem{expl}[thm]{Example}
\theoremstyle{remark}
\newtheorem{rmk}[thm]{Remark}
\newtheorem{note}[thm]{Note}
\newtheorem{notation}[thm]{Notation}

\renewcommand{\textcite}{\cref}

\college{St Anne's College}
\degree{MSc in Computer Science}
\degreedate{Trinity 2017}

\usepackage{color}
\usepackage{fancyvrb}

\DefineVerbatimEnvironment{Highlighting}{Verbatim}{commandchars=\\\{\}}
% Add ',fontsize=\small' for more characters per line
\newenvironment{Shaded}{}{}
\newcommand{\KeywordTok}[1]{\textcolor[rgb]{0.00,0.44,0.13}{\textbf{#1}}}
\newcommand{\DataTypeTok}[1]{\textcolor[rgb]{0.56,0.13,0.00}{#1}}

\newcommand{\CommentTok}[1]{\textcolor[rgb]{0.38,0.63,0.69}{\textit{#1}}}

\newcommand{\OtherTok}[1]{\textcolor[rgb]{0.00,0.44,0.13}{#1}}
\newcommand{\FunctionTok}[1]{\textcolor[rgb]{0.02,0.16,0.49}{#1}}

\newcommand{\OperatorTok}[1]{\textcolor[rgb]{0.40,0.40,0.40}{#1}}

\newcommand{\NormalTok}[1]{#1}
\usepackage[style=alphabetic,]{biblatex}
\addbibresource{biblio.bib}

\title{Understanding Profunctor Optics : a representation theorem}
\author{Guillaume Boisseau}
\date{}

\begin{document}
\maketitle

\begin{abstract}

\emph{Optics}, aka \emph{functional references}, are classes of tools
that allow composable access into compound data structures. Usually
defined as programming language libraries, they provide combinators to
manipulate different shapes of data such as sums, products and
collections, that can be composed to operate on larger structures.
Together they form a powerful language to describe transformations of
data.

Among the different approaches to describing optics, one particular type
of optics, called \emph{profunctor optics}, stands out. It describes
alternative but equivalent representations of most of the common
combinators, and enhances them with elegant composability properties via
a higher-order encoding. Notably, it enables easy composition across
different \emph{optic families}.

Unfortunately, profunctor optics are difficult to reason about, and
linking usual optics with an equivalent profunctor representation has so
far been done on a case-by-case basis, with definitions that sometimes
seem very \emph{ad hoc}. This makes it hard both to analyse properties
of existing profunctor optics and to define new ones.

This thesis presents an equivalent representation of profunctor optics,
called \emph{isomorphism optics}, that is both closer to intuition and
easier to reason about. This tool enables powerful theorems to be
derived generically about profunctor optics. Finally, this thesis
develops a framework to ease deriving new profunctor encodings from
concrete optic families.

\end{abstract}

{
\setcounter{tocdepth}{2}
\tableofcontents
}
\hypertarget{introduction}{%
\chapter{Introduction}\label{introduction}}

Compound data structures (structures made from the composition of
records, variants, collections, \ldots) are pervasive in software
development, and even more so in functional programming languages.
Though these structures are inherently modular, accessing their
components -- that is querying properties, extracting values or
modifying them -- is less so. In Haskell for example, modifying a field
of a record requires explicit unwrapping and rewrapping of the values in
the record, which becomes quickly impractical when nesting records.

In the recent years, a number of mechanisms have emerged to solve this
problem and allow composable access into all sorts of data shapes.
Collectively dubbed \emph{functional references}, or \emph{optics}, they
have been developed in the form of libraries of combinators, and are now
becoming an important tool in the functional programmer's toolbox.

\hypertarget{simple-optics}{%
\section{Simple optics}\label{simple-optics}}

The most common of these optics, called a \emph{simple lens}, allows
getting and setting a value of a given type \ensuremath{\Varid{a}} present in a larger
structure of type \ensuremath{\Varid{s}}. In its most basic form, a lens is simply a record
with two functions \ensuremath{\Varid{get}} and \ensuremath{\Varid{put}} (sometimes called \ensuremath{\Varid{view}} and
\ensuremath{\Varid{update}}):

\begin{hscode}\SaveRestoreHook
\column{B}{@{}>{\hspre}l<{\hspost}@{}}%
\column{E}{@{}>{\hspre}l<{\hspost}@{}}%
\>[B]{}\mathbf{data}\;\Conid{Lens}\;\Varid{a}\;\Varid{s}\mathrel{=}\Conid{Lens}\;\{\mskip1.5mu \Varid{get}\mathbin{::}\Varid{s}\to \Varid{a},\Varid{put}\mathbin{::}\Varid{a}\to \Varid{s}\to \Varid{s}\mskip1.5mu\}{}\<[E]%
\ColumnHook
\end{hscode}\resethooks

\ensuremath{\Varid{get}} retrieves the contained value of type \ensuremath{\Varid{a}}, and \ensuremath{\Varid{put}} replaces it
with a new provided value.

For example, here is a simple lens into the left component of a pair:

\begin{hscode}\SaveRestoreHook
\column{B}{@{}>{\hspre}l<{\hspost}@{}}%
\column{3}{@{}>{\hspre}l<{\hspost}@{}}%
\column{5}{@{}>{\hspre}l<{\hspost}@{}}%
\column{E}{@{}>{\hspre}l<{\hspost}@{}}%
\>[B]{}\Varid{first}\mathbin{::}\Conid{Lens}\;\Varid{a}\;(\Varid{a},\Varid{b}){}\<[E]%
\\
\>[B]{}\Varid{first}\mathrel{=}\Conid{Lens}\;\Varid{get}\;\Varid{put}{}\<[E]%
\\
\>[B]{}\hsindent{3}{}\<[3]%
\>[3]{}\mathbf{where}{}\<[E]%
\\
\>[3]{}\hsindent{2}{}\<[5]%
\>[5]{}\Varid{get}\;(\Varid{x},\Varid{y})\mathrel{=}\Varid{x}{}\<[E]%
\\
\>[3]{}\hsindent{2}{}\<[5]%
\>[5]{}\Varid{put}\;\Varid{x'}\;(\anonymous ,\Varid{y})\mathrel{=}(\Varid{x'},\Varid{y}){}\<[E]%
\ColumnHook
\end{hscode}\resethooks

This lens can be used as follows:

\begin{hscode}\SaveRestoreHook
\column{B}{@{}>{\hspre}l<{\hspost}@{}}%
\column{E}{@{}>{\hspre}l<{\hspost}@{}}%
\>[B]{}\Varid{get}\;\Varid{first}\;(\mathrm{4},\text{\ttfamily \char34 hello\char34}){}\<[E]%
\\
\>[B]{}\mbox{\onelinecomment  \text{\ttfamily 4}}{}\<[E]%
\\
\>[B]{}\Varid{put}\;\Varid{first}\;\mathrm{12}\;(\mathrm{4},\text{\ttfamily \char34 hello\char34}){}\<[E]%
\\
\>[B]{}\mbox{\onelinecomment  \text{\ttfamily \char40{}12\char44{}~\char34{}hello\char34{}\char41{}}}{}\<[E]%
\ColumnHook
\end{hscode}\resethooks

So far, we have simply tupled together two accessors to get more generic
code. The real power of these constructions comes from their
composability: we can define a function:

\begin{hscode}\SaveRestoreHook
\column{B}{@{}>{\hspre}l<{\hspost}@{}}%
\column{3}{@{}>{\hspre}l<{\hspost}@{}}%
\column{5}{@{}>{\hspre}l<{\hspost}@{}}%
\column{E}{@{}>{\hspre}l<{\hspost}@{}}%
\>[B]{}\Varid{composeLens}\mathbin{::}\Conid{Lens}\;\Varid{a}\;\Varid{s}\to \Conid{Lens}\;\Varid{x}\;\Varid{a}\to \Conid{Lens}\;\Varid{x}\;\Varid{s}{}\<[E]%
\\
\>[B]{}\Varid{composeLens}\;(\Conid{Lens}\;\Varid{get}_{\mathrm{1}}\;\Varid{put}_{\mathrm{1}})\;(\Conid{Lens}\;\Varid{get}_{\mathrm{2}}\;\Varid{put}_{\mathrm{2}})\mathrel{=}\Conid{Lens}\;\Varid{get}\;\Varid{put}{}\<[E]%
\\
\>[B]{}\hsindent{3}{}\<[3]%
\>[3]{}\mathbf{where}{}\<[E]%
\\
\>[3]{}\hsindent{2}{}\<[5]%
\>[5]{}\Varid{get}\mathbin{::}\Varid{s}\to \Varid{x}{}\<[E]%
\\
\>[3]{}\hsindent{2}{}\<[5]%
\>[5]{}\Varid{get}\mathrel{=}\Varid{get}_{\mathrm{2}}\hsdot{\circ }{.\;}\Varid{get}_{\mathrm{1}}{}\<[E]%
\\
\>[3]{}\hsindent{2}{}\<[5]%
\>[5]{}\Varid{put}\mathbin{::}\Varid{y}\to \Varid{s}\to \Varid{t}{}\<[E]%
\\
\>[3]{}\hsindent{2}{}\<[5]%
\>[5]{}\Varid{put}\;\Varid{y}\;\Varid{s}\mathrel{=}\Varid{put}_{\mathrm{1}}\;(\Varid{put}_{\mathrm{2}}\;\Varid{y}\;(\Varid{get}_{\mathrm{1}}\;\Varid{s}))\;\Varid{s}{}\<[E]%
\ColumnHook
\end{hscode}\resethooks

\ensuremath{\Varid{composeLens}} takes a lens onto an \ensuremath{\Varid{a}} within an \ensuremath{\Varid{s}}, and another lens
onto an \ensuremath{\Varid{x}} within an \ensuremath{\Varid{a}}, and together makes a lens onto the \ensuremath{\Varid{x}} nested
two levels within an \ensuremath{\Varid{s}}.

The composed \ensuremath{\Varid{get}} is simple: it composes the \ensuremath{\Varid{get}} functions of the two
given lenses to get the \ensuremath{\Varid{x}} contained in the \ensuremath{\Varid{a}} contained in the input
\ensuremath{\Varid{s}}. The composed \ensuremath{\Varid{put}} is slightly trickier to define, but its
definition is standard and can be found in any paper about
lenses\autocite{first_lenses}\autocite{mlenses}\autocite{coalg_update_lenses}.

We can now for example combine the lens we have with itself to define a
new lens into the leftmost component of a 4-tuple:

\begin{hscode}\SaveRestoreHook
\column{B}{@{}>{\hspre}l<{\hspost}@{}}%
\column{E}{@{}>{\hspre}l<{\hspost}@{}}%
\>[B]{}\Varid{firstOf4}\mathbin{::}\Conid{Lens}\;\Varid{a}\;(((\Varid{a},\Varid{b}),\Varid{c}),\Varid{d}){}\<[E]%
\\
\>[B]{}\Varid{firstOf4}\mathrel{=}\Varid{first}\mathbin{`\Varid{composeLens}`}\Varid{first}\mathbin{`\Varid{composeLens}`}\Varid{first}{}\<[E]%
\\[\blanklineskip]%
\>[B]{}\Varid{put}\;\Varid{firstOf4}\;\mathrm{42}\;(((\mathrm{1},\mathrm{2}),\text{\ttfamily \char34 hi\char34}),\mathrm{4}){}\<[E]%
\\
\>[B]{}\mbox{\onelinecomment  \text{\ttfamily \char40{}\char40{}\char40{}42\char44{}~2\char41{}\char44{}~\char34{}hi\char34{}\char41{}\char44{}~4\char41{}}}{}\<[E]%
\ColumnHook
\end{hscode}\resethooks

By defining lenses for simple building blocks, we can then easily
construct lenses for any composition of those blocks.

\bigskip

Optics also often include well-behavedness laws that state invariants
that the optic should maintain.

For example, the usual laws for lenses, called very-well-behavedness
laws, are the following:

\begin{itemize}
\tightlist
\item
  (GetPut) \ensuremath{\Varid{get}\;(\Varid{put}\;\Varid{b}\;\Varid{s})\mathrel{=}\Varid{b}}
\item
  (PutGet) \ensuremath{\Varid{put}\;(\Varid{get}\;\Varid{s})\;\Varid{s}\mathrel{=}\Varid{s}}
\item
  (PutPut) \ensuremath{\Varid{put}\;\Varid{b}\hsdot{\circ }{.\;}\Varid{put}\;\Varid{b'}\mathrel{=}\Varid{put}\;\Varid{b}}
\end{itemize}

Composition also preserves these laws, so that well-behavedness of a
composite optic can be defined in terms of its building blocks.

\bigskip
\bigskip

Lenses do not however cover every datatype. They can only be defined if
a value of type \ensuremath{\Varid{s}} contains exactly one of the values of type \ensuremath{\Varid{a}} being
accessed. A number of different kinds of optics -- called \emph{optic
families} -- have been developed to work with various shapes of data one
may encounter.

For example a \emph{simple prism} allows access into a given variant of
a sum type:

\begin{hscode}\SaveRestoreHook
\column{B}{@{}>{\hspre}l<{\hspost}@{}}%
\column{E}{@{}>{\hspre}l<{\hspost}@{}}%
\>[B]{}\mathbf{data}\;\Conid{Prism}\;\Varid{a}\;\Varid{s}\mathrel{=}\Conid{Prism}\;\{\mskip1.5mu \Varid{match}\mathbin{::}\Varid{s}\to \Varid{a}\mathbin{+}\Varid{s},\Varid{build}\mathbin{::}\Varid{a}\to \Varid{s}\mskip1.5mu\}{}\<[E]%
\ColumnHook
\end{hscode}\resethooks

\ensuremath{\Varid{match}\;\Varid{s}} returns the value contained if \ensuremath{\Varid{s}} is of the correct shape and
otherwise returns the original \ensuremath{\Varid{s}}. \ensuremath{\Varid{build}} creates a value of type \ensuremath{\Varid{s}}
of the correct shape from a value of type \ensuremath{\Varid{a}}.

We can define a prism into \ensuremath{\Conid{Maybe}}, one of the most common Haskell
datatypes:

\begin{hscode}\SaveRestoreHook
\column{B}{@{}>{\hspre}l<{\hspost}@{}}%
\column{3}{@{}>{\hspre}l<{\hspost}@{}}%
\column{5}{@{}>{\hspre}l<{\hspost}@{}}%
\column{E}{@{}>{\hspre}l<{\hspost}@{}}%
\>[B]{}\Varid{just}\mathbin{::}\Conid{Prism}\;\Varid{a}\;(\Conid{Maybe}\;\Varid{a}){}\<[E]%
\\
\>[B]{}\Varid{just}\mathrel{=}\Conid{Prism}\;\Varid{match}\;\Varid{build}{}\<[E]%
\\
\>[B]{}\hsindent{3}{}\<[3]%
\>[3]{}\mathbf{where}{}\<[E]%
\\
\>[3]{}\hsindent{2}{}\<[5]%
\>[5]{}\Varid{match}\;\Conid{Nothing}\mathrel{=}\Conid{Left}\;\Conid{Nothing}{}\<[E]%
\\
\>[3]{}\hsindent{2}{}\<[5]%
\>[5]{}\Varid{match}\;(\Conid{Just}\;\Varid{a})\mathrel{=}\Conid{Right}\;\Varid{a}{}\<[E]%
\\
\>[3]{}\hsindent{2}{}\<[5]%
\>[5]{}\Varid{build}\;\Varid{a}\mathrel{=}\Conid{Just}\;\Varid{a}{}\<[E]%
\ColumnHook
\end{hscode}\resethooks

Here, \ensuremath{\Varid{match}\;\Varid{just}\;\Varid{s}} yields \ensuremath{\Conid{Right}\;\Varid{a}} when the target value \ensuremath{\Varid{a}} is
present in the value \ensuremath{\Varid{s}}, and otherwise (i.e.~when \ensuremath{\Varid{s}} is \ensuremath{\Conid{Nothing}})
returns \ensuremath{\Conid{Left}\;\Varid{s}}. \ensuremath{\Varid{build}} creates a value of type \ensuremath{\Conid{Maybe}\;\Varid{a}} from a value
of type \ensuremath{\Varid{a}} by using the \ensuremath{\Conid{Just}} constructor.

For example:

\begin{hscode}\SaveRestoreHook
\column{B}{@{}>{\hspre}l<{\hspost}@{}}%
\column{E}{@{}>{\hspre}l<{\hspost}@{}}%
\>[B]{}\Varid{match}\;\Varid{just}\;(\Conid{Just}\;\mathrm{42}){}\<[E]%
\\
\>[B]{}\mbox{\onelinecomment  \text{\ttfamily Right~42}}{}\<[E]%
\\
\>[B]{}\Varid{match}\;\Varid{just}\;(\Conid{Nothing}){}\<[E]%
\\
\>[B]{}\mbox{\onelinecomment  \text{\ttfamily Left~Nothing}}{}\<[E]%
\ColumnHook
\end{hscode}\resethooks

Assuming we have defined a \ensuremath{\Varid{composePrism}} function mirroring
\ensuremath{\Varid{composeLens}}, we get similar composability for prisms:

\begin{hscode}\SaveRestoreHook
\column{B}{@{}>{\hspre}l<{\hspost}@{}}%
\column{E}{@{}>{\hspre}l<{\hspost}@{}}%
\>[B]{}\Varid{justjust}\mathbin{::}\Conid{Prism}\;\Varid{a}\;(\Conid{Maybe}\;(\Conid{Maybe}\;\Varid{a})){}\<[E]%
\\
\>[B]{}\Varid{justjust}\mathrel{=}\Varid{just}\mathbin{`\Varid{composePrism}`}\Varid{just}{}\<[E]%
\\[\blanklineskip]%
\>[B]{}\Varid{match}\;\Varid{justjust}\;(\Conid{Just}\;\Conid{Nothing}){}\<[E]%
\\
\>[B]{}\mbox{\onelinecomment  \text{\ttfamily Left~\char40{}Just~Nothing\char41{}}}{}\<[E]%
\\
\>[B]{}\Varid{match}\;\Varid{justjust}\;(\Conid{Just}\;(\Conid{Just}\;\mathrm{42})){}\<[E]%
\\
\>[B]{}\mbox{\onelinecomment  \text{\ttfamily Right~42}}{}\<[E]%
\\
\>[B]{}\Varid{build}\;\Varid{justjust}\;\mathrm{42}{}\<[E]%
\\
\>[B]{}\mbox{\onelinecomment  \text{\ttfamily Just~\char40{}Just~42\char41{}}}{}\<[E]%
\ColumnHook
\end{hscode}\resethooks

\hypertarget{polymorphic-optic-families}{%
\section{Polymorphic optic families}\label{polymorphic-optic-families}}

So far, the optics we have seen were \emph{simple} optics, which means
that they could not change the types of the objects they access. For
example, given a function of type \ensuremath{\Varid{a}\to \Varid{b}}, there is not way to use the
\ensuremath{\Varid{first}} lens defined earlier to get from \ensuremath{(\Varid{a},\Varid{c})} to \ensuremath{(\Varid{b},\Varid{c})}.

Lenses can be extended to support such polymorphic updates by relaxing
some of the type parameters\autocite{polymorphic_update_lenses}:

\begin{hscode}\SaveRestoreHook
\column{B}{@{}>{\hspre}l<{\hspost}@{}}%
\column{E}{@{}>{\hspre}l<{\hspost}@{}}%
\>[B]{}\mathbf{data}\;\Conid{Lens}\;\Varid{a}\;\Varid{b}\;\Varid{s}\;\Varid{t}\mathrel{=}\Conid{Lens}\;\{\mskip1.5mu \Varid{get}\mathbin{::}\Varid{s}\to \Varid{a},\Varid{put}\mathbin{::}\Varid{b}\to \Varid{s}\to \Varid{t}\mskip1.5mu\}{}\<[E]%
\ColumnHook
\end{hscode}\resethooks

The two original type parameters have each been split in two to separate
the ``input'' part from the ``output'' part (i.e.~the contravariant and
covariant parts, respectively).

\ensuremath{\Varid{first}} has the same implementation as earlier, but now gets a more
general type:

\begin{hscode}\SaveRestoreHook
\column{B}{@{}>{\hspre}l<{\hspost}@{}}%
\column{3}{@{}>{\hspre}l<{\hspost}@{}}%
\column{5}{@{}>{\hspre}l<{\hspost}@{}}%
\column{E}{@{}>{\hspre}l<{\hspost}@{}}%
\>[B]{}\Varid{first}\mathbin{::}\Conid{Lens}\;\Varid{a}\;\Varid{b}\;(\Varid{a},\Varid{c})\;(\Varid{b},\Varid{c}){}\<[E]%
\\
\>[B]{}\Varid{first}\mathrel{=}\Conid{Lens}\;\Varid{get}\;\Varid{put}{}\<[E]%
\\
\>[B]{}\hsindent{3}{}\<[3]%
\>[3]{}\mathbf{where}{}\<[E]%
\\
\>[3]{}\hsindent{2}{}\<[5]%
\>[5]{}\Varid{get}\;(\Varid{x},\Varid{y})\mathrel{=}\Varid{x}{}\<[E]%
\\
\>[3]{}\hsindent{2}{}\<[5]%
\>[5]{}\Varid{put}\;\Varid{x'}\;(\anonymous ,\Varid{y})\mathrel{=}(\Varid{x'},\Varid{y}){}\<[E]%
\ColumnHook
\end{hscode}\resethooks

And we can write the desired function:

\begin{hscode}\SaveRestoreHook
\column{B}{@{}>{\hspre}l<{\hspost}@{}}%
\column{E}{@{}>{\hspre}l<{\hspost}@{}}%
\>[B]{}\Varid{map}\mathbin{::}(\Varid{a}\to \Varid{b})\to (\Varid{a},\Varid{c})\to (\Varid{b},\Varid{c}){}\<[E]%
\\
\>[B]{}\Varid{map}\;\Varid{f}\;\Varid{x}\mathrel{=}\Varid{put}\;(\Varid{f}\;(\Varid{get}\;\Varid{x}))\;\Varid{x}{}\<[E]%
\ColumnHook
\end{hscode}\resethooks

Prisms can be similarly generalized:

\begin{hscode}\SaveRestoreHook
\column{B}{@{}>{\hspre}l<{\hspost}@{}}%
\column{E}{@{}>{\hspre}l<{\hspost}@{}}%
\>[B]{}\mathbf{data}\;\Conid{Prism}\;\Varid{a}\;\Varid{b}\;\Varid{s}\;\Varid{t}\mathrel{=}\Conid{Prism}\;\{\mskip1.5mu \Varid{match}\mathbin{::}\Varid{s}\to \Varid{a}\mathbin{+}\Varid{t},\Varid{build}\mathbin{::}\Varid{b}\to \Varid{t}\mskip1.5mu\}{}\<[E]%
\ColumnHook
\end{hscode}\resethooks

\bigskip

\begin{rmk}

As pointed out by Edward Kmett\autocite{mirrored_lenses}, the 4 type
parameters should not vary completely independently; for example, when
\ensuremath{\Varid{a}\mathrel{=}\Varid{b}}, we expect that \ensuremath{\Varid{s}\mathrel{=}\Varid{t}}.

In general, we will expect defined optics to have some form of
parametricity. This notably allows laws defined when types match
(i.e.~on simple optics) to still constrain the general case.

E. Kmett expresses this by saying that optics would best be described as
having two type-level functions as parameters, as follows:

\begin{hscode}\SaveRestoreHook
\column{B}{@{}>{\hspre}l<{\hspost}@{}}%
\column{5}{@{}>{\hspre}l<{\hspost}@{}}%
\column{E}{@{}>{\hspre}l<{\hspost}@{}}%
\>[B]{}\mathbf{type}\;\Conid{Lens'}\;\Varid{inner}\;\Varid{outer}\mathrel{=}{}\<[E]%
\\
\>[B]{}\hsindent{5}{}\<[5]%
\>[5]{}\mathbf{forall}\;\Varid{x}\hsforall \;\Varid{y}\hsdot{\circ }{.\;}\Conid{Lens}\;(\Varid{inner}\;\Varid{x})\;(\Varid{inner}\;\Varid{y})\;(\Varid{outer}\;\Varid{x})\;(\Varid{outer}\;\Varid{y}){}\<[E]%
\ColumnHook
\end{hscode}\resethooks

The \ensuremath{\Varid{first}} lens would then have a type similar to:

\begin{hscode}\SaveRestoreHook
\column{B}{@{}>{\hspre}l<{\hspost}@{}}%
\column{E}{@{}>{\hspre}l<{\hspost}@{}}%
\>[B]{}\Varid{first}\mathbin{::}\Conid{Lens'}\;\Varid{id}\;(\Varid{c},\mathbin{-}){}\<[E]%
\ColumnHook
\end{hscode}\resethooks

This is however not doable in Haskell, thus we keep the 4-parameter
description, which is sufficient to express what we need.

\end{rmk}

\hypertarget{profunctor-optics}{%
\section{Profunctor optics}\label{profunctor-optics}}

As we have seen, using lenses and prisms we can manipulate respectively
product-like structures and sum-like structures. What about data
structures that have both sums and products ? The problem is that an
optic into e.g.~\ensuremath{\Varid{a}\mathbin{\mathbf{\times}}\Varid{b}\mathbin{+}\Varid{c}} would be neither a lens nor a prism, since
we would not be able to define \ensuremath{\Varid{get}} nor \ensuremath{\Varid{build}}.

To manipulate such a type, we would need a new optic family that is more
general than both lenses and prisms.

In general, composing across different optic families is not
straightforward and may require not previously known families. We would
then need operations to cast between compatible optic families, and it
would start to get quite cumbersome.

To solve this problem, an alternative representation of optic families,
known as the \emph{profunctor encoding}\autocite{profunctor_optics}, has
been developed.

\hypertarget{profunctors}{%
\subsection{Profunctors}\label{profunctors}}

The profunctor encoding relies on the notion of a \emph{profunctor},
which is a two-argument functor that is contravariant in its first
argument and covariant in its last. Profunctors can often be understood
as being arrow-like objects that describe some notion of transformation
between two types. \bigskip

\begin{defn}[Profunctor]

A profunctor is a type constructor with two parameters that is an
instance of the following typeclass:

\begin{hscode}\SaveRestoreHook
\column{B}{@{}>{\hspre}l<{\hspost}@{}}%
\column{5}{@{}>{\hspre}l<{\hspost}@{}}%
\column{E}{@{}>{\hspre}l<{\hspost}@{}}%
\>[B]{}\mathbf{class}\;\Conid{Profunctor}\;\Varid{p}\;\mathbf{where}{}\<[E]%
\\
\>[B]{}\hsindent{5}{}\<[5]%
\>[5]{}\Varid{dimap}\mathbin{::}(\Varid{a'}\to \Varid{a})\to (\Varid{b}\to \Varid{b'})\to \Varid{p}\;\Varid{a}\;\Varid{b}\to \Varid{p}\;\Varid{a'}\;\Varid{b'}{}\<[E]%
\ColumnHook
\end{hscode}\resethooks

\ensuremath{\Varid{dimap}} is a two-parameter generalization of \ensuremath{\Varid{fmap}}, with the first
parameter ``backwards'' to account for the contravariance in the first
argument.

\ensuremath{\Varid{dimap}} is expected to verify the following well-behavedness laws,
similar to the \ensuremath{\Varid{fmap}} laws:

\begin{itemize}
\tightlist
\item
  \ensuremath{\Varid{dimap}\;\Varid{id}\;\Varid{id}\mathrel{=}\Varid{id}}
\item
  \ensuremath{\Varid{dimap}\;(\Varid{f'}\hsdot{\circ }{.\;}\Varid{f})\;(\Varid{g}\hsdot{\circ }{.\;}\Varid{g'})\mathrel{=}\Varid{dimap}\;\Varid{f}\;\Varid{g}\hsdot{\circ }{.\;}\Varid{dimap}\;\Varid{f'}\;\Varid{g'}}
\end{itemize}

\end{defn}

\bigskip

\begin{expl}

The simplest example of a profunctor is the function arrow \ensuremath{(\to )}
itself:

\begin{hscode}\SaveRestoreHook
\column{B}{@{}>{\hspre}l<{\hspost}@{}}%
\column{5}{@{}>{\hspre}l<{\hspost}@{}}%
\column{E}{@{}>{\hspre}l<{\hspost}@{}}%
\>[B]{}\mathbf{instance}\;\Conid{Profunctor}\;(\to )\;\mathbf{where}{}\<[E]%
\\
\>[B]{}\hsindent{5}{}\<[5]%
\>[5]{}\Varid{dimap}\mathbin{::}(\Varid{a'}\to \Varid{a})\to (\Varid{b}\to \Varid{b'})\to (\Varid{a}\to \Varid{b})\to (\Varid{a'}\to \Varid{b'}){}\<[E]%
\\
\>[B]{}\hsindent{5}{}\<[5]%
\>[5]{}\Varid{dimap}\;\Varid{f}\;\Varid{g}\;\Varid{h}\mathrel{=}\Varid{g}\hsdot{\circ }{.\;}\Varid{h}\hsdot{\circ }{.\;}\Varid{f}{}\<[E]%
\ColumnHook
\end{hscode}\resethooks

The laws are easy to verify:

\begin{itemize}
\tightlist
\item
  \ensuremath{\Varid{dimap}\;\Varid{id}\;\Varid{id}\;\Varid{h}\mathrel{=}\Varid{h}}
\item
  \ensuremath{\Varid{dimap}\;(\Varid{f'}\hsdot{\circ }{.\;}\Varid{f})\;(\Varid{g}\hsdot{\circ }{.\;}\Varid{g'})\;\Varid{h}\mathrel{=}\Varid{g}\hsdot{\circ }{.\;}\Varid{g'}\hsdot{\circ }{.\;}\Varid{h}\hsdot{\circ }{.\;}\Varid{f'}\hsdot{\circ }{.\;}\Varid{f}\mathrel{=}\Varid{dimap}\;\Varid{f}\;\Varid{g}\;(\Varid{dimap}\;\Varid{f'}\;\Varid{g'}\;\Varid{h})}
\end{itemize}

\end{expl}

\bigskip

\begin{expl}

A more interesting example is \ensuremath{\Conid{Lens}\;\Varid{a}\;\Varid{b}}. Indeed, with \ensuremath{\Varid{a}} and \ensuremath{\Varid{b}}
fixed, \ensuremath{\Conid{Lens}\;\Varid{a}\;\Varid{b}\;\Varid{s}\;\Varid{t}} has the right variance. We can write its
profunctor instance:

\begin{hscode}\SaveRestoreHook
\column{B}{@{}>{\hspre}l<{\hspost}@{}}%
\column{5}{@{}>{\hspre}l<{\hspost}@{}}%
\column{13}{@{}>{\hspre}l<{\hspost}@{}}%
\column{E}{@{}>{\hspre}l<{\hspost}@{}}%
\>[B]{}\mathbf{instance}\;\Conid{Profunctor}\;(\Conid{Lens}\;\Varid{a}\;\Varid{b})\;\mathbf{where}{}\<[E]%
\\
\>[B]{}\hsindent{5}{}\<[5]%
\>[5]{}\Varid{dimap}\mathbin{::}(\Varid{s'}\to \Varid{s})\to (\Varid{t}\to \Varid{t'})\to \Conid{Lens}\;\Varid{a}\;\Varid{b}\;\Varid{s}\;\Varid{t}\to \Conid{Lens}\;\Varid{a}\;\Varid{b}\;\Varid{s'}\;\Varid{t'}{}\<[E]%
\\
\>[B]{}\hsindent{5}{}\<[5]%
\>[5]{}\Varid{dimap}\;\Varid{f}\;\Varid{g}\;(\Conid{Lens}\;\Varid{get}\;\Varid{put})\mathrel{=}\Conid{Lens}\;\Varid{get'}\;\Varid{put'}\;\mathbf{where}{}\<[E]%
\\
\>[5]{}\hsindent{8}{}\<[13]%
\>[13]{}\Varid{get'}\mathbin{::}\Varid{s'}\to \Varid{a}{}\<[E]%
\\
\>[5]{}\hsindent{8}{}\<[13]%
\>[13]{}\Varid{get'}\mathrel{=}\Varid{get}\hsdot{\circ }{.\;}\Varid{f}{}\<[E]%
\\
\>[5]{}\hsindent{8}{}\<[13]%
\>[13]{}\Varid{put'}\mathbin{::}\Varid{b}\to \Varid{s'}\to \Varid{t'}{}\<[E]%
\\
\>[5]{}\hsindent{8}{}\<[13]%
\>[13]{}\Varid{put'}\;\Varid{b}\mathrel{=}\Varid{g}\hsdot{\circ }{.\;}\Varid{put}\;\Varid{b}\hsdot{\circ }{.\;}\Varid{f}{}\<[E]%
\ColumnHook
\end{hscode}\resethooks

The proof of the profunctor laws is a routine proof by equational
reasoning.

\ensuremath{\Conid{Lens}\;\Varid{a}\;\Varid{b}\;\Varid{s}\;\Varid{t}} can therefore be seen as a form of transformation from
\ensuremath{\Varid{s}} to \ensuremath{\Varid{t}}, that has more structure than a simpler \ensuremath{\Varid{s}\to \Varid{t}} arrow.

\end{expl}

\bigskip

\hypertarget{profunctor-encoding}{%
\subsection{Profunctor encoding}\label{profunctor-encoding}}

A profunctor encoding is defined on a case-by-case basis for an optic
family. It works as follows: for a given optic family, we define some
typeclass \ensuremath{\Sigma} that has \ensuremath{\Conid{Profunctor}} as a superclass.

A profunctor encoded optic is then a value of the following type:

\begin{hscode}\SaveRestoreHook
\column{B}{@{}>{\hspre}l<{\hspost}@{}}%
\column{E}{@{}>{\hspre}l<{\hspost}@{}}%
\>[B]{}\mathbf{type}\;\Conid{POptic}\;\Varid{a}\;\Varid{b}\;\Varid{s}\;\Varid{t}\mathrel{=}\mathbf{forall}\;\Varid{p}\hsforall \hsdot{\circ }{.\;}\Sigma\;\Varid{p}\Rightarrow \Varid{p}\;\Varid{a}\;\Varid{b}\to \Varid{p}\;\Varid{s}\;\Varid{t}{}\<[E]%
\ColumnHook
\end{hscode}\resethooks

\hypertarget{lenses}{%
\subsubsection{Lenses}\label{lenses}}

In the case of \ensuremath{\Conid{Lens}} for example, a profunctor-encoded lens is an arrow
of type

\begin{hscode}\SaveRestoreHook
\column{B}{@{}>{\hspre}l<{\hspost}@{}}%
\column{E}{@{}>{\hspre}l<{\hspost}@{}}%
\>[B]{}\mathbf{type}\;\Conid{PLens}\;\Varid{a}\;\Varid{b}\;\Varid{s}\;\Varid{t}\mathrel{=}\mathbf{forall}\;\Varid{p}\hsforall \hsdot{\circ }{.\;}\Conid{Cartesian}\;\Varid{p}\Rightarrow \Varid{p}\;\Varid{a}\;\Varid{b}\to \Varid{p}\;\Varid{s}\;\Varid{t}{}\<[E]%
\ColumnHook
\end{hscode}\resethooks

where

\begin{hscode}\SaveRestoreHook
\column{B}{@{}>{\hspre}l<{\hspost}@{}}%
\column{5}{@{}>{\hspre}l<{\hspost}@{}}%
\column{E}{@{}>{\hspre}l<{\hspost}@{}}%
\>[B]{}\mathbf{class}\;\Conid{Profunctor}\;\Varid{p}\Rightarrow \Conid{Cartesian}\;\Varid{p}\;\mathbf{where}{}\<[E]%
\\
\>[B]{}\hsindent{5}{}\<[5]%
\>[5]{}\Varid{second}\mathbin{::}\Varid{p}\;\Varid{a}\;\Varid{b}\to \Varid{p}\;(\Varid{c},\Varid{a})\;(\Varid{c},\Varid{b}){}\<[E]%
\ColumnHook
\end{hscode}\resethooks

\ensuremath{\Varid{second}} must abide by some laws that we do not detail here for brevity.

\ensuremath{\Conid{Cartesian}\;\Varid{p}} expresses that \ensuremath{\Varid{p}\;\Varid{a}\;\Varid{b}} is not only a form of
transformation from \ensuremath{\Varid{a}} to \ensuremath{\Varid{b}} (a profunctor), but also that such a
transformation can be ``lifted'' to pass an additional value of type \ensuremath{\Varid{c}}
along.

Those new lenses can be proved to be isomorphic to the previous
\emph{concrete} lenses, provided we assume very-well-behavedness laws on
the concrete lenses. For a proof of this equivalence, see Pickering et
al.\autocite{profunctor_optics}.

The wonderful property of these new lenses is that the composition
operator is now simply arrow composition instead of the more complicated
\ensuremath{\Varid{composeLens}} function.

\medskip

In this new encoding, the \ensuremath{\Varid{first}} lens is defined as follows:

\begin{hscode}\SaveRestoreHook
\column{B}{@{}>{\hspre}l<{\hspost}@{}}%
\column{5}{@{}>{\hspre}l<{\hspost}@{}}%
\column{E}{@{}>{\hspre}l<{\hspost}@{}}%
\>[B]{}\Varid{first}\mathbin{::}\Conid{PLens}\;\Varid{a}\;\Varid{b}\;(\Varid{a},\Varid{c})\;(\Varid{b},\Varid{c}){}\<[E]%
\\
\>[B]{}\Varid{first}\mathrel{=}\Varid{dimap}\;\Varid{swap}\;\Varid{swap}\hsdot{\circ }{.\;}\Varid{second}{}\<[E]%
\\
\>[B]{}\hsindent{5}{}\<[5]%
\>[5]{}\mathbf{where}\;\Varid{swap}\;(\Varid{x},\Varid{y})\mathrel{=}(\Varid{y},\Varid{x}){}\<[E]%
\ColumnHook
\end{hscode}\resethooks

We can compose \ensuremath{\Varid{first}} as a normal arrow:

\begin{hscode}\SaveRestoreHook
\column{B}{@{}>{\hspre}l<{\hspost}@{}}%
\column{E}{@{}>{\hspre}l<{\hspost}@{}}%
\>[B]{}\Varid{first}\hsdot{\circ }{.\;}\Varid{first}\mathbin{::}\Conid{PLens}\;\Varid{a}\;\Varid{b}\;((\Varid{a},\Varid{c}),\Varid{d})\;((\Varid{b},\Varid{c}),\Varid{d}){}\<[E]%
\ColumnHook
\end{hscode}\resethooks

\ensuremath{\Varid{second}} is also a valid lens. \bigskip

\begin{note}

\ensuremath{\Conid{Cartesian}} is often called \ensuremath{\Conid{Strong}} in profunctor optic libraries.

\end{note}

\hypertarget{prisms}{%
\subsubsection{Prisms}\label{prisms}}

Prisms have a similar encoding:

\begin{hscode}\SaveRestoreHook
\column{B}{@{}>{\hspre}l<{\hspost}@{}}%
\column{E}{@{}>{\hspre}l<{\hspost}@{}}%
\>[B]{}\mathbf{type}\;\Conid{PPrism}\;\Varid{a}\;\Varid{b}\;\Varid{s}\;\Varid{t}\mathrel{=}\mathbf{forall}\;\Varid{p}\hsforall \hsdot{\circ }{.\;}\Conid{Cocartesian}\;\Varid{p}\Rightarrow \Varid{p}\;\Varid{a}\;\Varid{b}\to \Varid{p}\;\Varid{s}\;\Varid{t}{}\<[E]%
\ColumnHook
\end{hscode}\resethooks

with

\begin{hscode}\SaveRestoreHook
\column{B}{@{}>{\hspre}l<{\hspost}@{}}%
\column{5}{@{}>{\hspre}l<{\hspost}@{}}%
\column{E}{@{}>{\hspre}l<{\hspost}@{}}%
\>[B]{}\mathbf{class}\;\Conid{Profunctor}\;\Varid{p}\Rightarrow \Conid{Cocartesian}\;\Varid{p}\;\mathbf{where}{}\<[E]%
\\
\>[B]{}\hsindent{5}{}\<[5]%
\>[5]{}\Varid{right}\mathbin{::}\Varid{p}\;\Varid{a}\;\Varid{b}\to \Varid{p}\;(\Varid{c}\mathbin{+}\Varid{a})\;(\Varid{c}\mathbin{+}\Varid{b}){}\<[E]%
\ColumnHook
\end{hscode}\resethooks

The \ensuremath{\Varid{just}} prism becomes:

\begin{hscode}\SaveRestoreHook
\column{B}{@{}>{\hspre}l<{\hspost}@{}}%
\column{5}{@{}>{\hspre}l<{\hspost}@{}}%
\column{7}{@{}>{\hspre}l<{\hspost}@{}}%
\column{E}{@{}>{\hspre}l<{\hspost}@{}}%
\>[B]{}\Varid{just}\mathbin{::}\Conid{PPrism}\;\Varid{a}\;\Varid{b}\;(\Conid{Maybe}\;\Varid{a})\;(\Conid{Maybe}\;\Varid{b}){}\<[E]%
\\
\>[B]{}\Varid{just}\mathrel{=}\Varid{dimap}\;\Varid{maybeToSum}\;\Varid{sumToMaybe}\hsdot{\circ }{.\;}\Varid{right}{}\<[E]%
\\
\>[B]{}\hsindent{5}{}\<[5]%
\>[5]{}\mathbf{where}{}\<[E]%
\\
\>[5]{}\hsindent{2}{}\<[7]%
\>[7]{}\Varid{sumToMaybe}\mathbin{::}()\mathbin{+}\Varid{a}\to \Conid{Maybe}\;\Varid{a}{}\<[E]%
\\
\>[5]{}\hsindent{2}{}\<[7]%
\>[7]{}\Varid{sumToMaybe}\mathrel{=}\Varid{either}\;(\Varid{const}\;\Conid{Nothing})\;\Conid{Just}{}\<[E]%
\\
\>[5]{}\hsindent{2}{}\<[7]%
\>[7]{}\Varid{maybeToSum}\mathbin{::}\Conid{Maybe}\;\Varid{a}\to ()\mathbin{+}\Varid{a}{}\<[E]%
\\
\>[5]{}\hsindent{2}{}\<[7]%
\>[7]{}\Varid{maybeToSum}\mathrel{=}\Varid{maybe}\;(\Conid{Left}\;())\;\Conid{Right}{}\<[E]%
\ColumnHook
\end{hscode}\resethooks

\hypertarget{operators}{%
\subsubsection{Operators}\label{operators}}

To define operators on a profunctor-encoded optic, one needs to
instantiate the optic to a particular profunctor.

For example, to define operators on profunctor lenses, we need to find
an appropriate \ensuremath{\Conid{Cartesian}} profunctor to instantiate it to.

\ensuremath{\Varid{get}} is defined as follows:

\begin{hscode}\SaveRestoreHook
\column{B}{@{}>{\hspre}l<{\hspost}@{}}%
\column{E}{@{}>{\hspre}l<{\hspost}@{}}%
\>[B]{}\mathbf{data}\;\Conid{Getting}\;\Varid{a}\;\Varid{b}\;\Varid{s}\;\Varid{t}\mathrel{=}\Conid{Getting}\;\{\mskip1.5mu \Varid{unGetting}\mathbin{::}\Varid{s}\to \Varid{a}\mskip1.5mu\}{}\<[E]%
\ColumnHook
\end{hscode}\resethooks

\smallskip

\begin{hscode}\SaveRestoreHook
\column{B}{@{}>{\hspre}l<{\hspost}@{}}%
\column{5}{@{}>{\hspre}l<{\hspost}@{}}%
\column{E}{@{}>{\hspre}l<{\hspost}@{}}%
\>[B]{}\mathbf{instance}\;\Conid{Profunctor}\;(\Conid{Getting}\;\Varid{a}\;\Varid{b})\;\mathbf{where}{}\<[E]%
\\
\>[B]{}\hsindent{5}{}\<[5]%
\>[5]{}\Varid{dimap}\;\Varid{f}\;\Varid{g}\;(\Conid{Getting}\;\Varid{h})\mathrel{=}\Conid{Getting}\;(\Varid{h}\hsdot{\circ }{.\;}\Varid{f}){}\<[E]%
\\[\blanklineskip]%
\>[B]{}\mathbf{instance}\;\Conid{Cartesian}\;(\Conid{Getting}\;\Varid{a}\;\Varid{b})\;\mathbf{where}{}\<[E]%
\\
\>[B]{}\hsindent{5}{}\<[5]%
\>[5]{}\Varid{second}\;(\Conid{Getting}\;\Varid{h})\mathrel{=}\Conid{Getting}\;(\Varid{h}\hsdot{\circ }{.\;}\Varid{snd}){}\<[E]%
\ColumnHook
\end{hscode}\resethooks

\smallskip

\begin{hscode}\SaveRestoreHook
\column{B}{@{}>{\hspre}l<{\hspost}@{}}%
\column{E}{@{}>{\hspre}l<{\hspost}@{}}%
\>[B]{}\Varid{idGetting}\mathbin{::}\Conid{Getting}\;\Varid{a}\;\Varid{b}\;\Varid{a}\;\Varid{b}{}\<[E]%
\\
\>[B]{}\Varid{idGetting}\mathrel{=}\Conid{Getting}\;\Varid{id}{}\<[E]%
\ColumnHook
\end{hscode}\resethooks

\smallskip

\begin{hscode}\SaveRestoreHook
\column{B}{@{}>{\hspre}l<{\hspost}@{}}%
\column{E}{@{}>{\hspre}l<{\hspost}@{}}%
\>[B]{}\Varid{get}\mathbin{::}(\Conid{Getting}\;\Varid{a}\;\Varid{b}\;\Varid{a}\;\Varid{b}\to \Conid{Getting}\;\Varid{a}\;\Varid{b}\;\Varid{s}\;\Varid{t})\to \Varid{s}\to \Varid{a}{}\<[E]%
\\
\>[B]{}\Varid{get}\;\Varid{l}\mathrel{=}\Varid{unGetting}\;(\Varid{l}\;\Varid{idGetting}){}\<[E]%
\ColumnHook
\end{hscode}\resethooks

Then

\begin{hscode}\SaveRestoreHook
\column{B}{@{}>{\hspre}l<{\hspost}@{}}%
\column{E}{@{}>{\hspre}l<{\hspost}@{}}%
\>[B]{}\Varid{get}\;\Varid{first}{}\<[E]%
\\
\>[B]{}\mathrel{=}\Varid{unGetting}\;(\Varid{dimap}\;\Varid{swap}\;\Varid{swap}\;(\Varid{second}\;\Varid{idGetting})){}\<[E]%
\\
\>[B]{}\mathrel{=}\Varid{unGetting}\;(\Varid{dimap}\;\Varid{swap}\;\Varid{swap}\;(\Conid{Getting}\;\Varid{snd})){}\<[E]%
\\
\>[B]{}\mathrel{=}\Varid{unGetting}\;(\Conid{Getting}\;(\Varid{swap}\hsdot{\circ }{.\;}\Varid{snd}\hsdot{\circ }{.\;}\Varid{swap})){}\<[E]%
\\
\>[B]{}\mathrel{=}\Varid{swap}\hsdot{\circ }{.\;}\Varid{snd}\hsdot{\circ }{.\;}\Varid{swap}{}\<[E]%
\\
\>[B]{}\mathrel{=}\lambda (\Varid{a},\Varid{c})\to \Varid{a}{}\<[E]%
\ColumnHook
\end{hscode}\resethooks

as expected. \bigskip

\begin{rmk}\label{profoptics_are_complicated}

We can already see that despite the nice compositional properties of
profunctor lenses, it is more difficult to reason about their properties
than with concrete lenses.

\end{rmk}

\bigskip
\bigskip

Similarly, we can define the \ensuremath{\Varid{match}} function for prisms:

\begin{hscode}\SaveRestoreHook
\column{B}{@{}>{\hspre}l<{\hspost}@{}}%
\column{E}{@{}>{\hspre}l<{\hspost}@{}}%
\>[B]{}\mathbf{data}\;\Conid{Matching}\;\Varid{a}\;\Varid{b}\;\Varid{s}\;\Varid{t}\mathrel{=}\Conid{Matching}\;\{\mskip1.5mu \Varid{unMatching}\mathbin{::}\Varid{s}\to \Varid{t}\mathbin{+}\Varid{a}\mskip1.5mu\}{}\<[E]%
\ColumnHook
\end{hscode}\resethooks

\smallskip

\begin{hscode}\SaveRestoreHook
\column{B}{@{}>{\hspre}l<{\hspost}@{}}%
\column{5}{@{}>{\hspre}l<{\hspost}@{}}%
\column{9}{@{}>{\hspre}l<{\hspost}@{}}%
\column{16}{@{}>{\hspre}l<{\hspost}@{}}%
\column{E}{@{}>{\hspre}l<{\hspost}@{}}%
\>[B]{}\mathbf{instance}\;\Conid{Profunctor}\;(\Conid{Matching}\;\Varid{a}\;\Varid{b})\;\mathbf{where}{}\<[E]%
\\
\>[B]{}\hsindent{5}{}\<[5]%
\>[5]{}\Varid{dimap}\;\Varid{f}\;\Varid{g}\;(\Conid{Matching}\;\Varid{h})\mathrel{=}\Conid{Matching}\;(\Varid{left}\;\Varid{g}\hsdot{\circ }{.\;}\Varid{h}\hsdot{\circ }{.\;}\Varid{f}){}\<[E]%
\\[\blanklineskip]%
\>[B]{}\mathbf{instance}\;\Conid{Cocartesian}\;(\Conid{Matching}\;\Varid{a}\;\Varid{b})\;\mathbf{where}{}\<[E]%
\\
\>[B]{}\hsindent{5}{}\<[5]%
\>[5]{}\Varid{right}\;(\Conid{Matching}\;\Varid{h})\mathrel{=}\Conid{Matching}\;(\Varid{assoc}\hsdot{\circ }{.\;}\Varid{fmap}\;\Varid{h}){}\<[E]%
\\
\>[5]{}\hsindent{4}{}\<[9]%
\>[9]{}\mathbf{where}\;{}\<[16]%
\>[16]{}\Varid{assoc}\mathbin{::}\Varid{c}\mathbin{+}(\Varid{t}\mathbin{+}\Varid{a})\to (\Varid{c}\mathbin{+}\Varid{t})\mathbin{+}\Varid{a}{}\<[E]%
\\
\>[16]{}\Varid{assoc}\;(\Conid{Left}\;\Varid{c})\mathrel{=}\Conid{Left}\;(\Conid{Left}\;\Varid{c}){}\<[E]%
\\
\>[16]{}\Varid{assoc}\;(\Conid{Right}\;(\Conid{Left}\;\Varid{t}))\mathrel{=}\Conid{Left}\;(\Conid{Right}\;\Varid{t}){}\<[E]%
\\
\>[16]{}\Varid{assoc}\;(\Conid{Right}\;(\Conid{Right}\;\Varid{a}))\mathrel{=}\Conid{Right}\;\Varid{a}{}\<[E]%
\ColumnHook
\end{hscode}\resethooks

\smallskip

\begin{hscode}\SaveRestoreHook
\column{B}{@{}>{\hspre}l<{\hspost}@{}}%
\column{E}{@{}>{\hspre}l<{\hspost}@{}}%
\>[B]{}\Varid{idMatching}\mathbin{::}\Conid{Matching}\;\Varid{a}\;\Varid{b}\;\Varid{a}\;\Varid{b}{}\<[E]%
\\
\>[B]{}\Varid{idMatching}\mathrel{=}\Conid{Matching}\;\Conid{Left}{}\<[E]%
\\[\blanklineskip]%
\>[B]{}\Varid{match}\mathbin{::}(\Conid{Matching}\;\Varid{a}\;\Varid{b}\;\Varid{a}\;\Varid{b}\to \Conid{Matching}\;\Varid{a}\;\Varid{b}\;\Varid{s}\;\Varid{t})\to \Varid{s}\to \Varid{t}\mathbin{+}\Varid{a}{}\<[E]%
\\
\>[B]{}\Varid{match}\;\Varid{p}\mathrel{=}\Varid{unMatching}\;(\Varid{p}\;\Varid{idMatching}){}\<[E]%
\ColumnHook
\end{hscode}\resethooks

\hypertarget{cross-family-composition}{%
\subsection{Cross-family composition}\label{cross-family-composition}}

Since both prisms and lenses are now encoded as similar-looking arrows,
we can try composing them:

Recall we had the following optics:

\begin{hscode}\SaveRestoreHook
\column{B}{@{}>{\hspre}l<{\hspost}@{}}%
\column{E}{@{}>{\hspre}l<{\hspost}@{}}%
\>[B]{}\Varid{second}\mathbin{::}\mathbf{forall}\;\Varid{p}\hsforall \hsdot{\circ }{.\;}\Conid{Cartesian}\;\Varid{p}\Rightarrow \Varid{p}\;\Varid{a}\;\Varid{b}\to \Varid{p}\;(\Varid{c},\Varid{a})\;(\Varid{c},\Varid{b}){}\<[E]%
\\
\>[B]{}\Varid{just}\mathbin{::}\mathbf{forall}\;\Varid{p}\hsforall \hsdot{\circ }{.\;}\Conid{Cocartesian}\;\Varid{p}\Rightarrow \Varid{p}\;\Varid{a}\;\Varid{b}\to \Varid{p}\;(\Conid{Maybe}\;\Varid{a})\;(\Conid{Maybe}\;\Varid{b}){}\<[E]%
\ColumnHook
\end{hscode}\resethooks

then

\begin{hscode}\SaveRestoreHook
\column{B}{@{}>{\hspre}l<{\hspost}@{}}%
\column{13}{@{}>{\hspre}l<{\hspost}@{}}%
\column{E}{@{}>{\hspre}l<{\hspost}@{}}%
\>[B]{}\Varid{second}\hsdot{\circ }{.\;}\Varid{just}\mathbin{::}\mathbf{forall}\;\Varid{p}\hsforall \hsdot{\circ }{.\;}(\Conid{Cartesian}\;\Varid{p},\Conid{Cocartesian}\;\Varid{p})\Rightarrow {}\<[E]%
\\
\>[B]{}\hsindent{13}{}\<[13]%
\>[13]{}\Varid{p}\;\Varid{a}\;\Varid{b}\to \Varid{p}\;(\Varid{c},\Conid{Maybe}\;\Varid{a})\;(\Varid{c},\Conid{Maybe}\;\Varid{b}){}\<[E]%
\ColumnHook
\end{hscode}\resethooks

This is neither a lens nor a prism, we have indeed just defined a new
optic family. These are usually called \emph{optionals}, \emph{affine
traversals} or \emph{partial lenses}.

\begin{hscode}\SaveRestoreHook
\column{B}{@{}>{\hspre}l<{\hspost}@{}}%
\column{E}{@{}>{\hspre}l<{\hspost}@{}}%
\>[B]{}\mathbf{type}\;\Conid{POptional}\;\Varid{a}\;\Varid{b}\;\Varid{s}\;\Varid{t}\mathrel{=}\mathbf{forall}\;\Varid{p}\hsforall \hsdot{\circ }{.\;}(\Conid{Cartesian}\;\Varid{p},\Conid{Cocartesian}\;\Varid{p})\Rightarrow \Varid{p}\;\Varid{a}\;\Varid{b}\to \Varid{p}\;\Varid{s}\;\Varid{t}{}\<[E]%
\ColumnHook
\end{hscode}\resethooks

\bigskip

To define an operator on optionals, we need a profunctor that is both
\ensuremath{\Conid{Cartesian}} and \ensuremath{\Conid{Cocartesian}}. Interestingly, \ensuremath{\Conid{Matching}\;\Varid{a}\;\Varid{b}} is one such
profunctor:

\begin{hscode}\SaveRestoreHook
\column{B}{@{}>{\hspre}l<{\hspost}@{}}%
\column{5}{@{}>{\hspre}l<{\hspost}@{}}%
\column{9}{@{}>{\hspre}l<{\hspost}@{}}%
\column{16}{@{}>{\hspre}l<{\hspost}@{}}%
\column{E}{@{}>{\hspre}l<{\hspost}@{}}%
\>[B]{}\mathbf{instance}\;\Conid{Cartesian}\;(\Conid{Matching}\;\Varid{a}\;\Varid{b})\;\mathbf{where}{}\<[E]%
\\
\>[B]{}\hsindent{5}{}\<[5]%
\>[5]{}\Varid{second}\;(\Conid{Matching}\;\Varid{h})\mathrel{=}\Conid{Matching}\;(\Varid{dist}\hsdot{\circ }{.\;}\Varid{fmap}\;\Varid{h}){}\<[E]%
\\
\>[5]{}\hsindent{4}{}\<[9]%
\>[9]{}\mathbf{where}\;{}\<[16]%
\>[16]{}\Varid{dist}\mathbin{::}\Varid{c}\mathbin{\mathbf{\times}}(\Varid{t}\mathbin{+}\Varid{a})\to (\Varid{c}\mathbin{\mathbf{\times}}\Varid{t})\mathbin{+}\Varid{a}{}\<[E]%
\\
\>[16]{}\Varid{dist}\;(\Varid{c},\Conid{Left}\;\Varid{t})\mathrel{=}\Conid{Left}\;(\Varid{c},\Varid{t}){}\<[E]%
\\
\>[16]{}\Varid{dist}\;(\Varid{c},\Conid{Right}\;\Varid{a})\mathrel{=}\Conid{Right}\;\Varid{a}{}\<[E]%
\ColumnHook
\end{hscode}\resethooks

We can then use \ensuremath{\Varid{match}} on optionals too.

\hypertarget{optic-libraries}{%
\subsection{Optic libraries}\label{optic-libraries}}

Optic libraries define several optic families and a plethora of useful
combinators.

The most popular optic library is the Haskell
\text{\ttfamily lens}\autocite{kmett/lens} package, by E. Kmett. It does not define
profunctor optics however.

The two most developed profunctor optic libraries are
\text{\ttfamily mezzolens}\autocite{mezzolens} and
\texttt{purescript\-profunctor-lenses}\autocite{purescript_profunctor_lenses}.

For an extensive reference on the existing profunctor optics, their
operators and their derivation, see \autocite{oleg_glassery}.

\hypertarget{haskell}{%
\section{Haskell}\label{haskell}}

We use an idealized version of Haskell to convey the notions and
derivations of this paper. Familiarity with Haskell and some of its more
advanced features (typeclasses, universal/existential quantification,
rank-n types, \ldots) is expected.

A lot of results of this paper come from parametricity
theorems\autocite{wadler_parametricity}. For a nice introduction to
parametricity theorems, see for example
\autocite{welltyped_parametricity} or \autocite{bartosz_parametricity}.

To make the derivations clearer, we will omit a lot of the
wrapping/unwrapping that would be necessary for our typeclass instances
to not overlap. We will ignore possible ambiguous definitions and
overlapping instances when it is clear which one we are referring to.

This notably means that a lot of the code presented does not compile. A
version of the presented concepts that does compile is available in
appendix \labelcref{working-implementation}. \bigskip

\begin{notation}

We will sometimes write ``\ensuremath{\Varid{f}\;\mathbin{\in}\;\Conid{Functor}}'' in place of ``\ensuremath{\Varid{f}} is an
instance of \ensuremath{\Conid{Functor}}''.

\end{notation}

\hypertarget{layout-and-aims}{%
\section{Layout and aims}\label{layout-and-aims}}

Profunctor optics have very nice compositional properties, but have one
major drawback: it is hard to relate them to concrete optics.

In particular, given a profunctor-encoded optic, we have observed
(\textcite{profoptics_are_complicated}) that it is not always
straightforward to determine if it obeys some desired law. Operators
tend to be a mess of wrapping and unwrapping of the relevant profunctors
which can make checking even simple laws quite cumbersome.

Another difficulty arises when trying to define a profunctor encoding
for a new optic family. Indeed, the link between the record definition
of lenses or prisms and their profunctor encoding is far from obvious.
\bigskip

This thesis aims at solving this problem by providing general theorems
about the structure of profunctor optics that make it easier to study
their properties and derive new instances.

In \textcite{profunctor-optics-generically}, we provide general
definitions of optic families and profunctor encodings.

In \textcite{isomorphism-optics}, we present a new encoding of optics
called \emph{isomorphism optics}, that are easier to reason about than
profunctor optics, and we derive some of their major properties.

In \textcite{two-theorems-about-profunctor-optics}, we study profunctor
encodings generically and derive two important theorems about profunctor
optics.

Finally, in
\textcite{tying-it-all-up-a-new-look-at-profunctor-encodings}, using the
two theorems we redefine some common optic families in our new framework
and derive some of their properties. We then present a case study of how
to derive a new profunctor encoding.

\hypertarget{profunctor-optics-generically}{%
\chapter{Profunctor optics,
generically}\label{profunctor-optics-generically}}

In this chapter, we define precisely the notions of optic families,
profunctor optics, and profunctor encodings between them.

\hypertarget{optic-families}{%
\section{Optic families}\label{optic-families}}

\hypertarget{definition}{%
\subsection{Definition}\label{definition}}

We have seen that \ensuremath{\Conid{Lens}} and \ensuremath{\Conid{Prism}} are instances of what we call
``optic families''. More generally, we call \emph{optic family} a type
constructor \ensuremath{\Varid{op}} that takes 4 type parameters and has the following
properties:

First of all, the characteristic property of optics is that they are
composable. Therefore \ensuremath{\Varid{op}} needs an operator to compose optics:

\begin{hscode}\SaveRestoreHook
\column{B}{@{}>{\hspre}l<{\hspost}@{}}%
\column{E}{@{}>{\hspre}l<{\hspost}@{}}%
\>[B]{}(\circ_{op})\mathbin{::}\Varid{op}\;\Varid{a}\;\Varid{b}\;\Varid{s}\;\Varid{t}\to \Varid{op}\;\Varid{x}\;\Varid{y}\;\Varid{a}\;\Varid{b}\to \Varid{op}\;\Varid{x}\;\Varid{y}\;\Varid{s}\;\Varid{t}{}\<[E]%
\ColumnHook
\end{hscode}\resethooks

We also require this composition operator to have an identity:

\begin{hscode}\SaveRestoreHook
\column{B}{@{}>{\hspre}l<{\hspost}@{}}%
\column{E}{@{}>{\hspre}l<{\hspost}@{}}%
\>[B]{}\Varid{id}_{\Varid{op}}\mathbin{::}\Varid{op}\;\Varid{a}\;\Varid{b}\;\Varid{a}\;\Varid{b}{}\<[E]%
\ColumnHook
\end{hscode}\resethooks

Together, \ensuremath{(\circ_{op})} and \ensuremath{\Varid{id}_{\Varid{op}}} should verify the usual axioms of
composition, i.e.~associativity and right/left identity.

\ensuremath{\Varid{op}} should contain at least the trivial transformations, i.e.~pairs of
arrows:

\begin{hscode}\SaveRestoreHook
\column{B}{@{}>{\hspre}l<{\hspost}@{}}%
\column{E}{@{}>{\hspre}l<{\hspost}@{}}%
\>[B]{}\Varid{injOptic}\mathbin{::}(\Varid{s}\to \Varid{a})\to (\Varid{b}\to \Varid{t})\to \Varid{op}\;\Varid{a}\;\Varid{b}\;\Varid{s}\;\Varid{t}{}\<[E]%
\ColumnHook
\end{hscode}\resethooks

We require \ensuremath{\Varid{injOptic}} to be compatible with composition in the following
sense:

\begin{itemize}
\tightlist
\item
  \ensuremath{\Varid{injOptic}\;\Varid{id}\;\Varid{id}\mathrel{=}\Varid{id}_{\Varid{op}}}
\item
  \ensuremath{\Varid{injOptic}\;(\Varid{f'}\hsdot{\circ }{.\;}\Varid{f})\;(\Varid{g}\hsdot{\circ }{.\;}\Varid{g'})\mathrel{=}\Varid{injOptic}\;\Varid{f}\;\Varid{g}\ \circ_{op}\ \Varid{injOptic}\;\Varid{f'}\;\Varid{g'}}
\end{itemize}

Finally, we require \ensuremath{\Varid{op}\;\Varid{a}\;\Varid{b}\;\Varid{s}\;\Varid{t}} to at least be able to transform an
\ensuremath{\Varid{a}\to \Varid{b}} arrow into an \ensuremath{\Varid{s}\to \Varid{t}} arrow.

\begin{hscode}\SaveRestoreHook
\column{B}{@{}>{\hspre}l<{\hspost}@{}}%
\column{E}{@{}>{\hspre}l<{\hspost}@{}}%
\>[B]{}\Varid{mapOptic}\mathbin{::}\Varid{op}\;\Varid{a}\;\Varid{b}\;\Varid{s}\;\Varid{t}\to (\Varid{a}\to \Varid{b})\to (\Varid{s}\to \Varid{t}){}\<[E]%
\ColumnHook
\end{hscode}\resethooks

\ensuremath{\Varid{mapOptic}} should abide by the following laws:

\begin{itemize}
\tightlist
\item
  \ensuremath{\Varid{mapOptic}\;(\Varid{injOptic}\;\Varid{f}\;\Varid{g})\;\Varid{h}\mathrel{=}\Varid{g}\hsdot{\circ }{.\;}\Varid{h}\hsdot{\circ }{.\;}\Varid{f}}
\item
  \ensuremath{\Varid{mapOptic}\;(\Varid{op}_{\mathrm{1}}\ \circ_{op}\ \Varid{op}_{\mathrm{2}})\mathrel{=}\Varid{mapOptic}\;\Varid{op}_{\mathrm{1}}\hsdot{\circ }{.\;}\Varid{mapOptic}\;\Varid{op}_{\mathrm{2}}}
\end{itemize}

\bigskip

We bundle these requirements in a Haskell typeclass: \bigskip

\begin{defn}[Optic family]

An optic family is a 4-parameter type constructor \ensuremath{\Varid{op}} instance of the
following typeclass:

\begin{hscode}\SaveRestoreHook
\column{B}{@{}>{\hspre}l<{\hspost}@{}}%
\column{5}{@{}>{\hspre}l<{\hspost}@{}}%
\column{E}{@{}>{\hspre}l<{\hspost}@{}}%
\>[B]{}\mathbf{class}\;\Conid{OpticFamily}\;(\Varid{op}\mathbin{::}\mathbin{*}\to \mathbin{*}\to \mathbin{*}\to \mathbin{*}\to \mathbin{*})\;\mathbf{where}{}\<[E]%
\\
\>[B]{}\hsindent{5}{}\<[5]%
\>[5]{}(\circ_{op})\mathbin{::}\Varid{op}\;\Varid{a}\;\Varid{b}\;\Varid{s}\;\Varid{t}\to \Varid{op}\;\Varid{x}\;\Varid{y}\;\Varid{a}\;\Varid{b}\to \Varid{op}\;\Varid{x}\;\Varid{y}\;\Varid{s}\;\Varid{t}{}\<[E]%
\\
\>[B]{}\hsindent{5}{}\<[5]%
\>[5]{}\Varid{id}_{\Varid{op}}\mathbin{::}\Varid{op}\;\Varid{a}\;\Varid{b}\;\Varid{a}\;\Varid{b}{}\<[E]%
\\
\>[B]{}\hsindent{5}{}\<[5]%
\>[5]{}\Varid{injOptic}\mathbin{::}(\Varid{s}\to \Varid{a})\to (\Varid{b}\to \Varid{t})\to \Varid{op}\;\Varid{a}\;\Varid{b}\;\Varid{s}\;\Varid{t}{}\<[E]%
\\
\>[B]{}\hsindent{5}{}\<[5]%
\>[5]{}\Varid{mapOptic}\mathbin{::}\Varid{op}\;\Varid{a}\;\Varid{b}\;\Varid{s}\;\Varid{t}\to (\Varid{a}\to \Varid{b})\to (\Varid{s}\to \Varid{t}){}\<[E]%
\ColumnHook
\end{hscode}\resethooks

that verifies the axioms stated above.

\end{defn}

\bigskip

\begin{rmk}

In a precise sense, an optic family forms a category which acts on pairs
of types, similar to how the usual definition from \ensuremath{\Conid{\Conid{Control}.Category}}
acts on single types. \ensuremath{\Varid{injOptic}} then describes a (bijective on objects)
functor from \ensuremath{\Conid{Hask}^{\Varid{op}}\mathbin{\mathbf{\times}}\Conid{Hask}} to \ensuremath{\Varid{op}}.

\end{rmk}

\bigskip

\begin{rmk}\label{omit_idoptic}

Using the first \ensuremath{\Varid{injOptic}} axiom, we can provide the implementation of
\ensuremath{\Varid{id}_{\Varid{op}}} generically:

\begin{hscode}\SaveRestoreHook
\column{B}{@{}>{\hspre}l<{\hspost}@{}}%
\column{E}{@{}>{\hspre}l<{\hspost}@{}}%
\>[B]{}\Varid{id}_{\Varid{op}}\mathrel{=}\Varid{injOptic}\;\Varid{id}\;\Varid{id}{}\<[E]%
\ColumnHook
\end{hscode}\resethooks

We will therefore omit implementations of \ensuremath{\Varid{id}_{\Varid{op}}}.

\end{rmk}

\bigskip

\begin{defn}[Morphism of optic families]

Given two optic families \ensuremath{\Varid{op}} and \ensuremath{\Varid{op'}}, an arrow
\ensuremath{\theta\mathbin{::}\mathbf{forall}\;\Varid{a}\hsforall \;\Varid{b}\;\Varid{s}\;\Varid{t}\hsdot{\circ }{.\;}\Varid{op}\;\Varid{a}\;\Varid{b}\;\Varid{s}\;\Varid{t}\to \Varid{op'}\;\Varid{a}\;\Varid{b}\;\Varid{s}\;\Varid{t}} is a morphism of
optic families iff it respects the structure of the optic families in
the following sense:

\begin{hscode}\SaveRestoreHook
\column{B}{@{}>{\hspre}l<{\hspost}@{}}%
\column{E}{@{}>{\hspre}l<{\hspost}@{}}%
\>[B]{}\theta\;(\Varid{injOptic}\;\Varid{f}\;\Varid{g})\mathrel{=}\Varid{injOptic}\;\Varid{f}\;\Varid{g}{}\<[E]%
\\
\>[B]{}\theta\;(\Varid{op}_{\mathrm{1}}\ \circ_{op}\ \Varid{op}_{\mathrm{2}})\mathrel{=}\theta\;\Varid{op}_{\mathrm{1}}\ \circ_{op}\ \theta\;\Varid{op}_{\mathrm{2}}{}\<[E]%
\\
\>[B]{}\Varid{mapOptic}\;(\theta\;\Varid{op})\mathrel{=}\Varid{mapOptic}\;\Varid{op}{}\<[E]%
\ColumnHook
\end{hscode}\resethooks

\end{defn}

\bigskip

\begin{notation}

We note morphism of optic families with a squiggly arrow:
\ensuremath{\theta\mathbin{::}\Varid{op}\leadsto \Varid{op'}}.

\end{notation}

\bigskip

\begin{rmk}\label{discuss_opfam}

The choice of the methods of the \ensuremath{\Conid{OpticFamily}} typeclass may seem
arbitrary, but it has been carefully made to be general enough to
encompass all existing optic families and still allow interesting
theorems.

The \ensuremath{\Varid{injOptic}} requirement forces optics to be functorial in their 4
arguments. An equivalent requirement would have been to include
\ensuremath{\Varid{multiMapOptic}} from the proposition below instead, along with relevant
axioms.

\ensuremath{\Varid{mapOptic}} was introduced to have a notion of well-behavedness for
\ensuremath{\Varid{enhanceOp}} (see \textcite{isomorphism-optics}). The requirement that
\ensuremath{\Varid{mapOptic}} be preserved by morphisms between optic families is quite
strong and implies surprising properties like
\textcite{endo_iso_is_singleton}, but it also enables the second
important result of this paper (\textcite{functorization_theorem}). It
is not too strong either since all morphisms studied respect it.

\end{rmk}

\bigskip

\begin{rmk}\label{opfam_and_lawfulness}

Both \ensuremath{\Varid{injOptic}} and \ensuremath{\Varid{mapOptic}} seem to have a deep link with
well-behavedness laws: for all the optic families studied, if \ensuremath{\Varid{f}} and
\ensuremath{\Varid{g}} are mutual inverses, then \ensuremath{\Varid{injOptic}\;\Varid{f}\;\Varid{g}} is lawful; conversely, if
\ensuremath{\Varid{op}} is a lawful optic, then \ensuremath{\Varid{mapOptic}\;\Varid{op}\;\Varid{id}\mathrel{=}\Varid{id}} and
\ensuremath{\Varid{mapOptic}\;\Varid{op}\;(\Varid{f}\hsdot{\circ }{.\;}\Varid{g})\mathrel{=}\Varid{mapOptic}\;\Varid{op}\;\Varid{f}\hsdot{\circ }{.\;}\Varid{mapOptic}\;\Varid{op}\;\Varid{g}}. The determination
of a general definition for optics well-behavedness laws is still an
open problem.

\end{rmk}

\bigskip

\hypertarget{properties}{%
\subsection{Properties}\label{properties}}

\begin{prop}

Optic families are covariant in their second and fourth arguments, and
contravariant in the other two.

\end{prop}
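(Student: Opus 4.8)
The proposition says nothing more than that \(\mathit{op}\) extends to a functor with the stated variance in its four type arguments, and the argument is completely generic: it uses only the \(\mathsf{OpticFamily}\) axioms and no property of any concrete optic family. The plan is to read the action on morphisms off \(\mathsf{injOptic}\) and \((\circ_{op})\) and then check the functor laws. Concretely, I would define the action by pre- and post-composition with the trivial optics produced by \(\mathsf{injOptic}\): given arrows \(f, g, h, k\) in the directions dictated by the signatures \(\mathsf{injOptic} : (s\to a)\to(b\to t)\to\mathit{op}\,a\,b\,s\,t\) and \((\circ_{op}) : \mathit{op}\,a\,b\,s\,t\to\mathit{op}\,x\,y\,a\,b\to\mathit{op}\,x\,y\,s\,t\) — which is exactly what pins down the covariances and contravariances asserted — set \(\theta_{f,g,h,k}(o) = (\mathsf{injOptic}\,h\,k)\mathbin{\circ_{op}} o \mathbin{\circ_{op}} (\mathsf{injOptic}\,f\,g)\), an expression of the required type \(\mathit{op}\,a\,b\,s\,t\to\mathit{op}\,a'\,b'\,s'\,t'\) that is unambiguous by associativity of \((\circ_{op})\).

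It then remains to check the two functor laws. Preservation of identities is immediate: \(\mathsf{injOptic}\,\mathrm{id}\,\mathrm{id} = \mathit{id}_{op}\) by the first \(\mathsf{injOptic}\) axiom (equivalently \cref{omit_idoptic}), so \(\theta_{\mathrm{id},\mathrm{id},\mathrm{id},\mathrm{id}}(o) = \mathit{id}_{op}\mathbin{\circ_{op}} o\mathbin{\circ_{op}} \mathit{id}_{op} = o\) by the identity laws of \((\circ_{op})\). For preservation of composites, given a second tuple \(f', g', h', k'\), I would expand \(\theta_{f',g',h',k'}\circ\theta_{f,g,h,k}\), reassociate the resulting fivefold \((\circ_{op})\)-product, and fuse the two outer pairs of \(\mathsf{injOptic}\)-terms using the second \(\mathsf{injOptic}\) axiom \(\mathsf{injOptic}\,(u'\circ u)\,(v\circ v') = \mathsf{injOptic}\,u\,v \mathbin{\circ_{op}} \mathsf{injOptic}\,u'\,v'\); what survives is \(\theta\) applied to the componentwise composites of the two tuples, each composite formed in the direction matching that component's variance. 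Since \(\theta\) is a single pre/post-composition, changing two distinct arguments commutes on the nose, so \(\mathit{op}\) is a genuine functor on the appropriate product of copies of \(\mathsf{Hask}\) and \(\mathsf{Hask}^{op}\).

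I do not expect any real obstacle: everything collapses to the associativity and identity axioms of \((\circ_{op})\) together with the two \(\mathsf{injOptic}\) compatibility laws. Conceptually this is just the observation recorded in the remark immediately after the definition of optic family — that \(\mathsf{injOptic}\) is a bijective-on-objects functor \(\mathsf{Hask}^{op}\times\mathsf{Hask}\to\mathit{op}\) — so the hom-set \(\mathit{op}\,a\,b\,s\,t\), transported along that functor, simply inherits its variances. The only thing that demands care is keeping the four variances, and hence the directions of every composing arrow, straight throughout the calculation.
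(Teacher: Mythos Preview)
Your proposal is correct and matches the paper's approach exactly: the paper defines \(\mathit{multiMapOptic}\,f_a\,f_b\,f_s\,f_t\,op = \mathsf{injOptic}\,f_s\,f_t \mathbin{\circ_{op}} op \mathbin{\circ_{op}} \mathsf{injOptic}\,f_a\,f_b\), which is precisely your \(\theta\), and then remarks that the \(\mathsf{OpticFamily}\) axioms imply the functor laws without spelling out the verification. You have simply filled in the details the paper omits.
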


\begin{proof}

We can write the following generalization of \ensuremath{\Varid{fmap}} and \ensuremath{\Varid{dimap}}:

\begin{hscode}\SaveRestoreHook
\column{B}{@{}>{\hspre}l<{\hspost}@{}}%
\column{9}{@{}>{\hspre}l<{\hspost}@{}}%
\column{17}{@{}>{\hspre}l<{\hspost}@{}}%
\column{E}{@{}>{\hspre}l<{\hspost}@{}}%
\>[B]{}\Varid{multiMapOptic}\mathbin{::}\Conid{OpticFamily}\;\Varid{op}\Rightarrow (\Varid{a'}\to \Varid{a})\to (\Varid{b}\to \Varid{b'})\to (\Varid{s'}\to \Varid{s})\to (\Varid{t}\to \Varid{t'})\to {}\<[E]%
\\
\>[B]{}\hsindent{17}{}\<[17]%
\>[17]{}\Varid{op}\;\Varid{a}\;\Varid{b}\;\Varid{s}\;\Varid{t}\to \Varid{op}\;\Varid{a'}\;\Varid{b'}\;\Varid{s'}\;\Varid{t'}{}\<[E]%
\\
\>[B]{}\Varid{multiMapOptic}\;\Varid{fa}\;\Varid{fb}\;\Varid{fs}\;\Varid{ft}\;\Varid{op}\mathrel{=}{}\<[E]%
\\
\>[B]{}\hsindent{9}{}\<[9]%
\>[9]{}\Varid{injOptic}\;\Varid{fs}\;\Varid{ft}\ \circ_{op}\ \Varid{op}\ \circ_{op}\ \Varid{injOptic}\;\Varid{fa}\;\Varid{fb}{}\<[E]%
\ColumnHook
\end{hscode}\resethooks

The \ensuremath{\Conid{OpticFamily}} axioms easily imply adequate versions of the \ensuremath{\Varid{fmap}}
laws for this function.

\end{proof}

\begin{note}

In the rest of the paper, we will make use of the following
specialization:

\begin{hscode}\SaveRestoreHook
\column{B}{@{}>{\hspre}l<{\hspost}@{}}%
\column{E}{@{}>{\hspre}l<{\hspost}@{}}%
\>[B]{}\Varid{dimapOptic}\mathbin{::}\Conid{OpticFamily}\;\Varid{op}\Rightarrow (\Varid{s'}\to \Varid{s})\to (\Varid{t}\to \Varid{t'})\to \Varid{op}\;\Varid{a}\;\Varid{b}\;\Varid{s}\;\Varid{t}\to \Varid{op}\;\Varid{a}\;\Varid{b}\;\Varid{s'}\;\Varid{t'}{}\<[E]%
\\
\>[B]{}\mbox{\onelinecomment  \ensuremath{\Varid{dimapOptic}\;\Varid{f}\;\Varid{g}\mathrel{=}\Varid{multiMapOptic}\;\Varid{id}\;\Varid{id}\;\Varid{f}\;\Varid{g}}}{}\<[E]%
\\
\>[B]{}\Varid{dimapOptic}\;\Varid{f}\;\Varid{g}\;\Varid{op}\mathrel{=}\Varid{injOptic}\;\Varid{f}\;\Varid{g}\ \circ_{op}\ \Varid{op}{}\<[E]%
\ColumnHook
\end{hscode}\resethooks

\ensuremath{\Varid{dimapOptic}} verifies the \ensuremath{\Varid{dimap}} laws.

\end{note}

\bigskip

\begin{prop}\label{invertible_opfam_morphism}

If \ensuremath{\theta\mathbin{::}\Varid{op}\leadsto \Varid{op'}} has an inverse, then its inverse is a morphism
of optic families too.

\end{prop}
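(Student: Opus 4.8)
The plan is to show that if $\theta \colon \Varid{op}\leadsto\Varid{op'}$ has a two-sided inverse $\phi \colon \Varid{op'}\to\Varid{op}$ (as a polymorphic function), then $\phi$ satisfies the three defining equations of a morphism of optic families. Each of the three equations for $\phi$ will be obtained from the corresponding equation for $\theta$ by a short "conjugation" argument: precompose or postcompose with $\theta$ and $\phi$ to cancel them in pairs, using $\theta\circ\phi = \Varid{id}$ and $\phi\circ\theta = \Varid{id}$.

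Concretely, I would proceed in three steps. First, for \ensuremath{\Varid{injOptic}}: starting from the $\theta$-axiom $\theta(\Varid{injOptic}\;\Varid{f}\;\Varid{g}) = \Varid{injOptic}\;\Varid{f}\;\Varid{g}$, apply $\phi$ to both sides; the left becomes $\phi(\theta(\Varid{injOptic}\;\Varid{f}\;\Varid{g})) = \Varid{injOptic}\;\Varid{f}\;\Varid{g}$ by $\phi\circ\theta=\Varid{id}$, and the right becomes $\phi(\Varid{injOptic}\;\Varid{f}\;\Varid{g})$, yielding $\phi(\Varid{injOptic}\;\Varid{f}\;\Varid{g}) = \Varid{injOptic}\;\Varid{f}\;\Varid{g}$. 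Second, for composition: given $\Varid{op'}_1,\Varid{op'}_2$ in $\Varid{op'}$, write them as $\Varid{op'}_i = \theta(\phi\;\Varid{op'}_i)$ using $\theta\circ\phi=\Varid{id}$, then use the $\theta$-composition axiom $\theta(\phi\;\Varid{op'}_1 \circ_{op} \phi\;\Varid{op'}_2) = \theta(\phi\;\Varid{op'}_1) \circ_{op} \theta(\phi\;\Varid{op'}_2) = \Varid{op'}_1 \circ_{op} \Varid{op'}_2$; applying $\phi$ and cancelling gives $\phi(\Varid{op'}_1 \circ_{op} \Varid{op'}_2) = \phi\;\Varid{op'}_1 \circ_{op} \phi\;\Varid{op'}_2$. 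Third, for \ensuremath{\Varid{mapOptic}}: for any $\Varid{op'}$ in $\Varid{op'}$, the $\theta$-axiom applied to $\phi\;\Varid{op'}$ gives $\Varid{mapOptic}(\theta(\phi\;\Varid{op'})) = \Varid{mapOptic}(\phi\;\Varid{op'})$, and the left-hand side is $\Varid{mapOptic}\;\Varid{op'}$ by $\theta\circ\phi=\Varid{id}$, so $\Varid{mapOptic}(\phi\;\Varid{op'}) = \Varid{mapOptic}\;\Varid{op'}$.

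There is essentially no hard obstacle here; the argument is a routine cancellation using that $\theta$ and $\phi$ are mutually inverse \emph{as natural/parametric transformations}. The only point that needs a word of care is that the inverse is assumed to be polymorphic in all four type parameters (so that it makes sense to speak of it as a morphism of optic families at all) and that the identities $\theta\circ\phi=\Varid{id}$, $\phi\circ\theta=\Varid{id}$ hold componentwise at every instantiation of the type parameters — which is exactly what "has an inverse" should mean in this setting. Given that, each of the three displayed equations for $\phi$ follows in one or two lines, and I would present the three computations as short aligned equation blocks with the cancellation steps annotated.
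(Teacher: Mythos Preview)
Your proposal is correct and follows essentially the same approach as the paper: both arguments verify each of the three morphism axioms for the inverse by inserting $\theta\circ\theta^{-1}$ (or $\theta^{-1}\circ\theta$) in the right place and cancelling, using the corresponding axiom for $\theta$. The paper's presentation is slightly terser (it writes the three equation chains directly with $\theta^{-1}$ in place of your $\phi$), but the content is identical.
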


\begin{proof}

The \ensuremath{\Conid{OpticFamily}} methods can easily be seen to be transported through
\ensuremath{\theta^{-1}}:

\begin{hscode}\SaveRestoreHook
\column{B}{@{}>{\hspre}l<{\hspost}@{}}%
\column{E}{@{}>{\hspre}l<{\hspost}@{}}%
\>[B]{}\theta^{-1}\;(\Varid{injOptic}\;\Varid{f}\;\Varid{g}){}\<[E]%
\\
\>[B]{}\mathrel{=}\theta^{-1}\;(\theta\;(\Varid{injOptic}\;\Varid{f}\;\Varid{g})){}\<[E]%
\\
\>[B]{}\mathrel{=}\Varid{injOptic}\;\Varid{f}\;\Varid{g}{}\<[E]%
\\[\blanklineskip]%
\>[B]{}\theta^{-1}\;(\Varid{op}_{\mathrm{1}}\ \circ_{op}\ \Varid{op}_{\mathrm{2}}){}\<[E]%
\\
\>[B]{}\mathrel{=}\theta^{-1}\;(\theta\;(\theta^{-1}\;\Varid{op}_{\mathrm{1}})\ \circ_{op}\ \theta\;(\theta^{-1}\;\Varid{op}_{\mathrm{2}})){}\<[E]%
\\
\>[B]{}\mathrel{=}\theta^{-1}\;(\theta\;(\theta^{-1}\;\Varid{op}_{\mathrm{1}}\ \circ_{op}\ \theta^{-1}\;\Varid{op}_{\mathrm{2}})){}\<[E]%
\\
\>[B]{}\mathrel{=}\theta^{-1}\;\Varid{op}_{\mathrm{1}}\ \circ_{op}\ \theta^{-1}\;\Varid{op}_{\mathrm{2}}{}\<[E]%
\\[\blanklineskip]%
\>[B]{}\Varid{mapOptic}\;(\theta^{-1}\;\Varid{op}){}\<[E]%
\\
\>[B]{}\mathrel{=}\Varid{mapOptic}\;(\theta\;(\theta^{-1}\;\Varid{op})){}\<[E]%
\\
\>[B]{}\mathrel{=}\Varid{mapOptic}{}\<[E]%
\ColumnHook
\end{hscode}\resethooks

\end{proof}

\hypertarget{examples}{%
\subsection{Examples}\label{examples}}

\begin{prop}

Pairs of arrows form an optic family (dubbed
\emph{adapters}\autocite{profunctor_optics}, or
\emph{isos}\autocite{kmett/lens}):

\begin{hscode}\SaveRestoreHook
\column{B}{@{}>{\hspre}l<{\hspost}@{}}%
\column{E}{@{}>{\hspre}l<{\hspost}@{}}%
\>[B]{}\mathbf{data}\;\Conid{Adapter}\;\Varid{a}\;\Varid{b}\;\Varid{s}\;\Varid{t}\mathrel{=}\Conid{Adapter}\;(\Varid{s}\to \Varid{a})\;(\Varid{b}\to \Varid{t}){}\<[E]%
\ColumnHook
\end{hscode}\resethooks

\end{prop}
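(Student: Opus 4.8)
The plan is to exhibit the four \ensuremath{\Conid{OpticFamily}} methods for \ensuremath{\Conid{Adapter}} and then check the axioms by routine equational reasoning. First I would define the data to carry along a pair of arrows explicitly, so that an \ensuremath{\Conid{Adapter}\;\Varid{a}\;\Varid{b}\;\Varid{s}\;\Varid{t}} is precisely a choice of \ensuremath{\Varid{s}\to \Varid{a}} together with \ensuremath{\Varid{b}\to \Varid{t}}. Then \ensuremath{\Varid{injOptic}\;\Varid{f}\;\Varid{g}\mathrel{=}\Conid{Adapter}\;\Varid{f}\;\Varid{g}} is immediate, and \ensuremath{\Varid{mapOptic}\;(\Conid{Adapter}\;\Varid{f}\;\Varid{g})\;\Varid{h}\mathrel{=}\Varid{g}\hsdot{\circ }{.\;}\Varid{h}\hsdot{\circ }{.\;}\Varid{f}} is the only sensible choice given the first \ensuremath{\Varid{mapOptic}} law. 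For composition, given \ensuremath{\Conid{Adapter}\;\Varid{f}_1\;\Varid{g}_1\mathbin{::}\Conid{Adapter}\;\Varid{a}\;\Varid{b}\;\Varid{s}\;\Varid{t}} and \ensuremath{\Conid{Adapter}\;\Varid{f}_2\;\Varid{g}_2\mathbin{::}\Conid{Adapter}\;\Varid{x}\;\Varid{y}\;\Varid{a}\;\Varid{b}}, the types force \ensuremath{(\Conid{Adapter}\;\Varid{f}_1\;\Varid{g}_1)\circ_{op}(\Conid{Adapter}\;\Varid{f}_2\;\Varid{g}_2)\mathrel{=}\Conid{Adapter}\;(\Varid{f}_2\hsdot{\circ }{.\;}\Varid{f}_1)\;(\Varid{g}_1\hsdot{\circ }{.\;}\Varid{g}_2)}, and by Remark~\labelcref{omit_idoptic} I can take \ensuremath{\Varid{id}_{\Varid{op}}\mathrel{=}\Varid{injOptic}\;\Varid{id}\;\Varid{id}\mathrel{=}\Conid{Adapter}\;\Varid{id}\;\Varid{id}} rather than stating it separately.

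Having made these definitions, the axioms reduce to standard facts about function composition. Associativity of \ensuremath{\circ_{op}} and the left/right identity laws follow directly from associativity of \ensuremath{\circ} and the unit laws for \ensuremath{\Varid{id}}, since composing adapters just composes the two underlying arrows (one covariantly, one contravariantly). The first \ensuremath{\Varid{injOptic}} axiom, \ensuremath{\Varid{injOptic}\;\Varid{id}\;\Varid{id}\mathrel{=}\Varid{id}_{\Varid{op}}}, holds by definition; the second, \ensuremath{\Varid{injOptic}\;(\Varid{f'}\hsdot{\circ }{.\;}\Varid{f})\;(\Varid{g}\hsdot{\circ }{.\;}\Varid{g'})\mathrel{=}\Varid{injOptic}\;\Varid{f}\;\Varid{g}\circ_{op}\Varid{injOptic}\;\Varid{f'}\;\Varid{g'}}, unfolds on the right-hand side to \ensuremath{\Conid{Adapter}\;(\Varid{f'}\hsdot{\circ }{.\;}\Varid{f})\;(\Varid{g}\hsdot{\circ }{.\;}\Varid{g'})}, matching the left. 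For \ensuremath{\Varid{mapOptic}}, the first law is the definition evaluated at \ensuremath{\Conid{Adapter}\;\Varid{f}\;\Varid{g}}, and the second, \ensuremath{\Varid{mapOptic}\;(\Varid{op}_1\circ_{op}\Varid{op}_2)\mathrel{=}\Varid{mapOptic}\;\Varid{op}_1\hsdot{\circ }{.\;}\Varid{mapOptic}\;\Varid{op}_2}, follows by computing both sides on adapters and reassociating: the left gives \ensuremath{\Varid{h}\mapsto (\Varid{g}_1\hsdot{\circ }{.\;}\Varid{g}_2)\hsdot{\circ }{.\;}\Varid{h}\hsdot{\circ }{.\;}(\Varid{f}_2\hsdot{\circ }{.\;}\Varid{f}_1)} and the right gives \ensuremath{\Varid{h}\mapsto \Varid{g}_1\hsdot{\circ }{.\;}(\Varid{g}_2\hsdot{\circ }{.\;}\Varid{h}\hsdot{\circ }{.\;}\Varid{f}_2)\hsdot{\circ }{.\;}\Varid{f}_1}, which agree by associativity.

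There is no real obstacle here; the only thing to be slightly careful about is the \emph{direction} of the arrows in the composition operator. Reading the type \ensuremath{\Varid{op}\;\Varid{a}\;\Varid{b}\;\Varid{s}\;\Varid{t}\to \Varid{op}\;\Varid{x}\;\Varid{y}\;\Varid{a}\;\Varid{b}\to \Varid{op}\;\Varid{x}\;\Varid{y}\;\Varid{s}\;\Varid{t}}, the first adapter supplies \ensuremath{\Varid{s}\to \Varid{a}} and \ensuremath{\Varid{b}\to \Varid{t}}, the second supplies \ensuremath{\Varid{a}\to \Varid{x}} and \ensuremath{\Varid{y}\to \Varid{b}}, and the composite must supply \ensuremath{\Varid{s}\to \Varid{x}} and \ensuremath{\Varid{y}\to \Varid{t}}; so the forward (contravariant) components compose as \ensuremath{\Varid{s}\to \Varid{a}\to \Varid{x}} and the backward (covariant) ones as \ensuremath{\Varid{y}\to \Varid{b}\to \Varid{t}}. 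Once the bookkeeping is fixed, every axiom is a one-line check, and I would simply present the instance declaration together with a remark that all verifications are immediate from the monoid/category laws of \ensuremath{(\to )}.
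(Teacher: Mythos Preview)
Your proposal is correct and matches the paper's proof essentially exactly: the same instance (\ensuremath{\Varid{injOptic}\;\Varid{f}\;\Varid{g}\mathrel{=}\Conid{Adapter}\;\Varid{f}\;\Varid{g}}, componentwise composition, \ensuremath{\Varid{mapOptic}\;(\Conid{Adapter}\;\Varid{f}\;\Varid{g})\;\Varid{h}\mathrel{=}\Varid{g}\hsdot{\circ }{.\;}\Varid{h}\hsdot{\circ }{.\;}\Varid{f}}), the same appeal to Remark~\ref{omit_idoptic} to omit \ensuremath{\Varid{id}_{\Varid{op}}}, and the observation that the laws are immediate. If anything you are more explicit than the paper, which simply states that ``the laws follow immediately from the definition'' without spelling out the equational checks.
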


\begin{proof}

Let's implement the \ensuremath{\Conid{OpticFamily}} instance for \ensuremath{\Conid{Adapter}}:

\begin{hscode}\SaveRestoreHook
\column{B}{@{}>{\hspre}l<{\hspost}@{}}%
\column{5}{@{}>{\hspre}l<{\hspost}@{}}%
\column{E}{@{}>{\hspre}l<{\hspost}@{}}%
\>[B]{}\mathbf{instance}\;\Conid{OpticFamily}\;\Conid{Adapter}\;\mathbf{where}{}\<[E]%
\\
\>[B]{}\hsindent{5}{}\<[5]%
\>[5]{}(\Conid{Adapter}\;\Varid{f}\;\Varid{g})\ \circ_{op}\ (\Conid{Adapter}\;\Varid{f'}\;\Varid{g'})\mathrel{=}\Conid{Adapter}\;(\Varid{f'}\hsdot{\circ }{.\;}\Varid{f})\;(\Varid{g}\hsdot{\circ }{.\;}\Varid{g'}){}\<[E]%
\\
\>[B]{}\hsindent{5}{}\<[5]%
\>[5]{}\Varid{injOptic}\;\Varid{f}\;\Varid{g}\mathrel{=}\Conid{Adapter}\;\Varid{f}\;\Varid{g}{}\<[E]%
\\
\>[B]{}\hsindent{5}{}\<[5]%
\>[5]{}\Varid{mapOptic}\;(\Conid{Adapter}\;\Varid{f}\;\Varid{g})\;\Varid{h}\mathrel{=}\Varid{g}\hsdot{\circ }{.\;}\Varid{h}\hsdot{\circ }{.\;}\Varid{f}{}\<[E]%
\ColumnHook
\end{hscode}\resethooks

As per \textcite{omit_idoptic}, we can omit the \ensuremath{\Varid{id}_{\Varid{op}}}
implementation.

The laws follow immediately from the definition.

\end{proof}

\bigskip

\begin{prop}

Lenses form an optic family.

\end{prop}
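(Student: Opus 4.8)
The plan is to equip the four-parameter constructor \ensuremath{\Conid{Lens}\;\Varid{a}\;\Varid{b}\;\Varid{s}\;\Varid{t}} from \cref{polymorphic-optic-families} with an \ensuremath{\Conid{OpticFamily}} instance and then discharge the axioms by routine equational reasoning. By \cref{omit_idoptic} it suffices to supply \ensuremath{(\circ_{op})}, \ensuremath{\Varid{injOptic}} and \ensuremath{\Varid{mapOptic}}. First I would set \ensuremath{\Varid{injOptic}\;\Varid{f}\;\Varid{g} = \Conid{Lens}\;\Varid{f}\;(\lambda \Varid{b}\;\Varid{u}\to \Varid{g}\;\Varid{b})}, so that \ensuremath{\Varid{get}} is \ensuremath{\Varid{f}} and \ensuremath{\Varid{put}} discards its structure argument \ensuremath{\Varid{u}} and returns \ensuremath{\Varid{g}\;\Varid{b}}; take \ensuremath{\Varid{mapOptic}\;(\Conid{Lens}\;\Varid{get}\;\Varid{put})\;\Varid{h}\;\Varid{s} = \Varid{put}\;(\Varid{h}\;(\Varid{get}\;\Varid{s}))\;\Varid{s}}, which is the \ensuremath{\Varid{map}} combinator from \cref{polymorphic-optic-families}; and let \ensuremath{(\circ_{op})} be the evident polymorphic generalisation of the \ensuremath{\Varid{composeLens}} function of the introduction, namely \ensuremath{(\Conid{Lens}\;\Varid{g}_1\;\Varid{p}_1)\ \circ_{op}\ (\Conid{Lens}\;\Varid{g}_2\;\Varid{p}_2) = \Conid{Lens}\;(\Varid{g}_2\circ \Varid{g}_1)\;(\lambda \Varid{y}\;\Varid{s}\to \Varid{p}_1\;(\Varid{p}_2\;\Varid{y}\;(\Varid{g}_1\;\Varid{s}))\;\Varid{s})}.

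Then I would verify the \ensuremath{\Conid{OpticFamily}} axioms one at a time. The law \ensuremath{\Varid{mapOptic}\;(\Varid{injOptic}\;\Varid{f}\;\Varid{g})\;\Varid{h} = \Varid{g}\circ \Varid{h}\circ \Varid{f}} and the coherence \ensuremath{\Varid{injOptic}\;\Varid{id}\;\Varid{id} = \Varid{id}_{\Varid{op}}} are immediate after unfolding, precisely because the \ensuremath{\Varid{put}} produced by \ensuremath{\Varid{injOptic}} ignores its structure argument. For \ensuremath{\Varid{injOptic}\;(\Varid{f'}\circ \Varid{f})\;(\Varid{g}\circ \Varid{g'}) = \Varid{injOptic}\;\Varid{f}\;\Varid{g}\ \circ_{op}\ \Varid{injOptic}\;\Varid{f'}\;\Varid{g'}} one expands the right-hand side: its \ensuremath{\Varid{get}} is \ensuremath{\Varid{f'}\circ \Varid{f}}, and its \ensuremath{\Varid{put}} is a composite of two \ensuremath{\Varid{put}}s that both discard the structure, so it collapses to \ensuremath{\lambda \Varid{y}\;\Varid{u}\to \Varid{g}\;(\Varid{g'}\;\Varid{y})}, which together with the \ensuremath{\Varid{get}} component reproduces the left-hand side. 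For \ensuremath{\Varid{mapOptic}\;(\Varid{op}_1\ \circ_{op}\ \Varid{op}_2) = \Varid{mapOptic}\;\Varid{op}_1\circ \Varid{mapOptic}\;\Varid{op}_2}, unfolding both sides applied to \ensuremath{\Varid{h}} and \ensuremath{\Varid{s}} yields \ensuremath{\Varid{p}_1\;(\Varid{p}_2\;(\Varid{h}\;(\Varid{g}_2\;(\Varid{g}_1\;\Varid{s})))\;(\Varid{g}_1\;\Varid{s}))\;\Varid{s}} on both; notably, no lens law is invoked. The remaining obligations --- associativity of \ensuremath{(\circ_{op})} and the left/right unit laws against \ensuremath{\Varid{id}_{\Varid{op}}} --- follow by the same style of unfolding, using associativity and unitality of ordinary function composition for the \ensuremath{\Varid{get}} component.

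The step I expect to be the most error-prone --- though it is bookkeeping rather than anything deep --- is checking associativity of \ensuremath{(\circ_{op})} together with the second \ensuremath{\Varid{mapOptic}} law, since there the composed \ensuremath{\Varid{put}} has to recover each intermediate focus via the matching composite of \ensuremath{\Varid{get}}s, and it is easy to transpose two of them (the types do, however, catch most such slips). It is worth stressing that, in contrast with the very-well-behavedness discussion of \cref{simple-optics}, none of the \ensuremath{\Conid{OpticFamily}} axioms invokes (GetPut), (PutGet) or (PutPut): they hold for a completely arbitrary pair \ensuremath{\Varid{get}}, \ensuremath{\Varid{put}}, in line with \cref{opfam_and_lawfulness}.
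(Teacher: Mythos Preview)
Your proposal is correct and follows exactly the same approach as the paper: you give the identical definitions of \ensuremath{\Varid{injOptic}}, \ensuremath{(\circ_{op})} and \ensuremath{\Varid{mapOptic}}, and you even spell out the law verifications that the paper merely dismisses as ``rather mechanical'' and omits. Your closing observation that none of the axioms relies on the lens well-behavedness laws is also precisely the point the paper chooses to highlight.
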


\begin{proof}

Let's implement the \ensuremath{\Conid{OpticFamily}} instance for \ensuremath{\Conid{Lens}}:

\begin{hscode}\SaveRestoreHook
\column{B}{@{}>{\hspre}l<{\hspost}@{}}%
\column{E}{@{}>{\hspre}l<{\hspost}@{}}%
\>[B]{}\mathbf{instance}\;\Conid{OpticFamily}\;\Conid{Lens}\;\mathbf{where}{}\<[E]%
\ColumnHook
\end{hscode}\resethooks

\ensuremath{\Varid{injOptic}} embeds a trivial transformation into a lens. Its \ensuremath{\Varid{get}} is
straightforward: it uses the \ensuremath{\Varid{s}\to \Varid{a}} arrow provided; its \ensuremath{\Varid{put}} is
slightly more interesting: it discards the previous \ensuremath{\Varid{s}} value and
computes the result using \ensuremath{\Varid{b}} alone. We indeed do not have enough
structure on \ensuremath{\Varid{s}} to know how to put a new \ensuremath{\Varid{b}} inside it.

\begin{hscode}\SaveRestoreHook
\column{B}{@{}>{\hspre}l<{\hspost}@{}}%
\column{3}{@{}>{\hspre}l<{\hspost}@{}}%
\column{7}{@{}>{\hspre}l<{\hspost}@{}}%
\column{9}{@{}>{\hspre}l<{\hspost}@{}}%
\column{E}{@{}>{\hspre}l<{\hspost}@{}}%
\>[3]{}\Varid{injOptic}\mathbin{::}(\Varid{s}\to \Varid{a})\to (\Varid{b}\to \Varid{t})\to \Conid{Lens}\;\Varid{a}\;\Varid{b}\;\Varid{s}\;\Varid{t}{}\<[E]%
\\
\>[3]{}\Varid{injOptic}\;\Varid{f}\;\Varid{g}\mathrel{=}\Conid{Lens}\;\Varid{get}\;\Varid{put}{}\<[E]%
\\
\>[3]{}\hsindent{4}{}\<[7]%
\>[7]{}\mathbf{where}{}\<[E]%
\\
\>[7]{}\hsindent{2}{}\<[9]%
\>[9]{}\Varid{get}\mathbin{::}\Varid{s}\to \Varid{a}{}\<[E]%
\\
\>[7]{}\hsindent{2}{}\<[9]%
\>[9]{}\Varid{get}\mathrel{=}\Varid{f}{}\<[E]%
\\
\>[7]{}\hsindent{2}{}\<[9]%
\>[9]{}\Varid{put}\mathbin{::}\Varid{b}\to \Varid{s}\to \Varid{t}{}\<[E]%
\\
\>[7]{}\hsindent{2}{}\<[9]%
\>[9]{}\Varid{put}\;\Varid{b}\;\anonymous \mathrel{=}\Varid{g}\;\Varid{b}{}\<[E]%
\ColumnHook
\end{hscode}\resethooks

\ensuremath{(\circ_{op})} has exactly the same implementation as \ensuremath{\Varid{composeLens}} seen
in \textcite{simple-optics}.

\begin{hscode}\SaveRestoreHook
\column{B}{@{}>{\hspre}l<{\hspost}@{}}%
\column{3}{@{}>{\hspre}l<{\hspost}@{}}%
\column{5}{@{}>{\hspre}l<{\hspost}@{}}%
\column{7}{@{}>{\hspre}l<{\hspost}@{}}%
\column{E}{@{}>{\hspre}l<{\hspost}@{}}%
\>[3]{}(\circ_{op})\mathbin{::}\Conid{Lens}\;\Varid{a}\;\Varid{b}\;\Varid{s}\;\Varid{t}\to \Conid{Lens}\;\Varid{x}\;\Varid{y}\;\Varid{a}\;\Varid{b}\to \Conid{Lens}\;\Varid{x}\;\Varid{y}\;\Varid{s}\;\Varid{t}{}\<[E]%
\\
\>[3]{}(\circ_{op})\;(\Conid{Lens}\;\Varid{get}_{\mathrm{1}}\;\Varid{put}_{\mathrm{1}})\;(\Conid{Lens}\;\Varid{get}_{\mathrm{2}}\;\Varid{put}_{\mathrm{2}})\mathrel{=}\Conid{Lens}\;\Varid{get}\;\Varid{put}{}\<[E]%
\\
\>[3]{}\hsindent{2}{}\<[5]%
\>[5]{}\mathbf{where}{}\<[E]%
\\
\>[5]{}\hsindent{2}{}\<[7]%
\>[7]{}\Varid{get}\mathbin{::}\Varid{s}\to \Varid{x}{}\<[E]%
\\
\>[5]{}\hsindent{2}{}\<[7]%
\>[7]{}\Varid{get}\mathrel{=}\Varid{get}_{\mathrm{2}}\hsdot{\circ }{.\;}\Varid{get}_{\mathrm{1}}{}\<[E]%
\\
\>[5]{}\hsindent{2}{}\<[7]%
\>[7]{}\Varid{put}\mathbin{::}\Varid{y}\to \Varid{s}\to \Varid{t}{}\<[E]%
\\
\>[5]{}\hsindent{2}{}\<[7]%
\>[7]{}\Varid{put}\;\Varid{y}\;\Varid{s}\mathrel{=}\Varid{put}_{\mathrm{1}}\;(\Varid{put}_{\mathrm{2}}\;\Varid{y}\;(\Varid{get}_{\mathrm{1}}\;\Varid{s}))\;\Varid{s}{}\<[E]%
\ColumnHook
\end{hscode}\resethooks

\ensuremath{\Varid{mapOptic}} modifies the stored value by getting it, passing it through
the provided function, and putting it back.

\begin{hscode}\SaveRestoreHook
\column{B}{@{}>{\hspre}l<{\hspost}@{}}%
\column{3}{@{}>{\hspre}l<{\hspost}@{}}%
\column{E}{@{}>{\hspre}l<{\hspost}@{}}%
\>[3]{}\Varid{mapOptic}\mathbin{::}\Conid{Lens}\;\Varid{a}\;\Varid{b}\;\Varid{s}\;\Varid{t}\to (\Varid{a}\to \Varid{b})\to (\Varid{s}\to \Varid{t}){}\<[E]%
\\
\>[3]{}\Varid{mapOptic}\;(\Conid{Lens}\;\Varid{get}\;\Varid{put})\;\Varid{f}\;\Varid{s}\mathrel{=}\Varid{put}\;(\Varid{f}\;(\Varid{get}\;\Varid{s}))\;\Varid{s}{}\<[E]%
\ColumnHook
\end{hscode}\resethooks

The proof of the laws is rather mechanical so we omit it here.
Importantly, they do not require the lenses to be lawful lenses in order
to hold.

\end{proof}

\hypertarget{profunctor-optics-1}{%
\section{Profunctor optics}\label{profunctor-optics-1}}

In the rest of this paper, we will be manipulating Haskell typeclasses
as first-class values. This is enabled by the GHC \emph{ConstraintKinds}
extension, and works as follows:

Given a type that has a typeclass constraint, such as
\ensuremath{\Varid{fmap}\mathbin{::}\Conid{Functor}\;\Varid{f}\Rightarrow (\Varid{a}\to \Varid{b})\to (\Varid{f}\;\Varid{a}\to \Varid{f}\;\Varid{b})}, the object on the left
of the \ensuremath{\Rightarrow } arrow is called a constraint and has kind \ensuremath{\Conid{Constraint}}. A
typeclass is then a ``constraint constructor'' in the same sense that
\ensuremath{\Conid{Maybe}} is a type constructor: it takes some arguments and returns a
constraint.

For example, \ensuremath{\Conid{Functor}} has kind \ensuremath{(\mathbin{*}\to \mathbin{*})\to \Conid{Constraint}}, which means
that it is a typeclass that takes an object of kind \ensuremath{(\mathbin{*}\to \mathbin{*})} (i.e.~a
type constructor with 1 argument) as its argument.

Constraints and constraint constructors can be passed around in the same
way that types and type constructors can. They can then be used, as a
typeclass would be, on the left of the \ensuremath{\Rightarrow } arrow. See the definition of
\ensuremath{\Conid{ProfOptic}} below for an example. \bigskip \bigskip

\begin{defn}[Identity functor]

We call identity functor the following functor:

\begin{hscode}\SaveRestoreHook
\column{B}{@{}>{\hspre}l<{\hspost}@{}}%
\column{E}{@{}>{\hspre}l<{\hspost}@{}}%
\>[B]{}\mathbf{data}\;\Conid{Id}\;\Varid{a}\mathrel{=}\Conid{Id}\;\{\mskip1.5mu \Varid{unId}\mathbin{::}\Varid{a}\mskip1.5mu\}{}\<[E]%
\ColumnHook
\end{hscode}\resethooks

Here, \ensuremath{\Conid{Id}} is both the type constructor and the actual constructor, and
\ensuremath{\Varid{unId}\mathbin{::}\Conid{Id}\;\Varid{a}\to \Varid{a}} is the inverse of the \ensuremath{\Conid{Id}} constructor.

\begin{hscode}\SaveRestoreHook
\column{B}{@{}>{\hspre}l<{\hspost}@{}}%
\column{5}{@{}>{\hspre}l<{\hspost}@{}}%
\column{E}{@{}>{\hspre}l<{\hspost}@{}}%
\>[B]{}\mathbf{instance}\;\Conid{Functor}\;\Conid{Id}\;\mathbf{where}{}\<[E]%
\\
\>[B]{}\hsindent{5}{}\<[5]%
\>[5]{}\Varid{fmap}\mathbin{::}(\Varid{a}\to \Varid{b})\to (\Conid{Id}\;\Varid{a}\to \Conid{Id}\;\Varid{b}){}\<[E]%
\\
\>[B]{}\hsindent{5}{}\<[5]%
\>[5]{}\Varid{fmap}\;\Varid{f}\mathrel{=}\Conid{Id}\hsdot{\circ }{.\;}\Varid{f}\hsdot{\circ }{.\;}\Varid{unId}{}\<[E]%
\ColumnHook
\end{hscode}\resethooks

\end{defn}

\bigskip

\begin{defn}[Composed functor]

Given two functors \ensuremath{\Varid{f}} and \ensuremath{\Varid{g}}, \ensuremath{\Conid{Compose}\;\Varid{f}\;\Varid{g}} is the functor made from
their composition.

\begin{hscode}\SaveRestoreHook
\column{B}{@{}>{\hspre}l<{\hspost}@{}}%
\column{5}{@{}>{\hspre}l<{\hspost}@{}}%
\column{E}{@{}>{\hspre}l<{\hspost}@{}}%
\>[B]{}\mathbf{data}\;\Conid{Compose}\;\Varid{f}\;\Varid{g}\;\Varid{a}\mathrel{=}\Conid{Compose}\;\{\mskip1.5mu \Varid{unCompose}\mathbin{::}\Varid{f}\;(\Varid{g}\;\Varid{a})\mskip1.5mu\}{}\<[E]%
\\[\blanklineskip]%
\>[B]{}\mathbf{instance}\;(\Conid{Functor}\;\Varid{f},\Conid{Functor}\;\Varid{g})\Rightarrow \Conid{Functor}\;(\Conid{Compose}\;\Varid{f}\;\Varid{g})\;\mathbf{where}{}\<[E]%
\\
\>[B]{}\hsindent{5}{}\<[5]%
\>[5]{}\Varid{fmap}\mathbin{::}(\Varid{a}\to \Varid{b})\to (\Conid{Compose}\;\Varid{f}\;\Varid{g}\;\Varid{a}\to \Conid{Compose}\;\Varid{f}\;\Varid{g}\;\Varid{b}){}\<[E]%
\\
\>[B]{}\hsindent{5}{}\<[5]%
\>[5]{}\Varid{fmap}\;\Varid{f}\mathrel{=}\Conid{Compose}\hsdot{\circ }{.\;}\Varid{fmap}\;(\Varid{fmap}\;\Varid{f})\hsdot{\circ }{.\;}\Varid{unCompose}{}\<[E]%
\ColumnHook
\end{hscode}\resethooks

\end{defn}

\bigskip

\begin{defn}[Functor monoid]

A typeclass constraint \ensuremath{\sigma} of kind \ensuremath{(\mathbin{*}\to \mathbin{*})\to \Conid{Constraint}}
(i.e.~that takes a type constructor as its one argument) is said to be a
functor monoid if it has the following properties:

\begin{itemize}
\tightlist
\item
  every instance of \ensuremath{\sigma} is an instance of the \ensuremath{\Conid{Functor}} typeclass
\item
  the identity functor \ensuremath{\Conid{Id}} is an instance of \ensuremath{\sigma}
\item
  if \ensuremath{\Varid{f}} and \ensuremath{\Varid{g}} are instances of \ensuremath{\sigma}, then so is \ensuremath{\Conid{Compose}\;\Varid{f}\;\Varid{g}}
\end{itemize}

Since those properties cannot be cleanly captured in Haskell, we use a
dummy typeclass to indicate that those properties are verified. Like
typeclass laws, these cannot be enforced by the compiler and must be
checked by the programmer.

\begin{hscode}\SaveRestoreHook
\column{B}{@{}>{\hspre}l<{\hspost}@{}}%
\column{E}{@{}>{\hspre}l<{\hspost}@{}}%
\>[B]{}\mathbf{class}\;\Conid{FunctorMonoid}\;(\sigma\mathbin{::}(\mathbin{*}\to \mathbin{*})\to \Conid{Constraint}){}\<[E]%
\ColumnHook
\end{hscode}\resethooks

This in particular means that code that uses those properties will not
compile. See Appendix \labelcref{working-implementation} for equivalent
definitions that do compile using advanced typeclass machinery.

\end{defn}

\bigskip

\begin{expl}

\ensuremath{\Conid{Functor}} is a functor monoid (in fact, the most general one):

\begin{itemize}
\tightlist
\item
  by definition of the \ensuremath{\Conid{Id}} functor, \ensuremath{\Conid{Id}} is a functor;
\item
  by definition of the composed functor, \ensuremath{\Conid{Compose}\;\Varid{f}\;\Varid{g}} is a functor when
  \ensuremath{\Varid{f}} and \ensuremath{\Varid{g}} are too;
\item
  trivially, every functor is a functor.
\end{itemize}

\end{expl}

\bigskip

\begin{defn}[Profunctor optic]

Given a functor monoid \ensuremath{\sigma}, we call profunctor optic for \ensuremath{\sigma} a
function of the following type:

\begin{hscode}\SaveRestoreHook
\column{B}{@{}>{\hspre}l<{\hspost}@{}}%
\column{E}{@{}>{\hspre}l<{\hspost}@{}}%
\>[B]{}\mathbf{type}\;\Conid{ProfOptic}\;\sigma\;\Varid{a}\;\Varid{b}\;\Varid{s}\;\Varid{t}\mathrel{=}\mathbf{forall}\;\Varid{p}\hsforall \hsdot{\circ }{.\;}\Conid{Enhancing}\;\sigma\;\Varid{p}\Rightarrow \Varid{p}\;\Varid{a}\;\Varid{b}\to \Varid{p}\;\Varid{s}\;\Varid{t}{}\<[E]%
\ColumnHook
\end{hscode}\resethooks

where

\begin{hscode}\SaveRestoreHook
\column{B}{@{}>{\hspre}l<{\hspost}@{}}%
\column{5}{@{}>{\hspre}l<{\hspost}@{}}%
\column{E}{@{}>{\hspre}l<{\hspost}@{}}%
\>[B]{}\mathbf{class}\;(\Conid{FunctorMonoid}\;\sigma,\Conid{Profunctor}\;\Varid{p})\Rightarrow \Conid{Enhancing}\;\sigma\;\Varid{p}\;\mathbf{where}{}\<[E]%
\\
\>[B]{}\hsindent{5}{}\<[5]%
\>[5]{}\Varid{enhance}\mathbin{::}\mathbf{forall}\;\Varid{f}\hsforall \;\Varid{a}\;\Varid{b}\hsdot{\circ }{.\;}\sigma\;\Varid{f}\Rightarrow \Varid{p}\;\Varid{a}\;\Varid{b}\to \Varid{p}\;(\Varid{f}\;\Varid{a})\;(\Varid{f}\;\Varid{b}){}\<[E]%
\ColumnHook
\end{hscode}\resethooks

\ensuremath{\Varid{enhance}} is required to respect the monoidal structure of \ensuremath{\sigma}:

\begin{hscode}\SaveRestoreHook
\column{B}{@{}>{\hspre}l<{\hspost}@{}}%
\column{E}{@{}>{\hspre}l<{\hspost}@{}}%
\>[B]{}\Varid{enhance}\;@\Conid{Id}\mathrel{=}\Varid{dimap}\;\Varid{unId}\;\Conid{Id}{}\<[E]%
\\
\>[B]{}\Varid{enhance}\;@(\Conid{Compose}\;\Varid{f}\;\Varid{g})\mathrel{=}\Varid{dimap}\;\Varid{unCompose}\;\Conid{Compose}\hsdot{\circ }{.\;}\Varid{enhance}\;@\Varid{f}\hsdot{\circ }{.\;}\Varid{enhance}\;@\Varid{g}{}\<[E]%
\ColumnHook
\end{hscode}\resethooks

The ``@'' syntax used here is type application, which is available as a
Haskell extension in the latest versions of GHC. We will be using it
extensively to disambiguate polymorphic function usage.

We also require the following wedge condition on \ensuremath{\Varid{enhance}}: given a
function \ensuremath{\alpha\mathbin{::}\mathbf{forall}\;\Varid{a}\hsforall \hsdot{\circ }{.\;}\Varid{f}\;\Varid{a}\to \Varid{g}\;\Varid{a}},

\begin{hscode}\SaveRestoreHook
\column{B}{@{}>{\hspre}l<{\hspost}@{}}%
\column{E}{@{}>{\hspre}l<{\hspost}@{}}%
\>[B]{}\Varid{dimap}\;\Varid{id}\;\alpha\hsdot{\circ }{.\;}\Varid{enhance}\;@\Varid{f}\mathrel{=}\Varid{dimap}\;\alpha\;\Varid{id}\hsdot{\circ }{.\;}\Varid{enhance}\;@\Varid{g}{}\<[E]%
\ColumnHook
\end{hscode}\resethooks

This is related to the categorical definition of \ensuremath{\Varid{enhance}} via an end,
and can be understood as a generalization of naturality conditions.

Finally, we mention the following law that is required but is always
true by parametricity:

\begin{hscode}\SaveRestoreHook
\column{B}{@{}>{\hspre}l<{\hspost}@{}}%
\column{E}{@{}>{\hspre}l<{\hspost}@{}}%
\>[B]{}\Varid{enhance}\;@\Varid{f}\hsdot{\circ }{.\;}\Varid{dimap}\;\Varid{f}\;\Varid{g}\mathrel{=}\Varid{dimap}\;(\Varid{fmap}\;\Varid{f})\;(\Varid{fmap}\;\Varid{g})\hsdot{\circ }{.\;}\Varid{enhance}\;@\Varid{f}{}\<[E]%
\ColumnHook
\end{hscode}\resethooks

\end{defn}

\bigskip

\begin{rmk}

\ensuremath{\Conid{Enhancing}} is a generalization of several profunctor typeclasses used
to define profunctor optics\autocite{mezzolens}. Where \ensuremath{\Conid{Cartesian}} and
\ensuremath{\Conid{Cocartesian}} express the ability to lift a profunctorial transformation
into a product or a sum, \ensuremath{\Conid{Enhancing}\;\sigma} expresses a similar
capability for a more general class of shapes, captured by the
constraint constructor \ensuremath{\sigma}.

The \ensuremath{\Conid{Enhancing}} laws are likewise generalizations of the laws of those
typeclasses\autocite{kmett/profunctors}.

The \ensuremath{\Conid{FunctorMonoid}} requirement is less obviously a generalization, but
is necessary for the laws to make sense. This fact that the usual
profunctor classes come from functor families that are closed under
composition was pointed out by Russell
O'Connor\autocite{r6_profunctor_hierarchy}.

\end{rmk}

\bigskip

\begin{rmk}

This definition of profunctor optics does not capture all that has been
called ``profunctor optics'' in the literature and folklore. It notably
does not capture ``one-way'' optics such as \ensuremath{\Conid{Fold}}, \ensuremath{\Conid{View}} and \ensuremath{\Conid{Review}}.
It also does not capture \ensuremath{\Conid{Traversal}}s as defined by Pickering et
al.\autocite{profunctor_optics}, but \ensuremath{\Conid{Traversal}}s have an equivalent
representation that fits this definition\autocite{mezzolens}.

More importantly, composing two profunctor optics with different \ensuremath{\sigma}
does not yield a profunctor optic that fits this definition either. This
can be circumvented, but this paper focuses on single optic families,
therefore it will not be needed.

\end{rmk}

\bigskip
\bigskip

\begin{prop}

The \ensuremath{(\to )} profunctor has an instance of \ensuremath{\Conid{Enhancing}} for any functor
monoid.

\end{prop}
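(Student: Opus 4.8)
The plan is to equip \ensuremath{(\to )} with the only possible definition of \ensuremath{\Varid{enhance}} and then discharge the laws by unfolding. Because \ensuremath{\sigma} is a functor monoid, any \ensuremath{\Varid{f}} with \ensuremath{\sigma\;\Varid{f}} is in particular a \ensuremath{\Conid{Functor}}, so I set \ensuremath{\Varid{enhance}\;@\Varid{f}\mathrel{=}\Varid{fmap}\;@\Varid{f}}; this already has the required type \ensuremath{(\Varid{a}\to \Varid{b})\to (\Varid{f}\;\Varid{a}\to \Varid{f}\;\Varid{b})}. It then remains to check the four conditions in the definition of \ensuremath{\Conid{Enhancing}}: the \ensuremath{\Conid{Id}} law, the \ensuremath{\Conid{Compose}} law, the wedge condition, and the parametricity law.

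For the two monoidal laws I simply unfold the \ensuremath{\Conid{Functor}} instances given earlier. On \ensuremath{\Conid{Id}} we have \ensuremath{\Varid{enhance}\;@\Conid{Id}\;\Varid{h}\mathrel{=}\Varid{fmap}\;\Varid{h}\mathrel{=}\Conid{Id}\circ \Varid{h}\circ \Varid{unId}\mathrel{=}\Varid{dimap}\;\Varid{unId}\;\Conid{Id}\;\Varid{h}}, the last equality being the \ensuremath{(\to )}-profunctor definition of \ensuremath{\Varid{dimap}}. On \ensuremath{\Conid{Compose}\;\Varid{f}\;\Varid{g}} the \ensuremath{\Conid{Functor}} instance gives \ensuremath{\Varid{enhance}\;@(\Conid{Compose}\;\Varid{f}\;\Varid{g})\;\Varid{h}\mathrel{=}\Conid{Compose}\circ \Varid{fmap}\;(\Varid{fmap}\;\Varid{h})\circ \Varid{unCompose}}, which is precisely \ensuremath{\Varid{dimap}\;\Varid{unCompose}\;\Conid{Compose}\;(\Varid{enhance}\;@\Varid{f}\;(\Varid{enhance}\;@\Varid{g}\;\Varid{h}))} once the two inner \ensuremath{\Varid{enhance}}s are rewritten as \ensuremath{\Varid{fmap}}. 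Hence both monoidal equations hold.

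For the wedge condition I evaluate both sides at an arrow \ensuremath{\Varid{h}\mathbin{::}\Varid{a}\to \Varid{b}}: the left-hand side becomes \ensuremath{\Varid{dimap}\;\Varid{id}\;\alpha\;(\Varid{fmap}\;@\Varid{f}\;\Varid{h})\mathrel{=}\alpha\circ \Varid{fmap}\;@\Varid{f}\;\Varid{h}} and the right-hand side \ensuremath{\Varid{dimap}\;\alpha\;\Varid{id}\;(\Varid{fmap}\;@\Varid{g}\;\Varid{h})\mathrel{=}\Varid{fmap}\;@\Varid{g}\;\Varid{h}\circ \alpha}, so the requirement is exactly the naturality square of \ensuremath{\alpha\mathbin{::}\mathbf{forall}\;\Varid{a}.\;\Varid{f}\;\Varid{a}\to \Varid{g}\;\Varid{a}}, which holds by parametricity (the free theorem for \ensuremath{\alpha}). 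The last displayed law reduces, after evaluating at an arrow, to functoriality of \ensuremath{\Varid{fmap}\;@\Varid{f}}, and is anyway automatically true by parametricity as the definition notes. I do not expect a genuine obstacle here: the single conceptual point is that, once \ensuremath{\Varid{enhance}} is taken to be \ensuremath{\Varid{fmap}}, the wedge condition collapses to the naturality of \ensuremath{\alpha}; everything else is a one-line unfolding.
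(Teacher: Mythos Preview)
Your proposal is correct and follows essentially the same route as the paper: define \ensuremath{\Varid{enhance}\mathrel{=}\Varid{fmap}} (available because every \ensuremath{\sigma}-instance is a functor), then discharge the \ensuremath{\Conid{Id}} and \ensuremath{\Conid{Compose}} laws by unfolding the respective \ensuremath{\Conid{Functor}} instances, and reduce the wedge condition to the naturality square of \ensuremath{\alpha}, which holds by parametricity. The paper does exactly this, with the same calculations and the same appeal to the free theorem for \ensuremath{\alpha}.
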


\begin{proof}

\ensuremath{\Varid{enhance}} for the \ensuremath{(\to )} profunctor needs to lift an \ensuremath{\Varid{a}\to \Varid{b}} arrow into
an \ensuremath{\Varid{f}\;\Varid{a}\to \Varid{f}\;\Varid{b}} arrow. This is exactly the defining property of
functors. Since \ensuremath{\sigma} is a functor monoid, every instance of \ensuremath{\sigma}
is a functor, so \ensuremath{\Varid{enhance}} is simply \ensuremath{\Varid{fmap}}.

\begin{hscode}\SaveRestoreHook
\column{B}{@{}>{\hspre}l<{\hspost}@{}}%
\column{5}{@{}>{\hspre}l<{\hspost}@{}}%
\column{E}{@{}>{\hspre}l<{\hspost}@{}}%
\>[B]{}\mathbf{instance}\;\Conid{Enhancing}\;\sigma\;(\to )\;\mathbf{where}{}\<[E]%
\\
\>[B]{}\hsindent{5}{}\<[5]%
\>[5]{}\Varid{enhance}\mathbin{::}\sigma\;\Varid{f}\Rightarrow (\Varid{a}\to \Varid{b})\to (\Varid{f}\;\Varid{a}\to \Varid{f}\;\Varid{b}){}\<[E]%
\\
\>[B]{}\hsindent{5}{}\<[5]%
\>[5]{}\Varid{enhance}\mathrel{=}\Varid{fmap}{}\<[E]%
\ColumnHook
\end{hscode}\resethooks

Law 1:

\begin{hscode}\SaveRestoreHook
\column{B}{@{}>{\hspre}l<{\hspost}@{}}%
\column{4}{@{}>{\hspre}l<{\hspost}@{}}%
\column{E}{@{}>{\hspre}l<{\hspost}@{}}%
\>[B]{}\Varid{enhance}\;@(\to )\;@\Conid{Id}\;\Varid{f}{}\<[E]%
\\
\>[B]{}\mathrel{=}\mbox{\commentbegin   \ensuremath{\mathbf{instance}\;\Conid{Enhancing}\;\sigma\;(\to )}   \commentend}{}\<[E]%
\\
\>[B]{}\hsindent{4}{}\<[4]%
\>[4]{}\Varid{fmap}\;@\Conid{Id}\;\Varid{f}{}\<[E]%
\\
\>[B]{}\mathrel{=}\mbox{\commentbegin   \ensuremath{\mathbf{instance}\;\Conid{Functor}\;\Conid{Id}}   \commentend}{}\<[E]%
\\
\>[B]{}\hsindent{4}{}\<[4]%
\>[4]{}\Conid{Id}\hsdot{\circ }{.\;}\Varid{f}\hsdot{\circ }{.\;}\Varid{unId}{}\<[E]%
\\
\>[B]{}\mathrel{=}\mbox{\commentbegin   \ensuremath{\mathbf{instance}\;\Conid{Profunctor}\;(\to )}   \commentend}{}\<[E]%
\\
\>[B]{}\hsindent{4}{}\<[4]%
\>[4]{}\Varid{dimap}\;\Varid{unId}\;\Conid{Id}\;\Varid{f}{}\<[E]%
\ColumnHook
\end{hscode}\resethooks

Law 2:

\begin{hscode}\SaveRestoreHook
\column{B}{@{}>{\hspre}l<{\hspost}@{}}%
\column{4}{@{}>{\hspre}l<{\hspost}@{}}%
\column{E}{@{}>{\hspre}l<{\hspost}@{}}%
\>[B]{}\Varid{enhance}\;@(\to )\;@(\Conid{Compose}\;\Varid{f}\;\Varid{g}){}\<[E]%
\\
\>[B]{}\mathrel{=}\mbox{\commentbegin   \ensuremath{\mathbf{instance}\;\Conid{Enhancing}\;\sigma\;(\to )}   \commentend}{}\<[E]%
\\
\>[B]{}\hsindent{4}{}\<[4]%
\>[4]{}\Varid{fmap}\;@(\Conid{Compose}\;\Varid{f}\;\Varid{g}){}\<[E]%
\\
\>[B]{}\mathrel{=}\mbox{\commentbegin   \ensuremath{\mathbf{instance}\;\Conid{Functor}\;\Conid{Compose}}   \commentend}{}\<[E]%
\\
\>[B]{}\hsindent{4}{}\<[4]%
\>[4]{}\lambda \Varid{h}\to \Conid{Compose}\hsdot{\circ }{.\;}\Varid{fmap}\;@\Varid{f}\;(\Varid{fmap}\;@\Varid{g}\;\Varid{h})\hsdot{\circ }{.\;}\Varid{unCompose}{}\<[E]%
\\
\>[B]{}\mathrel{=}\mbox{\commentbegin   \ensuremath{\mathbf{instance}\;\Conid{Profunctor}\;(\to )}   \commentend}{}\<[E]%
\\
\>[B]{}\hsindent{4}{}\<[4]%
\>[4]{}\lambda \Varid{h}\to \Varid{dimap}\;\Varid{unCompose}\;\Conid{Compose}\;(\Varid{fmap}\;@\Varid{f}\;(\Varid{fmap}\;@\Varid{g}\;\Varid{h})){}\<[E]%
\\
\>[B]{}\mathrel{=}{}\<[4]%
\>[4]{}\Varid{dimap}\;\Varid{unCompose}\;\Conid{Compose}\hsdot{\circ }{.\;}\Varid{fmap}\;@\Varid{f}\hsdot{\circ }{.\;}\Varid{fmap}\;@\Varid{g}{}\<[E]%
\\
\>[B]{}\mathrel{=}\mbox{\commentbegin   \ensuremath{\mathbf{instance}\;\Conid{Enhancing}\;\sigma\;(\to )}   \commentend}{}\<[E]%
\\
\>[B]{}\hsindent{4}{}\<[4]%
\>[4]{}\Varid{dimap}\;\Varid{unCompose}\;\Conid{Compose}\hsdot{\circ }{.\;}\Varid{enhance}\;@(\to )\;@\Varid{f}\hsdot{\circ }{.\;}\Varid{enhance}\;@(\to )\;@\Varid{g}{}\<[E]%
\ColumnHook
\end{hscode}\resethooks

Law 3:

\begin{hscode}\SaveRestoreHook
\column{B}{@{}>{\hspre}l<{\hspost}@{}}%
\column{4}{@{}>{\hspre}l<{\hspost}@{}}%
\column{E}{@{}>{\hspre}l<{\hspost}@{}}%
\>[B]{}\Varid{dimap}\;\Varid{id}\;\alpha\hsdot{\circ }{.\;}\Varid{enhance}\;@(\to )\;@\Varid{f}{}\<[E]%
\\
\>[B]{}\mathrel{=}\mbox{\commentbegin   \ensuremath{\mathbf{instance}\;\Conid{Enhancing}\;\sigma\;(\to )}   \commentend}{}\<[E]%
\\
\>[B]{}\hsindent{4}{}\<[4]%
\>[4]{}\Varid{dimap}\;\Varid{id}\;\alpha\hsdot{\circ }{.\;}\Varid{fmap}\;@\Varid{f}{}\<[E]%
\\
\>[B]{}\mathrel{=}\mbox{\commentbegin   \ensuremath{\mathbf{instance}\;\Conid{Profunctor}\;(\to )}   \commentend}{}\<[E]%
\\
\>[B]{}\hsindent{4}{}\<[4]%
\>[4]{}\lambda \Varid{h}\to \alpha\hsdot{\circ }{.\;}\Varid{fmap}\;@\Varid{f}\;\Varid{h}{}\<[E]%
\\
\>[B]{}\mathrel{=}\mbox{\commentbegin   Naturality of \ensuremath{\alpha}, aka parametricity   \commentend}{}\<[E]%
\\
\>[B]{}\hsindent{4}{}\<[4]%
\>[4]{}\lambda \Varid{h}\to \Varid{fmap}\;@\Varid{g}\;\Varid{h}\hsdot{\circ }{.\;}\alpha{}\<[E]%
\\
\>[B]{}\mathrel{=}\mbox{\commentbegin   \ensuremath{\mathbf{instance}\;\Conid{Profunctor}\;(\to )}   \commentend}{}\<[E]%
\\
\>[B]{}\hsindent{4}{}\<[4]%
\>[4]{}\Varid{dimap}\;\alpha\;\Varid{id}\hsdot{\circ }{.\;}\Varid{fmap}\;@\Varid{g}{}\<[E]%
\\
\>[B]{}\mathrel{=}\mbox{\commentbegin   \ensuremath{\mathbf{instance}\;\Conid{Enhancing}\;\sigma\;(\to )}   \commentend}{}\<[E]%
\\
\>[B]{}\hsindent{4}{}\<[4]%
\>[4]{}\Varid{dimap}\;\alpha\;\Varid{id}\hsdot{\circ }{.\;}\Varid{enhance}\;@(\to )\;@\Varid{g}{}\<[E]%
\ColumnHook
\end{hscode}\resethooks

\end{proof}

\bigskip

\begin{prop}

For a given functor monoid \ensuremath{\sigma}, \ensuremath{\Conid{ProfOptic}\;\sigma} is an optic
family.

\end{prop}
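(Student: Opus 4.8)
The plan is to equip $\mathrm{ProfOptic}\ \sigma$ with the four $\mathrm{OpticFamily}$ methods and then observe that every axiom collapses, instance by instance, either to a $\mathrm{Profunctor}$ law or to a triviality about function composition (the partially applied type synonym $\mathrm{ProfOptic}\ \sigma$ being treated as a four-parameter type constructor, in keeping with the paper's informal stance). For composition I take $o_1 \circ_{op} o_2$ to be ordinary composition of the two polymorphic maps: for any profunctor $p$ with $\mathrm{Enhancing}\ \sigma\ p$, the optic $o_2$ specialises to $p\ x\ y \to p\ a\ b$ and $o_1$ to $p\ a\ b \to p\ s\ t$, so $o_1 \circ o_2$ specialises to $p\ x\ y \to p\ s\ t$ uniformly in $p$, hence is again a $\mathrm{ProfOptic}\ \sigma$. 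For injection I take $\mathrm{injOptic}\ f\ g = \mathrm{dimap}\ f\ g$, which is legal since every $\mathrm{Enhancing}\ \sigma$ profunctor is in particular a $\mathrm{Profunctor}$; by \cref{omit_idoptic} I may then omit $\mathrm{id}_{op}$, it being $\mathrm{injOptic}\ \mathrm{id}\ \mathrm{id} = \mathrm{dimap}\ \mathrm{id}\ \mathrm{id}$, which is $\mathrm{id}$ by the first $\mathrm{Profunctor}$ law. Finally I take $\mathrm{mapOptic}\ o = o\ @(\to)$, instantiating the optic at the function-arrow profunctor.

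For the axioms: associativity of $\circ_{op}$ and its left and right unit laws hold at each instance because function composition is associative and $\mathrm{id}_{op} = \mathrm{id}$ there. The first $\mathrm{injOptic}$ axiom is immediate from $\mathrm{dimap}\ \mathrm{id}\ \mathrm{id} = \mathrm{id}$; the second, namely $\mathrm{injOptic}\ (f' \circ f)\ (g \circ g') = \mathrm{injOptic}\ f\ g \ \circ_{op}\ \mathrm{injOptic}\ f'\ g'$, is exactly the second $\mathrm{Profunctor}$ law read at each $\mathrm{Enhancing}\ \sigma$ profunctor. For $\mathrm{mapOptic}$: unfolding $\mathrm{mapOptic}\ (\mathrm{injOptic}\ f\ g)\ h = (\mathrm{dimap}\ f\ g\ @(\to))\ h$ and using the $\mathrm{Profunctor}\ (\to)$ instance gives $g \circ h \circ f$, which is the first $\mathrm{mapOptic}$ axiom; and $\mathrm{mapOptic}\ (o_1 \circ_{op} o_2) = (o_1 \circ o_2)\ @(\to) = (o_1\ @(\to)) \circ (o_2\ @(\to)) = \mathrm{mapOptic}\ o_1 \circ \mathrm{mapOptic}\ o_2$, since specialising a composite of polymorphic maps at a fixed profunctor is the composite of the specialisations.

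The argument is essentially bookkeeping, so there is no single hard step; the two points that deserve care are, first, that equality of profunctor optics must be understood as equality at every $\mathrm{Enhancing}\ \sigma$ instance, so each axiom check is really a family of checks parametrised by $p$ (and the two axioms involving $\mathrm{dimap}$ are discharged by quoting the $\mathrm{Profunctor}$ laws at each such $p$), and, second, that $\mathrm{mapOptic}$ is well-defined at all — this relies on $(\to)$ carrying an $\mathrm{Enhancing}\ \sigma$ instance, which is precisely the preceding proposition and is the only place where the functor-monoid hypothesis on $\sigma$ is used; none of the remaining $\mathrm{OpticFamily}$ structure depends on the choice of $\sigma$.
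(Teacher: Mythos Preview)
Your proposal is correct and takes exactly the same approach as the paper: the same choices for $\circ_{op}$, $\mathrm{injOptic}$, and $\mathrm{mapOptic}$, with the laws dismissed by the paper as ``immediate'' whereas you spell out which $\mathrm{Profunctor}$ law or triviality discharges each one. Your observation that the functor-monoid hypothesis is used only to make $\mathrm{mapOptic}$ well-defined (via the $\mathrm{Enhancing}\ \sigma\ (\to)$ instance) is a nice piece of bookkeeping that the paper leaves implicit.
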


\begin{proof}

We can write the instance:

\begin{hscode}\SaveRestoreHook
\column{B}{@{}>{\hspre}l<{\hspost}@{}}%
\column{5}{@{}>{\hspre}l<{\hspost}@{}}%
\column{E}{@{}>{\hspre}l<{\hspost}@{}}%
\>[B]{}\mathbf{instance}\;\Conid{OpticFamily}\;(\Conid{ProfOptic}\;\sigma)\;\mathbf{where}{}\<[E]%
\\
\>[B]{}\hsindent{5}{}\<[5]%
\>[5]{}\Varid{injOptic}\;\Varid{f}\;\Varid{g}\mathrel{=}\Varid{dimap}\;\Varid{f}\;\Varid{g}{}\<[E]%
\\
\>[B]{}\hsindent{5}{}\<[5]%
\>[5]{}\Varid{l}_{\mathrm{1}}\ \circ_{op}\ \Varid{l}_{\mathrm{2}}\mathrel{=}\Varid{l}_{\mathrm{1}}\hsdot{\circ }{.\;}\Varid{l}_{\mathrm{2}}{}\<[E]%
\\
\>[B]{}\hsindent{5}{}\<[5]%
\>[5]{}\Varid{mapOptic}\;\Varid{l}\mathrel{=}\Varid{l}\;@(\to ){}\<[E]%
\ColumnHook
\end{hscode}\resethooks

The laws are immediate.

\end{proof}

\bigskip

\begin{prop}\label{lens_is_profop_isproduct}

\ensuremath{\Conid{PLens}\cong \Conid{ProfOptic}\;\Conid{IsProduct}}, where:

\begin{hscode}\SaveRestoreHook
\column{B}{@{}>{\hspre}l<{\hspost}@{}}%
\column{5}{@{}>{\hspre}l<{\hspost}@{}}%
\column{E}{@{}>{\hspre}l<{\hspost}@{}}%
\>[B]{}\mathbf{class}\;\Conid{Functor}\;\Varid{f}\Rightarrow \Conid{IsProduct}\;\Varid{f}\;\mathbf{where}{}\<[E]%
\\
\>[B]{}\hsindent{5}{}\<[5]%
\>[5]{}\Varid{toProduct}\mathbin{::}\Varid{f}\;\Varid{a}\to (\Varid{f}\;(),\Varid{a}){}\<[E]%
\\
\>[B]{}\hsindent{5}{}\<[5]%
\>[5]{}\Varid{fromProduct}\mathbin{::}(\Varid{f}\;(),\Varid{a})\to \Varid{f}\;\Varid{a}{}\<[E]%
\\
\>[B]{}\hsindent{5}{}\<[5]%
\>[5]{}\mbox{\onelinecomment  toProduct and fromProduct should be mutual inverses}{}\<[E]%
\ColumnHook
\end{hscode}\resethooks

\end{prop}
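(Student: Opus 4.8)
The plan is to reduce the isomorphism to a comparison of the two profunctor classes involved. Unfolding the definitions, $\Conid{PLens}\;a\;b\;s\;t = \forall p.\ \Conid{Cartesian}\;p \Rightarrow p\;a\;b \to p\;s\;t$ while $\Conid{ProfOptic}\;\Conid{IsProduct}\;a\;b\;s\;t = \forall p.\ \Conid{Enhancing}\;\Conid{IsProduct}\;p \Rightarrow p\;a\;b \to p\;s\;t$, so it is enough to give, for every profunctor $p$, mutually inverse passages between $\Conid{Cartesian}\;p$ dictionaries and $\Conid{Enhancing}\;\Conid{IsProduct}\;p$ dictionaries that preserve the respective well-behavedness laws. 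A polymorphic function for one class is then converted to one for the other by precomposing with the re-derivation of its dictionary; these two conversions are mutually inverse exactly because the two passages on dictionaries are. As a preliminary (already presupposed by the statement) I would check that $\Conid{IsProduct}$ is a functor monoid: $\Conid{Id}$ is $\Conid{IsProduct}$ since $\Conid{Id}\;a \cong ((),a)$, and $\Conid{Compose}\;f\;g$ is $\Conid{IsProduct}$ whenever $f$ and $g$ are, via $f(g\;a) \cong (f\;(),(g\;(),a)) \cong ((f\;(),g\;()),a) \cong (\Conid{Compose}\;f\;g\;(),\,a)$, which is nothing but associativity of products.

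For the passages: from a $\Conid{Cartesian}\;p$ dictionary I would build $\Varid{enhance}$ by $\Varid{enhance}_f = \Varid{dimap}\;\Varid{toProduct}_f\;\Varid{fromProduct}_f \circ \Varid{second}_{f\;()}$, using the natural isomorphism $f\;a \cong (f\;(),a)$ supplied by the $\Conid{IsProduct}$ data and instantiating the $c$-parameter of $\Varid{second}$ at $c = f\;()$. Conversely, each functor $(c,-)$ carries the evident $\Conid{IsProduct}$ structure --- with $\Varid{toProduct}_{(c,-)}\;(x,y) = ((x,()),y)$ and $\Varid{fromProduct}_{(c,-)}\;((x,u),y) = (x,y)$, which are genuinely mutual inverses --- and since $(c,-)\;a$ is literally $(c,a)$, from an $\Conid{Enhancing}\;\Conid{IsProduct}\;p$ dictionary I would simply take $\Varid{second}_c = \Varid{enhance}_{(c,-)}$. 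One then checks that both passages preserve well-behavedness; this is a mechanical diagram chase, since the $\Conid{Enhancing}$ laws were designed to be the $\sigma$-indexed version of the $\Conid{Strong}/\Conid{Cartesian}$ laws and, for $\sigma = \Conid{IsProduct}$, the functors realizing $\sigma$ are exactly the $(c,-)$.

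The heart of the argument is that the two passages are mutually inverse. Starting from $\Conid{Cartesian}$, building $\Varid{enhance}$ and then rebuilding $\Varid{second}$ gives $\Varid{dimap}\;\Varid{toProduct}_{(c,-)}\;\Varid{fromProduct}_{(c,-)} \circ \Varid{second}_{(c,())}$; I would collapse this to $\Varid{second}_c$ by invoking the dinaturality of $\Varid{second}$ in its type parameter --- a free theorem, since $c \mapsto \Varid{second}_c\;x$ is a dinatural family of type $p\;(c,a)\;(c,b)$ --- along the arrow $\lambda x.(x,()) : c \to (c,())$, together with $\Varid{fromProduct}_{(c,-)} \circ (\lambda x.(x,())) = \Varid{id}$ and the profunctor functoriality laws. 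Starting from $\Conid{Enhancing}$, building $\Varid{second}$ and then rebuilding $\Varid{enhance}$ gives $\Varid{dimap}\;\Varid{toProduct}_f\;\Varid{fromProduct}_f \circ \Varid{enhance}_{(f\;(),-)}$; here I would apply the wedge condition on $\Varid{enhance}$ to the natural transformation $\Varid{toProduct}_f : f \Rightarrow (f\;(),-)$, which slides $\Varid{toProduct}_f$ from the covariant (codomain) side across to the contravariant (domain) side, after which the two $\Varid{dimap}$s telescope: $\Varid{dimap}\;\Varid{toProduct}_f\;\Varid{fromProduct}_f \circ \Varid{enhance}_{(f\;(),-)} = \Varid{dimap}\;\Varid{id}\;\Varid{fromProduct}_f \circ \Varid{dimap}\;\Varid{id}\;\Varid{toProduct}_f \circ \Varid{enhance}_f = \Varid{dimap}\;\Varid{id}\;(\Varid{fromProduct}_f \circ \Varid{toProduct}_f) \circ \Varid{enhance}_f = \Varid{enhance}_f$. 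With both round-trips established, the two types are isomorphic; moreover the isomorphism leaves $\Varid{injOptic} = \Varid{dimap}$ and $\circ_{op} = (\circ)$ untouched and carries the $(\to)$-instantiation to the $(\to)$-instantiation --- the $\Conid{Cartesian}$ and $\Conid{Enhancing}\;\Conid{IsProduct}$ instances of $(\to)$ are both $\Varid{fmap}$ and correspond under the passages above --- so it is in fact an isomorphism of optic families, and by \cref{invertible_opfam_morphism} so is its inverse.

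The main obstacle is the pair of round-trip verifications. Each of them turns on having exactly the right naturality principle available --- parametric dinaturality of $\Varid{second}$ in one direction, the wedge condition on $\Varid{enhance}$ (applied to $\Varid{toProduct}$) in the other --- and on having chosen the canonical $\Conid{IsProduct}$ structure on $(c,-)$, so that the composites telescope cleanly to the identity. The functor-monoid check and the law-preservation checks are routine bookkeeping by comparison.
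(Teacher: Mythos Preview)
Your proposal is correct and follows essentially the same route as the paper: verify that \ensuremath{\Conid{IsProduct}} is a functor monoid, set up the two passages via \ensuremath{\Varid{enhance}_f = \Varid{dimap}\;\Varid{toProduct}\;\Varid{fromProduct}\circ\Varid{second}_{f()}} and \ensuremath{\Varid{second}_c = \Varid{enhance}_{(c,-)}}, and close the two round-trips using dinaturality of \ensuremath{\Varid{second}} in its index (what the paper invokes as a ``\ensuremath{\Conid{Cartesian}} law'') in one direction and the wedge condition on \ensuremath{\Varid{enhance}} applied to \ensuremath{\Varid{toProduct}} in the other. Incidentally, your argument order in \ensuremath{\Varid{dimap}\;\Varid{toProduct}\;\Varid{fromProduct}} is the type-correct one; the paper's printed definition swaps the two.
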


\begin{proof}

First, \ensuremath{\Conid{IsProduct}} is a functor monoid:

\begin{hscode}\SaveRestoreHook
\column{B}{@{}>{\hspre}l<{\hspost}@{}}%
\column{5}{@{}>{\hspre}l<{\hspost}@{}}%
\column{9}{@{}>{\hspre}l<{\hspost}@{}}%
\column{14}{@{}>{\hspre}l<{\hspost}@{}}%
\column{E}{@{}>{\hspre}l<{\hspost}@{}}%
\>[B]{}\mathbf{instance}\;\Conid{IsProduct}\;\Conid{Id}\;\mathbf{where}{}\<[E]%
\\
\>[B]{}\hsindent{5}{}\<[5]%
\>[5]{}\Varid{toProduct}\;(\Conid{Id}\;\Varid{x})\mathrel{=}(\Conid{Id}\;(),\Varid{x}){}\<[E]%
\\
\>[B]{}\hsindent{5}{}\<[5]%
\>[5]{}\Varid{fromProduct}\;(\Conid{Id}\;(),\Varid{x})\mathrel{=}(\Conid{Id}\;\Varid{x}){}\<[E]%
\\[\blanklineskip]%
\>[B]{}\mathbf{instance}\;(\Conid{IsProduct}\;\Varid{f},\Conid{IsProduct}\;\Varid{g})\Rightarrow \Conid{IsProduct}\;(\Conid{Compose}\;\Varid{f}\;\Varid{g})\;\mathbf{where}{}\<[E]%
\\
\>[B]{}\hsindent{5}{}\<[5]%
\>[5]{}\Varid{toProduct}\;(\Conid{Compose}\;\Varid{fgx})\mathrel{=}{}\<[E]%
\\
\>[5]{}\hsindent{4}{}\<[9]%
\>[9]{}\mathbf{let}\;{}\<[14]%
\>[14]{}(\Varid{f1},(\Varid{g1},\Varid{x}))\mathrel{=}\Varid{fmap}\;\Varid{toProduct}\;(\Varid{toProduct}\;\Varid{fgx}){}\<[E]%
\\
\>[14]{}\Varid{fg1}\mathrel{=}\Varid{fromProduct}\;(\Varid{f1},\Varid{g1})\;\mathbf{in}{}\<[E]%
\\
\>[5]{}\hsindent{4}{}\<[9]%
\>[9]{}(\Conid{Compose}\;\Varid{fg1},\Varid{x}){}\<[E]%
\\
\>[B]{}\hsindent{5}{}\<[5]%
\>[5]{}\Varid{fromProduct}\;(\Conid{Compose}\;\Varid{fg1},\Varid{x})\mathrel{=}{}\<[E]%
\\
\>[5]{}\hsindent{4}{}\<[9]%
\>[9]{}\mathbf{let}\;{}\<[14]%
\>[14]{}(\Varid{f1},\Varid{g1})\mathrel{=}\Varid{toProduct}\;\Varid{fg1}{}\<[E]%
\\
\>[14]{}\Varid{fgx}\mathrel{=}\Varid{fromProduct}\;(\Varid{f1},\Varid{fromProduct}\;(\Varid{g1},\Varid{x}))\;\mathbf{in}{}\<[E]%
\\
\>[5]{}\hsindent{4}{}\<[9]%
\>[9]{}(\Conid{Compose}\;\Varid{fgx}){}\<[E]%
\ColumnHook
\end{hscode}\resethooks

The \ensuremath{(\Varid{r},\mathbin{-})} functor is trivially an instance of \ensuremath{\Conid{IsProduct}}:

\begin{hscode}\SaveRestoreHook
\column{B}{@{}>{\hspre}l<{\hspost}@{}}%
\column{5}{@{}>{\hspre}l<{\hspost}@{}}%
\column{E}{@{}>{\hspre}l<{\hspost}@{}}%
\>[B]{}\mathbf{instance}\;\Conid{IsProduct}\;((,)\;\Varid{r})\;\mathbf{where}{}\<[E]%
\\
\>[B]{}\hsindent{5}{}\<[5]%
\>[5]{}\Varid{toProduct}\mathbin{::}(\Varid{r},\Varid{a})\to ((\Varid{r},()),\Varid{a}){}\<[E]%
\\
\>[B]{}\hsindent{5}{}\<[5]%
\>[5]{}\Varid{toProduct}\;(\Varid{r},\Varid{a})\mathrel{=}((\Varid{r},()),\Varid{a}){}\<[E]%
\\
\>[B]{}\hsindent{5}{}\<[5]%
\>[5]{}\Varid{fromProduct}\mathbin{::}((\Varid{r},()),\Varid{a})\to (\Varid{r},\Varid{a}){}\<[E]%
\\
\>[B]{}\hsindent{5}{}\<[5]%
\>[5]{}\Varid{fromProduct}\;((\Varid{r},()),\Varid{a})\mathrel{=}(\Varid{r},\Varid{a}){}\<[E]%
\ColumnHook
\end{hscode}\resethooks

We can now write the isomorphism:

\begin{hscode}\SaveRestoreHook
\column{B}{@{}>{\hspre}l<{\hspost}@{}}%
\column{5}{@{}>{\hspre}l<{\hspost}@{}}%
\column{E}{@{}>{\hspre}l<{\hspost}@{}}%
\>[B]{}\mathbf{instance}\;\Conid{Cartesian}\;\Varid{p}\Rightarrow \Conid{Enhancing}\;\Conid{IsProduct}\;\Varid{p}\;\mathbf{where}{}\<[E]%
\\
\>[B]{}\hsindent{5}{}\<[5]%
\>[5]{}\Varid{enhance}\mathbin{::}\mathbf{forall}\;\Varid{f}\hsforall \hsdot{\circ }{.\;}\Conid{IsProduct}\;\Varid{f}\Rightarrow \Varid{p}\;\Varid{a}\;\Varid{b}\to \Varid{p}\;(\Varid{f}\;\Varid{a})\;(\Varid{f}\;\Varid{b}){}\<[E]%
\\
\>[B]{}\hsindent{5}{}\<[5]%
\>[5]{}\Varid{enhance}\mathrel{=}\Varid{dimap}\;\Varid{fromProduct}\;\Varid{toProduct}\hsdot{\circ }{.\;}\Varid{second}{}\<[E]%
\\[\blanklineskip]%
\>[B]{}\mathbf{instance}\;\Conid{Enhancing}\;\Conid{IsProduct}\;\Varid{p}\Rightarrow \Conid{Cartesian}\;\Varid{p}\;\mathbf{where}{}\<[E]%
\\
\>[B]{}\hsindent{5}{}\<[5]%
\>[5]{}\Varid{second}\mathbin{::}\mathbf{forall}\;\Varid{c}\hsforall \hsdot{\circ }{.\;}\Varid{p}\;\Varid{a}\;\Varid{b}\to \Varid{p}\;(\Varid{c},\Varid{a})\;(\Varid{c},\Varid{b}){}\<[E]%
\\
\>[B]{}\hsindent{5}{}\<[5]%
\>[5]{}\Varid{second}\mathrel{=}\Varid{enhance}\;@(\Varid{c},\mathbin{-}){}\<[E]%
\ColumnHook
\end{hscode}\resethooks

These operations are mutual inverses; see
\textcite{proof_lens_is_profop_isproduct} for the proof.

Finally, we need to verify that the instances are lawful. Not by
accident, the \ensuremath{\Conid{Cartesian}} laws are exactly a specialization of the
\ensuremath{\Conid{Enhancing}} laws to the \ensuremath{(\Varid{c},\mathbin{-})} case. We therefore omit proving that
this isomorphism produces lawful instances here.

Hence \ensuremath{\Conid{Cartesian}\cong \Conid{Enhancing}\;\Conid{IsProduct}}, and therefore
\ensuremath{\Conid{PLens}\cong \Conid{ProfOptic}\;\Conid{IsProduct}}.

\end{proof}

\bigskip

\hypertarget{profunctor-encoding-1}{%
\section{Profunctor encoding}\label{profunctor-encoding-1}}

\begin{defn}[Profunctor encoding]

An optic family is said to have a profunctor encoding if there is a
profunctor optic isomorphic to it.

We capture this definition in the following typeclass:

\begin{hscode}\SaveRestoreHook
\column{B}{@{}>{\hspre}l<{\hspost}@{}}%
\column{5}{@{}>{\hspre}l<{\hspost}@{}}%
\column{E}{@{}>{\hspre}l<{\hspost}@{}}%
\>[B]{}\mathbf{class}\;(\Conid{OpticFamily}\;\Varid{op},\Conid{FunctorMonoid}\;\sigma)\Rightarrow \Conid{ProfEncoding}\;\Varid{op}\;\sigma\;\mathbf{where}{}\<[E]%
\\
\>[B]{}\hsindent{5}{}\<[5]%
\>[5]{}\Varid{decodeProfOptic}\mathbin{::}\Conid{ProfOptic}\;\sigma\leadsto \Varid{op}{}\<[E]%
\\
\>[B]{}\hsindent{5}{}\<[5]%
\>[5]{}\Varid{encodeProfOptic}\mathbin{::}\Varid{op}\leadsto \Conid{ProfOptic}\;\sigma{}\<[E]%
\ColumnHook
\end{hscode}\resethooks

\ensuremath{\Varid{encodeProfOptic}} and \ensuremath{\Varid{decodeProfOptic}} should be mutual inverses.

Note that \ensuremath{\Varid{decodeProfOptic}} and \ensuremath{\Varid{encodeProfOptic}} are required to be
morphisms of optic families.

\end{defn}

\bigskip

\begin{expl}

\ensuremath{\Conid{Lens}} has a profunctor encoding, as mentioned in the introduction. We
do not detail the encoding here, but we derive it in
\textcite{rederiving-profunctor-encodings-for-common-optics}.

\end{expl}

\hypertarget{isomorphism-optics}{%
\chapter{Isomorphism optics}\label{isomorphism-optics}}

As remarked by Russell O'Connor\autocite{r6_profunctor_hierarchy} and
others\autocite{isomorphism_lenses}, a key to understanding (profunctor)
optics is to formulate optic families as certain classes of
isomorphisms. Indeed, for most common optic families, an optic can be
seen to denote (when the types match) an isomorphism between the
``external'' type (\ensuremath{\Varid{s}}) and a functor of a certain shape applied to the
``internal type'' (\ensuremath{\Varid{a}}).

The ``shape'' of the functor determines the optic family. For
(very-well-behaved) lenses for example, the relevant family of functors
are the functors \ensuremath{(\Varid{r},\mathbin{-})} for some type \ensuremath{\Varid{r}}. This is expressed as
follows:

\begin{hscode}\SaveRestoreHook
\column{B}{@{}>{\hspre}l<{\hspost}@{}}%
\column{E}{@{}>{\hspre}l<{\hspost}@{}}%
\>[B]{}\Conid{Lens}\;\Varid{a}\;\Varid{a}\;\Varid{s}\;\Varid{s}\cong \mathbf{exists}\;\Varid{r}\hsexists \hsdot{\circ }{.\;}\Varid{s}\approx \Varid{r}\mathbin{\mathbf{\times}}\Varid{a}{}\<[E]%
\ColumnHook
\end{hscode}\resethooks

Other optics have similar expressions:

\begin{hscode}\SaveRestoreHook
\column{B}{@{}>{\hspre}l<{\hspost}@{}}%
\column{E}{@{}>{\hspre}l<{\hspost}@{}}%
\>[B]{}\Conid{Prism}\;\Varid{a}\;\Varid{a}\;\Varid{s}\;\Varid{s}\cong \mathbf{exists}\;\Varid{r}\hsexists \hsdot{\circ }{.\;}\Varid{s}\approx \Varid{r}\mathbin{+}\Varid{a}{}\<[E]%
\\
\>[B]{}\Conid{Traversal}\;\Varid{a}\;\Varid{a}\;\Varid{s}\;\Varid{s}\cong \mathbf{exists}\;\Varid{f}\hsexists \hsdot{\circ }{.\;}\Conid{Traversable}\;\Varid{f}\Rightarrow \Varid{s}\approx \Varid{f}\;\Varid{a}{}\<[E]%
\\
\>[B]{}\Conid{Setter}\;\Varid{a}\;\Varid{a}\;\Varid{s}\;\Varid{s}\cong \mathbf{exists}\;\Varid{f}\hsexists \hsdot{\circ }{.\;}\Conid{Functor}\;\Varid{f}\Rightarrow \Varid{s}\approx \Varid{f}\;\Varid{a}{}\<[E]%
\ColumnHook
\end{hscode}\resethooks

\bigskip

We can generalize his insight to cases when the types don't match by
splitting the isomorphisms into two functions:

\begin{hscode}\SaveRestoreHook
\column{B}{@{}>{\hspre}l<{\hspost}@{}}%
\column{43}{@{}>{\hspre}l<{\hspost}@{}}%
\column{E}{@{}>{\hspre}l<{\hspost}@{}}%
\>[B]{}\Conid{Lens}\;\Varid{a}\;\Varid{b}\;\Varid{s}\;\Varid{t}\cong \mathbf{exists}\;\Varid{r}\hsexists \hsdot{\circ }{.\;}(\Varid{s}\to \Varid{r}\mathbin{\mathbf{\times}}\Varid{a},{}\<[43]%
\>[43]{}\Varid{r}\mathbin{\mathbf{\times}}\Varid{b}\to \Varid{t}){}\<[E]%
\ColumnHook
\end{hscode}\resethooks

etc.

\bigskip

R. O'Connor also notes a fundamental property of those classes of
functors: they are closed under composition. We again capture this
property by requiring the relevant families of functors to be functor
monoids.

This alternative expression of optics forms a new encoding that, like
profunctor optics, reveals interesting new properties about the usual
optic families. We call optics of this form \emph{isomorphism optics},
or \emph{iso optics} for short. \bigskip

\hypertarget{definition-1}{%
\section{Definition}\label{definition-1}}

\begin{defn}[Isomorphism optic]

Given a functor monoid \ensuremath{\sigma}, we call isomorphism optic for \ensuremath{\sigma} a
value of the following datatype:

\begin{hscode}\SaveRestoreHook
\column{B}{@{}>{\hspre}l<{\hspost}@{}}%
\column{E}{@{}>{\hspre}l<{\hspost}@{}}%
\>[B]{}\mathbf{data}\;\Conid{IsoOptic}\;\sigma\;\Varid{a}\;\Varid{b}\;\Varid{s}\;\Varid{t}\mathrel{=}\mathbf{forall}\;\Varid{f}\hsforall \hsdot{\circ }{.\;}\sigma\;\Varid{f}\Rightarrow \Conid{IsoOptic}\;(\Varid{s}\to \Varid{f}\;\Varid{a})\;(\Varid{f}\;\Varid{b}\to \Varid{t}){}\<[E]%
\ColumnHook
\end{hscode}\resethooks

\end{defn}

\bigskip

\begin{rmk}

Notice that the \ensuremath{\Varid{f}} used in the definition is not present in the type
parameters of the datatype. This feature is called \emph{existential
quantification} and is a special case of the more general notion of a
\emph{GADT} (Generalized Algebraic DataType).

Admittedly, using the \text{\ttfamily forall} keyword to define existential
quantification can be surprising. This usage comes from the following
fundamental property of quantification:

\begin{hscode}\SaveRestoreHook
\column{B}{@{}>{\hspre}l<{\hspost}@{}}%
\column{E}{@{}>{\hspre}l<{\hspost}@{}}%
\>[B]{}(\mathbf{exists}\;\Varid{x}\hsexists \hsdot{\circ }{.\;}\Varid{f}_{\Varid{x}})\to \Varid{y}\cong \mathbf{forall}\;\Varid{x}\hsforall \hsdot{\circ }{.\;}(\Varid{f}_{\Varid{x}}\to \Varid{y}){}\<[E]%
\ColumnHook
\end{hscode}\resethooks

where \ensuremath{\Varid{f}_{\Varid{x}}} is an expression involving \ensuremath{\Varid{x}}.

Using this property, we can see that the type of the \ensuremath{\Conid{IsoOptic}}
constructor becomes:

\begin{hscode}\SaveRestoreHook
\column{B}{@{}>{\hspre}l<{\hspost}@{}}%
\column{E}{@{}>{\hspre}l<{\hspost}@{}}%
\>[B]{}\Conid{IsoOptic}\mathbin{::}\mathbf{forall}\;\Varid{f}\hsforall \hsdot{\circ }{.\;}\sigma\;\Varid{f}\Rightarrow (\Varid{s}\to \Varid{f}\;\Varid{a},\Varid{f}\;\Varid{b}\to \Varid{t})\to \Conid{IsoOptic}\;\sigma\;\Varid{a}\;\Varid{b}\;\Varid{s}\;\Varid{t}{}\<[E]%
\\
\>[B]{}\cong (\mathbf{exists}\;\Varid{f}\hsexists \hsdot{\circ }{.\;}\sigma\;\Varid{f}\Rightarrow (\Varid{s}\to \Varid{f}\;\Varid{a},\Varid{f}\;\Varid{b}\to \Varid{t}))\to \Conid{IsoOptic}\;\sigma\;\Varid{a}\;\Varid{b}\;\Varid{s}\;\Varid{t}{}\<[E]%
\ColumnHook
\end{hscode}\resethooks

Thus this datatype should be understood as follows: a value of type
\ensuremath{\Conid{IsoOptic}\;\sigma\;\Varid{a}\;\Varid{b}\;\Varid{s}\;\Varid{t}} is a triplet carrying a type constructor \ensuremath{\Varid{f}}
which is an instance of \ensuremath{\sigma}, as well as two functions, respectively
of type \ensuremath{\Varid{s}\to \Varid{f}\;\Varid{a}} and \ensuremath{\Varid{f}\;\Varid{b}\to \Varid{t}}.

\begin{hscode}\SaveRestoreHook
\column{B}{@{}>{\hspre}l<{\hspost}@{}}%
\column{E}{@{}>{\hspre}l<{\hspost}@{}}%
\>[B]{}\Conid{IsoOptic}\;\sigma\;\Varid{a}\;\Varid{b}\;\Varid{s}\;\Varid{t}\cong \mathbf{exists}\;\Varid{f}\hsexists \hsdot{\circ }{.\;}\sigma\;\Varid{f}\Rightarrow (\Varid{s}\to \Varid{f}\;\Varid{a},\Varid{f}\;\Varid{b}\to \Varid{t}){}\<[E]%
\ColumnHook
\end{hscode}\resethooks

The specific \ensuremath{\Varid{f}} used to define a specific \ensuremath{\Conid{IsoOptic}} is not observable:
a function that takes an \ensuremath{\Conid{IsoOptic}} as argument has no information on
the particular \ensuremath{\Varid{f}} except that it is an instance of \ensuremath{\sigma}.

\end{rmk}

\bigskip

\begin{rmk}

Existential quantification only happens in a datatype declaration and
when the \text{\ttfamily forall} quantifier is before the constructor. In particular,
the definition of \ensuremath{\Conid{ProfOptic}} is not existentially quantified.

\end{rmk}

\bigskip
\bigskip

\begin{expl}

For example, here is a simple function that consumes an
\ensuremath{\Conid{IsoOptic}\;\Conid{Functor}} (with type annotations added to show the hidden \ensuremath{\Varid{f}}):

\begin{hscode}\SaveRestoreHook
\column{B}{@{}>{\hspre}l<{\hspost}@{}}%
\column{9}{@{}>{\hspre}l<{\hspost}@{}}%
\column{E}{@{}>{\hspre}l<{\hspost}@{}}%
\>[B]{}\Varid{isoOpFunctorToMap}\mathbin{::}\Conid{IsoOptic}\;\Conid{Functor}\;\Varid{a}\;\Varid{b}\;\Varid{s}\;\Varid{t}\to (\Varid{a}\to \Varid{b})\to (\Varid{s}\to \Varid{t}){}\<[E]%
\\
\>[B]{}\Varid{isoOpFunctorToMap}\;(\Conid{IsoOptic}\;(\alpha\mathbin{::}\Varid{s}\to \Varid{f}\;\Varid{a})\;(\beta\mathbin{::}\Varid{f}\;\Varid{b}\to \Varid{t}))\;(\Varid{g}\mathbin{::}\Varid{a}\to \Varid{b})\mathrel{=}{}\<[E]%
\\
\>[B]{}\hsindent{9}{}\<[9]%
\>[9]{}\beta\hsdot{\circ }{.\;}\Varid{fmap}\;\Varid{g}\hsdot{\circ }{.\;}\alpha{}\<[E]%
\ColumnHook
\end{hscode}\resethooks

Since the hidden \ensuremath{\Varid{f}} is a \ensuremath{\Conid{Functor}} here, we know we can \ensuremath{\Varid{fmap}} on it.

In fact, this can be applied to any functor monoid \ensuremath{\sigma}, since
\ensuremath{\sigma\Rightarrow \Conid{Functor}}:

\begin{hscode}\SaveRestoreHook
\column{B}{@{}>{\hspre}l<{\hspost}@{}}%
\column{9}{@{}>{\hspre}l<{\hspost}@{}}%
\column{E}{@{}>{\hspre}l<{\hspost}@{}}%
\>[B]{}\Varid{isoOpToMap}\mathbin{::}\Conid{FunctorMonoid}\;\sigma\Rightarrow \Conid{IsoOptic}\;\sigma\;\Varid{a}\;\Varid{b}\;\Varid{s}\;\Varid{t}\to (\Varid{a}\to \Varid{b})\to (\Varid{s}\to \Varid{t}){}\<[E]%
\\
\>[B]{}\Varid{isoOpToMap}\;(\Conid{IsoOptic}\;(\alpha\mathbin{::}\Varid{s}\to \Varid{f}\;\Varid{a})\;(\beta\mathbin{::}\Varid{f}\;\Varid{b}\to \Varid{t}))\;(\Varid{g}\mathbin{::}\Varid{a}\to \Varid{b})\mathrel{=}{}\<[E]%
\\
\>[B]{}\hsindent{9}{}\<[9]%
\>[9]{}\beta\hsdot{\circ }{.\;}\Varid{fmap}\;\Varid{g}\hsdot{\circ }{.\;}\alpha{}\<[E]%
\ColumnHook
\end{hscode}\resethooks

\end{expl}

\bigskip

\begin{rmk}

Since the functor \ensuremath{\Varid{f}} carried by a given iso optic is not directly
observable, comparing iso optics for equality cannot reduce to comparing
the two carried functions for equality since their types may not match.
On the other hand, comparing the carried \ensuremath{\Varid{f}}s for equality would be too
restrictive, since in particular two iso optics carrying isomorphic \ensuremath{\Varid{f}}s
cannot be distinguished.

A proper definition of equality for iso optics requires a better
understanding of exactly which iso optics can be operationally
distinguished. The answer lies in the universal property of a
\emph{coend}, which is the categorical notion that corresponds to the
existential used to define iso optics.

The universal property of coends implies that, given
\ensuremath{\epsilon\mathbin{::}\Conid{IsoOptic}\;\sigma\;\Varid{a}\;\Varid{b}\;\Varid{s}\;\Varid{t}\to \Varid{x}} and
\ensuremath{\phi\mathbin{::}\mathbf{forall}\;\Varid{a}\hsforall \hsdot{\circ }{.\;}\Varid{f}\;\Varid{a}\to \Varid{g}\;\Varid{a}}, then
\ensuremath{\epsilon\;(\Conid{IsoOptic}\;(\phi\hsdot{\circ }{.\;}\alpha)\;\beta)\mathrel{=}\epsilon\;(\Conid{IsoOptic}\;\alpha\;(\beta\hsdot{\circ }{.\;}\phi))}
when the types match.

This motivates the following definition for the equality between iso
optics:

\end{rmk}

\bigskip

\begin{defn}\label{iso_equality}

Two iso optics are considered equal if the functions they carry are
equal up to a natural transformation between the carried functors.

In symbols:

\begin{hscode}\SaveRestoreHook
\column{B}{@{}>{\hspre}l<{\hspost}@{}}%
\column{E}{@{}>{\hspre}l<{\hspost}@{}}%
\>[B]{}\Conid{IsoOptic}\;(\alpha\mathbin{::}\Varid{s}\to \Varid{f}\;\Varid{a})\;(\beta\mathbin{::}\Varid{f}\;\Varid{b}\to \Varid{t})\mathrel{=}\Conid{IsoOptic}\;(\alpha'\mathbin{::}\Varid{s}\to \Varid{g}\;\Varid{a})\;(\beta'\mathbin{::}\Varid{g}\;\Varid{b}\to \Varid{t}){}\<[E]%
\\
\>[B]{}\mathbin{\mathbf{\Leftrightarrow}}\exists\;\phi\mathbin{::}\mathbf{forall}\;\Varid{a}\hsforall \hsdot{\circ }{.\;}\Varid{f}\;\Varid{a}\to \Varid{g}\;\Varid{a},\Varid{such}\;\Varid{that}\;\phi\hsdot{\circ }{.\;}\alpha\mathrel{=}\alpha'\;\Varid{and}\;\beta\mathrel{=}\beta'\hsdot{\circ }{.\;}\phi{}\<[E]%
\ColumnHook
\end{hscode}\resethooks

\end{defn}

\bigskip

\begin{rmk}\label{iso_equality_rmk}

We will be using this property mostly in the following form: if
\ensuremath{\phi\mathbin{::}\mathbf{forall}\;\Varid{a}\hsforall \hsdot{\circ }{.\;}\Varid{f}\;\Varid{a}\to \Varid{g}\;\Varid{a}}, then
\ensuremath{\Conid{IsoOptic}\;\alpha\;(\beta\hsdot{\circ }{.\;}\phi)\mathrel{=}\Conid{IsoOptic}\;(\phi\hsdot{\circ }{.\;}\alpha)\;\beta}.

\end{rmk}

\bigskip

\hypertarget{properties-1}{%
\section{Properties}\label{properties-1}}

\begin{prop}\label{iso_is_opfam}

For a given functor monoid \ensuremath{\sigma}, \ensuremath{\Conid{IsoOptic}\;\sigma} is an optic family.

\end{prop}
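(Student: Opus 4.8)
The plan is to equip $\texttt{IsoOptic}\ \sigma$ with an \texttt{OpticFamily} instance and then verify the axioms, leaning on the functor-monoid structure of $\sigma$ at each step. For \texttt{injOptic}, given $f : s \to a$ and $g : b \to t$, I take the carried functor to be the identity functor $\texttt{Id}$ — an instance of $\sigma$ precisely because $\sigma$ is a functor monoid — and set $\texttt{injOptic}\ f\ g = \texttt{IsoOptic}\ (\texttt{Id} \circ f)\ (g \circ \texttt{unId})$. For \texttt{mapOptic} I reuse the \texttt{isoOpToMap} construction from the example above, $\texttt{mapOptic}\ (\texttt{IsoOptic}\ \alpha\ \beta)\ h = \beta \circ \texttt{fmap}\ h \circ \alpha$, which is legitimate since $\sigma \Rightarrow \texttt{Functor}$. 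For the composition $\circ_{op}$, given $\texttt{IsoOptic}\ \alpha\ \beta$ carrying $f$ and $\texttt{IsoOptic}\ \alpha'\ \beta'$ carrying $g$, I let the composite carry $\texttt{Compose}\ f\ g$ — again a $\sigma$-instance — with first component $\texttt{Compose} \circ \texttt{fmap}\ \alpha' \circ \alpha$ and second component $\beta \circ \texttt{fmap}\ \beta' \circ \texttt{unCompose}$. The identity optic $\texttt{id}_{op}$ is omitted by \cref{omit_idoptic}.

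The axioms involving only \texttt{mapOptic} are pure unfolding. One computes $\texttt{mapOptic}\ (\texttt{injOptic}\ f\ g)\ h = (g \circ \texttt{unId}) \circ \texttt{fmap}_{\texttt{Id}}\ h \circ (\texttt{Id} \circ f) = g \circ h \circ f$, using $\texttt{fmap}_{\texttt{Id}}\ h = \texttt{Id} \circ h \circ \texttt{unId}$; and for $\texttt{mapOptic}\ (op_1 \circ_{op} op_2) = \texttt{mapOptic}\ op_1 \circ \texttt{mapOptic}\ op_2$, expanding $\texttt{fmap}_{\texttt{Compose}\ f\ g} = \lambda h.\ \texttt{Compose} \circ \texttt{fmap}_f(\texttt{fmap}_g\ h) \circ \texttt{unCompose}$ and cancelling the $\texttt{unCompose} \circ \texttt{Compose}$ pairs reduces both sides to $\beta \circ \texttt{fmap}_f(\beta' \circ \texttt{fmap}_g\ h \circ \alpha') \circ \alpha$, invoking only the \texttt{Functor} laws for $f$. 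The axiom $\texttt{injOptic}\ \texttt{id}\ \texttt{id} = \texttt{id}_{op}$ holds verbatim for the chosen representative $\texttt{IsoOptic}\ \texttt{Id}\ \texttt{unId}$.

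The remaining axioms are where the real work lies, and they are the main obstacle: associativity and the two unit laws for $\circ_{op}$, together with $\texttt{injOptic}\ (f' \circ f)\ (g \circ g') = \texttt{injOptic}\ f\ g \circ_{op} \texttt{injOptic}\ f'\ g'$. None of these holds literally, since the carried functors on the two sides are syntactically different — e.g. $\texttt{Compose}\ f\ (\texttt{Compose}\ g\ h)$ versus $\texttt{Compose}\ (\texttt{Compose}\ f\ g)\ h$ for associativity, and $\texttt{Compose}\ \texttt{Id}\ f$ or $\texttt{Compose}\ f\ \texttt{Id}$ versus $f$ for the units. They hold only modulo the equality of iso optics of \cref{iso_equality}. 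So for each axiom I produce the evident natural transformation $\phi$ relating the two carried functors — the associator, or the left/right unitor pairing $\texttt{Compose}$ with $\texttt{Id}$ — each assembled from $\texttt{Compose}$, $\texttt{unCompose}$, $\texttt{Id}$, $\texttt{unId}$ and $\texttt{fmap}$, and then verify, via \cref{iso_equality_rmk}, that $\phi$ transports one representative to the other, i.e.\ that it satisfies $\phi \circ \alpha_{\mathrm{LHS}} = \alpha_{\mathrm{RHS}}$ and $\beta_{\mathrm{LHS}} = \beta_{\mathrm{RHS}} \circ \phi$. Each verification is a short diagram chase with the \texttt{Functor} laws; the only thing requiring care is that $\phi$ is genuinely natural, which is automatic because it is built from constructors and \texttt{fmap}. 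Collecting the instance together with these checks proves the proposition.
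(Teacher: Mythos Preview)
Your proposal is correct and follows essentially the same approach as the paper: the instance definitions are identical, and the laws are established by exhibiting the evident natural transformations (unitors, associator) and invoking the coend equality of \cref{iso_equality}. The only organisational difference is that the paper first proves the absorption lemma $\texttt{injOptic}\ f\ g \circ_{op} \texttt{IsoOptic}\ \alpha\ \beta = \texttt{IsoOptic}\ (\alpha\circ f)\ (g\circ\beta)$ once, from which the left identity and the \texttt{injOptic} composition law fall out as immediate specialisations, whereas you treat each axiom separately; your treatment is slightly more uniform and also makes associativity explicit, which the paper's appendix elides.
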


\begin{proof}

We can write an \ensuremath{\Conid{OpticFamily}} instance for \ensuremath{\Conid{IsoOptic}\;\sigma}:

\begin{hscode}\SaveRestoreHook
\column{B}{@{}>{\hspre}l<{\hspost}@{}}%
\column{E}{@{}>{\hspre}l<{\hspost}@{}}%
\>[B]{}\mathbf{instance}\;\Conid{FunctorMonoid}\;\sigma\Rightarrow \Conid{OpticFamily}\;(\Conid{IsoOptic}\;\sigma)\;\mathbf{where}{}\<[E]%
\ColumnHook
\end{hscode}\resethooks

\ensuremath{\Conid{Id}} is an instance of \ensuremath{\sigma} as per the axioms of functor monoids:

\begin{hscode}\SaveRestoreHook
\column{B}{@{}>{\hspre}l<{\hspost}@{}}%
\column{5}{@{}>{\hspre}l<{\hspost}@{}}%
\column{9}{@{}>{\hspre}l<{\hspost}@{}}%
\column{13}{@{}>{\hspre}l<{\hspost}@{}}%
\column{E}{@{}>{\hspre}l<{\hspost}@{}}%
\>[5]{}\Varid{injOptic}\mathbin{::}(\Varid{s}\to \Varid{a})\to (\Varid{b}\to \Varid{t})\to \Conid{IsoOptic}\;\sigma\;\Varid{a}\;\Varid{b}\;\Varid{s}\;\Varid{t}{}\<[E]%
\\
\>[5]{}\Varid{injOptic}\;\Varid{f}\;\Varid{g}\mathrel{=}\Conid{IsoOptic}\;\alpha\;\beta{}\<[E]%
\\
\>[5]{}\hsindent{4}{}\<[9]%
\>[9]{}\mathbf{where}{}\<[E]%
\\
\>[9]{}\hsindent{4}{}\<[13]%
\>[13]{}\alpha\mathbin{::}\Varid{s}\to \Conid{Id}\;\Varid{a}{}\<[E]%
\\
\>[9]{}\hsindent{4}{}\<[13]%
\>[13]{}\alpha\mathrel{=}\Conid{Id}\hsdot{\circ }{.\;}\Varid{f}{}\<[E]%
\\
\>[9]{}\hsindent{4}{}\<[13]%
\>[13]{}\beta\mathbin{::}\Conid{Id}\;\Varid{b}\to \Varid{t}{}\<[E]%
\\
\>[9]{}\hsindent{4}{}\<[13]%
\>[13]{}\beta\mathrel{=}\Varid{g}\hsdot{\circ }{.\;}\Varid{unId}{}\<[E]%
\ColumnHook
\end{hscode}\resethooks

Functor monoids are closed under composition:

\begin{hscode}\SaveRestoreHook
\column{B}{@{}>{\hspre}l<{\hspost}@{}}%
\column{5}{@{}>{\hspre}l<{\hspost}@{}}%
\column{9}{@{}>{\hspre}l<{\hspost}@{}}%
\column{13}{@{}>{\hspre}l<{\hspost}@{}}%
\column{17}{@{}>{\hspre}l<{\hspost}@{}}%
\column{21}{@{}>{\hspre}l<{\hspost}@{}}%
\column{E}{@{}>{\hspre}l<{\hspost}@{}}%
\>[5]{}(\circ_{op})\mathbin{::}\Conid{IsoOptic}\;\sigma\;\Varid{a}\;\Varid{b}\;\Varid{s}\;\Varid{t}\to \Conid{IsoOptic}\;\sigma\;\Varid{x}\;\Varid{y}\;\Varid{a}\;\Varid{b}\to \Conid{IsoOptic}\;\sigma\;\Varid{x}\;\Varid{y}\;\Varid{s}\;\Varid{t}{}\<[E]%
\\
\>[5]{}(\circ_{op})\;{}\<[E]%
\\
\>[5]{}\hsindent{4}{}\<[9]%
\>[9]{}(\Conid{IsoOptic}\;(\alpha_{\mathrm{1}}\mathbin{::}\Varid{s}\to \Varid{f}\;\Varid{a})\;(\beta_{\mathrm{1}}\mathbin{::}\Varid{f}\;\Varid{b}\to \Varid{t}))\;{}\<[E]%
\\
\>[5]{}\hsindent{4}{}\<[9]%
\>[9]{}(\Conid{IsoOptic}\;(\alpha_{\mathrm{2}}\mathbin{::}\Varid{a}\to \Varid{g}\;\Varid{x})\;(\beta_{\mathrm{2}}\mathbin{::}\Varid{g}\;\Varid{y}\to \Varid{b}))\mathrel{=}{}\<[E]%
\\
\>[9]{}\hsindent{4}{}\<[13]%
\>[13]{}\Conid{IsoOptic}\;\alpha\;\beta{}\<[E]%
\\
\>[13]{}\hsindent{4}{}\<[17]%
\>[17]{}\mathbf{where}{}\<[E]%
\\
\>[17]{}\hsindent{4}{}\<[21]%
\>[21]{}\alpha\mathbin{::}\Varid{s}\to \Conid{Compose}\;\Varid{f}\;\Varid{g}\;\Varid{x}{}\<[E]%
\\
\>[17]{}\hsindent{4}{}\<[21]%
\>[21]{}\alpha\mathrel{=}\Conid{Compose}\hsdot{\circ }{.\;}\Varid{fmap}\;@\Varid{f}\;\alpha_{\mathrm{2}}\hsdot{\circ }{.\;}\alpha_{\mathrm{1}}{}\<[E]%
\\
\>[17]{}\hsindent{4}{}\<[21]%
\>[21]{}\beta\mathbin{::}\Conid{Compose}\;\Varid{f}\;\Varid{g}\;\Varid{y}\to \Varid{t}{}\<[E]%
\\
\>[17]{}\hsindent{4}{}\<[21]%
\>[21]{}\beta\mathrel{=}\beta_{\mathrm{1}}\hsdot{\circ }{.\;}\Varid{fmap}\;@\Varid{f}\;\beta_{\mathrm{2}}\hsdot{\circ }{.\;}\Varid{unCompose}{}\<[E]%
\ColumnHook
\end{hscode}\resethooks

\ensuremath{\Varid{mapOptic}} is simply \ensuremath{\Varid{isoOpToMap}} as defined above:

\begin{hscode}\SaveRestoreHook
\column{B}{@{}>{\hspre}l<{\hspost}@{}}%
\column{5}{@{}>{\hspre}l<{\hspost}@{}}%
\column{E}{@{}>{\hspre}l<{\hspost}@{}}%
\>[5]{}\Varid{mapOptic}\mathbin{::}\Conid{IsoOptic}\;\sigma\;\Varid{a}\;\Varid{b}\;\Varid{s}\;\Varid{t}\to (\Varid{a}\to \Varid{b})\to (\Varid{s}\to \Varid{t}){}\<[E]%
\\
\>[5]{}\Varid{mapOptic}\;(\Conid{IsoOptic}\;\alpha\;\beta)\;\Varid{g}\mathrel{=}\beta\hsdot{\circ }{.\;}\Varid{fmap}\;\Varid{g}\hsdot{\circ }{.\;}\alpha{}\<[E]%
\ColumnHook
\end{hscode}\resethooks

See \textcite{proof_iso_is_opfam} for a proof of the laws.

\end{proof}

\bigskip

\begin{defn}

For \ensuremath{\Varid{f}\;\mathbin{\in}\;\sigma}, we define

\begin{hscode}\SaveRestoreHook
\column{B}{@{}>{\hspre}l<{\hspost}@{}}%
\column{E}{@{}>{\hspre}l<{\hspost}@{}}%
\>[B]{}\Varid{enhanceIso}\mathbin{::}\sigma\;\Varid{f}\Rightarrow \Conid{IsoOptic}\;\sigma\;\Varid{a}\;\Varid{b}\;(\Varid{f}\;\Varid{a})\;(\Varid{f}\;\Varid{b}){}\<[E]%
\\
\>[B]{}\Varid{enhanceIso}\mathrel{=}\Conid{IsoOptic}\;\Varid{id}\;\Varid{id}{}\<[E]%
\ColumnHook
\end{hscode}\resethooks

\end{defn}

\bigskip

\begin{prop}\label{iso_normal_form}

Let
\ensuremath{\Varid{l}\mathrel{=}\Conid{IsoOptic}\;(\alpha\mathbin{::}\Varid{s}\to \Varid{f}\;\Varid{a})\;(\beta\mathbin{::}\Varid{f}\;\Varid{b}\to \Varid{t})\mathbin{::}\Conid{IsoOptic}\;\sigma\;\Varid{a}\;\Varid{b}\;\Varid{s}\;\Varid{t}}.
Then \ensuremath{\Varid{l}\mathrel{=}\Varid{injOptic}\;\alpha\;\beta\ \circ_{op}\ \Varid{enhanceIso}\;@\Varid{f}}.

\end{prop}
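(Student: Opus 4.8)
The plan is to unfold the composite on the right, simplify it, and then recognise that it coincides with $\Varid{l}$ after one application of the equality of iso optics from \cref{iso_equality}.

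First I would expand both factors using the $\Conid{OpticFamily}$ instance for $\Conid{IsoOptic}\ \sigma$ constructed in the proof of \cref{iso_is_opfam}. There, $\Varid{injOptic}\ \alpha\ \beta = \Conid{IsoOptic}\ (\Conid{Id}\circ\alpha)\ (\beta\circ\Varid{unId})$, carrying the functor $\Conid{Id}$, whereas $\Varid{enhanceIso}\ @\Varid{f} = \Conid{IsoOptic}\ \Varid{id}\ \Varid{id}$, carrying the functor $\Varid{f}$ (which is an instance of $\sigma$, since $\Varid{l}$ is a well-formed iso optic for $\sigma$). Applying the definition of $\circ_{op}$ for $\Conid{IsoOptic}\ \sigma$ with the left factor's carried functor instantiated to $\Conid{Id}$, and using the $\Conid{Id}$ functor law $\Varid{fmap}\ \Varid{id} = \Varid{id}$ to discard the two $\Varid{fmap}\ @\Conid{Id}$ occurrences, I obtain
\[
  \Varid{injOptic}\ \alpha\ \beta\ \circ_{op}\ \Varid{enhanceIso}\ @\Varid{f}\ =\ \Conid{IsoOptic}\ (\Conid{Compose}\circ\Conid{Id}\circ\alpha)\ (\beta\circ\Varid{unId}\circ\Varid{unCompose}),
\]
an iso optic carrying the functor $\Conid{Compose}\ \Conid{Id}\ \Varid{f}$, which lies in $\sigma$ because $\sigma$ is a functor monoid.

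It then remains to identify this with $\Varid{l} = \Conid{IsoOptic}\ \alpha\ \beta$. I would use the natural transformation $\phi = \Varid{unId}\circ\Varid{unCompose}\colon \forall x.\ (\Conid{Compose}\ \Conid{Id}\ \Varid{f})\ x \to \Varid{f}\ x$ (natural as a composite of $\Varid{unCompose}$ and $\Varid{unId}$, or directly by parametricity). Its covariant component above is precisely $\beta\circ\phi$, so \cref{iso_equality_rmk} rewrites the composite as $\Conid{IsoOptic}\ (\phi\circ\Conid{Compose}\circ\Conid{Id}\circ\alpha)\ \beta$; and $\phi\circ\Conid{Compose}\circ\Conid{Id} = \Varid{unId}\circ\Varid{unCompose}\circ\Conid{Compose}\circ\Conid{Id} = \Varid{id}$ by the triangle identities $\Varid{unCompose}\circ\Conid{Compose} = \Varid{id}$ and $\Varid{unId}\circ\Conid{Id} = \Varid{id}$. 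Hence the composite equals $\Conid{IsoOptic}\ \alpha\ \beta = \Varid{l}$, which is the claim.

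I do not expect a substantive obstacle: the argument is a short equational calculation. The only care needed is bookkeeping — tracking which functor is carried at each stage, and, when invoking \cref{iso_equality_rmk}, orienting $\phi$ so that the composite's carried functor $\Conid{Compose}\ \Conid{Id}\ \Varid{f}$ is collapsed onto $\Varid{l}$'s carried functor $\Varid{f}$ rather than the reverse.
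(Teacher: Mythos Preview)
Your proof is correct and follows essentially the same route as the paper: expand $\Varid{injOptic}$ and $\Varid{enhanceIso}$, apply the $\circ_{op}$ definition (simplifying the $\Varid{fmap}\ @\Conid{Id}\ \Varid{id}$ occurrences), and then slide the natural transformation $\Varid{unId}\circ\Varid{unCompose}$ across using \cref{iso_equality_rmk} to collapse $\Conid{Compose}\ \Conid{Id}\ \Varid{f}$ to $\Varid{f}$. The paper's equational chain is line-for-line the same calculation.
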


\begin{proof}

The derivation relies on the unusual equality law of iso optics (see
\textcite{iso_equality_rmk}):

\begin{hscode}\SaveRestoreHook
\column{B}{@{}>{\hspre}l<{\hspost}@{}}%
\column{4}{@{}>{\hspre}l<{\hspost}@{}}%
\column{14}{@{}>{\hspre}l<{\hspost}@{}}%
\column{E}{@{}>{\hspre}l<{\hspost}@{}}%
\>[B]{}\Varid{injOptic}\;\alpha\;\beta\ \circ_{op}\ \Varid{enhanceIso}\;@\Varid{f}{}\<[E]%
\\
\>[B]{}\mathrel{=}{}\<[4]%
\>[4]{}\Varid{injOptic}\;\alpha\;\beta\ \circ_{op}\ \Conid{IsoOptic}\;\Varid{id}\;\Varid{id}{}\<[E]%
\\
\>[B]{}\mathrel{=}{}\<[4]%
\>[4]{}\Conid{IsoOptic}\;(\Conid{Id}\hsdot{\circ }{.\;}\alpha)\;(\beta\hsdot{\circ }{.\;}\Varid{unId})\ \circ_{op}\ \Conid{IsoOptic}\;\Varid{id}\;\Varid{id}{}\<[E]%
\\
\>[B]{}\mathrel{=}{}\<[4]%
\>[4]{}\Conid{IsoOptic}\;{}\<[14]%
\>[14]{}(\Conid{Compose}\hsdot{\circ }{.\;}\Varid{fmap}\;@\Conid{Id}\;\Varid{id}\hsdot{\circ }{.\;}\Conid{Id}\hsdot{\circ }{.\;}\alpha)\;{}\<[E]%
\\
\>[14]{}(\beta\hsdot{\circ }{.\;}\Varid{unId}\hsdot{\circ }{.\;}\Varid{fmap}\;@\Conid{Id}\;\Varid{id}\hsdot{\circ }{.\;}\Varid{unCompose}){}\<[E]%
\\
\>[B]{}\mathrel{=}{}\<[4]%
\>[4]{}\Conid{IsoOptic}\;(\Conid{Compose}\hsdot{\circ }{.\;}\Conid{Id}\hsdot{\circ }{.\;}\alpha)\;(\beta\hsdot{\circ }{.\;}\Varid{unId}\hsdot{\circ }{.\;}\Varid{unCompose}){}\<[E]%
\\
\>[B]{}\mathrel{=}\mbox{\commentbegin   \ensuremath{\Varid{unId}\hsdot{\circ }{.\;}\Varid{unCompose}\mathbin{::}\mathbf{forall}\;\Varid{a}\hsforall \hsdot{\circ }{.\;}\Conid{Compose}\;\Conid{Id}\;\Varid{f}\;\Varid{a}\to \Varid{f}\;\Varid{a}}   \commentend}{}\<[E]%
\\
\>[B]{}\hsindent{4}{}\<[4]%
\>[4]{}\Conid{IsoOptic}\;(\Varid{unId}\hsdot{\circ }{.\;}\Varid{unCompose}\hsdot{\circ }{.\;}\Conid{Compose}\hsdot{\circ }{.\;}\Conid{Id}\hsdot{\circ }{.\;}\alpha)\;\beta{}\<[E]%
\\
\>[B]{}\mathrel{=}{}\<[4]%
\>[4]{}\Conid{IsoOptic}\;\alpha\;\beta{}\<[E]%
\\
\>[B]{}\mathrel{=}{}\<[4]%
\>[4]{}\Varid{l}{}\<[E]%
\ColumnHook
\end{hscode}\resethooks

\end{proof}

\bigskip

\begin{prop}\label{enhance_is_retraction}

If \ensuremath{\Varid{l}\mathrel{=}\Conid{IsoOptic}\;\alpha\;\beta\mathbin{::}\Conid{IsoOptic}\;\sigma\;\Varid{a}\;\Varid{b}\;(\Varid{f}\;\Varid{a})\;(\Varid{f}\;\Varid{b})} and
\ensuremath{\Varid{f}\;\mathbin{\in}\;\sigma}, then \ensuremath{\Varid{l}\mathrel{=}\Varid{enhanceIso}\;@\Varid{f}\mathbin{\mathbf{\Leftrightarrow}}\beta\hsdot{\circ }{.\;}\alpha\mathrel{=}\Varid{id}}

\end{prop}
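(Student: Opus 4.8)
The plan is to reduce the biconditional entirely to the equality law for iso optics (\cref{iso_equality}, used in the sliding form of \cref{iso_equality_rmk}). Write $l = \Conid{IsoOptic}\;\alpha\;\beta$, with the hidden carried functor $\Varid{g}\in\sigma$ so that $\alpha :: \Varid{f}\;\Varid{a}\to\Varid{g}\;\Varid{a}$ and $\beta :: \Varid{g}\;\Varid{b}\to\Varid{f}\;\Varid{b}$, and recall that $\Varid{enhanceIso}\;@\Varid{f} = \Conid{IsoOptic}\;\Varid{id}\;\Varid{id}$, carrying $\Varid{f}$ itself. So the goal becomes $\Conid{IsoOptic}\;\alpha\;\beta = \Conid{IsoOptic}\;\Varid{id}\;\Varid{id} \iff \beta\circ\alpha = \Varid{id}$.

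For the ($\Rightarrow$) direction I would use $\Varid{mapOptic}$, which is a well-defined function on iso optics: from $l = \Varid{enhanceIso}\;@\Varid{f}$ we get $\Varid{mapOptic}\;l = \Varid{mapOptic}\;(\Varid{enhanceIso}\;@\Varid{f})$, and evaluating both sides at $\Varid{id}$ gives $\beta\circ\Varid{fmap}\;\Varid{id}\circ\alpha = \Varid{fmap}\;\Varid{id}$, i.e. $\beta\circ\alpha = \Varid{id}$ after one application of the functor law. Equivalently, one can unfold \cref{iso_equality}: any witness $\phi$ of the equality is forced to equal $\beta$ and to satisfy $\phi\circ\alpha = \Varid{id}$, which again yields $\beta\circ\alpha = \Varid{id}$.

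For the ($\Leftarrow$) direction I would invoke \cref{iso_equality} with the natural transformation $\phi = \beta :: \forall \Varid{x}.\ \Varid{g}\;\Varid{x}\to\Varid{f}\;\Varid{x}$ as the required witness. Its two obligations are $\phi\circ\alpha = \Varid{id}$, which is exactly the hypothesis, and $\beta = \Varid{id}\circ\phi$, which holds by construction; hence $\Conid{IsoOptic}\;\alpha\;\beta = \Conid{IsoOptic}\;\Varid{id}\;\Varid{id} = \Varid{enhanceIso}\;@\Varid{f}$. Symmetrically one could take $\phi = \alpha$ and use the mirror orientation of the equality law.

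The delicate point — and the step I expect to be the main obstacle — is the legitimacy of that witness: \cref{iso_equality} demands that $\phi$ be a \emph{natural} transformation between the carried functors, i.e. polymorphic in the value type, whereas a priori $\beta$ is only the leg at the single index $\Varid{b}$. I would discharge this with the standing parametricity convention on optics (the same uniformity already exploited in the wedge/naturality conditions on $\Varid{enhance}$): the two legs of an iso optic whose source and target have the shape $\Varid{f}\;(\text{--})$ are natural, so $\beta$ does give a legitimate $\phi$. If one prefers to avoid parametricity at this point, an alternative route is to rewrite $l$ via \cref{iso_normal_form} as $\Varid{injOptic}\;\alpha\;\beta\ \circ_{op}\ \Varid{enhanceIso}\;@\Varid{f}$ and then show that this composite collapses to $\Varid{enhanceIso}\;@\Varid{f}$ precisely when $\beta\circ\alpha = \Varid{id}$, by sliding $\beta$ through the coend as in the proof of \cref{iso_normal_form}. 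I would also keep careful track of the domain/codomain orientation in \cref{iso_equality} so that $\phi$ points the correct way in each of the two implications.
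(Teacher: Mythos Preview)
Your approach is correct and essentially matches the paper, whose entire proof is the one-liner ``Immediate from the definition of equality''---i.e., exactly your direct unfolding of \cref{iso_equality} with witness $\phi=\beta$, which forces $\phi\circ\alpha=\Varid{id}$ and $\beta=\phi$. The \ensuremath{\Varid{mapOptic}} detour you offer for the forward direction is unnecessary (and only typechecks on the diagonal $a=b$), and the naturality concern you carefully flag is simply glossed over by the paper under its standing parametricity convention; your instinct to resolve it via parametricity is the intended reading.
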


\begin{proof}

Immediate from the definition of equality.

\end{proof}

\bigskip

\begin{prop}\label{compose_enhanceiso}

\ensuremath{\Varid{enhanceIso}\;@\Varid{f}\ \circ_{op}\ \Varid{enhanceIso}\;@\Varid{g}\mathrel{=}\Varid{injOptic}\;\Conid{Compose}\;\Varid{unCompose}\ \circ_{op}\ \Varid{enhanceIso}\;@(\Conid{Compose}\;\Varid{f}\;\Varid{g})}

\end{prop}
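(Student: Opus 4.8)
The plan is to compute the left-hand side directly from the \ensuremath{\Conid{OpticFamily}} instance for \ensuremath{\Conid{IsoOptic}\;\sigma} (the one built in the proof of \textcite{iso_is_opfam}), simplify it using the functor law \ensuremath{\Varid{fmap}\;\Varid{id}\mathrel{=}\Varid{id}}, and then recognise the result as a direct instance of \textcite{iso_normal_form}, which immediately yields the right-hand side without any further expansion.

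First I would unfold \ensuremath{\Varid{enhanceIso}\;@\Varid{f}\mathrel{=}\Conid{IsoOptic}\;\Varid{id}\;\Varid{id}} (carried functor \ensuremath{\Varid{f}}) and \ensuremath{\Varid{enhanceIso}\;@\Varid{g}\mathrel{=}\Conid{IsoOptic}\;\Varid{id}\;\Varid{id}} (carried functor \ensuremath{\Varid{g}}), and apply the definition of \ensuremath{(\circ_{op})} for iso optics. Its output carries the functor \ensuremath{\Conid{Compose}\;\Varid{f}\;\Varid{g}} (which belongs to \ensuremath{\sigma} since \ensuremath{\sigma} is a functor monoid) together with the functions \ensuremath{\Conid{Compose}\hsdot{\circ }{.\;}\Varid{fmap}\;@\Varid{f}\;\Varid{id}\hsdot{\circ }{.\;}\Varid{id}} and \ensuremath{\Varid{id}\hsdot{\circ }{.\;}\Varid{fmap}\;@\Varid{f}\;\Varid{id}\hsdot{\circ }{.\;}\Varid{unCompose}}; invoking \ensuremath{\Varid{fmap}\;\Varid{id}\mathrel{=}\Varid{id}} these reduce to \ensuremath{\Conid{Compose}} and \ensuremath{\Varid{unCompose}} respectively, so the left-hand side equals \ensuremath{\Conid{IsoOptic}\;\Conid{Compose}\;\Varid{unCompose}} with carried functor \ensuremath{\Conid{Compose}\;\Varid{f}\;\Varid{g}}.

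Then I would apply \textcite{iso_normal_form} to this very iso optic, with \ensuremath{\alpha\mathrel{=}\Conid{Compose}}, \ensuremath{\beta\mathrel{=}\Varid{unCompose}} and carried functor \ensuremath{\Conid{Compose}\;\Varid{f}\;\Varid{g}}: it states \ensuremath{\Conid{IsoOptic}\;\Conid{Compose}\;\Varid{unCompose}\mathrel{=}\Varid{injOptic}\;\Conid{Compose}\;\Varid{unCompose}\ \circ_{op}\ \Varid{enhanceIso}\;@(\Conid{Compose}\;\Varid{f}\;\Varid{g})}, which is exactly the right-hand side of the proposition. This closes the proof.

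The only delicate point is the first step: one must carefully track the existentially hidden carried functors and the \ensuremath{\Varid{fmap}} type applications through the definition of \ensuremath{(\circ_{op})} to be sure that the \ensuremath{\Conid{Compose}} and \ensuremath{\Varid{unCompose}} that come out are those for the functor composition \ensuremath{\Conid{Compose}\;\Varid{f}\;\Varid{g}}; the rest is routine. As an alternative to \textcite{iso_normal_form}, one could instead unfold the right-hand side as well (its carried functor is then \ensuremath{\Conid{Compose}\;\Conid{Id}\;(\Conid{Compose}\;\Varid{f}\;\Varid{g})}) and conclude via the iso-optic equality of \textcite{iso_equality}, using the mediating natural transformation \ensuremath{\Conid{Compose}\hsdot{\circ }{.\;}\Conid{Id}} together with \ensuremath{\Varid{unId}\hsdot{\circ }{.\;}\Conid{Id}\mathrel{=}\Varid{id}} and \ensuremath{\Varid{unCompose}\hsdot{\circ }{.\;}\Conid{Compose}\mathrel{=}\Varid{id}}; but going through \textcite{iso_normal_form} is cleaner since it never touches the right-hand side.
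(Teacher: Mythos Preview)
Your proposal is correct and follows exactly the paper's proof: unfold both \ensuremath{\Varid{enhanceIso}} to \ensuremath{\Conid{IsoOptic}\;\Varid{id}\;\Varid{id}}, apply the \ensuremath{(\circ_{op})} definition to obtain \ensuremath{\Conid{IsoOptic}\;\Conid{Compose}\;\Varid{unCompose}} via \ensuremath{\Varid{fmap}\;\Varid{id}\mathrel{=}\Varid{id}}, and then invoke \textcite{iso_normal_form}. The alternative you sketch (expanding the right-hand side and using the coend equality) is also valid but, as you note, less direct.
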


\begin{proof}

Straightforward from the definitions:

\begin{hscode}\SaveRestoreHook
\column{B}{@{}>{\hspre}l<{\hspost}@{}}%
\column{4}{@{}>{\hspre}l<{\hspost}@{}}%
\column{E}{@{}>{\hspre}l<{\hspost}@{}}%
\>[B]{}\Varid{enhanceIso}\;@\Varid{f}\ \circ_{op}\ \Varid{enhanceIso}\;@\Varid{g}{}\<[E]%
\\
\>[B]{}\mathrel{=}{}\<[4]%
\>[4]{}(\Conid{IsoOptic}\;\Varid{id}\;\Varid{id})\ \circ_{op}\ (\Conid{IsoOptic}\;\Varid{id}\;\Varid{id}){}\<[E]%
\\
\>[B]{}\mathrel{=}{}\<[4]%
\>[4]{}\Conid{IsoOptic}\;(\Conid{Compose}\hsdot{\circ }{.\;}\Varid{fmap}\;\Varid{id}\hsdot{\circ }{.\;}\Varid{id})\;(\Varid{id}\hsdot{\circ }{.\;}\Varid{fmap}\;\Varid{id}\hsdot{\circ }{.\;}\Varid{unCompose}){}\<[E]%
\\
\>[B]{}\mathrel{=}{}\<[4]%
\>[4]{}\Conid{IsoOptic}\;\Conid{Compose}\;\Varid{unCompose}{}\<[E]%
\\
\>[B]{}\mathrel{=}\mbox{\commentbegin   \Cref{iso_normal_form}   \commentend}{}\<[E]%
\\
\>[B]{}\hsindent{4}{}\<[4]%
\>[4]{}\Varid{injOptic}\;\Conid{Compose}\;\Varid{unCompose}\ \circ_{op}\ \Varid{enhanceIso}\;@(\Conid{Compose}\;\Varid{f}\;\Varid{g}){}\<[E]%
\ColumnHook
\end{hscode}\resethooks

\end{proof}

\bigskip

\hypertarget{arrows-of-isomorphism-optics}{%
\section{Arrows of isomorphism
optics}\label{arrows-of-isomorphism-optics}}

Since iso optics are defined using existentials (i.e.~coends), we expect
arrows from iso optics to have interesting structure. Indeed, an arrow
\ensuremath{\theta\mathbin{::}\Conid{IsoOptic}\;\sigma\leadsto \Varid{op}} is uniquely determined by its image of
\ensuremath{\Varid{enhanceIso}}. We explore some results related to this fact. \bigskip
\bigskip

\begin{prop}\label{theta_normal_form}

If \ensuremath{\theta\mathbin{::}\Conid{IsoOptic}\;\sigma\leadsto \Varid{op}}, then
\ensuremath{\theta\;(\Conid{IsoOptic}\;(\alpha\mathbin{::}\Varid{s}\to \Varid{f}\;\Varid{a})\;(\beta\mathbin{::}\Varid{f}\;\Varid{b}\to \Varid{t}))\mathrel{=}\Varid{injOptic}\;\alpha\;\beta\hsdot{\circ }{.\;}\theta\;(\Varid{enhanceIso}\;@\Varid{f})}

\end{prop}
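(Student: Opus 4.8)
The plan is to apply $\theta$ to the normal form established in \Cref{iso_normal_form} and then let the morphism-of-optic-families axioms finish the job; there is essentially nothing else to do.

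First I would recall that, writing $\Varid{l} = \Conid{IsoOptic}\;(\alpha\mathbin{::}\Varid{s}\to \Varid{f}\;\Varid{a})\;(\beta\mathbin{::}\Varid{f}\;\Varid{b}\to \Varid{t})$, \Cref{iso_normal_form} gives the equality
\[
\Varid{l} \;=\; \Varid{injOptic}\;\alpha\;\beta\ \circ_{op}\ \Varid{enhanceIso}\;@\Varid{f}
\]
of elements of $\Conid{IsoOptic}\;\sigma\;\Varid{a}\;\Varid{b}\;\Varid{s}\;\Varid{t}$. Applying $\theta$ to both sides and using the first two clauses of the definition of a morphism of optic families, namely $\theta\;(\Varid{op}_{1}\ \circ_{op}\ \Varid{op}_{2}) = \theta\;\Varid{op}_{1}\ \circ_{op}\ \theta\;\Varid{op}_{2}$ and $\theta\;(\Varid{injOptic}\;\Varid{f}\;\Varid{g}) = \Varid{injOptic}\;\Varid{f}\;\Varid{g}$, I obtain
\begin{align*}
\theta\;\Varid{l}
&= \theta\;(\Varid{injOptic}\;\alpha\;\beta\ \circ_{op}\ \Varid{enhanceIso}\;@\Varid{f}) \\
&= \theta\;(\Varid{injOptic}\;\alpha\;\beta)\ \circ_{op}\ \theta\;(\Varid{enhanceIso}\;@\Varid{f}) \\
&= \Varid{injOptic}\;\alpha\;\beta\ \circ_{op}\ \theta\;(\Varid{enhanceIso}\;@\Varid{f}),
\end{align*}
which is exactly the claimed identity. (The composition in the statement is $\circ_{op}$: the two factors typecheck since $\Varid{injOptic}\;\alpha\;\beta\mathbin{::}\Varid{op}\;(\Varid{f}\;\Varid{a})\;(\Varid{f}\;\Varid{b})\;\Varid{s}\;\Varid{t}$ and $\theta\;(\Varid{enhanceIso}\;@\Varid{f})\mathbin{::}\Varid{op}\;\Varid{a}\;\Varid{b}\;(\Varid{f}\;\Varid{a})\;(\Varid{f}\;\Varid{b})$.)

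The only subtle point — and the closest thing to an obstacle — is that the equality in \Cref{iso_normal_form} is an equality of iso optics in the sense of \Cref{iso_equality} (up to a natural transformation between the carried functors), not a literal equality of the underlying pairs of functions. So I would note that $\theta$, being a genuine map out of $\Conid{IsoOptic}\;\sigma$ (morally a map out of a coend), respects that equivalence, which is what makes ``apply $\theta$ to both sides of \Cref{iso_normal_form}'' legitimate. The third morphism axiom (preservation of $\Varid{mapOptic}$) plays no role here.
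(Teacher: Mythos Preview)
Your proof is correct and follows exactly the same approach as the paper: apply \Cref{iso_normal_form}, then use that $\theta$ preserves composition and $\Varid{injOptic}$. Your additional remark about $\theta$ respecting the coend-style equality of iso optics is a fair observation, but the paper simply takes \Cref{iso_equality} as the definition of equality and proceeds without comment.
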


\begin{proof}

Let
\ensuremath{\Varid{l}\mathrel{=}\Conid{IsoOptic}\;(\alpha\mathbin{::}\Varid{s}\to \Varid{f}\;\Varid{a})\;(\beta\mathbin{::}\Varid{f}\;\Varid{b}\to \Varid{t})\mathbin{::}\Conid{IsoOptic}\;\sigma\;\Varid{a}\;\Varid{b}\;\Varid{s}\;\Varid{t}}.

\begin{hscode}\SaveRestoreHook
\column{B}{@{}>{\hspre}l<{\hspost}@{}}%
\column{3}{@{}>{\hspre}l<{\hspost}@{}}%
\column{E}{@{}>{\hspre}l<{\hspost}@{}}%
\>[B]{}\theta\;\Varid{l}{}\<[E]%
\\
\>[B]{}\mathrel{=}\mbox{\commentbegin   \Cref{iso_normal_form}   \commentend}{}\<[E]%
\\
\>[B]{}\hsindent{3}{}\<[3]%
\>[3]{}\theta\;(\Varid{injOptic}\;\alpha\;\beta\ \circ_{op}\ \Varid{enhanceIso}\;@\Varid{f}){}\<[E]%
\\
\>[B]{}\mathrel{=}\mbox{\commentbegin   \ensuremath{\theta} is a morphism of optic families   \commentend}{}\<[E]%
\\
\>[B]{}\hsindent{3}{}\<[3]%
\>[3]{}\theta\;(\Varid{injOptic}\;\alpha\;\beta)\ \circ_{op}\ \theta\;(\Varid{enhanceIso}\;@\Varid{f}){}\<[E]%
\\
\>[B]{}\mathrel{=}\mbox{\commentbegin   \ensuremath{\theta} is a morphism of optic families   \commentend}{}\<[E]%
\\
\>[B]{}\hsindent{3}{}\<[3]%
\>[3]{}\Varid{injOptic}\;\alpha\;\beta\ \circ_{op}\ \theta\;(\Varid{enhanceIso}\;@\Varid{f}){}\<[E]%
\ColumnHook
\end{hscode}\resethooks

\end{proof}

\bigskip

\begin{crly}\label{match_enhanceiso_is_enough}

If \ensuremath{\theta,\theta'\mathbin{::}\Conid{IsoOptic}\;\sigma\leadsto \Varid{op}} and for all \ensuremath{\Varid{f}\;\mathbin{\in}\;\sigma},
\ensuremath{\theta\;(\Varid{enhanceIso}\;@\Varid{f})\mathrel{=}\theta'\;(\Varid{enhanceIso}\;@\Varid{f})}, then \ensuremath{\theta\mathrel{=}\theta'}.

\end{crly}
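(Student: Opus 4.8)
The plan is to read the corollary off directly from the normal form established in \Cref{theta_normal_form}: a morphism of optic families out of $\mathrm{IsoOptic}\,\sigma$ is already determined by its action on the optics $\mathrm{enhanceIso}\,@f$, because \Cref{theta_normal_form} exhibits every iso optic as an $\mathrm{injOptic}$ precomposed with such an $\mathrm{enhanceIso}\,@f$, and a morphism of optic families commutes with both $\mathrm{injOptic}$ and $\ \circ_{op}\ $. So there is essentially nothing left to do beyond specializing that proposition to $\theta$ and to $\theta'$ and comparing.

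Concretely, I would argue pointwise. Fix types $a,b,s,t$ and let $l :: \mathrm{IsoOptic}\,\sigma\,a\,b\,s\,t$ be arbitrary; unfolding the $\mathrm{IsoOptic}$ datatype, $l$ is presented as $\mathrm{IsoOptic}\,(\alpha :: s \to f\,a)\,(\beta :: f\,b \to t)$ for some $f \in \sigma$. Applying \Cref{theta_normal_form} to $\theta$ and then to $\theta'$ at this presentation gives
\[
  \theta\,l \;=\; \mathrm{injOptic}\,\alpha\,\beta \ \circ_{op}\ \theta\,(\mathrm{enhanceIso}\,@f),
  \qquad
  \theta'\,l \;=\; \mathrm{injOptic}\,\alpha\,\beta \ \circ_{op}\ \theta'\,(\mathrm{enhanceIso}\,@f).
\]
By hypothesis $\theta\,(\mathrm{enhanceIso}\,@f) = \theta'\,(\mathrm{enhanceIso}\,@f)$, so the right-hand sides are equal, hence $\theta\,l = \theta'\,l$. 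Since $l$ (and the types) were arbitrary, $\theta = \theta'$ as morphisms of optic families.

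The one point worth a sentence of care is that the presentation $l = \mathrm{IsoOptic}\,\alpha\,\beta$ is not unique: by \Cref{iso_equality} distinct triples $(f,\alpha,\beta)$ may denote the same iso optic. This is harmless here, since $\theta$ and $\theta'$ are genuine functions on iso optics, so $\theta\,l$ and $\theta'\,l$ do not depend on the chosen representative, and \Cref{theta_normal_form} applies to whichever representative we pick. Beyond this bookkeeping there is no real obstacle — all the substance is already packaged in \Cref{theta_normal_form}, which in turn follows from \Cref{iso_normal_form} together with the defining equations of a morphism of optic families.
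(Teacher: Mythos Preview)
Your proof is correct and follows exactly the paper's approach: the paper dismisses the corollary as ``trivially from \Cref{theta_normal_form} above,'' and you have simply spelled out that triviality pointwise. Your extra paragraph on representative-independence is a reasonable bit of hygiene but, as you note, not needed for the argument.
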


\begin{proof}

Trivially from \textcite{theta_normal_form} above.

\end{proof}

\bigskip

\begin{prop}\label{theta_preserve_enhanceiso}

If \ensuremath{\theta\mathbin{::}\Conid{IsoOptic}\;\sigma\leadsto \Conid{IsoOptic}\;\sigma'} and
\ensuremath{\Varid{f}\;\mathbin{\in}\;\sigma\;\cap\;\sigma'}, then \ensuremath{\theta\;(\Varid{enhanceIso}\;@\Varid{f})\mathrel{=}\Varid{enhanceIso}\;@\Varid{f}}.

\end{prop}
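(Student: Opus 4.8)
The plan is to prove equality of the two iso optics $\theta\,(\mathsf{enhanceIso}\ @f)$ and $\mathsf{enhanceIso}\ @f$ by reducing the goal to a statement about $\mathsf{mapOptic}$, which a morphism of optic families is required to preserve. The key preliminary observation I would make is that \Cref{enhance_is_retraction} can be restated purely in terms of $\mathsf{mapOptic}$: for any $l = \mathsf{IsoOptic}\ \alpha\ \beta :: \mathsf{IsoOptic}\ \sigma'\ a\ b\ (f\,a)\ (f\,b)$ with $f \in \sigma'$, unfolding the definition of $\mathsf{mapOptic}$ on iso optics and using the functor law $\mathrm{fmap}\ \mathrm{id} = \mathrm{id}$ gives $\mathsf{mapOptic}\ l\ \mathrm{id} = \beta \circ \mathrm{fmap}\ \mathrm{id} \circ \alpha = \beta \circ \alpha$, so \Cref{enhance_is_retraction} says exactly that $l = \mathsf{enhanceIso}\ @f$ iff $\mathsf{mapOptic}\ l\ \mathrm{id} = \mathrm{id}$.

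First I would check that this characterisation applies to $l := \theta\,(\mathsf{enhanceIso}\ @f)$: the term $\mathsf{enhanceIso}\ @f$ is well-formed in $\mathsf{IsoOptic}\ \sigma$ precisely because $f \in \sigma$, the arrow $\theta$ preserves the four type parameters, hence $l :: \mathsf{IsoOptic}\ \sigma'\ a\ b\ (f\,a)\ (f\,b)$, and $f \in \sigma'$ as well. It then remains to show $\mathsf{mapOptic}\ l\ \mathrm{id} = \mathrm{id}$. Since $\theta$ is a morphism of optic families it preserves $\mathsf{mapOptic}$, so $\mathsf{mapOptic}\ l = \mathsf{mapOptic}\,(\mathsf{enhanceIso}\ @f)$; and since $\mathsf{enhanceIso}\ @f = \mathsf{IsoOptic}\ \mathrm{id}\ \mathrm{id}$, unfolding $\mathsf{mapOptic}$ once more and applying the functor law gives $\mathsf{mapOptic}\,(\mathsf{enhanceIso}\ @f)\ \mathrm{id} = \mathrm{id} \circ \mathrm{fmap}\ \mathrm{id} \circ \mathrm{id} = \mathrm{id}$. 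Feeding this back into the reformulated \Cref{enhance_is_retraction} yields $\theta\,(\mathsf{enhanceIso}\ @f) = \mathsf{enhanceIso}\ @f$, as required.

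I expect the only subtle point to be the one recurring throughout this chapter: evaluating $\mathsf{mapOptic}\ l$ "at $\mathrm{id}$" literally needs $a = b$, so the computation is really carried out at matching types and the natural transformation witnessing \Cref{enhance_is_retraction} is genuinely natural by parametricity of $\mathsf{enhanceIso}$ and $\theta$ in $a$ and $b$. This is exactly the setting in which \Cref{enhance_is_retraction} is stated, so no extra argument is needed; everything else is routine unfolding of the definitions of $\mathsf{mapOptic}$, $\mathsf{enhanceIso}$, and "morphism of optic families".
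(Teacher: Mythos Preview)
Your proposal is correct and follows essentially the same route as the paper's proof: both compute $\mathsf{mapOptic}\,(\theta(\mathsf{enhanceIso}\ @f))\ \mathrm{id}$ two ways (once by unfolding to $\beta\circ\alpha$, once via preservation of $\mathsf{mapOptic}$ to get $\mathrm{id}$) and then invoke \Cref{enhance_is_retraction}. Your explicit reformulation of \Cref{enhance_is_retraction} in terms of $\mathsf{mapOptic}$ is just a repackaging of the paper's two displayed computations, not a different idea.
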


\begin{proof}

Let \ensuremath{\Conid{IsoOptic}\;\alpha\;\beta\mathrel{=}\theta\;(\Varid{enhanceIso}\;@\Varid{f})}.

\begin{hscode}\SaveRestoreHook
\column{B}{@{}>{\hspre}l<{\hspost}@{}}%
\column{E}{@{}>{\hspre}l<{\hspost}@{}}%
\>[B]{}\Varid{mapOptic}\;(\theta\;(\Varid{enhanceIso}\;@\Varid{f}))\;\Varid{id}{}\<[E]%
\\
\>[B]{}\mathrel{=}\beta\hsdot{\circ }{.\;}\Varid{fmap}\;@\Varid{g}\;\Varid{id}\hsdot{\circ }{.\;}\alpha{}\<[E]%
\\
\>[B]{}\mathrel{=}\beta\hsdot{\circ }{.\;}\alpha{}\<[E]%
\ColumnHook
\end{hscode}\resethooks

\ensuremath{\theta} preserves \ensuremath{\Varid{mapOptic}}, therefore:

\begin{hscode}\SaveRestoreHook
\column{B}{@{}>{\hspre}l<{\hspost}@{}}%
\column{E}{@{}>{\hspre}l<{\hspost}@{}}%
\>[B]{}\Varid{mapOptic}\;(\theta\;(\Varid{enhanceIso}\;@\Varid{f}))\;\Varid{id}{}\<[E]%
\\
\>[B]{}\mathrel{=}\Varid{mapOptic}\;(\Varid{enhanceIso}\;@\Varid{f})\;\Varid{id}{}\<[E]%
\\
\>[B]{}\mathrel{=}\Varid{id}\hsdot{\circ }{.\;}\Varid{fmap}\;@\Varid{f}\;\Varid{id}\hsdot{\circ }{.\;}\Varid{id}{}\<[E]%
\\
\>[B]{}\mathrel{=}\Varid{id}{}\<[E]%
\ColumnHook
\end{hscode}\resethooks

Therefore \ensuremath{\beta\hsdot{\circ }{.\;}\alpha\mathrel{=}\Varid{id}}, and by \textcite{enhance_is_retraction},
\ensuremath{\theta\;(\Varid{enhanceIso}\;@\Varid{f})\mathrel{=}\Varid{enhanceIso}\;@\Varid{f}}.

\end{proof}

\bigskip

\begin{crly}\label{endo_iso_is_singleton}

If \ensuremath{\theta\mathbin{::}\Conid{IsoOptic}\;\sigma\leadsto \Conid{IsoOptic}\;\sigma}, then \ensuremath{\theta\mathrel{=}\Varid{id}}.

\end{crly}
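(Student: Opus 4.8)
The plan is to get this almost for free from the two preceding results. The slogan established just before the statement is that a morphism out of an iso optic is completely determined by where it sends the canonical generators $\Varid{enhanceIso}\;@\Varid{f}$ (this is \textcite{theta_normal_form} and its corollary \textcite{match_enhanceiso_is_enough}); so it suffices to show that $\theta$ acts on those exactly as the identity morphism does.

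First I would invoke \textcite{theta_preserve_enhanceiso} in the special case $\sigma' = \sigma$: every $\Varid{f}\;\mathbin{\in}\;\sigma$ is a fortiori in $\sigma\;\cap\;\sigma$, so that proposition gives $\theta\;(\Varid{enhanceIso}\;@\Varid{f})\mathrel{=}\Varid{enhanceIso}\;@\Varid{f}$ for every such $\Varid{f}$. On the other hand, the identity $\Varid{id}\mathbin{::}\Conid{IsoOptic}\;\sigma\leadsto \Conid{IsoOptic}\;\sigma$ is a (trivial) morphism of optic families — it preserves $\Varid{injOptic}$, $(\circ_{op})$ and $\Varid{mapOptic}$ on the nose — and clearly $\Varid{id}\;(\Varid{enhanceIso}\;@\Varid{f})\mathrel{=}\Varid{enhanceIso}\;@\Varid{f}$ as well. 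Hence $\theta$ and $\Varid{id}$ agree on $\Varid{enhanceIso}\;@\Varid{f}$ for all $\Varid{f}\;\mathbin{\in}\;\sigma$, and by \textcite{match_enhanceiso_is_enough}, applied with $\theta' = \Varid{id}$, this forces $\theta\mathrel{=}\Varid{id}$, which is the claim.

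There is essentially no obstacle here: the real work was already done in \textcite{theta_normal_form} (the normal form for morphisms out of iso optics), \textcite{match_enhanceiso_is_enough} (the resulting uniqueness principle), and \textcite{theta_preserve_enhanceiso} (whose proof crucially used that $\Varid{mapOptic}$ is preserved by morphisms of optic families). The only point requiring any care is the routine bookkeeping that $\Varid{id}$ genuinely satisfies the hypotheses of \textcite{match_enhanceiso_is_enough}, i.e.\ that it is a morphism of optic families — which is immediate from the definitions.
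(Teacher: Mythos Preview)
Your proposal is correct and follows exactly the paper's own argument: apply \textcite{theta_preserve_enhanceiso} (with $\sigma'=\sigma$) to see that $\theta$ fixes each $\Varid{enhanceIso}\;@\Varid{f}$, then conclude $\theta=\Varid{id}$ via \textcite{match_enhanceiso_is_enough}. The only addition you make is spelling out that $\Varid{id}$ is itself a morphism of optic families, which the paper leaves implicit.
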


\begin{proof}

For all \ensuremath{\Varid{f}\;\mathbin{\in}\;\sigma}, using \textcite{theta_preserve_enhanceiso},
\ensuremath{\theta\;(\Varid{enhanceIso}\;@\Varid{f})\mathrel{=}\Varid{enhanceIso}\;@\Varid{f}}. Using
\textcite{match_enhanceiso_is_enough}, we get \ensuremath{\theta\mathrel{=}\Varid{id}}.

\end{proof}

\bigskip
\bigskip

The fact that such arrows are determined by their image of \ensuremath{\Varid{enhanceIso}}
can also be seen by some equational reasoning on the type of those
arrows:

\begin{hscode}\SaveRestoreHook
\column{B}{@{}>{\hspre}l<{\hspost}@{}}%
\column{4}{@{}>{\hspre}l<{\hspost}@{}}%
\column{E}{@{}>{\hspre}l<{\hspost}@{}}%
\>[B]{}\mathbf{forall}\;\Varid{a}\hsforall \;\Varid{b}\;\Varid{s}\;\Varid{t}\hsdot{\circ }{.\;}\Conid{IsoOptic}\;\sigma\;\Varid{a}\;\Varid{b}\;\Varid{s}\;\Varid{t}\to \Varid{op}\;\Varid{a}\;\Varid{b}\;\Varid{s}\;\Varid{t}{}\<[E]%
\\
\>[B]{}\cong \mbox{\commentbegin   definition   \commentend}{}\<[E]%
\\
\>[B]{}\hsindent{4}{}\<[4]%
\>[4]{}\mathbf{forall}\;\Varid{a}\hsforall \;\Varid{b}\;\Varid{s}\;\Varid{t}\hsdot{\circ }{.\;}(\mathbf{exists}\;\Varid{f}\hsexists \hsdot{\circ }{.\;}\sigma\;\Varid{f}\Rightarrow (\Varid{s}\to \Varid{f}\;\Varid{a},\Varid{f}\;\Varid{b}\to \Varid{t}))\to \Varid{op}\;\Varid{a}\;\Varid{b}\;\Varid{s}\;\Varid{t}{}\<[E]%
\\
\>[B]{}\cong \mbox{\commentbegin   fundamental property of quantification   \commentend}{}\<[E]%
\\
\>[B]{}\hsindent{4}{}\<[4]%
\>[4]{}\mathbf{forall}\;\Varid{f}\hsforall \;\Varid{a}\;\Varid{b}\;\Varid{s}\;\Varid{t}\hsdot{\circ }{.\;}\sigma\;\Varid{f}\Rightarrow (\Varid{s}\to \Varid{f}\;\Varid{a},\Varid{f}\;\Varid{b}\to \Varid{t})\to \Varid{op}\;\Varid{a}\;\Varid{b}\;\Varid{s}\;\Varid{t}{}\<[E]%
\\
\>[B]{}\cong \mbox{\commentbegin   Yoneda lemma, since \ensuremath{\Varid{op}\;\Varid{a}\;\Varid{b}\;\Varid{s}\;\Varid{t}} is covariant in \ensuremath{\Varid{t}}   \commentend}{}\<[E]%
\\
\>[B]{}\hsindent{4}{}\<[4]%
\>[4]{}\mathbf{forall}\;\Varid{f}\hsforall \;\Varid{a}\;\Varid{b}\;\Varid{s}\hsdot{\circ }{.\;}\sigma\;\Varid{f}\Rightarrow (\Varid{s}\to \Varid{f}\;\Varid{a})\to \Varid{op}\;\Varid{a}\;\Varid{b}\;\Varid{s}\;(\Varid{f}\;\Varid{b}){}\<[E]%
\\
\>[B]{}\cong \mbox{\commentbegin   Yoneda lemma, since \ensuremath{\Varid{op}\;\Varid{a}\;\Varid{b}\;\Varid{s}\;\Varid{t}} is contravariant in \ensuremath{\Varid{s}}  \commentend}{}\<[E]%
\\
\>[B]{}\hsindent{4}{}\<[4]%
\>[4]{}\mathbf{forall}\;\Varid{f}\hsforall \;\Varid{a}\;\Varid{b}\hsdot{\circ }{.\;}\sigma\;\Varid{f}\Rightarrow \Varid{op}\;\Varid{a}\;\Varid{b}\;(\Varid{f}\;\Varid{a})\;(\Varid{f}\;\Varid{b}){}\<[E]%
\ColumnHook
\end{hscode}\resethooks

This equivalence states that the type of arrows \ensuremath{\Conid{IsoOptic}\;\sigma\leadsto \Varid{op}}
is isomorphic to the type of the image of \ensuremath{\Varid{enhanceIso}} by such an arrow.

This last type can also be read as there being an optic that can zoom
into the contents of all the functors of the family \ensuremath{\sigma}. We dub this
property ``enhanceability'' of \ensuremath{\Varid{op}} by \ensuremath{\sigma}. \bigskip

\begin{defn}[Enhanceability]

An optic family \ensuremath{\Varid{op}} is said to be \emph{enhanceable} by a functor
monoid \ensuremath{\sigma} if there exists a value:

\begin{hscode}\SaveRestoreHook
\column{B}{@{}>{\hspre}l<{\hspost}@{}}%
\column{E}{@{}>{\hspre}l<{\hspost}@{}}%
\>[B]{}\Varid{enhanceOp}\mathbin{::}\mathbf{forall}\;\Varid{f}\hsforall \;\Varid{a}\;\Varid{b}\hsdot{\circ }{.\;}\sigma\;\Varid{f}\Rightarrow \Varid{op}\;\Varid{a}\;\Varid{b}\;(\Varid{f}\;\Varid{a})\;(\Varid{f}\;\Varid{b}){}\<[E]%
\ColumnHook
\end{hscode}\resethooks

that verifies laws similar to the \ensuremath{\Conid{Enhancing}} laws:

\begin{itemize}
\tightlist
\item
  \ensuremath{\Varid{enhanceOp}\;@\Conid{Id}\mathrel{=}\Varid{injOptic}\;\Varid{unId}\;\Conid{Id}}
\item
  \ensuremath{\Varid{enhanceOp}\;@(\Conid{Compose}\;\Varid{f}\;\Varid{g})\mathrel{=}\Varid{injOptic}\;\Varid{unCompose}\;\Conid{Compose}\ \circ_{op}\ \Varid{enhanceOp}\;@\Varid{f}\ \circ_{op}\ \Varid{enhanceOp}\;@\Varid{g}}
\item
  \ensuremath{\alpha\mathbin{::}\mathbf{forall}\;\Varid{a}\hsforall \hsdot{\circ }{.\;}\Varid{f}\;\Varid{a}\to \Varid{g}\;\Varid{a}\Rightarrow \Varid{injOptic}\;\Varid{id}\;\alpha\ \circ_{op}\ \Varid{enhanceOp}\;@\Varid{f}\mathrel{=}\Varid{injOptic}\;\alpha\;\Varid{id}\ \circ_{op}\ \Varid{enhanceOp}\;@\Varid{g}}
\item
  \ensuremath{\Varid{enhanceOp}\;@\Varid{f}\ \circ_{op}\ \Varid{injOptic}\;\Varid{f}\;\Varid{g}\mathrel{=}\Varid{injOptic}\;(\Varid{fmap}\;\Varid{f})\;(\Varid{fmap}\;\Varid{g})\ \circ_{op}\ \Varid{enhanceOp}\;@\Varid{f}}
\end{itemize}

as well as an additional law:

\begin{itemize}
\tightlist
\item
  \ensuremath{\Varid{mapOptic}\;\Varid{enhanceOp}\mathrel{=}\Varid{fmap}}
\end{itemize}

We capture this notion in the \ensuremath{\Conid{Enhanceable}} typeclass:

\begin{hscode}\SaveRestoreHook
\column{B}{@{}>{\hspre}l<{\hspost}@{}}%
\column{5}{@{}>{\hspre}l<{\hspost}@{}}%
\column{E}{@{}>{\hspre}l<{\hspost}@{}}%
\>[B]{}\mathbf{class}\;(\Conid{OpticFamily}\;\Varid{op},\Conid{FunctorMonoid}\;\sigma)\Rightarrow \Conid{Enhanceable}\;\sigma\;\Varid{op}\;\mathbf{where}{}\<[E]%
\\
\>[B]{}\hsindent{5}{}\<[5]%
\>[5]{}\Varid{enhanceOp}\mathbin{::}\mathbf{forall}\;\Varid{f}\hsforall \;\Varid{a}\;\Varid{b}\hsdot{\circ }{.\;}\sigma\;\Varid{f}\Rightarrow \Varid{op}\;\Varid{a}\;\Varid{b}\;(\Varid{f}\;\Varid{a})\;(\Varid{f}\;\Varid{b}){}\<[E]%
\ColumnHook
\end{hscode}\resethooks

\end{defn}

\bigskip

\begin{rmk}\label{map_enhance_id}

By parametricity, the last law is equivalent to requiring only
\ensuremath{\Varid{mapOptic}\;\Varid{enhanceOp}\;\Varid{id}\mathrel{=}\Varid{id}}.

\end{rmk}

\bigskip

\begin{prop}\label{enhanceable_preserved_by_opfam_maps}

If \ensuremath{\theta\mathbin{::}\Varid{op}\leadsto \Varid{op'}} and \ensuremath{\Varid{op}\;\mathbin{\in}\;\Conid{Enhanceable}\;\sigma}, then
\ensuremath{\theta\;\Varid{enhanceOp}} defines a lawful instance of \ensuremath{\Conid{Enhanceable}\;\sigma\;\Varid{op'}}.

\end{prop}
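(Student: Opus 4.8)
The plan is to transport the \ensuremath{\Conid{Enhanceable}\;\sigma} structure along \ensuremath{\theta} in the only way that typechecks: take the candidate \ensuremath{\Varid{enhanceOp}'=\theta\;\Varid{enhanceOp}}, i.e.\ \ensuremath{\Varid{enhanceOp}'\;@\Varid{f}:=\theta\;(\Varid{enhanceOp}\;@\Varid{f})} for every \ensuremath{\Varid{f}\in\sigma} (this has the required type \ensuremath{\forall\,\Varid{f}\;\Varid{a}\;\Varid{b}.\;\sigma\;\Varid{f}\Rightarrow\Varid{op'}\;\Varid{a}\;\Varid{b}\;(\Varid{f}\;\Varid{a})\;(\Varid{f}\;\Varid{b})} since \ensuremath{\theta} is fully polymorphic in its four parameters), and then verify the five \ensuremath{\Conid{Enhanceable}} laws. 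The single observation driving every case is that \ensuremath{\theta}, being a morphism of optic families, commutes with \ensuremath{\Varid{injOptic}}, with \ensuremath{(\circ_{op})}, and with \ensuremath{\Varid{mapOptic}}; so each law for \ensuremath{\Varid{op'}} is obtained by rewriting both sides as \ensuremath{\theta} applied to an expression built from \ensuremath{\Varid{injOptic}}, \ensuremath{(\circ_{op})} and \ensuremath{\Varid{enhanceOp}}, applying the corresponding (already known) law for the lawful \ensuremath{\Varid{op}} instance inside the \ensuremath{\theta}, and unfolding back.

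Spelling out the cases: the \ensuremath{\Conid{Id}} law is \ensuremath{\Varid{enhanceOp}'\;@\Conid{Id}=\theta\;(\Varid{enhanceOp}\;@\Conid{Id})=\theta\;(\Varid{injOptic}\;\Varid{unId}\;\Conid{Id})=\Varid{injOptic}\;\Varid{unId}\;\Conid{Id}}, using the \ensuremath{\Conid{Id}} law for \ensuremath{\Varid{op}} and then preservation of \ensuremath{\Varid{injOptic}} by \ensuremath{\theta}. The \ensuremath{\Conid{Compose}} law follows by expanding \ensuremath{\Varid{enhanceOp}\;@(\Conid{Compose}\;\Varid{f}\;\Varid{g})} with the \ensuremath{\Varid{op}} law and then pushing \ensuremath{\theta} across the resulting \ensuremath{(\circ_{op})}-composite and the \ensuremath{\Varid{injOptic}}. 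The naturality law and the \ensuremath{\Varid{fmap}}-commutation law are handled identically: each side is an expression of the form \ensuremath{\Varid{injOptic}\;\dots\ \circ_{op}\ \Varid{enhanceOp}'}, which equals \ensuremath{\theta} of \ensuremath{\Varid{injOptic}\;\dots\ \circ_{op}\ \Varid{enhanceOp}} by the two morphism clauses, and the \ensuremath{\Varid{op}} versions of these expressions are equal by the corresponding \ensuremath{\Varid{op}} law. Finally, for \ensuremath{\Varid{mapOptic}\;\Varid{enhanceOp}'=\Varid{fmap}}: since \ensuremath{\theta} preserves \ensuremath{\Varid{mapOptic}}, we get \ensuremath{\Varid{mapOptic}\;(\Varid{enhanceOp}'\;@\Varid{f})=\Varid{mapOptic}\;(\theta\;(\Varid{enhanceOp}\;@\Varid{f}))=\Varid{mapOptic}\;(\Varid{enhanceOp}\;@\Varid{f})=\Varid{fmap}\;@\Varid{f}} by the last \ensuremath{\Conid{Enhanceable}} law for \ensuremath{\Varid{op}}; by \Cref{map_enhance_id} it is in fact enough to check this at \ensuremath{\Varid{id}}.

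There is no real difficulty here — the argument is pure bookkeeping, and each step is a one-line rewrite. The one point worth emphasising is that the fifth law goes through precisely because morphisms of optic families are required to preserve \ensuremath{\Varid{mapOptic}} (cf.\ the definition of \ensuremath{\leadsto} and \Cref{discuss_opfam}); drop that clause and the transported \ensuremath{\Varid{enhanceOp}'} would still satisfy the first four laws but compatibility with \ensuremath{\Varid{fmap}} could fail. So the ``main obstacle'', such as it is, amounts to no more than invoking the three morphism clauses in the right places.
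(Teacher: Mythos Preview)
Your proposal is correct and follows exactly the approach the paper takes: the paper's entire proof is the single sentence ``It is easy to see that all the \ensuremath{\Conid{Enhanceable}} laws are preserved by morphisms of optic families,'' and you have simply (and accurately) spelled out the bookkeeping behind that sentence. Your observation that the fifth law is precisely where preservation of \ensuremath{\Varid{mapOptic}} is needed is a nice elaboration, in line with the paper's own \Cref{discuss_opfam}.
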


\begin{proof}

It is easy to see that all the \ensuremath{\Conid{Enhanceable}} laws are preserved by
morphisms of optic families.

\end{proof}

\bigskip

\begin{prop}\label{enhanceable_isooptic}

\ensuremath{\Varid{enhanceIso}} defines a lawful instance of
\ensuremath{\Conid{Enhanceable}\;\sigma\;(\Conid{IsoOptic}\;\sigma)}.

\end{prop}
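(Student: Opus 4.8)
The plan is to verify directly that \ensuremath{\Varid{enhanceIso}} satisfies the five \ensuremath{\Conid{Enhanceable}} axioms when \ensuremath{\Varid{op}\mathrel{=}\Conid{IsoOptic}\;\sigma}, using the concrete \ensuremath{\Conid{OpticFamily}} instance from \Cref{iso_is_opfam} together with the coend-equality of iso optics (\Cref{iso_equality}, \Cref{iso_equality_rmk}). Recall that for \ensuremath{\Conid{IsoOptic}\;\sigma} we have \ensuremath{\Varid{injOptic}\;\Varid{p}\;\Varid{q}\mathrel{=}\Conid{IsoOptic}\;(\Conid{Id}\circ\Varid{p})\;(\Varid{q}\circ\Varid{unId})} with hidden functor \ensuremath{\Conid{Id}}, and \ensuremath{\Varid{enhanceIso}\;@\Varid{f}\mathrel{=}\Conid{IsoOptic}\;\Varid{id}\;\Varid{id}} with hidden functor \ensuremath{\Varid{f}}. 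Three of the axioms are then routine bookkeeping. The \ensuremath{\Conid{Id}} law is immediate: \ensuremath{\Varid{injOptic}\;\Varid{unId}\;\Conid{Id}\mathrel{=}\Conid{IsoOptic}\;(\Conid{Id}\circ\Varid{unId})\;(\Conid{Id}\circ\Varid{unId})\mathrel{=}\Conid{IsoOptic}\;\Varid{id}\;\Varid{id}\mathrel{=}\Varid{enhanceIso}\;@\Conid{Id}} since \ensuremath{\Conid{Id}\circ\Varid{unId}\mathrel{=}\Varid{id}}. For the \ensuremath{\Conid{Compose}} law I would start from the right-hand side, rewrite \ensuremath{\Varid{enhanceIso}\;@\Varid{f}\ \circ_{op}\ \Varid{enhanceIso}\;@\Varid{g}} using \Cref{compose_enhanceiso}, and then observe that \ensuremath{\Varid{injOptic}\;\Varid{unCompose}\;\Conid{Compose}\ \circ_{op}\ \Varid{injOptic}\;\Conid{Compose}\;\Varid{unCompose}} collapses to \ensuremath{\Varid{id}_{\Varid{op}}} by the \ensuremath{\Varid{injOptic}} composition axiom (because \ensuremath{\Conid{Compose}\circ\Varid{unCompose}\mathrel{=}\Varid{id}}), leaving exactly \ensuremath{\Varid{enhanceIso}\;@(\Conid{Compose}\;\Varid{f}\;\Varid{g})}. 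For the \ensuremath{\Varid{mapOptic}} law, unfolding gives \ensuremath{\Varid{mapOptic}\;(\Conid{IsoOptic}\;\Varid{id}\;\Varid{id})\;\Varid{g}\mathrel{=}\Varid{id}\circ\Varid{fmap}\;\Varid{g}\circ\Varid{id}\mathrel{=}\Varid{fmap}\;\Varid{g}}, so \ensuremath{\Varid{mapOptic}\;\Varid{enhanceOp}\mathrel{=}\Varid{fmap}} outright (\Cref{map_enhance_id} is not needed).

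The two remaining axioms carry the real content: the wedge law, \ensuremath{\Varid{injOptic}\;\Varid{id}\;\alpha\ \circ_{op}\ \Varid{enhanceOp}\;@\Varid{f}\mathrel{=}\Varid{injOptic}\;\alpha\;\Varid{id}\ \circ_{op}\ \Varid{enhanceOp}\;@\Varid{g}} for a natural transformation \ensuremath{\alpha} from \ensuremath{\Varid{f}} to \ensuremath{\Varid{g}}, and the interchange law, \ensuremath{\Varid{enhanceOp}\;@\Varid{f}\ \circ_{op}\ \Varid{injOptic}\;\Varid{p}\;\Varid{q}\mathrel{=}\Varid{injOptic}\;(\Varid{fmap}\;\Varid{p})\;(\Varid{fmap}\;\Varid{q})\ \circ_{op}\ \Varid{enhanceOp}\;@\Varid{f}}. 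For each I would unfold \ensuremath{\circ_{op}} and \ensuremath{\Varid{injOptic}} on both sides via the \ensuremath{\Conid{IsoOptic}} definitions and simplify with the functor laws and the cancellations \ensuremath{\Varid{unId}\circ\Conid{Id}\mathrel{=}\Varid{id}} and \ensuremath{\Varid{unCompose}\circ\Conid{Compose}\mathrel{=}\Varid{id}}. In both cases this leaves a pair of \ensuremath{\Conid{IsoOptic}} values carrying the same two underlying functions up to reassociation, but with \emph{different} hidden functors: \ensuremath{\Conid{Compose}\;\Conid{Id}\;\Varid{f}} against \ensuremath{\Conid{Compose}\;\Conid{Id}\;\Varid{g}} in the wedge law, and \ensuremath{\Conid{Compose}\;\Varid{f}\;\Conid{Id}} against \ensuremath{\Conid{Compose}\;\Conid{Id}\;\Varid{f}} in the interchange law. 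I would close the gap by exhibiting the connecting natural transformation demanded by \Cref{iso_equality}: for the wedge law, \ensuremath{\Conid{Compose}\circ\Conid{Id}\circ\alpha\circ\Varid{unId}\circ\Varid{unCompose}}, natural because \ensuremath{\alpha} is (parametricity) and the wrappers are; for the interchange law, \ensuremath{\Conid{Compose}\circ\Conid{Id}\circ\Varid{fmap}\;\Varid{unId}\circ\Varid{unCompose}}. Checking the two equations of \Cref{iso_equality} for this transformation is then a few lines using only the cancellation identities above (and, for the wedge law, naturality of \ensuremath{\alpha} once more). As a shortcut, the interchange law can alternatively be read off as the free theorem for the polymorphic type of \ensuremath{\Varid{enhanceOp}}, exactly as with the analogous \ensuremath{\Conid{Enhancing}} law.

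I expect the main obstacle to be organisational rather than conceptual: keeping the four type indices and the hidden functors straight through the wedge and interchange calculations, since in \ensuremath{\Conid{IsoOptic}\;\sigma\;\Varid{a}\;\Varid{b}\;\Varid{s}\;\Varid{t}} the covariant parameter \ensuremath{\Varid{b}} sits inside \ensuremath{\Varid{f}\;\Varid{b}} while the reasoning simultaneously manipulates \ensuremath{\alpha}, the \ensuremath{\Conid{Id}}/\ensuremath{\Conid{Compose}} (un)wrappers, and their \ensuremath{\Varid{fmap}}-images at several types at once. Once that bookkeeping is pinned down, discharging \Cref{iso_equality}'s two conditions is mechanical, and lawfulness of the \ensuremath{\Conid{Enhanceable}\;\sigma\;(\Conid{IsoOptic}\;\sigma)} instance follows.
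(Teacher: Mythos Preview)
Your proposal is correct and takes essentially the same approach as the paper, which merely says ``Straightforward using \Cref{compose_enhanceiso}.'' You have simply spelled out what the paper leaves implicit: the paper's one-line proof singles out \Cref{compose_enhanceiso} because it handles the \ensuremath{\Conid{Compose}} law, while the remaining four axioms are left as routine unfolding plus the coend equality of \Cref{iso_equality}, exactly as you do.
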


\begin{proof}

Straightforward using \textcite{compose_enhanceiso}.

\end{proof}

\bigskip

\begin{lemma}\label{enhanceable_profoptic}

\ensuremath{\Varid{enhance}} defines a lawful instance of
\ensuremath{\Conid{Enhanceable}\;\sigma\;(\Conid{ProfOptic}\;\sigma)}.

\end{lemma}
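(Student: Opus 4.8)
The plan is to take \ensuremath{\Varid{enhanceOp} = \Varid{enhance}} and then observe that, once the \ensuremath{\Conid{OpticFamily}} structure of \ensuremath{\Conid{ProfOptic}\;\sigma} is unfolded, each \ensuremath{\Conid{Enhanceable}} law is literally one of the \ensuremath{\Conid{Enhancing}} laws, read pointwise in the profunctor.

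First I would check that the types line up. Unfolding \ensuremath{\Conid{ProfOptic}}, the type \ensuremath{\Conid{ProfOptic}\;\sigma\;\Varid{a}\;\Varid{b}\;(\Varid{f}\;\Varid{a})\;(\Varid{f}\;\Varid{b})} is \ensuremath{\mathbf{forall}\;\Varid{p}.\;\Conid{Enhancing}\;\sigma\;\Varid{p}\Rightarrow \Varid{p}\;\Varid{a}\;\Varid{b}\to \Varid{p}\;(\Varid{f}\;\Varid{a})\;(\Varid{f}\;\Varid{b})}, and for \ensuremath{\Varid{f}\;\mathbin{\in}\;\sigma} this is exactly the type obtained by generalising \ensuremath{\Varid{enhance}\;@\Varid{f}} over every \ensuremath{\Conid{Enhancing}\;\sigma}-profunctor. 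So I set \ensuremath{\Varid{enhanceOp}\;@\Varid{f} = \Varid{enhance}\;@\Varid{f}}; this already witnesses that \ensuremath{\Conid{ProfOptic}\;\sigma} is enhanceable, using only the \ensuremath{\Varid{enhance}} method of the abstract profunctor.

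Next I would recall the \ensuremath{\Conid{OpticFamily}} instance for \ensuremath{\Conid{ProfOptic}\;\sigma}: \ensuremath{\Varid{injOptic}\;\Varid{f}\;\Varid{g} = \Varid{dimap}\;\Varid{f}\;\Varid{g}}, the composition \ensuremath{\circ_{op}} is ordinary function composition, and \ensuremath{\Varid{mapOptic}\;\Varid{l} = \Varid{l}\;@(\to )}. Since equality of \ensuremath{\Conid{ProfOptic}} values means that the underlying functions agree for every instance \ensuremath{\Conid{Enhancing}\;\sigma\;\Varid{p}}, it suffices to verify each \ensuremath{\Conid{Enhanceable}} law at an arbitrary such \ensuremath{\Varid{p}}. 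Substituting the definitions, the \ensuremath{\Conid{Id}} law becomes \ensuremath{\Varid{enhance}\;@\Conid{Id} = \Varid{dimap}\;\Varid{unId}\;\Conid{Id}} (the first \ensuremath{\Conid{Enhancing}} law); the \ensuremath{\Conid{Compose}} law becomes \ensuremath{\Varid{enhance}\;@(\Conid{Compose}\;\Varid{f}\;\Varid{g}) = \Varid{dimap}\;\Varid{unCompose}\;\Conid{Compose}\circ\Varid{enhance}\;@\Varid{f}\circ\Varid{enhance}\;@\Varid{g}} (the second \ensuremath{\Conid{Enhancing}} law); the \ensuremath{\alpha} law becomes the wedge condition \ensuremath{\Varid{dimap}\;\Varid{id}\;\alpha\circ\Varid{enhance}\;@\Varid{f} = \Varid{dimap}\;\alpha\;\Varid{id}\circ\Varid{enhance}\;@\Varid{g}}; and the last structural law becomes the parametricity law \ensuremath{\Varid{enhance}\;@\Varid{F}\circ\Varid{dimap}\;\Varid{f}\;\Varid{g} = \Varid{dimap}\;(\Varid{fmap}\;\Varid{f})\;(\Varid{fmap}\;\Varid{g})\circ\Varid{enhance}\;@\Varid{F}}. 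Each of these is an axiom of \ensuremath{\Conid{Enhancing}\;\sigma\;\Varid{p}}, so all four hold.

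For the additional law it suffices, by \textcite{map_enhance_id}, to check \ensuremath{\Varid{mapOptic}\;(\Varid{enhanceOp}\;@\Varid{f})\;\Varid{id} = \Varid{id}}. Here \ensuremath{\Varid{mapOptic}\;(\Varid{enhanceOp}\;@\Varid{f}) = \Varid{enhance}\;@(\to )\;@\Varid{f} = \Varid{fmap}\;@\Varid{f}}, the last equality being the definition of the \ensuremath{\Conid{Enhancing}\;\sigma\;(\to )} instance proved earlier in this chapter; applying it to \ensuremath{\Varid{id}} and using the functor identity law gives \ensuremath{\Varid{id}}. I do not expect a genuine obstacle; the only subtleties are the pointwise reading of equality of \ensuremath{\Conid{ProfOptic}} values, and the notational clash in the fourth \ensuremath{\Conid{Enhanceable}} law, where the functor \ensuremath{\Varid{f}} being enhanced collides with the arrow \ensuremath{\Varid{f}} fed to \ensuremath{\Varid{injOptic}} (written \ensuremath{\Varid{F}} above to disambiguate).
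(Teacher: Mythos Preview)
Your proposal is correct and follows essentially the same approach as the paper: both observe that the first four \ensuremath{\Conid{Enhanceable}} laws are, after unfolding the \ensuremath{\Conid{OpticFamily}\;(\Conid{ProfOptic}\;\sigma)} instance, literally the \ensuremath{\Conid{Enhancing}} laws, and both verify the remaining law by computing \ensuremath{\Varid{mapOptic}\;\Varid{enhance} = \Varid{enhance}\;@(\to) = \Varid{fmap}}. Your detour through \textcite{map_enhance_id} is harmless but unnecessary, since you end up proving the full equality \ensuremath{\Varid{mapOptic}\;\Varid{enhanceOp} = \Varid{fmap}} anyway before specialising to \ensuremath{\Varid{id}}.
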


\begin{proof}

The first four of the \ensuremath{\Conid{Enhanceable}} laws correspond exactly the
\ensuremath{\Conid{Enhancing}} laws.

We only have to check the last law:

\begin{hscode}\SaveRestoreHook
\column{B}{@{}>{\hspre}l<{\hspost}@{}}%
\column{4}{@{}>{\hspre}l<{\hspost}@{}}%
\column{E}{@{}>{\hspre}l<{\hspost}@{}}%
\>[B]{}\Varid{mapOptic}\;\Varid{enhance}{}\<[E]%
\\
\>[B]{}\mathrel{=}\mbox{\commentbegin   \ensuremath{\mathbf{instance}\;\Conid{OpticFamily}\;(\Conid{ProfOptic}\;\sigma)}   \commentend}{}\<[E]%
\\
\>[B]{}\hsindent{4}{}\<[4]%
\>[4]{}\Varid{enhance}\;@(\to ){}\<[E]%
\\
\>[B]{}\mathrel{=}\mbox{\commentbegin   \ensuremath{\mathbf{instance}\;\Conid{Enhancing}\;\sigma\;(\to )}   \commentend}{}\<[E]%
\\
\>[B]{}\hsindent{4}{}\<[4]%
\>[4]{}\Varid{fmap}{}\<[E]%
\ColumnHook
\end{hscode}\resethooks

\end{proof}

\bigskip

If \ensuremath{\Varid{op}} is enhanceable by \ensuremath{\sigma}, we can recover an
\ensuremath{\Conid{IsoOptic}\;\sigma\leadsto \Varid{op}} arrow as follows:

\begin{hscode}\SaveRestoreHook
\column{B}{@{}>{\hspre}l<{\hspost}@{}}%
\column{E}{@{}>{\hspre}l<{\hspost}@{}}%
\>[B]{}\Varid{enhanceToArrow}\mathbin{::}\Conid{Enhanceable}\;\sigma\;\Varid{op}\Rightarrow \Conid{IsoOptic}\;\sigma\leadsto \Varid{op}{}\<[E]%
\\
\>[B]{}\Varid{enhanceToArrow}\;(\Conid{IsoOptic}\;\alpha\;\beta)\mathrel{=}\Varid{injOptic}\;\alpha\;\beta\ \circ_{op}\ \Varid{enhanceOp}{}\<[E]%
\ColumnHook
\end{hscode}\resethooks

\bigskip

\begin{prop}\label{enhancetoarrow_is_opfam_morphism}

\ensuremath{\Varid{enhanceToArrow}} is a morphism of optic families.

\end{prop}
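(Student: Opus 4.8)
The plan is to verify directly that \ensuremath{\Varid{enhanceToArrow}} satisfies the three defining equations of a morphism of optic families --- preservation of \ensuremath{\Varid{injOptic}}, of \ensuremath{(\circ_{op})}, and of \ensuremath{\Varid{mapOptic}} --- and along the way to check that it is well defined, i.e.\ independent of the chosen representative \ensuremath{(\Varid{f},\alpha,\beta)} of a given iso optic. Throughout I will unfold the \ensuremath{\Conid{OpticFamily}} instance of \ensuremath{\Conid{IsoOptic}\;\sigma} from \cref{iso_is_opfam}, recalling in particular that there \ensuremath{\Varid{injOptic}\;\Varid{f}\;\Varid{g}\mathrel{=}\Conid{IsoOptic}\;(\Conid{Id}\hsdot{\circ }{.\;}\Varid{f})\;(\Varid{g}\hsdot{\circ }{.\;}\Varid{unId})}, \ensuremath{\Varid{mapOptic}\;(\Conid{IsoOptic}\;\alpha\;\beta)\;\Varid{g}\mathrel{=}\beta\hsdot{\circ }{.\;}\Varid{fmap}\;\Varid{g}\hsdot{\circ }{.\;}\alpha}, and \ensuremath{(\circ_{op})} is the \ensuremath{\Conid{Compose}} recipe; I will also freely use the \ensuremath{\Conid{Enhanceable}} laws for \ensuremath{\Varid{op}} and the \ensuremath{\Varid{injOptic}}-compatibility-with-composition axiom of \ensuremath{\Conid{OpticFamily}}.

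Two of these are short. For \ensuremath{\Varid{mapOptic}}: \ensuremath{\Varid{mapOptic}\;(\Varid{enhanceToArrow}\;(\Conid{IsoOptic}\;\alpha\;\beta))} unfolds to \ensuremath{\Varid{mapOptic}\;(\Varid{injOptic}\;\alpha\;\beta\ \circ_{op}\ \Varid{enhanceOp})}, which the \ensuremath{\Varid{mapOptic}} axioms factor as \ensuremath{\Varid{mapOptic}\;(\Varid{injOptic}\;\alpha\;\beta)\hsdot{\circ }{.\;}\Varid{mapOptic}\;\Varid{enhanceOp}}; the first factor is \ensuremath{\lambda \Varid{h}\to \beta\hsdot{\circ }{.\;}\Varid{h}\hsdot{\circ }{.\;}\alpha} and the second is \ensuremath{\Varid{fmap}} by the last \ensuremath{\Conid{Enhanceable}} law, so the composite is exactly \ensuremath{\Varid{mapOptic}\;(\Conid{IsoOptic}\;\alpha\;\beta)}. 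For \ensuremath{\Varid{injOptic}}: the image of \ensuremath{\Varid{injOptic}\;\Varid{f}\;\Varid{g}\mathrel{=}\Conid{IsoOptic}\;(\Conid{Id}\hsdot{\circ }{.\;}\Varid{f})\;(\Varid{g}\hsdot{\circ }{.\;}\Varid{unId})} under \ensuremath{\Varid{enhanceToArrow}} is \ensuremath{\Varid{injOptic}\;(\Conid{Id}\hsdot{\circ }{.\;}\Varid{f})\;(\Varid{g}\hsdot{\circ }{.\;}\Varid{unId})\ \circ_{op}\ \Varid{enhanceOp}\;@\Conid{Id}}; rewriting \ensuremath{\Varid{enhanceOp}\;@\Conid{Id}\mathrel{=}\Varid{injOptic}\;\Varid{unId}\;\Conid{Id}} by the first \ensuremath{\Conid{Enhanceable}} law and collapsing the two \ensuremath{\Varid{injOptic}}s with the compatibility axiom gives \ensuremath{\Varid{injOptic}\;(\Varid{unId}\hsdot{\circ }{.\;}\Conid{Id}\hsdot{\circ }{.\;}\Varid{f})\;(\Varid{g}\hsdot{\circ }{.\;}\Varid{unId}\hsdot{\circ }{.\;}\Conid{Id})\mathrel{=}\Varid{injOptic}\;\Varid{f}\;\Varid{g}}. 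Well-definedness is a variant of the same game: given a natural \ensuremath{\phi\mathbin{::}\mathbf{forall}\;\Varid{a}\hsforall \hsdot{\circ }{.\;}\Varid{f}\;\Varid{a}\to \Varid{g}\;\Varid{a}} with \ensuremath{\phi\hsdot{\circ }{.\;}\alpha\mathrel{=}\alpha'} and \ensuremath{\beta\mathrel{=}\beta'\hsdot{\circ }{.\;}\phi} (the equality of \cref{iso_equality}), splitting \ensuremath{\Varid{injOptic}\;\alpha'\;\beta'} as \ensuremath{\Varid{injOptic}\;\alpha\;\beta'\ \circ_{op}\ \Varid{injOptic}\;\phi\;\Varid{id}} and then using the third \ensuremath{\Conid{Enhanceable}} law to pass \ensuremath{\phi} through \ensuremath{\Varid{enhanceOp}} recovers \ensuremath{\Varid{injOptic}\;\alpha\;\beta\ \circ_{op}\ \Varid{enhanceOp}\;@\Varid{f}}.

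The main obstacle is preservation of \ensuremath{(\circ_{op})}. Write \ensuremath{\Varid{l}_1\mathrel{=}\Conid{IsoOptic}\;\alpha_1\;\beta_1} with hidden functor \ensuremath{\Varid{f}} and \ensuremath{\Varid{l}_2\mathrel{=}\Conid{IsoOptic}\;\alpha_2\;\beta_2} with hidden functor \ensuremath{\Varid{g}}. The \ensuremath{\Conid{IsoOptic}} composition gives \ensuremath{\Varid{l}_1\ \circ_{op}\ \Varid{l}_2\mathrel{=}\Conid{IsoOptic}\;(\Conid{Compose}\hsdot{\circ }{.\;}\Varid{fmap}\;@\Varid{f}\;\alpha_2\hsdot{\circ }{.\;}\alpha_1)\;(\beta_1\hsdot{\circ }{.\;}\Varid{fmap}\;@\Varid{f}\;\beta_2\hsdot{\circ }{.\;}\Varid{unCompose})} over \ensuremath{\Conid{Compose}\;\Varid{f}\;\Varid{g}}, so \ensuremath{\Varid{enhanceToArrow}\;(\Varid{l}_1\ \circ_{op}\ \Varid{l}_2)\mathrel{=}\Varid{injOptic}\;(\Conid{Compose}\hsdot{\circ }{.\;}\Varid{fmap}\;@\Varid{f}\;\alpha_2\hsdot{\circ }{.\;}\alpha_1)\;(\beta_1\hsdot{\circ }{.\;}\Varid{fmap}\;@\Varid{f}\;\beta_2\hsdot{\circ }{.\;}\Varid{unCompose})\ \circ_{op}\ \Varid{enhanceOp}\;@(\Conid{Compose}\;\Varid{f}\;\Varid{g})}. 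I will reduce this to \ensuremath{(\Varid{injOptic}\;\alpha_1\;\beta_1\ \circ_{op}\ \Varid{enhanceOp}\;@\Varid{f})\ \circ_{op}\ (\Varid{injOptic}\;\alpha_2\;\beta_2\ \circ_{op}\ \Varid{enhanceOp}\;@\Varid{g})} in four moves: (i) expand \ensuremath{\Varid{enhanceOp}\;@(\Conid{Compose}\;\Varid{f}\;\Varid{g})} by the second \ensuremath{\Conid{Enhanceable}} law into \ensuremath{\Varid{injOptic}\;\Varid{unCompose}\;\Conid{Compose}\ \circ_{op}\ \Varid{enhanceOp}\;@\Varid{f}\ \circ_{op}\ \Varid{enhanceOp}\;@\Varid{g}}; (ii) fuse \ensuremath{\Varid{injOptic}\;\Varid{unCompose}\;\Conid{Compose}} into the preceding \ensuremath{\Varid{injOptic}} with the compatibility axiom, cancelling \ensuremath{\Varid{unCompose}\hsdot{\circ }{.\;}\Conid{Compose}} and leaving \ensuremath{\Varid{injOptic}\;(\Varid{fmap}\;@\Varid{f}\;\alpha_2\hsdot{\circ }{.\;}\alpha_1)\;(\beta_1\hsdot{\circ }{.\;}\Varid{fmap}\;@\Varid{f}\;\beta_2)}; (iii) re-split that as \ensuremath{\Varid{injOptic}\;\alpha_1\;\beta_1\ \circ_{op}\ \Varid{injOptic}\;(\Varid{fmap}\;@\Varid{f}\;\alpha_2)\;(\Varid{fmap}\;@\Varid{f}\;\beta_2)}; and (iv) apply the fourth \ensuremath{\Conid{Enhanceable}} law, \ensuremath{\Varid{enhanceOp}\;@\Varid{f}\ \circ_{op}\ \Varid{injOptic}\;\Varid{h}\;\Varid{k}\mathrel{=}\Varid{injOptic}\;(\Varid{fmap}\;\Varid{h})\;(\Varid{fmap}\;\Varid{k})\ \circ_{op}\ \Varid{enhanceOp}\;@\Varid{f}} (with \ensuremath{\Varid{fmap}} at \ensuremath{\Varid{f}}), to slide \ensuremath{\Varid{enhanceOp}\;@\Varid{f}} leftward past \ensuremath{\Varid{injOptic}\;\alpha_2\;\beta_2}; associativity of \ensuremath{(\circ_{op})} then closes the calculation.

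The only genuinely fiddly point is keeping track, in each rewrite, of which composite is matched against which instance of the \ensuremath{\Varid{injOptic}}-compatibility axiom and in which direction (steps (ii)--(iv)); once the variance of each leg is pinned down every step is a routine equational move, and no idea beyond the \ensuremath{\Conid{Enhanceable}} laws already in hand is needed. I therefore expect the full proof to be a short, if bracket-heavy, chain of equalities.
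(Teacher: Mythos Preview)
Your proposal is correct and matches the paper's proof essentially step for step: the paper's appendix proof verifies exactly the same three preservation equations using the same \ensuremath{\Conid{Enhanceable}} laws and the \ensuremath{\Varid{injOptic}}-compatibility axiom, with your composition argument simply running the paper's chain in reverse (the paper starts from \ensuremath{\Varid{enhanceToArrow}\;\Varid{l}_1\ \circ_{op}\ \Varid{enhanceToArrow}\;\Varid{l}_2} and pushes toward the \ensuremath{\Conid{Compose}\;\Varid{f}\;\Varid{g}} form, whereas you start from the composite and unfold). Your additional remark on well-definedness with respect to the coend equality of \cref{iso_equality} is not made explicit in the paper's proof but is a sound and worthwhile observation.
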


\begin{proof}

See \textcite{proof_enhancetoarrow_is_opfam_morphism}. The proof
crucially depends on the \ensuremath{\Conid{Enhancing}} laws.

\end{proof}

\bigskip

\hypertarget{examples-1}{%
\section{Examples}\label{examples-1}}

\begin{prop}

\ensuremath{\Conid{IsoOptic}\;((\mathord{\sim})\;\Conid{Id})} is isomorphic to \ensuremath{\Conid{Adapter}}.

\end{prop}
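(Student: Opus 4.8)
The constraint $(\mathord{\sim})\;\Conid{Id}$ is ``naturally isomorphic to the identity functor'', so a value of $\Conid{IsoOptic}\;((\mathord{\sim})\;\Conid{Id})\;a\;b\;s\;t$ carries a functor $f$ with a witness $f\cong\Conid{Id}$ together with maps $s\to f\,a$ and $f\,b\to t$ --- morally just a pair of arrows $s\to a$, $b\to t$, i.e.\ an $\Conid{Adapter}$. I would make this precise by proving that, in $\Conid{IsoOptic}\;((\mathord{\sim})\;\Conid{Id})$, the map $\Varid{injOptic}\colon(s\to a)\times(b\to t)\to\Conid{IsoOptic}\;((\mathord{\sim})\;\Conid{Id})\;a\;b\;s\;t$ is a bijection, natural in all four type arguments. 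Since $\Varid{injOptic}\;f\;g=\Conid{Adapter}\;f\;g$ is visibly such a bijection for $\Conid{Adapter}$, composing the two gives a natural bijection $\Conid{Adapter}\;a\;b\;s\;t\cong\Conid{IsoOptic}\;((\mathord{\sim})\;\Conid{Id})\;a\;b\;s\;t$ that sends $\Varid{injOptic}\;f\;g$ to $\Varid{injOptic}\;f\;g$; because every element on each side is of $\Varid{injOptic}$ form and the $\Conid{OpticFamily}$ axioms express $(\circ_{op})$ and $\Varid{mapOptic}$ in terms of $\Varid{injOptic}$, this bijection is automatically a morphism of optic families, and its inverse is one too by \Cref{invertible_opfam_morphism}. (Equivalently one may build explicit mutually inverse morphisms $\theta\colon\Conid{IsoOptic}\;((\mathord{\sim})\;\Conid{Id})\leadsto\Conid{Adapter}$ and $\theta^{-1}$ in the style of \Cref{lens_is_profop_isproduct}.)

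Preliminaries I would dispatch first: $(\mathord{\sim})\;\Conid{Id}$ is a functor monoid ($\Conid{Id}\cong\Conid{Id}$; and $f\cong\Conid{Id}$, $g\cong\Conid{Id}$ give $\Conid{Compose}\;f\;g\cong\Conid{Compose}\;\Conid{Id}\;\Conid{Id}\cong\Conid{Id}$; every such functor is a $\Conid{Functor}$), so $\Conid{IsoOptic}\;((\mathord{\sim})\;\Conid{Id})$ is an optic family by \Cref{iso_is_opfam}; and, by \Cref{enhanceable_isooptic}, $\Varid{enhanceIso}$ is a lawful $\Conid{Enhanceable}$ instance for it, so its laws are available.

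The substance is surjectivity and injectivity of $\Varid{injOptic}$ for $\Conid{IsoOptic}\;((\mathord{\sim})\;\Conid{Id})$. For surjectivity, fix $f\in(\mathord{\sim})\;\Conid{Id}$ with witnessing natural isomorphism $\Varid{fromF}\colon\forall x.\ f\,x\to\Conid{Id}\,x$ and inverse $\Varid{toF}$. Using the naturality (wedge) law of $\Varid{enhanceIso}$ at $\Varid{fromF}$, the law $\Varid{enhanceIso}\;@\Conid{Id}=\Varid{injOptic}\;\Varid{unId}\;\Conid{Id}$, and the $\Varid{injOptic}$ composition/identity axioms, I would derive $\Varid{enhanceIso}\;@f=\Varid{injOptic}\;(\Varid{unId}\circ\Varid{fromF})\;(\Varid{toF}\circ\Conid{Id})$; plugging this into \Cref{iso_normal_form} gives, for any $l=\Conid{IsoOptic}\;(\alpha\colon s\to f\,a)\;(\beta\colon f\,b\to t)$, the identity $l=\Varid{injOptic}\;\alpha\;\beta\,\circ_{op}\,\Varid{enhanceIso}\;@f=\Varid{injOptic}\;(\Varid{unId}\circ\Varid{fromF}\circ\alpha)\;(\beta\circ\Varid{toF}\circ\Conid{Id})$, so every iso optic here is of $\Varid{injOptic}$ form. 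For injectivity, unfold $\Varid{injOptic}\;h\;k=\Conid{IsoOptic}\;(\Conid{Id}\circ h)\;(k\circ\Varid{unId})$ (carried functor $\Conid{Id}$) and apply \Cref{iso_equality}: an equality $\Varid{injOptic}\;h\;k=\Varid{injOptic}\;h'\;k'$ is witnessed by a natural $\varphi\colon\forall x.\ \Conid{Id}\,x\to\Conid{Id}\,x$, which parametricity forces to be the identity, whence $h=h'$ and $k=k'$. This simultaneously shows the resulting assignment $l\mapsto\Conid{Adapter}\;(\Varid{unId}\circ\Varid{fromF}\circ\alpha)\;(\beta\circ\Varid{toF}\circ\Conid{Id})$ does not depend on the chosen $f$ or its witness, hence is well defined on the quotient of \Cref{iso_equality}.

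I would then conclude as in the first paragraph: the composite $\Conid{Adapter}\;h\;k\mapsto\Varid{injOptic}\;h\;k$ is a natural bijection preserving $\Varid{injOptic}$, hence a morphism of optic families (using $\Varid{mapOptic}\;(\Varid{injOptic}\;h\;k)=\lambda j.\ k\circ j\circ h$ and the composition axiom on both families), so by \Cref{invertible_opfam_morphism} it is an isomorphism of optic families and $\Conid{IsoOptic}\;((\mathord{\sim})\;\Conid{Id})\cong\Conid{Adapter}$. The main obstacle is the third paragraph: surjectivity, injectivity, and well-definedness all hinge on parametricity facts about natural transformations out of (or into) functors isomorphic to $\Conid{Id}$ --- ultimately on there being exactly one natural transformation $\Conid{Id}\Rightarrow\Conid{Id}$ --- and on not losing track of the $\Conid{Id}$/$\Conid{Compose}$ wrappers; the rest is routine manipulation of the $\Conid{OpticFamily}$ axioms.
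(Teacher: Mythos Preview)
Your argument is sound, but it rests on a misreading of the notation that makes you work much harder than the paper does. In Haskell, \ensuremath{(\mathord{\sim})} is \emph{type equality}, not natural isomorphism: the constraint \ensuremath{((\mathord{\sim})\;\Conid{Id})\;\Varid{f}} says \ensuremath{\Varid{f}\mathrel{=}\Conid{Id}} on the nose. Consequently the existential in \ensuremath{\Conid{IsoOptic}\;((\mathord{\sim})\;\Conid{Id})\;\Varid{a}\;\Varid{b}\;\Varid{s}\;\Varid{t}} collapses, and the paper's entire proof is the one-line chain \ensuremath{\Conid{IsoOptic}\;((\mathord{\sim})\;\Conid{Id})\;\Varid{a}\;\Varid{b}\;\Varid{s}\;\Varid{t}\cong (\Varid{s}\to \Conid{Id}\;\Varid{a},\Conid{Id}\;\Varid{b}\to \Varid{t})\cong (\Varid{s}\to \Varid{a},\Varid{b}\to \Varid{t})}. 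No wedge law, no coend equality, no parametricity about \ensuremath{\Conid{Id}\Rightarrow\Conid{Id}} is needed.

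Your route---treating the constraint as ``isomorphic to \ensuremath{\Conid{Id}}'' and then using \Cref{iso_normal_form}, the \ensuremath{\Conid{Enhanceable}} laws, and the coend equality of \Cref{iso_equality} to show \ensuremath{\Varid{injOptic}} is bijective---does go through, and it buys you more: you end up with an isomorphism of optic families, not just of types. The paper only claims the bare type-level isomorphism here (and indeed, under literal type equality \ensuremath{(\mathord{\sim})\;\Conid{Id}} is not strictly a functor monoid, since \ensuremath{\Conid{Compose}\;\Conid{Id}\;\Conid{Id}} is not type-equal to \ensuremath{\Conid{Id}}; the optic-family story for \ensuremath{\Conid{Adapter}} is handled separately in \Cref{adapter} via \ensuremath{\Conid{Functorize}\;\Conid{Adapter}}). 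So your approach is a valid, more structural alternative, but it answers a stronger question than was asked and at considerably greater cost.
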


\begin{proof}

Trivially,
\ensuremath{\Conid{IsoOptic}\;((\mathord{\sim})\;\Conid{Id})\;\Varid{a}\;\Varid{b}\;\Varid{s}\;\Varid{t}\cong (\Varid{s}\to \Conid{Id}\;\Varid{a},\Conid{Id}\;\Varid{b}\to \Varid{t})\cong (\Varid{s}\to \Varid{a},\Varid{b}\to \Varid{t})}.

\end{proof}

\bigskip

\begin{prop}

\ensuremath{\Conid{IsoOptic}\;\Conid{IsProduct}} is isomorphic to \ensuremath{\Conid{Lens}}.

\end{prop}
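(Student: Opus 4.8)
\emph{Proof proposal.}
The plan is to realise the isomorphism as an isomorphism of \emph{optic families}, getting half of the two round-trip laws for free from \Cref{endo_iso_is_singleton}. I would build two candidate morphisms of optic families, $\Varid{enhanceToArrow}\mathbin{::}\Conid{IsoOptic}\;\Conid{IsProduct}\leadsto\Conid{Lens}$ and a hand-written $\Varid{lensToIso}\mathbin{::}\Conid{Lens}\leadsto\Conid{IsoOptic}\;\Conid{IsProduct}$, verify only the single identity $\Varid{enhanceToArrow}\circ\Varid{lensToIso}=\Varid{id}$ on $\Conid{Lens}$, and then observe that $\Varid{lensToIso}\circ\Varid{enhanceToArrow}$, being a composite of morphisms of optic families, is a morphism $\Conid{IsoOptic}\;\Conid{IsProduct}\leadsto\Conid{IsoOptic}\;\Conid{IsProduct}$ and hence equal to $\Varid{id}$ by \Cref{endo_iso_is_singleton}. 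Mutual invertibility then gives $\Conid{IsoOptic}\;\Conid{IsProduct}\cong\Conid{Lens}$, and by \Cref{invertible_opfam_morphism} both maps are genuine morphisms of optic families.

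For $\Varid{enhanceToArrow}$ I would first check that $\Conid{Lens}$ is enhanceable by $\Conid{IsProduct}$, with witness $\Varid{enhanceOp}=\Conid{Lens}\;(\Varid{snd}\circ\Varid{toProduct})\;(\lambda\Varid{b}\;\Varid{x}\to\Varid{fromProduct}\;(\Varid{fst}\;(\Varid{toProduct}\;\Varid{x}),\Varid{b}))$: its first four $\Conid{Enhanceable}$ laws follow routinely from the $\Conid{IsProduct}$ functor-monoid structure established in \Cref{lens_is_profop_isproduct}, and the last reduces, via \Cref{map_enhance_id}, to $\Varid{fromProduct}\circ\Varid{toProduct}=\Varid{id}$. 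Then $\Varid{enhanceToArrow}$ is a morphism of optic families by \Cref{enhancetoarrow_is_opfam_morphism}, sending $\Conid{IsoOptic}\;\alpha\;\beta$ to $\Conid{Lens}\;(\Varid{snd}\circ\Varid{toProduct}\circ\alpha)\;(\lambda\Varid{b}\;\Varid{s}\to\beta\;(\Varid{fromProduct}\;(\Varid{fst}\;(\Varid{toProduct}\;(\alpha\;\Varid{s})),\Varid{b})))$. For $\Varid{lensToIso}$ I would use that the functor $((,)\;\Varid{s})$ is an $\Conid{IsProduct}$ instance (noted in \Cref{lens_is_profop_isproduct}) and put $\Varid{lensToIso}\;(\Conid{Lens}\;\Varid{get}\;\Varid{put})=\Conid{IsoOptic}\;(\lambda\Varid{s}\to(\Varid{s},\Varid{get}\;\Varid{s}))\;(\lambda(\Varid{s},\Varid{b})\to\Varid{put}\;\Varid{b}\;\Varid{s})$, i.e.\ take the hidden residual functor to be $((,)\;\Varid{s})$ itself. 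Preservation of $\Varid{injOptic}$ and of $\Varid{mapOptic}$ by $\Varid{lensToIso}$ is immediate from the $\Conid{OpticFamily}$ instances; preservation of $\circ_{op}$ is the one computation needing the coend-style equality of iso optics (\Cref{iso_equality}), witnessed by the natural transformation $(\Varid{s'},\Varid{x})\mapsto\Conid{Compose}\;(\Varid{s'},(\Varid{get}_{\mathrm{1}}\;\Varid{s'},\Varid{x}))$, which reinserts the outer lens's residue. Finally $\Varid{enhanceToArrow}\;(\Varid{lensToIso}\;(\Conid{Lens}\;\Varid{get}\;\Varid{put}))$ equals $\Varid{injOptic}\;(\lambda\Varid{s}\to(\Varid{s},\Varid{get}\;\Varid{s}))\;(\lambda(\Varid{s},\Varid{b})\to\Varid{put}\;\Varid{b}\;\Varid{s})\;\circ_{op}\;\Varid{enhanceOp}\;@((,)\;\Varid{s})$, and since $\Varid{toProduct}$ and $\Varid{fromProduct}$ for $((,)\;\Varid{s})$ are identities up to the unit, unfolding the $\Conid{Lens}$ composition formula collapses this back to $\Conid{Lens}\;\Varid{get}\;\Varid{put}$, as needed.

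The only real obstacle is the $\circ_{op}$-preservation check for $\Varid{lensToIso}$: because iso optics are only equal up to a natural transformation between the functors they carry, one must not expect the composite's residual functor $\Conid{Compose}\;((,)\;\Varid{s})\;((,)\;\Varid{a})$ to match $((,)\;\Varid{s})$ syntactically, but rather produce the natural transformation above and, importantly, get its direction right. Everything else is mechanical unfolding of the $\Conid{OpticFamily}$ instances for $\Conid{Lens}$ and $\Conid{IsoOptic}\;\Conid{IsProduct}$, of the $\Conid{IsProduct}$ witnesses, and of the $\Conid{Enhancing}$/$\Conid{Enhanceable}$ laws; notably, none of it requires the lenses to be lawful.
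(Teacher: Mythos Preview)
Your approach is correct but genuinely different from the paper's, and in one respect sharper.  The paper simply writes down the pair of maps \ensuremath{\Varid{lensToIso}} and \ensuremath{\Varid{isoToLens}} and defers the round-trip verification to \Cref{rederiving-profunctor-encodings-for-common-optics}, where it still omits the calculation and asserts that lawful lenses are required.  You instead instantiate the generic \ensuremath{\Varid{enhanceToArrow}} for the \ensuremath{\Conid{IsoOptic}\;\Conid{IsProduct}\to\Conid{Lens}} direction, check by hand that \ensuremath{\Varid{lensToIso}} is a morphism of optic families (the only nontrivial step being \ensuremath{\circ_{op}}-preservation, witnessed by the natural transformation you give), and then harvest the \ensuremath{\Conid{IsoOptic}}-side round-trip from \Cref{endo_iso_is_singleton}.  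This is essentially a hand-rolled instance of the later Derivation Theorem, applied before that theorem is stated.

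What your route buys is that it makes clear no lens lawfulness is needed: the \ensuremath{\Conid{Enhanceable}} laws for your \ensuremath{\Varid{enhanceOp}} hold because the \ensuremath{\Conid{IsProduct}} witnesses \ensuremath{\Varid{toProduct}}/\ensuremath{\Varid{fromProduct}} are natural and mutually inverse, and the \ensuremath{\Conid{Lens}}-side round-trip \ensuremath{\Varid{enhanceToArrow}\circ\Varid{lensToIso}=\Varid{id}} is a direct unfolding that never touches GetPut/PutGet/PutPut.  The paper's remark that ``they only are so if we require lawful lenses'' refers to attempting the \ensuremath{\Conid{IsoOptic}}-side round-trip \emph{pointwise}, where one would otherwise need to exhibit a natural transformation between an arbitrary \ensuremath{\Conid{IsProduct}} functor \ensuremath{\Varid{f}} and \ensuremath{((,)\;\Varid{s})}; your use of \Cref{endo_iso_is_singleton} sidesteps that entirely.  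The trade-off is that you must verify the wedge law for \ensuremath{\Varid{enhanceOp}} and the \ensuremath{\circ_{op}}-preservation for \ensuremath{\Varid{lensToIso}}, both of which are routine but not entirely free; the paper's direct route would trade those for a single (omitted) round-trip calculation.
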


\begin{proof}

The isomorphism is as follows:

\begin{hscode}\SaveRestoreHook
\column{B}{@{}>{\hspre}l<{\hspost}@{}}%
\column{5}{@{}>{\hspre}l<{\hspost}@{}}%
\column{9}{@{}>{\hspre}l<{\hspost}@{}}%
\column{E}{@{}>{\hspre}l<{\hspost}@{}}%
\>[B]{}\Varid{lensToIso}\mathbin{::}\Conid{Lens}\;\Varid{a}\;\Varid{b}\;\Varid{s}\;\Varid{t}\to \Conid{IsoOptic}\;\Conid{IsProduct}\;\Varid{a}\;\Varid{b}\;\Varid{s}\;\Varid{t}{}\<[E]%
\\
\>[B]{}\Varid{lensToIso}\;(\Conid{Lens}\;\Varid{get}\;\Varid{put})\mathrel{=}\Conid{IsoOptic}\;\alpha\;\beta{}\<[E]%
\\
\>[B]{}\hsindent{5}{}\<[5]%
\>[5]{}\mathbf{where}{}\<[E]%
\\
\>[5]{}\hsindent{4}{}\<[9]%
\>[9]{}\alpha\mathbin{::}\Varid{s}\to (\Varid{s},\Varid{a}){}\<[E]%
\\
\>[5]{}\hsindent{4}{}\<[9]%
\>[9]{}\alpha\;\Varid{s}\mathrel{=}(\Varid{s},\Varid{get}\;\Varid{s}){}\<[E]%
\\
\>[5]{}\hsindent{4}{}\<[9]%
\>[9]{}\beta\mathbin{::}(\Varid{s},\Varid{b})\to \Varid{t}{}\<[E]%
\\
\>[5]{}\hsindent{4}{}\<[9]%
\>[9]{}\beta\;(\Varid{s},\Varid{b})\mathrel{=}\Varid{put}\;\Varid{b}\;\Varid{s}{}\<[E]%
\ColumnHook
\end{hscode}\resethooks

We use here the \ensuremath{(\Varid{s},\mathbin{-})} functor, which is a member of \ensuremath{\Conid{IsProduct}}.

\begin{hscode}\SaveRestoreHook
\column{B}{@{}>{\hspre}l<{\hspost}@{}}%
\column{5}{@{}>{\hspre}l<{\hspost}@{}}%
\column{9}{@{}>{\hspre}l<{\hspost}@{}}%
\column{17}{@{}>{\hspre}l<{\hspost}@{}}%
\column{E}{@{}>{\hspre}l<{\hspost}@{}}%
\>[B]{}\Varid{isoToLens}\mathbin{::}\Conid{IsoOptic}\;\Conid{IsProduct}\;\Varid{a}\;\Varid{b}\;\Varid{s}\;\Varid{t}\to \Conid{Lens}\;\Varid{a}\;\Varid{b}\;\Varid{s}\;\Varid{t}{}\<[E]%
\\
\>[B]{}\Varid{isoToLens}\;(\Conid{IsoOptic}\;\alpha\;\beta)\mathrel{=}\Conid{Lens}\;\Varid{get}\;\Varid{put}{}\<[E]%
\\
\>[B]{}\hsindent{5}{}\<[5]%
\>[5]{}\mathbf{where}{}\<[E]%
\\
\>[5]{}\hsindent{4}{}\<[9]%
\>[9]{}\Varid{get}\mathbin{::}\Varid{s}\to \Varid{a}{}\<[E]%
\\
\>[5]{}\hsindent{4}{}\<[9]%
\>[9]{}\Varid{get}\mathrel{=}\Varid{snd}\hsdot{\circ }{.\;}\Varid{toProduct}\hsdot{\circ }{.\;}\alpha{}\<[E]%
\\
\>[5]{}\hsindent{4}{}\<[9]%
\>[9]{}\Varid{put}\mathbin{::}{}\<[17]%
\>[17]{}\Varid{b}\to \Varid{s}\to \Varid{t}{}\<[E]%
\\
\>[5]{}\hsindent{4}{}\<[9]%
\>[9]{}\Varid{put}\;\Varid{b}\mathrel{=}\beta\hsdot{\circ }{.\;}\Varid{fromProduct}\hsdot{\circ }{.\;}(\lambda (\Varid{x},\anonymous )\to (\Varid{x},\Varid{b}))\hsdot{\circ }{.\;}\Varid{toProduct}\hsdot{\circ }{.\;}\alpha{}\<[E]%
\ColumnHook
\end{hscode}\resethooks

We omit the rest of the proof; see
\textcite{rederiving-profunctor-encodings-for-common-optics} for a more
complete derivation.

\end{proof}

\bigskip

\hypertarget{two-theorems-about-profunctor-optics}{%
\chapter{Two theorems about profunctor
optics}\label{two-theorems-about-profunctor-optics}}

Using the properties of isomorphism optics, we are now ready to prove
two important theorems about the structure of profunctor optics: first,
\ensuremath{\Conid{ProfOptic}\;\sigma} and \ensuremath{\Conid{IsoOptic}\;\sigma} are isomorphic as optic families
(\textcite{representation_theorem}); second, for a given optic family
there is one particular functor monoid of interest when trying to derive
a profunctor encoding (\textcite{functorization_theorem}).

\hypertarget{the-representation-theorem}{%
\section{The representation theorem}\label{the-representation-theorem}}

\begin{lemma}\label{enhanceable_implies_enhancing}

If \ensuremath{\Varid{op}\;\mathbin{\in}\;\Conid{Enhanceable}\;\sigma}, then \ensuremath{\Varid{op}\;\Varid{a}\;\Varid{b}\;\mathbin{\in}\;\Conid{Enhancing}\;\sigma}.

\end{lemma}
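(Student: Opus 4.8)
The plan is to read the required $\Conid{Enhancing}\ \sigma$ structure on the profunctor $\Varid{op}\ \Varid{a}\ \Varid{b}$ straight off the $\Conid{Enhanceable}\ \sigma\ \Varid{op}$ data, and then to discharge each $\Conid{Enhancing}$ law by reducing it to the matching $\Conid{Enhanceable}$ axiom together with associativity of $\circ_{op}$. To begin, I would observe that the superclass obligations of $\Conid{Enhancing}$ are already met: $\sigma$ is a functor monoid by hypothesis, and $\Varid{op}\ \Varid{a}\ \Varid{b}$ is a profunctor by the Note following the proposition that optic families are covariant/contravariant in the appropriate arguments, with $\Varid{dimap}\ \Varid{f}\ \Varid{g} = \Varid{dimapOptic}\ \Varid{f}\ \Varid{g}$, i.e.\ $\Varid{dimap}\ \Varid{f}\ \Varid{g}\ \Varid{o} = \Varid{injOptic}\ \Varid{f}\ \Varid{g} \circ_{op} \Varid{o}$.

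Next I would define the enhancement by postcomposition with $\Varid{enhanceOp}$: for $\Varid{f} \in \sigma$,
\[
\Varid{enhance}\ @\Varid{f}\ \Varid{o} \;=\; \Varid{enhanceOp}\ @\Varid{f} \circ_{op} \Varid{o}.
\]
This has the right type, since $\Varid{enhanceOp}\ @\Varid{f} :: \Varid{op}\ \Varid{s}\ \Varid{t}\ (\Varid{f}\ \Varid{s})\ (\Varid{f}\ \Varid{t})$ and $\Varid{o} :: \Varid{op}\ \Varid{a}\ \Varid{b}\ \Varid{s}\ \Varid{t}$, so the composite lands in $\Varid{op}\ \Varid{a}\ \Varid{b}\ (\Varid{f}\ \Varid{s})\ (\Varid{f}\ \Varid{t})$, which is exactly $(\Varid{op}\ \Varid{a}\ \Varid{b})\ (\Varid{f}\ \Varid{s})\ (\Varid{f}\ \Varid{t})$ as demanded by the signature of $\Varid{enhance}$ for the profunctor $\Varid{op}\ \Varid{a}\ \Varid{b}$.

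Finally I would check the four $\Conid{Enhancing}$ laws, unfolding the definition at a generic argument $\Varid{o}$ and pushing $\circ_{op}$ around by associativity. The $\Conid{Id}$ law reduces to the first $\Conid{Enhanceable}$ axiom $\circ_{op}$-composed on the right with $\Varid{o}$; the $\Conid{Compose}$ law, after expanding its right-hand side as $\Varid{injOptic}\ \Varid{unCompose}\ \Conid{Compose} \circ_{op} (\Varid{enhanceOp}\ @\Varid{f} \circ_{op} (\Varid{enhanceOp}\ @\Varid{g} \circ_{op} \Varid{o}))$ and re-associating, becomes the second $\Conid{Enhanceable}$ axiom composed with $\Varid{o}$; the wedge/naturality law, for $\alpha :: \forall\,\Varid{a}.\ \Varid{f}\ \Varid{a} \to \Varid{g}\ \Varid{a}$, becomes $(\Varid{injOptic}\ \Varid{id}\ \alpha \circ_{op} \Varid{enhanceOp}\ @\Varid{f}) \circ_{op} \Varid{o} = (\Varid{injOptic}\ \alpha\ \Varid{id} \circ_{op} \Varid{enhanceOp}\ @\Varid{g}) \circ_{op} \Varid{o}$, which is the third $\Conid{Enhanceable}$ axiom composed with $\Varid{o}$; and the remaining law is handled either the same way against the fourth $\Conid{Enhanceable}$ axiom or, as the definition of $\Conid{Enhancing}$ notes, by parametricity. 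It is worth recording that the fifth $\Conid{Enhanceable}$ axiom, $\Varid{mapOptic}\ \Varid{enhanceOp} = \Varid{fmap}$, is not used, since $\Conid{Enhancing}$ has no analogous law. There is no real obstacle here; the only care needed is in tracking the type parameters through $\circ_{op}$ — in particular at which pair of parameters each $\Varid{enhanceOp}\ @\Varid{f}$ is instantiated — and applying associativity consistently so that the two sides of each law line up.
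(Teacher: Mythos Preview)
Your proposal is correct and follows essentially the same approach as the paper: establish the profunctor structure on \ensuremath{\Varid{op}\;\Varid{a}\;\Varid{b}} via \ensuremath{\Varid{dimapOptic}}, define \ensuremath{\Varid{enhance}} by left-composition with \ensuremath{\Varid{enhanceOp}}, and derive each \ensuremath{\Conid{Enhancing}} law from the corresponding \ensuremath{\Conid{Enhanceable}} axiom plus associativity of \ensuremath{\circ_{op}}. The paper writes out only the \ensuremath{\Conid{Id}} case and leaves the others as ``similarly'', whereas you spell out how each law reduces; your additional remark that the \ensuremath{\Varid{mapOptic}\;\Varid{enhanceOp}\mathrel{=}\Varid{fmap}} axiom is unused is correct and a nice observation.
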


\begin{proof}

We already know that \ensuremath{\Varid{op}\;\Varid{a}\;\Varid{b}} is a profunctor, from the \ensuremath{\Conid{OpticFamily}}
instance:

\begin{hscode}\SaveRestoreHook
\column{B}{@{}>{\hspre}l<{\hspost}@{}}%
\column{5}{@{}>{\hspre}l<{\hspost}@{}}%
\column{E}{@{}>{\hspre}l<{\hspost}@{}}%
\>[B]{}\mathbf{instance}\;\Conid{OpticFamily}\;\Varid{op}\Rightarrow \Conid{Profunctor}\;(\Varid{op}\;\Varid{a}\;\Varid{b})\;\mathbf{where}{}\<[E]%
\\
\>[B]{}\hsindent{5}{}\<[5]%
\>[5]{}\Varid{dimap}\mathrel{=}\Varid{dimapOptic}{}\<[E]%
\ColumnHook
\end{hscode}\resethooks

We define \ensuremath{\Varid{enhance}} using \ensuremath{\Varid{enhanceOp}} as follows:

\begin{hscode}\SaveRestoreHook
\column{B}{@{}>{\hspre}l<{\hspost}@{}}%
\column{5}{@{}>{\hspre}l<{\hspost}@{}}%
\column{E}{@{}>{\hspre}l<{\hspost}@{}}%
\>[B]{}\mathbf{instance}\;\Conid{Enhanceable}\;\sigma\;\Varid{op}\Rightarrow \Conid{Enhancing}\;\sigma\;(\Varid{op}\;\Varid{a}\;\Varid{b})\;\mathbf{where}{}\<[E]%
\\
\>[B]{}\hsindent{5}{}\<[5]%
\>[5]{}\Varid{enhance}\mathbin{::}\sigma\;\Varid{f}\Rightarrow \Varid{op}\;\Varid{a}\;\Varid{b}\;\Varid{s}\;\Varid{t}\to \Varid{op}\;\Varid{a}\;\Varid{b}\;(\Varid{f}\;\Varid{s})\;(\Varid{f}\;\Varid{t}){}\<[E]%
\\
\>[B]{}\hsindent{5}{}\<[5]%
\>[5]{}\Varid{enhance}\;\Varid{l}\mathrel{=}\Varid{enhanceOp}\ \circ_{op}\ \Varid{l}{}\<[E]%
\ColumnHook
\end{hscode}\resethooks

The laws are straightforward from the \ensuremath{\Conid{Enhanceable}} laws:

\begin{hscode}\SaveRestoreHook
\column{B}{@{}>{\hspre}l<{\hspost}@{}}%
\column{4}{@{}>{\hspre}l<{\hspost}@{}}%
\column{E}{@{}>{\hspre}l<{\hspost}@{}}%
\>[B]{}\Varid{enhance}\;@\Conid{Id}\;\Varid{l}{}\<[E]%
\\
\>[B]{}\mathrel{=}{}\<[4]%
\>[4]{}\Varid{enhanceOp}\;@\Conid{Id}\ \circ_{op}\ \Varid{l}{}\<[E]%
\\
\>[B]{}\mathrel{=}\mbox{\commentbegin   \ensuremath{\Conid{Enhanceable}} law   \commentend}{}\<[E]%
\\
\>[B]{}\hsindent{4}{}\<[4]%
\>[4]{}\Varid{injOptic}\;\Varid{unId}\;\Conid{Id}\ \circ_{op}\ \Varid{l}{}\<[E]%
\\
\>[B]{}\mathrel{=}{}\<[4]%
\>[4]{}\Varid{dimapOptic}\;\Varid{unId}\;\Conid{Id}\;\Varid{l}{}\<[E]%
\\
\>[B]{}\mathrel{=}{}\<[4]%
\>[4]{}\Varid{dimap}\;\Varid{unId}\;\Conid{Id}\;\Varid{l}{}\<[E]%
\ColumnHook
\end{hscode}\resethooks

And similarly for the other two laws.

\end{proof}

\bigskip

\begin{thm}[Representation theorem for profunctor optics]\label{representation_theorem}

For a given functor monoid \ensuremath{\sigma}, there exists an isomorphism of optic
families between \ensuremath{\Conid{ProfOptic}\;\sigma} and \ensuremath{\Conid{IsoOptic}\;\sigma}.

\end{thm}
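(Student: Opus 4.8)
The plan is to exhibit an explicit pair of mutually inverse functions between $\mathsf{ProfOptic}\,\sigma$ and $\mathsf{IsoOptic}\,\sigma$ and to argue that one of them is a morphism of optic families; by \Cref{invertible_opfam_morphism} the inverse is then automatically a morphism as well, so it suffices to produce \emph{one} optic-family morphism that happens to be a bijection of sets. For the ``decode'' direction I would take $\phi := \mathsf{enhanceToArrow} : \mathsf{IsoOptic}\,\sigma \leadsto \mathsf{ProfOptic}\,\sigma$: this is available because $\mathsf{ProfOptic}\,\sigma$ is enhanceable by $\sigma$ (\Cref{enhanceable_profoptic}), and it is a morphism of optic families by \Cref{enhancetoarrow_is_opfam_morphism}. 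For the candidate inverse I would instantiate the universal quantifier: put $\psi\,l := l\ @(\mathsf{IsoOptic}\,\sigma\ a\ b)\ \mathrm{id}_{\mathrm{op}}$, which typechecks because $\mathsf{IsoOptic}\,\sigma\ a\ b$ is an $\mathsf{Enhancing}\,\sigma$ profunctor (\Cref{enhanceable_implies_enhancing} applied to the enhanceable instance of \Cref{enhanceable_isooptic}). It then remains to check $\psi\circ\phi=\mathrm{id}$ and $\phi\circ\psi=\mathrm{id}$.

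The identity $\psi\circ\phi=\mathrm{id}$ is pure bookkeeping. Given $\epsilon = \mathsf{IsoOptic}\,\alpha\,\beta$ carrying some $f\in\sigma$, \Cref{iso_normal_form} rewrites it as $\mathsf{injOptic}\,\alpha\,\beta \circ_{\mathrm{op}} \mathsf{enhanceIso}\ @f$, so $\phi\,\epsilon$ unfolds (using that $\mathsf{injOptic}$ and $\circ_{\mathrm{op}}$ on $\mathsf{ProfOptic}\,\sigma$ are $\mathsf{dimap}$ and function composition, and that $\mathsf{enhanceOp}=\mathsf{enhance}$ there) to $\mathsf{dimap}\,\alpha\,\beta \circ \mathsf{enhance}\ @f$. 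Applying this at $p=\mathsf{IsoOptic}\,\sigma\ a\ b$ to $\mathrm{id}_{\mathrm{op}}$: on that profunctor $\mathsf{enhance}\ @f$ is post-composition with $\mathsf{enhanceIso}\ @f$ and $\mathsf{dimap}$ is $\mathsf{dimapOptic}$, so the expression collapses through $\mathsf{enhanceIso}\ @f \circ_{\mathrm{op}} \mathrm{id}_{\mathrm{op}} = \mathsf{enhanceIso}\ @f$ and then $\mathsf{dimapOptic}\,\alpha\,\beta\,(\mathsf{enhanceIso}\ @f) = \mathsf{injOptic}\,\alpha\,\beta \circ_{\mathrm{op}} \mathsf{enhanceIso}\ @f$, which is $\epsilon$ again by \Cref{iso_normal_form}.

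The interesting direction is $\phi\circ\psi=\mathrm{id}$, which says that every profunctor optic lies in the image of $\phi$; this is the only place parametricity is needed. Fix an arbitrary $\mathsf{Enhancing}\,\sigma$ profunctor $q$ and an element $x : q\ a\ b$, and define $\mathsf{run}_x : \mathsf{IsoOptic}\,\sigma\ a\ b \to q$ by $\mathsf{run}_x(\mathsf{IsoOptic}\,\gamma\,\delta) = \mathsf{dimap}\,\gamma\,\delta\,(\mathsf{enhance}\ @g\ x)$, where $g$ is the carried functor. One checks: (i) $\mathsf{run}_x$ respects the iso-optic equality of \Cref{iso_equality} — given two representations related by a natural $\varphi : f \Rightarrow g$, the two outputs agree precisely by the wedge condition on $q$'s $\mathsf{enhance}$; (ii) $\mathsf{run}_x$ is a morphism of $\mathsf{Enhancing}\,\sigma$ profunctors, using the composition law and the (parametrically free) naturality law of $\mathsf{enhance}$; (iii) $\mathsf{run}_x(\mathrm{id}_{\mathrm{op}}) = x$, a two-line computation from $\mathsf{enhance}\ @\mathsf{Id} = \mathsf{dimap}\,\mathsf{unId}\,\mathsf{Id}$ and functoriality of $\mathsf{dimap}$. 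Now invoke the free theorem for $l : \forall p.\ \mathsf{Enhancing}\,\sigma\ p \Rightarrow p\ a\ b \to p\ s\ t$ along $\mathsf{run}_x$, i.e. $\mathsf{run}_x \circ (l\ @(\mathsf{IsoOptic}\,\sigma\ a\ b)) = (l\ @q) \circ \mathsf{run}_x$, and evaluate both sides at $\mathrm{id}_{\mathrm{op}}$: the right-hand side is $l\ @q\ x$ by (iii), while the left-hand side is $\mathsf{run}_x(\psi\,l)$, which unfolds exactly as in the previous paragraph to $\phi(\psi\,l)\ @q\ x$. Since $q$ and $x$ were arbitrary this gives $\phi(\psi\,l)=l$.

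Putting these together, $\phi$ is a bijective morphism of optic families, so by \Cref{invertible_opfam_morphism} its inverse $\psi$ is a morphism of optic families too, and $\phi$ is the desired isomorphism. I expect the main obstacle to be step (ii) — verifying that $\mathsf{run}_x$ really is an $\mathsf{Enhancing}$-morphism, so that parametricity of $l$ may legitimately be transported along it — because this is exactly where the full strength of the $\mathsf{Enhancing}$ laws, and in particular the wedge/coend condition, has to be used; everything else reduces to routine manipulation of $\mathsf{injOptic}$, $\circ_{\mathrm{op}}$, and the normal form of \Cref{iso_normal_form}.
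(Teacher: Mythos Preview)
Your proposal is correct and follows essentially the same route as the paper: the paper's \ensuremath{\Varid{isoToProf}} and \ensuremath{\Varid{profToIso}} are exactly your $\phi$ and $\psi$, and your \ensuremath{\Varid{run}_{\Varid{x}}} coincides with the paper's \ensuremath{\Varid{flip}\;\Varid{isoToProf}\;\Varid{pab}}. The only cosmetic difference is that the paper defines this map as \ensuremath{\Varid{flip}\;\Varid{isoToProf}\;\Varid{pab}} (so well-definedness with respect to the coend equality is automatic and the \ensuremath{\Varid{enhance}}-preservation check reuses that \ensuremath{\Varid{isoToProf}} is already an optic-family morphism), whereas you define it by pattern matching and therefore carry out the wedge-condition check (i) explicitly.
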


\begin{proof}

From \textcite{enhanceable_profoptic},
\ensuremath{\Conid{ProfOptic}\;\sigma\;\mathbin{\in}\;\Conid{Enhanceable}\;\sigma} so we get a
\ensuremath{\Conid{IsoOptic}\;\sigma\leadsto \Conid{ProfOptic}\;\sigma} arrow:

\begin{hscode}\SaveRestoreHook
\column{B}{@{}>{\hspre}l<{\hspost}@{}}%
\column{E}{@{}>{\hspre}l<{\hspost}@{}}%
\>[B]{}\Varid{isoToProf}\mathbin{::}\Conid{IsoOptic}\;\sigma\leadsto \Conid{ProfOptic}\;\sigma{}\<[E]%
\\
\>[B]{}\Varid{isoToProf}\mathrel{=}\Varid{enhanceToArrow}{}\<[E]%
\ColumnHook
\end{hscode}\resethooks

To construct the reverse transformation, we are given an
\ensuremath{\Varid{l}\mathbin{::}\Conid{ProfOptic}\;\sigma\;\Varid{a}\;\Varid{b}\;\Varid{s}\;\Varid{t}}. To use it, we need to instantiate it to
an appropriate instance of \ensuremath{\Conid{Enhancing}\;\sigma}.

Using \textcite{enhanceable_isooptic} and
\textcite{enhanceable_implies_enhancing}, we deduce that
\ensuremath{\Conid{IsoOptic}\;\sigma\;\Varid{a}\;\Varid{b}\;\mathbin{\in}\;\Conid{Enhancing}\;\sigma}. We can thus specialize the
type of \ensuremath{\Varid{l}} to \ensuremath{\Varid{l}\mathbin{::}\Conid{IsoOptic}\;\sigma\;\Varid{a}\;\Varid{b}\;\Varid{a}\;\Varid{b}\to \Conid{IsoOptic}\;\sigma\;\Varid{a}\;\Varid{b}\;\Varid{s}\;\Varid{t}}.

We can therefore apply it to \ensuremath{\Varid{id}_{\Varid{op}}\mathbin{::}\Conid{IsoOptic}\;\sigma\;\Varid{a}\;\Varid{b}\;\Varid{a}\;\Varid{b}}:

\begin{hscode}\SaveRestoreHook
\column{B}{@{}>{\hspre}l<{\hspost}@{}}%
\column{E}{@{}>{\hspre}l<{\hspost}@{}}%
\>[B]{}\Varid{profToIso}\mathbin{::}\Conid{ProfOptic}\;\sigma\;\Varid{a}\;\Varid{b}\;\Varid{s}\;\Varid{t}\to \Conid{IsoOptic}\;\sigma\;\Varid{a}\;\Varid{b}\;\Varid{s}\;\Varid{t}{}\<[E]%
\\
\>[B]{}\Varid{profToIso}\;\Varid{l}\mathrel{=}\Varid{l}\;\Varid{id}_{\Varid{op}}{}\<[E]%
\ColumnHook
\end{hscode}\resethooks

We have not proved that \ensuremath{\Varid{profToIso}} is a morphism of optic families yet,
hence we cannot use results such as \textcite{endo_iso_is_singleton} to
prove that \ensuremath{\Varid{profToIso}} and \ensuremath{\Varid{isoToProf}} are inverses. \medskip

Let \ensuremath{\Varid{pab}\mathbin{::}\Varid{p}\;\Varid{a}\;\Varid{b}} for some \ensuremath{\Varid{p}\;\mathbin{\in}\;\Conid{Enhancing}\;\sigma}.

By parametricity,
\ensuremath{\Varid{flip}\;\Varid{isoToProf}\;\Varid{pab}\mathbin{::}\mathbf{forall}\;\Varid{s}\hsforall \;\Varid{t}\hsdot{\circ }{.\;}\Conid{IsoOptic}\;\sigma\;\Varid{a}\;\Varid{b}\;\Varid{s}\;\Varid{t}\to \Varid{p}\;\Varid{s}\;\Varid{t}} is a
natural transformation between the profunctors \ensuremath{\Conid{IsoOptic}\;\sigma\;\Varid{a}\;\Varid{b}} and
\ensuremath{\Varid{p}}. It is also a natural transformation between \ensuremath{\Conid{Enhancing}\;\sigma}
profunctors because it preserves \ensuremath{\Varid{enhance}}:

\begin{hscode}\SaveRestoreHook
\column{B}{@{}>{\hspre}l<{\hspost}@{}}%
\column{4}{@{}>{\hspre}l<{\hspost}@{}}%
\column{E}{@{}>{\hspre}l<{\hspost}@{}}%
\>[B]{}(\Varid{flip}\;\Varid{isoToProf}\;\Varid{pab}\hsdot{\circ }{.\;}\Varid{enhance}\;@\Varid{f})\;\Varid{l}{}\<[E]%
\\
\>[B]{}\mathrel{=}{}\<[4]%
\>[4]{}\Varid{flip}\;\Varid{isoToProf}\;\Varid{pab}\;(\Varid{enhance}\;@\Varid{f}\;\Varid{l}){}\<[E]%
\\
\>[B]{}\mathrel{=}\mbox{\commentbegin   \ensuremath{\mathbf{instance}\;\Conid{Enhancing}\;\sigma\;(\Conid{IsoOptic}\;\sigma)}   \commentend}{}\<[E]%
\\
\>[B]{}\hsindent{4}{}\<[4]%
\>[4]{}\Varid{isoToProf}\;(\Varid{enhanceIso}\;@\Varid{f}\ \circ_{op}\ \Varid{l})\;\Varid{pab}{}\<[E]%
\\
\>[B]{}\mathrel{=}\mbox{\commentbegin   \ensuremath{\Varid{isoToProf}} is a morphism of optic families   \commentend}{}\<[E]%
\\
\>[B]{}\hsindent{4}{}\<[4]%
\>[4]{}(\Varid{isoToProf}\;(\Varid{enhanceIso}\;@\Varid{f})\ \circ_{op}\ \Varid{isoToProf}\;\Varid{l})\;\Varid{pab}{}\<[E]%
\\
\>[B]{}\mathrel{=}\mbox{\commentbegin   \ensuremath{\mathbf{instance}\;\Conid{OpticFamily}\;(\Conid{ProfOptic}\;\sigma)}   \commentend}{}\<[E]%
\\
\>[B]{}\hsindent{4}{}\<[4]%
\>[4]{}(\Varid{isoToProf}\;(\Varid{enhanceIso}\;@\Varid{f})\hsdot{\circ }{.\;}\Varid{isoToProf}\;\Varid{l})\;\Varid{pab}{}\<[E]%
\\
\>[B]{}\mathrel{=}{}\<[4]%
\>[4]{}(\Varid{isoToProf}\;(\Conid{IsoOptic}\;\Varid{id}\;\Varid{id})\hsdot{\circ }{.\;}\Varid{isoToProf}\;\Varid{l})\;\Varid{pab}{}\<[E]%
\\
\>[B]{}\mathrel{=}\mbox{\commentbegin   definition of \ensuremath{\Varid{isoToProf}}   \commentend}{}\<[E]%
\\
\>[B]{}\hsindent{4}{}\<[4]%
\>[4]{}(\Varid{enhance}\;@\Varid{f}\hsdot{\circ }{.\;}\Varid{isoToProf}\;\Varid{l})\;\Varid{pab}{}\<[E]%
\\
\>[B]{}\mathrel{=}{}\<[4]%
\>[4]{}(\Varid{enhance}\;@\Varid{f}\hsdot{\circ }{.\;}\Varid{flip}\;\Varid{isoToProf}\;\Varid{pab})\;\Varid{l}{}\<[E]%
\ColumnHook
\end{hscode}\resethooks

Therefore, for any \ensuremath{\Varid{l}\mathbin{::}\mathbf{forall}\;\Varid{p}\hsforall \hsdot{\circ }{.\;}\Conid{Enhancing}\;\sigma\;\Varid{p}\Rightarrow \Varid{p}\;\Varid{a}\;\Varid{b}\to \Varid{p}\;\Varid{s}\;\Varid{t}},
\ensuremath{\Varid{flip}\;\Varid{isoToProf}\;\Varid{pab}\hsdot{\circ }{.\;}\Varid{l}\mathrel{=}\Varid{l}\hsdot{\circ }{.\;}\Varid{flip}\;\Varid{isoToProf}\;\Varid{pab}}.

Consequently:

\begin{hscode}\SaveRestoreHook
\column{B}{@{}>{\hspre}l<{\hspost}@{}}%
\column{4}{@{}>{\hspre}l<{\hspost}@{}}%
\column{E}{@{}>{\hspre}l<{\hspost}@{}}%
\>[B]{}\Varid{isoToProf}\;(\Varid{profToIso}\;\Varid{l})\;\Varid{pab}{}\<[E]%
\\
\>[B]{}\mathrel{=}{}\<[4]%
\>[4]{}\Varid{isoToProf}\;(\Varid{l}\;\Varid{id}_{\Varid{op}})\;\Varid{pab}{}\<[E]%
\\
\>[B]{}\mathrel{=}{}\<[4]%
\>[4]{}(\Varid{flip}\;\Varid{isoToProf}\;\Varid{pab}\hsdot{\circ }{.\;}\Varid{l})\;\Varid{id}_{\Varid{op}}{}\<[E]%
\\
\>[B]{}\mathrel{=}\mbox{\commentbegin   parametricity   \commentend}{}\<[E]%
\\
\>[B]{}\hsindent{4}{}\<[4]%
\>[4]{}(\Varid{l}\hsdot{\circ }{.\;}\Varid{flip}\;\Varid{isoToProf}\;\Varid{pab})\;\Varid{id}_{\Varid{op}}{}\<[E]%
\\
\>[B]{}\mathrel{=}{}\<[4]%
\>[4]{}(\Varid{l}\hsdot{\circ }{.\;}\Varid{isoToProf}\;\Varid{id}_{\Varid{op}})\;\Varid{pab}{}\<[E]%
\\
\>[B]{}\mathrel{=}{}\<[4]%
\>[4]{}(\Varid{l}\hsdot{\circ }{.\;}\Varid{isoToProf}\;(\Conid{IsoOptic}\;\Conid{Id}\;\Varid{unId}))\;\Varid{pab}{}\<[E]%
\\
\>[B]{}\mathrel{=}\mbox{\commentbegin   definition of \ensuremath{\Varid{isoToProf}}   \commentend}{}\<[E]%
\\
\>[B]{}\hsindent{4}{}\<[4]%
\>[4]{}(\Varid{l}\hsdot{\circ }{.\;}\Varid{dimap}\;\Conid{Id}\;\Varid{unId}\hsdot{\circ }{.\;}\Varid{enhance}\;@\Conid{Id})\;\Varid{pab}{}\<[E]%
\\
\>[B]{}\mathrel{=}\mbox{\commentbegin   \ensuremath{\Conid{Enhancing}} law   \commentend}{}\<[E]%
\\
\>[B]{}\hsindent{4}{}\<[4]%
\>[4]{}(\Varid{l}\hsdot{\circ }{.\;}\Varid{dimap}\;\Conid{Id}\;\Varid{unId}\hsdot{\circ }{.\;}\Varid{dimap}\;\Varid{unId}\;\Conid{Id})\;\Varid{pab}{}\<[E]%
\\
\>[B]{}\mathrel{=}{}\<[4]%
\>[4]{}\Varid{l}\;\Varid{pab}{}\<[E]%
\ColumnHook
\end{hscode}\resethooks

\medskip

Conversely:

\begin{hscode}\SaveRestoreHook
\column{B}{@{}>{\hspre}l<{\hspost}@{}}%
\column{4}{@{}>{\hspre}l<{\hspost}@{}}%
\column{E}{@{}>{\hspre}l<{\hspost}@{}}%
\>[B]{}\Varid{profToIso}\;(\Varid{isoToProf}\;(\Conid{IsoOptic}\;\alpha\;\beta)){}\<[E]%
\\
\>[B]{}\mathrel{=}{}\<[4]%
\>[4]{}\Varid{isoToProf}\;(\Conid{IsoOptic}\;\alpha\;\beta)\;\Varid{id}_{\Varid{op}}{}\<[E]%
\\
\>[B]{}\mathrel{=}{}\<[4]%
\>[4]{}(\Varid{injOptic}\;\alpha\;\beta\ \circ_{op}\ \Varid{enhanceOp})\;\Varid{id}_{\Varid{op}}{}\<[E]%
\\
\>[B]{}\mathrel{=}\mbox{\commentbegin   \ensuremath{\mathbf{instance}\;\Conid{OpticFamily}\;(\Conid{ProfOptic}\;\sigma)}   \commentend}{}\<[E]%
\\
\>[B]{}\hsindent{4}{}\<[4]%
\>[4]{}(\Varid{dimap}\;\alpha\;\beta\hsdot{\circ }{.\;}\Varid{enhance})\;\Varid{id}_{\Varid{op}}{}\<[E]%
\\
\>[B]{}\mathrel{=}\mbox{\commentbegin   \Cref{enhanceable_implies_enhancing}   \commentend}{}\<[E]%
\\
\>[B]{}\hsindent{4}{}\<[4]%
\>[4]{}\Varid{dimap}\;\alpha\;\beta\;(\Varid{enhanceOp}\ \circ_{op}\ \Varid{id}_{\Varid{op}}){}\<[E]%
\\
\>[B]{}\mathrel{=}{}\<[4]%
\>[4]{}\Varid{dimap}\;\alpha\;\beta\;\Varid{enhanceOp}{}\<[E]%
\\
\>[B]{}\mathrel{=}\mbox{\commentbegin   \ensuremath{\mathbf{instance}\;\Conid{Enhanceable}\;\sigma\;(\Conid{IsoOptic}\;\sigma)}   \commentend}{}\<[E]%
\\
\>[B]{}\hsindent{4}{}\<[4]%
\>[4]{}\Varid{dimap}\;\alpha\;\beta\;\Varid{enhanceIso}{}\<[E]%
\\
\>[B]{}\mathrel{=}{}\<[4]%
\>[4]{}\Varid{dimapOp}\;\alpha\;\beta\;\Varid{enhanceIso}{}\<[E]%
\\
\>[B]{}\mathrel{=}{}\<[4]%
\>[4]{}\Conid{IsoOptic}\;\alpha\;\beta{}\<[E]%
\ColumnHook
\end{hscode}\resethooks

\medskip

Hence \ensuremath{\Varid{isoToProf}} and \ensuremath{\Varid{profToIso}} are inverses. \medskip

Since \ensuremath{\Varid{isoToProf}} and \ensuremath{\Varid{profToIso}} are inverses and \ensuremath{\Varid{isoToProf}} is a
morphism of optic families, then by \textcite{invertible_opfam_morphism}
so is \ensuremath{\Varid{profToIso}}.

Therefore \ensuremath{\Conid{IsoOptic}\;\sigma} and \ensuremath{\Conid{ProfOptic}\;\sigma} are isomorphic as optic
families.

\end{proof}

\bigskip

\begin{rmk}

Intuitively, a value of type
\ensuremath{\mathbf{forall}\;\Varid{p}\hsforall \hsdot{\circ }{.\;}\Conid{Enhancing}\;\sigma\;\Varid{p}\Rightarrow \Varid{p}\;\Varid{a}\;\Varid{b}\to \Varid{p}\;\Varid{s}\;\Varid{t}} has to work for any
\ensuremath{\Varid{p}\;\mathbin{\in}\;\Conid{Enhancing}\;\sigma}, thus has to be defined using only \ensuremath{\Varid{dimap}} and
\ensuremath{\Varid{enhance}}.

Furthermore, by the profunctor and enhancing laws we know that
\ensuremath{\Varid{dimap}\;\Varid{f}\;\Varid{g}\hsdot{\circ }{.\;}\Varid{dimap}\;\Varid{k}\;\Varid{l}\mathrel{=}\Varid{dimap}\;(\Varid{k}\hsdot{\circ }{.\;}\Varid{f})\;(\Varid{g}\hsdot{\circ }{.\;}\Varid{l})},
\ensuremath{\Varid{enhance}\hsdot{\circ }{.\;}\Varid{enhance}\mathrel{=}\Varid{enhance}} (modulo wrapping/unwrapping), and
\ensuremath{\Varid{enhance}\hsdot{\circ }{.\;}\Varid{dimap}\;\Varid{f}\;\Varid{g}\mathrel{=}\Varid{dimap}\;(\Varid{fmap}\;\Varid{f})\;(\Varid{fmap}\;\Varid{g})\hsdot{\circ }{.\;}\Varid{enhance}}.

Using those three properties, it seems intuitive that given a (finite)
composition of \ensuremath{\Varid{enhance}}s and \ensuremath{\Varid{dimap}}s, we can put all the \ensuremath{\Varid{dimap}}s
after the \ensuremath{\Varid{enhance}}s, and then group them into one big
\ensuremath{\Varid{dimap}\;\Varid{f}\;\Varid{g}\hsdot{\circ }{.\;}\Varid{enhance}}.

This is exactly what the representation theorem proves: given a
profunctor optic \ensuremath{\Varid{l}}, the representation theorem states that
\ensuremath{\Varid{l}\mathrel{=}\Varid{isoToProf}\;(\Varid{profToIso}\;\Varid{l})}, which, given the definition of
\ensuremath{\Varid{isoToProf}}, means that there exists functions \ensuremath{\alpha} and \ensuremath{\beta} such
that \ensuremath{\Varid{l}\mathrel{=}\Varid{dimap}\;\alpha\;\beta\hsdot{\circ }{.\;}\Varid{enhance}}.

\end{rmk}

\bigskip

\hypertarget{the-derivation-theorem}{%
\section{The derivation theorem}\label{the-derivation-theorem}}

In \textcite{arrows-of-isomorphism-optics}, we have seen that for a
given functor monoid \ensuremath{\sigma}, the optic families that can be enhanced by
\ensuremath{\sigma} are of particular interest. If we reverse the point of view,
given an optic family \ensuremath{\Varid{op}}, we can also characterize which functor
monoids \ensuremath{\sigma} can enhance it.

From the definition of \ensuremath{\Conid{Enhancing}}, it is apparent that such a \ensuremath{\sigma}
must contain only functors for which there exists a value of type
\ensuremath{\mathbf{forall}\;\Varid{a}\hsforall \;\Varid{b}\hsdot{\circ }{.\;}\Varid{op}\;\Varid{a}\;\Varid{b}\;(\Varid{f}\;\Varid{a})\;(\Varid{f}\;\Varid{b})}. We capture those functors in the
following typeclass: \bigskip

\begin{defn}[Functorization of an optic family]

For a given optic family \ensuremath{\Varid{op}}, we call \emph{functorization} of \ensuremath{\Varid{op}} the
functor family \ensuremath{\Conid{Functorize}\;\Varid{op}} defined as follows:

\begin{hscode}\SaveRestoreHook
\column{B}{@{}>{\hspre}l<{\hspost}@{}}%
\column{5}{@{}>{\hspre}l<{\hspost}@{}}%
\column{E}{@{}>{\hspre}l<{\hspost}@{}}%
\>[B]{}\mathbf{class}\;(\Conid{OpticFamily}\;\Varid{op},\Conid{Functor}\;\Varid{f})\Rightarrow \Conid{Functorize}\;\Varid{op}\;\Varid{f}\;\mathbf{where}{}\<[E]%
\\
\>[B]{}\hsindent{5}{}\<[5]%
\>[5]{}\Varid{enhanceFop}\mathbin{::}\Varid{op}\;\Varid{a}\;\Varid{b}\;(\Varid{f}\;\Varid{a})\;(\Varid{f}\;\Varid{b}){}\<[E]%
\ColumnHook
\end{hscode}\resethooks

with the following laws:

\begin{itemize}
\tightlist
\item
  \ensuremath{\Varid{mapOptic}\;\Varid{enhanceFop}\mathrel{=}\Varid{fmap}}
\item
  \ensuremath{\Varid{enhanceFop}\;@\Varid{f}\ \circ_{op}\ \Varid{injOptic}\;\Varid{f}\;\Varid{g}\mathrel{=}\Varid{injOptic}\;(\Varid{fmap}\;\Varid{f})\;(\Varid{fmap}\;\Varid{g})\ \circ_{op}\ \Varid{enhanceFop}\;@\Varid{f}}
\item
  \ensuremath{\alpha\mathbin{::}\mathbf{forall}\;\Varid{a}\hsforall \hsdot{\circ }{.\;}\Varid{f}\;\Varid{a}\to \Varid{g}\;\Varid{a}\Rightarrow \Varid{injOptic}\;\Varid{id}\;\alpha\ \circ_{op}\ \Varid{enhanceFop}\;@\Varid{f}\mathrel{=}\Varid{injOptic}\;\alpha\;\Varid{id}\ \circ_{op}\ \Varid{enhanceFop}\;@\Varid{g}}
\end{itemize}

\end{defn}

\bigskip

\begin{prop}\label{functorize_is_fmonoid}

\ensuremath{\Conid{Functorize}\;\Varid{op}} is a functor monoid.

\end{prop}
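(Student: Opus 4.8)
The plan is to check, one at a time, the three conditions in the definition of a functor monoid for the constraint \ensuremath{\Conid{Functorize}\;\Varid{op}}. The first condition, that every instance of \ensuremath{\Conid{Functorize}\;\Varid{op}} is a \ensuremath{\Conid{Functor}}, is immediate, since \ensuremath{\Conid{Functor}} is declared as a superclass of \ensuremath{\Conid{Functorize}}. For the other two conditions I must exhibit an \ensuremath{\Varid{enhanceFop}} for \ensuremath{\Conid{Id}}, and one for \ensuremath{\Conid{Compose}\;\Varid{f}\;\Varid{g}} built from given ones for \ensuremath{\Varid{f}} and \ensuremath{\Varid{g}}, and verify the three \ensuremath{\Conid{Functorize}} laws for each. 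I would take \ensuremath{\Varid{enhanceFop}\;@\Conid{Id}\mathrel{=}\Varid{injOptic}\;\Varid{unId}\;\Conid{Id}} and \ensuremath{\Varid{enhanceFop}\;@(\Conid{Compose}\;\Varid{f}\;\Varid{g})\mathrel{=}\Varid{injOptic}\;\Varid{unCompose}\;\Conid{Compose}\ \circ_{op}\ \Varid{enhanceFop}\;@\Varid{f}\ \circ_{op}\ \Varid{enhanceFop}\;@\Varid{g}}. These are exactly the shapes prescribed for \ensuremath{\Varid{enhanceOp}} by the two recursion equations among the \ensuremath{\Conid{Enhanceable}} laws, and the three \ensuremath{\Conid{Functorize}} laws are exactly the remaining \ensuremath{\Conid{Enhanceable}} laws; so the task reduces to showing that those laws survive when the functor family is enlarged by \ensuremath{\Conid{Id}} and closed under \ensuremath{\Conid{Compose}}. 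The whole construction runs parallel to the \ensuremath{\Conid{OpticFamily}} instance for \ensuremath{\Conid{IsoOptic}\;\sigma} (\textcite{iso_is_opfam}) together with \textcite{compose_enhanceiso}.

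Two of the three laws are routine consequences of the \ensuremath{\Conid{OpticFamily}} axioms. For \ensuremath{\Varid{mapOptic}\;\Varid{enhanceFop}\mathrel{=}\Varid{fmap}}: in the \ensuremath{\Conid{Id}} case \ensuremath{\Varid{mapOptic}\;(\Varid{injOptic}\;\Varid{unId}\;\Conid{Id})\;\Varid{h}\mathrel{=}\Conid{Id}\circ \Varid{h}\circ \Varid{unId}}, which is \ensuremath{\Varid{fmap}} on \ensuremath{\Conid{Id}}; in the \ensuremath{\Conid{Compose}} case, pushing \ensuremath{\Varid{mapOptic}} through the two occurrences of \ensuremath{\circ_{op}} (functoriality of \ensuremath{\Varid{mapOptic}}) and through \ensuremath{\Varid{injOptic}}, and then using the law for \ensuremath{\Varid{f}} and \ensuremath{\Varid{g}}, recovers the definition of \ensuremath{\Varid{fmap}} on \ensuremath{\Conid{Compose}\;\Varid{f}\;\Varid{g}}. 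The \ensuremath{\Varid{injOptic}}-naturality law reduces in the same way, using the \ensuremath{\Varid{injOptic}} composition axiom, the elementary functor identities for \ensuremath{\Conid{Id}} and \ensuremath{\Conid{Compose}}, and associativity of \ensuremath{\circ_{op}}.

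The wedge condition \ensuremath{\Varid{injOptic}\;\Varid{id}\;\alpha\ \circ_{op}\ \Varid{enhanceFop}\;@\Varid{f}\mathrel{=}\Varid{injOptic}\;\alpha\;\Varid{id}\ \circ_{op}\ \Varid{enhanceFop}\;@\Varid{g}}, for a natural transformation \ensuremath{\alpha} from \ensuremath{\Varid{f}} to \ensuremath{\Varid{g}}, is the delicate case and the one I expect to be the main obstacle, since it couples the \ensuremath{\Varid{enhanceFop}}'s of two possibly different functors whenever one of them is newly introduced. When \ensuremath{\Varid{f}} or \ensuremath{\Varid{g}} is \ensuremath{\Conid{Id}} I would factor \ensuremath{\alpha} through \ensuremath{\Varid{unId}} (or \ensuremath{\Conid{Id}}) and, using the \ensuremath{\Varid{injOptic}} composition axiom, reduce both sides to a naturality square for \ensuremath{\alpha}, which holds by parametricity; when \ensuremath{\Varid{f}} is a composite I would unfold its witness and push the two components of \ensuremath{\alpha} past \ensuremath{\Varid{injOptic}\;\Varid{unCompose}\;\Conid{Compose}}, so that the identity bottoms out at instances of the wedge condition already available for the factors. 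The algebra at each step is routine; the only genuinely fiddly part is keeping track of which naturality square is being invoked, and in what order.
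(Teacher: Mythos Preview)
Your treatment of the first two \ensuremath{\Conid{Functorize}} laws matches the paper's: the instances you propose for \ensuremath{\Conid{Id}} and \ensuremath{\Conid{Compose}\;\Varid{f}\;\Varid{g}} are exactly those given, and the verification of the \ensuremath{\Varid{mapOptic}} and \ensuremath{\Varid{injOptic}}-naturality laws proceeds as you describe (the paper defers the equational details to an appendix, but they are the routine computations you outline).

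The gap is in the wedge condition. You sketch a case analysis on whether one of the two functors involved is \ensuremath{\Conid{Id}} or a \ensuremath{\Conid{Compose}}, but this cannot close. Take the case \ensuremath{\Varid{f}\mathrel{=}\Conid{Id}} with \ensuremath{\Varid{g}} an arbitrary pre-existing instance of \ensuremath{\Conid{Functorize}\;\Varid{op}} and \ensuremath{\alpha\mathbin{::}\mathbf{forall}\;\Varid{a}\hsforall \hsdot{\circ }{.\;}\Conid{Id}\;\Varid{a}\to \Varid{g}\;\Varid{a}}. The left-hand side of the wedge equation collapses, via the \ensuremath{\Varid{injOptic}} composition axiom, to a single \ensuremath{\Varid{injOptic}}. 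But the right-hand side is \ensuremath{\Varid{injOptic}\;\alpha\;\Varid{id}\ \circ_{op}\ \Varid{enhanceFop}\;@\Varid{g}}, and \ensuremath{\Varid{enhanceFop}\;@\Varid{g}} is an opaque value of type \ensuremath{\Varid{op}\;\Varid{a}\;\Varid{b}\;(\Varid{g}\;\Varid{a})\;(\Varid{g}\;\Varid{b})} about which the \ensuremath{\Conid{OpticFamily}} axioms alone say nothing. The only \ensuremath{\Conid{Functorize}} law that could relate it to an \ensuremath{\Varid{injOptic}} is the wedge condition itself, so your reduction is circular. The same obstruction arises for \ensuremath{\Conid{Compose}}: unfolding one side still leaves \ensuremath{\Varid{enhanceFop}} at an arbitrary functor on the other.

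The paper does not overcome this; it explicitly concedes it. After giving the two instances and checking the first two laws, it observes that the \ensuremath{\Conid{Id}} and \ensuremath{\Conid{Compose}} cases are mutually compatible but states that, without further information about the particular \ensuremath{\Varid{op}}, the wedge condition \emph{cannot be proved generically}: implementations for specific \ensuremath{\Varid{op}} are required to check it, and in practice it holds by parametricity once the concrete optic family is fixed. So the proposition is established only modulo this caveat, and your plan to discharge the wedge condition in full generality would fail.
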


\begin{proof}

By definition, every \ensuremath{\Varid{f}} in \ensuremath{\Conid{Functorize}\;\Varid{op}} is a functor.

\ensuremath{\Conid{Id}} is in \ensuremath{\Conid{Functorize}\;\Varid{op}}:

\begin{hscode}\SaveRestoreHook
\column{B}{@{}>{\hspre}l<{\hspost}@{}}%
\column{5}{@{}>{\hspre}l<{\hspost}@{}}%
\column{E}{@{}>{\hspre}l<{\hspost}@{}}%
\>[B]{}\mathbf{instance}\;\Conid{Functorize}\;\Varid{op}\;\Conid{Id}\;\mathbf{where}{}\<[E]%
\\
\>[B]{}\hsindent{5}{}\<[5]%
\>[5]{}\Varid{enhanceFop}\mathbin{::}\Varid{op}\;\Varid{a}\;\Varid{b}\;(\Conid{Id}\;\Varid{a})\;(\Conid{Id}\;\Varid{b}){}\<[E]%
\\
\>[B]{}\hsindent{5}{}\<[5]%
\>[5]{}\Varid{enhanceFop}\mathrel{=}\Varid{injOp}\;\Varid{unId}\;\Conid{Id}{}\<[E]%
\ColumnHook
\end{hscode}\resethooks

\ensuremath{\Conid{Functorize}\;\Varid{op}} is closed under composition:

\begin{hscode}\SaveRestoreHook
\column{B}{@{}>{\hspre}l<{\hspost}@{}}%
\column{5}{@{}>{\hspre}l<{\hspost}@{}}%
\column{9}{@{}>{\hspre}l<{\hspost}@{}}%
\column{11}{@{}>{\hspre}l<{\hspost}@{}}%
\column{E}{@{}>{\hspre}l<{\hspost}@{}}%
\>[B]{}\mathbf{instance}\;{}\<[11]%
\>[11]{}(\Conid{Functorize}\;\Varid{op}\;\Varid{f},\Conid{Functorize}\;\Varid{op}\;\Varid{g})\Rightarrow {}\<[E]%
\\
\>[11]{}\Conid{Functorize}\;\Varid{op}\;(\Conid{Compose}\;\Varid{f}\;\Varid{g})\;\mathbf{where}{}\<[E]%
\\
\>[B]{}\hsindent{5}{}\<[5]%
\>[5]{}\Varid{enhanceFop}\mathbin{::}\Varid{op}\;\Varid{a}\;\Varid{b}\;(\Conid{Compose}\;\Varid{f}\;\Varid{g}\;\Varid{a})\;(\Conid{Compose}\;\Varid{f}\;\Varid{g}\;\Varid{b}){}\<[E]%
\\
\>[B]{}\hsindent{5}{}\<[5]%
\>[5]{}\Varid{enhanceFop}\mathrel{=}\Varid{injOptic}\;\Varid{unCompose}\;\Conid{Compose}\ \circ_{op}\ {}\<[E]%
\\
\>[5]{}\hsindent{4}{}\<[9]%
\>[9]{}\Varid{enhanceFop}\;@\Varid{f}\ \circ_{op}\ \Varid{enhanceFop}\;@\Varid{g}{}\<[E]%
\ColumnHook
\end{hscode}\resethooks

See \textcite{proof_functorize_is_fmonoid} for the first two
\ensuremath{\Varid{enhanceFop}} laws.

Regarding the wedge condition (the third law), we observe that the \ensuremath{\Conid{Id}}
and \ensuremath{\Conid{Compose}} cases are compatible. Without knowing more about \ensuremath{\Varid{op}} it
is however impossible to prove that it holds generically.
Implementations for specific \ensuremath{\Varid{op}} will be required to check the law. In
practice, the wedge condition usually holds automatically by
parametricity.

\end{proof}

\bigskip

\begin{prop}

If \ensuremath{\Varid{op}\;\mathbin{\in}\;\Conid{Enhanceable}\;\sigma}, then \ensuremath{\sigma\;\subset\;\Conid{Functorize}\;\Varid{op}}

\end{prop}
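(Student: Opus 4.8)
The plan is to show that every functor which is an instance of \ensuremath{\sigma} automatically carries a \ensuremath{\Conid{Functorize}\;\Varid{op}} instance, by simply reusing the enhancement structure \ensuremath{\Varid{op}} already has. First, note that since \ensuremath{\sigma} is a functor monoid we have \ensuremath{\sigma\Rightarrow \Conid{Functor}}, so any \ensuremath{\Varid{f}\;\mathbin{\in}\;\sigma} is in particular a functor, hence eligible to be a member of \ensuremath{\Conid{Functorize}\;\Varid{op}}. I would then define, for such an \ensuremath{\Varid{f}}, \ensuremath{\Varid{enhanceFop}\;@\Varid{f}\mathrel{=}\Varid{enhanceOp}\;@\Varid{f}}; this is well-typed precisely because \ensuremath{\sigma\;\Varid{f}} holds, which makes \ensuremath{\Varid{enhanceOp}\;@\Varid{f}\mathbin{::}\Varid{op}\;\Varid{a}\;\Varid{b}\;(\Varid{f}\;\Varid{a})\;(\Varid{f}\;\Varid{b})} available from \ensuremath{\Varid{op}\;\mathbin{\in}\;\Conid{Enhanceable}\;\sigma}. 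In typeclass terms, this is the instance \ensuremath{(\Conid{Enhanceable}\;\sigma\;\Varid{op},\sigma\;\Varid{f})\Rightarrow \Conid{Functorize}\;\Varid{op}\;\Varid{f}}.

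It then remains only to check the three \ensuremath{\Conid{Functorize}} laws for this definition, and each is a direct transcription of an \ensuremath{\Conid{Enhanceable}} law: the law \ensuremath{\Varid{mapOptic}\;\Varid{enhanceFop}\mathrel{=}\Varid{fmap}} is exactly the extra \ensuremath{\Conid{Enhanceable}} law \ensuremath{\Varid{mapOptic}\;\Varid{enhanceOp}\mathrel{=}\Varid{fmap}}; the law relating \ensuremath{\Varid{enhanceFop}} with \ensuremath{\Varid{injOptic}\;\Varid{f}\;\Varid{g}} is the fourth \ensuremath{\Conid{Enhanceable}} law; and the wedge condition is the third \ensuremath{\Conid{Enhanceable}} law. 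So no real computation is required — the work is just matching up the statements. This yields \ensuremath{\Varid{f}\;\mathbin{\in}\;\Conid{Functorize}\;\Varid{op}} for every \ensuremath{\Varid{f}\;\mathbin{\in}\;\sigma}, i.e.\ \ensuremath{\sigma\;\subset\;\Conid{Functorize}\;\Varid{op}}.

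The only point worth a moment's care is coherence with the functor-monoid structure of \ensuremath{\Conid{Functorize}\;\Varid{op}} established in \textcite{functorize_is_fmonoid}: for \ensuremath{\Conid{Id}} and for a composite \ensuremath{\Conid{Compose}\;\Varid{f}\;\Varid{g}} with \ensuremath{\Varid{f},\Varid{g}\;\mathbin{\in}\;\sigma}, the \ensuremath{\Conid{Functorize}} instance coming from its closure clauses must agree with \ensuremath{\Varid{enhanceOp}\;@\Conid{Id}}, resp.\ \ensuremath{\Varid{enhanceOp}\;@(\Conid{Compose}\;\Varid{f}\;\Varid{g})}. This is immediate from the first two \ensuremath{\Conid{Enhanceable}} laws, which state precisely \ensuremath{\Varid{enhanceOp}\;@\Conid{Id}\mathrel{=}\Varid{injOptic}\;\Varid{unId}\;\Conid{Id}} and \ensuremath{\Varid{enhanceOp}\;@(\Conid{Compose}\;\Varid{f}\;\Varid{g})\mathrel{=}\Varid{injOptic}\;\Varid{unCompose}\;\Conid{Compose}\ \circ_{op}\ \Varid{enhanceOp}\;@\Varid{f}\ \circ_{op}\ \Varid{enhanceOp}\;@\Varid{g}}, matching the two closure clauses verbatim. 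Strictly, the inclusion \ensuremath{\sigma\;\subset\;\Conid{Functorize}\;\Varid{op}} is only a membership claim and does not even need this coherence, so there is no genuine obstacle: the proposition is essentially a restriction of the \ensuremath{\Conid{Enhanceable}} data along \ensuremath{\sigma\Rightarrow \Conid{Functor}}.
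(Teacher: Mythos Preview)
Your proposal is correct and follows essentially the same approach as the paper: define \ensuremath{\Varid{enhanceFop}\mathrel{=}\Varid{enhanceOp}} for any \ensuremath{\Varid{f}\;\mathbin{\in}\;\sigma}, and observe that the \ensuremath{\Conid{Functorize}} laws are a subset of the \ensuremath{\Conid{Enhanceable}} laws. The paper's proof is in fact even terser than yours, simply writing the instance and remarking that ``this instance is valid thanks to the \ensuremath{\Varid{enhanceOp}} laws, that are a superset of the \ensuremath{\Varid{enhanceFop}} laws''; your additional coherence discussion is a nice sanity check but, as you yourself note, not required for the bare inclusion claim.
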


\begin{proof}

Given \ensuremath{\Varid{f}\;\mathbin{\in}\;\sigma}, we easily have a \ensuremath{\Conid{Functorize}\;\Varid{op}\;\Varid{f}} instance:

\begin{hscode}\SaveRestoreHook
\column{B}{@{}>{\hspre}l<{\hspost}@{}}%
\column{5}{@{}>{\hspre}l<{\hspost}@{}}%
\column{E}{@{}>{\hspre}l<{\hspost}@{}}%
\>[B]{}\mathbf{instance}\;(\Conid{Enhanceable}\;\sigma\;\Varid{op},\sigma\;\Varid{f})\Rightarrow \Conid{Functorize}\;\Varid{op}\;\Varid{f}\;\mathbf{where}{}\<[E]%
\\
\>[B]{}\hsindent{5}{}\<[5]%
\>[5]{}\Varid{enhanceFop}\mathbin{::}\Varid{op}\;\Varid{a}\;\Varid{b}\;(\Varid{f}\;\Varid{a})\;(\Varid{f}\;\Varid{b}){}\<[E]%
\\
\>[B]{}\hsindent{5}{}\<[5]%
\>[5]{}\Varid{enhanceFop}\mathrel{=}\Varid{enhanceOp}{}\<[E]%
\ColumnHook
\end{hscode}\resethooks

This instance is valid thanks to the \ensuremath{\Varid{enhanceOp}} laws, that are a
superset of the \ensuremath{\Varid{enhanceFop}} laws.

\end{proof}

This instance allows us to define the following embedding:

\begin{hscode}\SaveRestoreHook
\column{B}{@{}>{\hspre}l<{\hspost}@{}}%
\column{E}{@{}>{\hspre}l<{\hspost}@{}}%
\>[B]{}\eta\mathbin{::}\Conid{Enhanceable}\;\sigma\;\Varid{op}\Rightarrow \Conid{IsoOptic}\;\sigma\leadsto \Conid{IsoOptic}\;(\Conid{Functorize}\;\Varid{op}){}\<[E]%
\\
\>[B]{}\eta\;(\Conid{IsoOptic}\;\alpha\;\beta)\mathrel{=}\Conid{IsoOptic}\;\alpha\;\beta{}\<[E]%
\ColumnHook
\end{hscode}\resethooks

\bigskip

\ensuremath{\Conid{Functorize}\;\Varid{op}} is therefore the most general functor monoid that could
provide a profunctor encoding for \ensuremath{\Varid{op}}.

In fact, as the next results show, \ensuremath{\Conid{Functorize}\;\Varid{op}} is always the right
choice of functor monoid to look for a profunctor encoding for \ensuremath{\Varid{op}}.
\bigskip \bigskip

\begin{prop}

\ensuremath{\Varid{enhanceFop}} defines a lawful instance of
\ensuremath{\Conid{Enhanceable}\;(\Conid{Functorize}\;\Varid{op})\;\Varid{op}}.

\end{prop}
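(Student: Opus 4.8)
The plan is to provide the obvious instance, namely $\Varid{enhanceOp} \mathrel{=} \Varid{enhanceFop}$, and then check the five $\Conid{Enhanceable}$ laws in turn; each one reduces immediately to something already established. Note first that the required $\Conid{OpticFamily}\;\Varid{op}$ and $\Conid{FunctorMonoid}\;(\Conid{Functorize}\;\Varid{op})$ constraints hold, the latter being exactly \Cref{functorize_is_fmonoid}. Moreover, as a typeclass constraint, membership $\Varid{f}\;\mathbin{\in}\;\Conid{Functorize}\;\Varid{op}$ literally supplies the witness $\Varid{enhanceFop}\mathbin{::}\Varid{op}\;\Varid{a}\;\Varid{b}\;(\Varid{f}\;\Varid{a})\;(\Varid{f}\;\Varid{b})$, so the method demanded by $\Conid{Enhanceable}\;(\Conid{Functorize}\;\Varid{op})\;\Varid{op}$ is available for every $\Varid{f}$ in the functor monoid.

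For the laws, split them into two groups. The wedge condition, the $\Varid{injOptic}$-compatibility law $\Varid{enhanceOp}\;@\Varid{f}\ \circ_{op}\ \Varid{injOptic}\;\Varid{f}\;\Varid{g}\mathrel{=}\Varid{injOptic}\;(\Varid{fmap}\;\Varid{f})\;(\Varid{fmap}\;\Varid{g})\ \circ_{op}\ \Varid{enhanceOp}\;@\Varid{f}$, and the $\Varid{mapOptic}\;\Varid{enhanceOp}\mathrel{=}\Varid{fmap}$ law are, after substituting $\Varid{enhanceOp}\mathrel{=}\Varid{enhanceFop}$, word-for-word the three defining laws of $\Conid{Functorize}\;\Varid{op}$, so there is nothing to prove for them. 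The remaining two, the $\Conid{Id}$ law $\Varid{enhanceOp}\;@\Conid{Id}\mathrel{=}\Varid{injOptic}\;\Varid{unId}\;\Conid{Id}$ and the $\Conid{Compose}$ law $\Varid{enhanceOp}\;@(\Conid{Compose}\;\Varid{f}\;\Varid{g})\mathrel{=}\Varid{injOptic}\;\Varid{unCompose}\;\Conid{Compose}\ \circ_{op}\ \Varid{enhanceOp}\;@\Varid{f}\ \circ_{op}\ \Varid{enhanceOp}\;@\Varid{g}$, are precisely the definitions of $\Varid{enhanceFop}$ for the $\Conid{Functorize}\;\Varid{op}\;\Conid{Id}$ and $\Conid{Functorize}\;\Varid{op}\;(\Conid{Compose}\;\Varid{f}\;\Varid{g})$ instances exhibited in the proof of \Cref{functorize_is_fmonoid} (the $\Varid{injOp}$ there being a typo for $\Varid{injOptic}$). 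Hence all five hold by construction.

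The one point worth making explicit is that the $\Conid{Enhanceable}$ wedge law quantifies over $\Varid{f},\Varid{g}\;\mathbin{\in}\;\Conid{Functorize}\;\Varid{op}$, which is exactly the range over which $\Conid{Functorize}\;\Varid{op}$'s own wedge law is stated, so there is no gap to bridge there. I expect no genuine obstacle: the result is a repackaging of the content already produced when verifying that $\Conid{Functorize}\;\Varid{op}$ is a functor monoid, and the only mild subtlety is that the verification relies on the $\Conid{Id}$ and $\Conid{Compose}$ instances of $\Conid{Functorize}\;\Varid{op}$ being the specific ones chosen there — which is fine, since those witnesses were chosen precisely so as to satisfy the two ``monoidal'' $\Conid{Enhanceable}$ laws.
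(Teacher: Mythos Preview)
Your proposal is correct and matches the paper's own proof essentially verbatim: the paper likewise observes that the $\Conid{Id}$ and $\Conid{Compose}$ laws are the very definitions of the $\Conid{Functorize}\;\Varid{op}\;\Conid{Id}$ and $\Conid{Functorize}\;\Varid{op}\;(\Conid{Compose}\;\Varid{f}\;\Varid{g})$ instances, while the remaining three $\Conid{Enhanceable}$ laws coincide with the three $\Conid{Functorize}$ laws. Your extra remarks about the wedge quantification range and the reliance on the specific $\Conid{Id}$/$\Conid{Compose}$ instances are accurate clarifications the paper leaves implicit.
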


\begin{proof}

The first two of the \ensuremath{\Conid{Enhanceable}} laws are satisfied by the definition
of \ensuremath{\mathbf{instance}\;\Conid{Functorize}\;\Varid{op}\;\Conid{Id}} and
\ensuremath{\mathbf{instance}\;\Conid{Functorize}\;\Varid{op}\;(\Conid{Compose}\;\Varid{f}\;\Varid{g})}. The last three of the
\ensuremath{\Conid{Enhanceable}} laws are enforced by the \ensuremath{\Conid{Functorize}} laws.

\end{proof}

We call \ensuremath{\Varid{unfunctorize}} the specialization of \ensuremath{\Varid{enhanceToArrow}} to the
\ensuremath{\Conid{Functorize}\;\Varid{op}} functor monoid:

\begin{hscode}\SaveRestoreHook
\column{B}{@{}>{\hspre}l<{\hspost}@{}}%
\column{E}{@{}>{\hspre}l<{\hspost}@{}}%
\>[B]{}\Varid{unfunctorize}\mathbin{::}\Conid{IsoOptic}\;(\Conid{Functorize}\;\Varid{op})\leadsto \Varid{op}{}\<[E]%
\\
\>[B]{}\Varid{unfunctorize}\mathrel{=}\Varid{enhanceToArrow}{}\<[E]%
\\
\>[B]{}\mbox{\onelinecomment  \ensuremath{\Varid{unfunctorize}\;(\Conid{IsoOptic}\;\alpha\;\beta)\mathrel{=}\Varid{injOptic}\;\alpha\;\beta\ \circ_{op}\ \Varid{enhanceFop}}}{}\<[E]%
\ColumnHook
\end{hscode}\resethooks

\bigskip
\bigskip

\begin{thm}[Functorization theorem]\label{functorization_theorem}

If \ensuremath{\Varid{op}} has a profunctor encoding, then \ensuremath{\Varid{op}} has a profunctor encoding
to \ensuremath{\Conid{Functorize}\;\Varid{op}} too.

\end{thm}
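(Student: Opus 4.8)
The plan is to route everything through iso optics, where the rigidity corollary \Cref{endo_iso_is_singleton} makes the inverse properties automatic. First I would unpack the hypothesis: that \ensuremath{\Varid{op}} has a profunctor encoding means there is a functor monoid \ensuremath{\sigma} together with mutually inverse morphisms of optic families \ensuremath{\Varid{encodeProfOptic}\mathbin{::}\Varid{op}\leadsto \Conid{ProfOptic}\;\sigma} and \ensuremath{\Varid{decodeProfOptic}}. Composing \ensuremath{\Varid{encodeProfOptic}} with the isomorphism \ensuremath{\Conid{ProfOptic}\;\sigma\cong \Conid{IsoOptic}\;\sigma} of \Cref{representation_theorem} yields an isomorphism of optic families \ensuremath{\psi\mathbin{::}\Varid{op}\leadsto \Conid{IsoOptic}\;\sigma}, whose inverse is again an optic-family morphism by \Cref{invertible_opfam_morphism}.

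Next I would show \ensuremath{\Varid{op}\in \Conid{Enhanceable}\;\sigma}: by \Cref{enhanceable_profoptic} we have \ensuremath{\Conid{ProfOptic}\;\sigma\in \Conid{Enhanceable}\;\sigma}, and by \Cref{enhanceable_preserved_by_opfam_maps} this structure transports along the morphism \ensuremath{\Varid{decodeProfOptic}}. Since \ensuremath{\Varid{op}\in \Conid{Enhanceable}\;\sigma}, the proposition stated just before this theorem gives \ensuremath{\sigma\subset \Conid{Functorize}\;\Varid{op}}, and hence the embedding \ensuremath{\eta\mathbin{::}\Conid{IsoOptic}\;\sigma\leadsto \Conid{IsoOptic}\;(\Conid{Functorize}\;\Varid{op})}; I put \ensuremath{\theta = \eta\circ \psi\mathbin{::}\Varid{op}\leadsto \Conid{IsoOptic}\;(\Conid{Functorize}\;\Varid{op})}. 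In the opposite direction I take \ensuremath{\Varid{unfunctorize}\mathbin{::}\Conid{IsoOptic}\;(\Conid{Functorize}\;\Varid{op})\leadsto \Varid{op}}, which is a morphism of optic families by \Cref{enhancetoarrow_is_opfam_morphism}.

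It then suffices to check that \ensuremath{\theta} and \ensuremath{\Varid{unfunctorize}} are mutually inverse, and here no calculation is needed. The composite \ensuremath{\theta\circ \Varid{unfunctorize}} is an optic-family endomorphism of \ensuremath{\Conid{IsoOptic}\;(\Conid{Functorize}\;\Varid{op})} — recall \ensuremath{\Conid{Functorize}\;\Varid{op}} is a functor monoid by \Cref{functorize_is_fmonoid} — so \Cref{endo_iso_is_singleton} forces it to be the identity. For \ensuremath{\Varid{unfunctorize}\circ \theta\mathbin{::}\Varid{op}\leadsto \Varid{op}} I conjugate by \ensuremath{\psi}: the map \ensuremath{\psi\circ (\Varid{unfunctorize}\circ \theta)\circ \psi^{-1}} is an optic-family endomorphism of \ensuremath{\Conid{IsoOptic}\;\sigma}, hence the identity by \Cref{endo_iso_is_singleton} again, so \ensuremath{\Varid{unfunctorize}\circ \theta = \Varid{id}}. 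This gives \ensuremath{\Varid{op}\cong \Conid{IsoOptic}\;(\Conid{Functorize}\;\Varid{op})} as optic families; composing once more with \Cref{representation_theorem} produces an isomorphism of optic families \ensuremath{\Varid{op}\cong \Conid{ProfOptic}\;(\Conid{Functorize}\;\Varid{op})}, which is precisely a profunctor encoding of \ensuremath{\Varid{op}} to \ensuremath{\Conid{Functorize}\;\Varid{op}}.

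The main obstacle is conceptual rather than computational. One must resist expecting \ensuremath{\eta} to be an isomorphism — it generally is not, since \ensuremath{\Conid{Functorize}\;\Varid{op}} can be strictly larger than \ensuremath{\sigma} — and instead observe that both round-trips collapse to identities purely by the rigidity of endomorphisms of iso optics, once one has been careful that every arrow in sight (\ensuremath{\psi}, \ensuremath{\psi^{-1}}, \ensuremath{\eta}, \ensuremath{\theta}, \ensuremath{\Varid{unfunctorize}}) really is a morphism of optic families, so that \Cref{invertible_opfam_morphism}, \Cref{enhanceable_preserved_by_opfam_maps} and \Cref{endo_iso_is_singleton} legitimately apply.
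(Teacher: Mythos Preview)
Your proposal is correct and follows essentially the same route as the paper: pass through \ensuremath{\Conid{IsoOptic}\;\sigma} via the representation theorem, use \ensuremath{\eta} and \ensuremath{\Varid{unfunctorize}}, and collapse both round-trips with \Cref{endo_iso_is_singleton}. Your \ensuremath{\psi} is the paper's \ensuremath{\theta^{-1}}, your \ensuremath{\theta=\eta\circ\psi} is the paper's \ensuremath{\eta\circ\theta^{-1}}, and your conjugation argument for the second identity is just a repackaging of the paper's direct application of \Cref{endo_iso_is_singleton} to \ensuremath{\theta^{-1}\circ\Varid{unfunctorize}\circ\eta}; you are also slightly more careful than the paper in explicitly establishing \ensuremath{\Varid{op}\in\Conid{Enhanceable}\;\sigma} before invoking \ensuremath{\eta}.
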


\begin{proof}

Assume \ensuremath{\Varid{op}} has a profunctor encoding via the functor monoid \ensuremath{\sigma}.
\ensuremath{\Varid{op}} is isomorphic (as optic families) to \ensuremath{\Conid{ProfOptic}\;\sigma}, thus to
\ensuremath{\Conid{IsoOptic}\;\sigma} using the representation theorem.

Let \ensuremath{\theta\mathbin{::}\Conid{IsoOptic}\;\sigma\leadsto \Varid{op}} be this isomorphism.

\ensuremath{\eta\hsdot{\circ }{.\;}\theta^{-1}\hsdot{\circ }{.\;}\Varid{unfunctorize}\mathbin{::}\Conid{IsoOptic}\;(\Conid{Functorize}\;\Varid{op})\leadsto \Conid{IsoOptic}\;(\Conid{Functorize}\;\Varid{op})}.
By \textcite{endo_iso_is_singleton},
\ensuremath{(\eta\hsdot{\circ }{.\;}\theta^{-1})\hsdot{\circ }{.\;}\Varid{unfunctorize}\mathrel{=}\Varid{id}}.

Similarly,
\ensuremath{\theta^{-1}\hsdot{\circ }{.\;}\Varid{unfunctorize}\hsdot{\circ }{.\;}\eta\mathbin{::}\Conid{IsoOptic}\;\sigma\leadsto \Conid{IsoOptic}\;\sigma},
therefore \ensuremath{\theta^{-1}\hsdot{\circ }{.\;}\Varid{unfunctorize}\hsdot{\circ }{.\;}\eta\mathrel{=}\Varid{id}}

Thus \ensuremath{\Varid{unfunctorize}\hsdot{\circ }{.\;}\eta\mathrel{=}\theta}, and
\ensuremath{\Varid{unfunctorize}\hsdot{\circ }{.\;}(\eta\hsdot{\circ }{.\;}\theta^{-1})\mathrel{=}\Varid{id}}.

So \ensuremath{\Varid{unfunctorize}} is an isomorphism, and
\ensuremath{\Varid{op}\cong \Conid{IsoOptic}\;(\Conid{Functorize}\;\Varid{op})\cong \Conid{ProfOptic}\;(\Conid{Functorize}\;\Varid{op})}.

\end{proof}

\bigskip

\begin{thm}[Derivation theorem]\label{derivation_theorem}

\ensuremath{\Varid{op}} has a profunctor encoding if and only if
\ensuremath{\Varid{unfunctorize}\mathbin{::}\Conid{IsoOptic}\;(\Conid{Functorize}\;\Varid{op})\leadsto \Varid{op}} has an inverse (that
need not be a morphism of optic families).

\end{thm}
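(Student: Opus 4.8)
The plan is to prove the two implications separately, each as a short corollary of the representation theorem and the functorization theorem.

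For the forward direction, I would observe that it is already essentially contained in the proof of \Cref{functorization_theorem}. That proof does not merely assert $\Varid{op} \cong \Conid{ProfOptic}\;(\Conid{Functorize}\;\Varid{op})$; it exhibits $\Varid{unfunctorize} :: \Conid{IsoOptic}\;(\Conid{Functorize}\;\Varid{op}) \leadsto \Varid{op}$ as an isomorphism, by constructing $\eta \circ \theta^{-1}$ as a two-sided inverse. So from a profunctor encoding of $\Varid{op}$ we immediately obtain an inverse of $\Varid{unfunctorize}$, and nothing more needs to be said. (We do not even use that this inverse respects the optic-family structure, though in fact it does.)

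For the converse, suppose we are handed only a plain function $\Varid{unfunctorize}^{-1} : \Conid{IsoOptic}\;(\Conid{Functorize}\;\Varid{op}) \to \Varid{op}$ inverting $\Varid{unfunctorize}$, with no a priori guarantee that it is a morphism of optic families. First I would recall that $\Varid{unfunctorize}$ itself \emph{is} a morphism of optic families: it is the specialization of $\Varid{enhanceToArrow}$ to the functor monoid $\Conid{Functorize}\;\Varid{op}$ --- which is a genuine functor monoid by \Cref{functorize_is_fmonoid}, with $\Varid{enhanceFop}$ a lawful $\Conid{Enhanceable}$ instance --- and $\Varid{enhanceToArrow}$ is a morphism of optic families by \Cref{enhancetoarrow_is_opfam_morphism}. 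Then I would invoke \Cref{invertible_opfam_morphism}: the inverse of an invertible morphism of optic families is again a morphism of optic families. Hence $\Varid{unfunctorize}$ is an isomorphism of optic families, so $\Varid{op} \cong \Conid{IsoOptic}\;(\Conid{Functorize}\;\Varid{op})$; composing this with the isomorphism $\Conid{IsoOptic}\;(\Conid{Functorize}\;\Varid{op}) \cong \Conid{ProfOptic}\;(\Conid{Functorize}\;\Varid{op})$ supplied by \Cref{representation_theorem} yields a profunctor encoding of $\Varid{op}$, namely via $\sigma = \Conid{Functorize}\;\Varid{op}$.

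The hard part is not a calculation --- both directions are bookkeeping on top of the earlier theorems --- but rather keeping track of the asymmetry built into the statement: the hypothesis of the converse supplies only a set-theoretic (function) inverse, whereas a profunctor \emph{encoding} demands an isomorphism \emph{of optic families}. The key realization is that this gap costs nothing, because \Cref{invertible_opfam_morphism} bridges it automatically, and it does so precisely because the map being inverted, $\Varid{unfunctorize}$, is already known to be structure-preserving --- a fact that rests on \Cref{enhancetoarrow_is_opfam_morphism} and is independent of any hypothesis. I would also double-check that no step in the chain secretly uses structure-preservation of $\Varid{unfunctorize}^{-1}$ before it has been deduced.
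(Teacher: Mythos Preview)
Your proposal is correct and follows essentially the same route as the paper: the forward direction is delegated to \Cref{functorization_theorem}, and the converse combines \Cref{invertible_opfam_morphism} (to upgrade the bare inverse to a morphism of optic families) with the representation theorem to conclude $\Varid{op} \cong \Conid{IsoOptic}\;(\Conid{Functorize}\;\Varid{op}) \cong \Conid{ProfOptic}\;(\Conid{Functorize}\;\Varid{op})$. You are simply more explicit than the paper about why $\Varid{unfunctorize}$ is a morphism in the first place (via \Cref{enhancetoarrow_is_opfam_morphism}), which is a point the paper leaves implicit.
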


\begin{proof}

By \textcite{functorization_theorem}, if \ensuremath{\Varid{op}} has a profunctor encoding
then \ensuremath{\Varid{unfunctorize}} has an inverse.

By \textcite{invertible_opfam_morphism}, if \ensuremath{\Varid{unfunctorize}} has an
inverse it is automatically a morphism of optic families. Hence
\ensuremath{\Varid{op}\cong \Conid{IsoOptic}\;(\Conid{Functorize}\;\Varid{op})\cong \Conid{ProfOptic}\;(\Conid{Functorize}\;\Varid{op})}.

\end{proof}

\hypertarget{tying-it-all-up-a-new-look-at-profunctor-encodings}{%
\chapter{Tying it all up: a new look at profunctor
encodings}\label{tying-it-all-up-a-new-look-at-profunctor-encodings}}

The theorems derived in
\cref{isomorphism-optics,two-theorems-about-profunctor-optics} provide a
framework to study properties of optic families. In this chapter, we
apply this framework to (re)derive properties of some common optic
families, as well as of a new one.

Interestingly, most of the derivations are done by simply ``following
the types''. The main remaining non-mechanical step is the definition of
the inverse to \ensuremath{\Varid{unfunctorize}}, which requires the choice of a general
enough functor.

The overly mechanical proofs will be omitted for brevity.

\hypertarget{rederiving-profunctor-encodings-for-common-optics}{%
\section{Rederiving profunctor encodings for common
optics}\label{rederiving-profunctor-encodings-for-common-optics}}

In this section, we look at some common optic families using the new
framework.

We rederive their usual profunctor encoding in a semi-mechanical way and
get some properties for free, notably that the encoding preserves the
optic family structure. We also get useful equivalent representations
thanks to iso optics.

\hypertarget{adapter}{%
\subsection{Adapter}\label{adapter}}

\hypertarget{definition-2}{%
\subsubsection{Definition}\label{definition-2}}

\begin{hscode}\SaveRestoreHook
\column{B}{@{}>{\hspre}l<{\hspost}@{}}%
\column{5}{@{}>{\hspre}l<{\hspost}@{}}%
\column{E}{@{}>{\hspre}l<{\hspost}@{}}%
\>[B]{}\mathbf{data}\;\Conid{Adapter}\;\Varid{a}\;\Varid{b}\;\Varid{s}\;\Varid{t}\mathrel{=}\Conid{Adapter}\;(\Varid{s}\to \Varid{a})\;(\Varid{b}\to \Varid{t}){}\<[E]%
\\[\blanklineskip]%
\>[B]{}\mathbf{instance}\;\Conid{OpticFamily}\;\Conid{Adapter}\;\mathbf{where}{}\<[E]%
\\
\>[B]{}\hsindent{5}{}\<[5]%
\>[5]{}(\Conid{Adapter}\;\Varid{f}\;\Varid{g})\ \circ_{op}\ (\Conid{Adapter}\;\Varid{f'}\;\Varid{g'})\mathrel{=}\Conid{Adapter}\;(\Varid{f'}\hsdot{\circ }{.\;}\Varid{f})\;(\Varid{g}\hsdot{\circ }{.\;}\Varid{g'}){}\<[E]%
\\
\>[B]{}\hsindent{5}{}\<[5]%
\>[5]{}\Varid{injOptic}\;\Varid{f}\;\Varid{g}\mathrel{=}\Conid{Adapter}\;\Varid{f}\;\Varid{g}{}\<[E]%
\\
\>[B]{}\hsindent{5}{}\<[5]%
\>[5]{}\Varid{mapOptic}\;(\Conid{Adapter}\;\Varid{f}\;\Varid{g})\;\Varid{h}\mathrel{=}\Varid{g}\hsdot{\circ }{.\;}\Varid{h}\hsdot{\circ }{.\;}\Varid{f}{}\<[E]%
\ColumnHook
\end{hscode}\resethooks

\hypertarget{analyzing-the-functorization}{%
\subsubsection{Analyzing the
functorization}\label{analyzing-the-functorization}}

Let's study \ensuremath{\Conid{Functorize}\;\Conid{Adapter}}:

Let \ensuremath{\Varid{f}\;\mathbin{\in}\;\Conid{Functorize}\;\Conid{Adapter}} and \ensuremath{\Conid{Adapter}\;\alpha\;\beta\mathrel{=}\Varid{enhanceFop}\;@\Varid{f}}.

\begin{hscode}\SaveRestoreHook
\column{B}{@{}>{\hspre}l<{\hspost}@{}}%
\column{E}{@{}>{\hspre}l<{\hspost}@{}}%
\>[B]{}\Varid{enhanceOp}\;@\Varid{f}\mathbin{::}\mathbf{forall}\;\Varid{a}\hsforall \;\Varid{b}\hsdot{\circ }{.\;}\Conid{Adapter}\;\Varid{a}\;\Varid{b}\;(\Varid{f}\;\Varid{a})\;(\Varid{f}\;\Varid{b}){}\<[E]%
\\
\>[B]{}\alpha\mathbin{::}\mathbf{forall}\;\Varid{a}\hsforall \hsdot{\circ }{.\;}\Varid{f}\;\Varid{a}\to \Varid{a}{}\<[E]%
\\
\>[B]{}\beta\mathbin{::}\mathbf{forall}\;\Varid{b}\hsforall \hsdot{\circ }{.\;}\Varid{b}\to \Varid{f}\;\Varid{b}{}\<[E]%
\ColumnHook
\end{hscode}\resethooks

Since \ensuremath{\Varid{mapOptic}\;(\Varid{enhanceOp}\;@\Varid{f})\mathrel{=}\Varid{fmap}}, we get:

\begin{hscode}\SaveRestoreHook
\column{B}{@{}>{\hspre}l<{\hspost}@{}}%
\column{E}{@{}>{\hspre}l<{\hspost}@{}}%
\>[B]{}\Varid{fmap}\;\Varid{id}{}\<[E]%
\\
\>[B]{}\mathrel{=}\Varid{mapOptic}\;(\Varid{enhanceOp}\;@\Varid{f})\;\Varid{id}{}\<[E]%
\\
\>[B]{}\mathrel{=}\beta\hsdot{\circ }{.\;}\Varid{id}\hsdot{\circ }{.\;}\alpha{}\<[E]%
\ColumnHook
\end{hscode}\resethooks

So \ensuremath{\beta\hsdot{\circ }{.\;}\alpha\mathrel{=}\Varid{id}}.

On the other hand, \ensuremath{\alpha\hsdot{\circ }{.\;}\beta\mathbin{::}\mathbf{forall}\;\Varid{a}\hsforall \hsdot{\circ }{.\;}\Varid{a}\to \Varid{a}}. By parametricity,
the only inhabitant of this type is \ensuremath{\Varid{id}}, and thus \ensuremath{\alpha\hsdot{\circ }{.\;}\beta\mathrel{=}\Varid{id}}.

Therefore \ensuremath{\alpha} and \ensuremath{\beta} are inverses, and \ensuremath{\Varid{f}\cong \Conid{Id}}.

\hypertarget{deriving-a-profunctor-encoding}{%
\subsubsection{Deriving a profunctor
encoding}\label{deriving-a-profunctor-encoding}}

\begin{hscode}\SaveRestoreHook
\column{B}{@{}>{\hspre}l<{\hspost}@{}}%
\column{E}{@{}>{\hspre}l<{\hspost}@{}}%
\>[B]{}\Varid{adapterToIso}\mathbin{::}\Conid{Adapter}\leadsto \Conid{IsoOptic}\;(\Conid{Functorize}\;\Conid{Adapter}){}\<[E]%
\\
\>[B]{}\Varid{adapterToIso}\;(\Conid{Adapter}\;\Varid{f}\;\Varid{g})\mathrel{=}\Varid{injOptic}\;\Varid{f}\;\Varid{g}{}\<[E]%
\ColumnHook
\end{hscode}\resethooks

\smallskip

\begin{hscode}\SaveRestoreHook
\column{B}{@{}>{\hspre}l<{\hspost}@{}}%
\column{E}{@{}>{\hspre}l<{\hspost}@{}}%
\>[B]{}\Varid{adapterToIso}\;(\Varid{unfunctorize}\;(\Conid{IsoOptic}\;\alpha\;\beta)){}\<[E]%
\\
\>[B]{}\mathrel{=}\Varid{adapterToIso}\;(\Varid{injOptic}\;\alpha\;\beta\ \circ_{op}\ \Varid{enhanceFop}){}\<[E]%
\\
\>[B]{}\mathrel{=}\Varid{adapterToIso}\;(\Conid{Adapter}\;\alpha\;\beta\ \circ_{op}\ \Conid{Adapter}\;\Varid{f}\;\Varid{g}){}\<[E]%
\\
\>[B]{}\mathrel{=}\Varid{adapterToIso}\;(\Conid{Adapter}\;(\Varid{f}\hsdot{\circ }{.\;}\alpha)\;(\beta\hsdot{\circ }{.\;}\Varid{g})){}\<[E]%
\\
\>[B]{}\mathrel{=}\Varid{injOptic}\;(\Varid{f}\hsdot{\circ }{.\;}\alpha)\;(\beta\hsdot{\circ }{.\;}\Varid{g}){}\<[E]%
\\
\>[B]{}\mathrel{=}\Conid{IsoOptic}\;(\Conid{Id}\hsdot{\circ }{.\;}\Varid{f}\hsdot{\circ }{.\;}\alpha)\;(\beta\hsdot{\circ }{.\;}\Varid{g}\hsdot{\circ }{.\;}\Varid{unId}){}\<[E]%
\\
\>[B]{}\mathrel{=}\Conid{IsoOptic}\;(\Varid{g}\hsdot{\circ }{.\;}\Varid{unId}\hsdot{\circ }{.\;}\Conid{Id}\hsdot{\circ }{.\;}\Varid{f}\hsdot{\circ }{.\;}\alpha)\;\beta{}\<[E]%
\\
\>[B]{}\mathrel{=}\Conid{IsoOptic}\;\alpha\;\beta{}\<[E]%
\\[\blanklineskip]%
\>[B]{}\Varid{unfunctorize}\;(\Varid{adapterToIso}\;(\Conid{Adapter}\;\Varid{f}\;\Varid{g})){}\<[E]%
\\
\>[B]{}\mathrel{=}\Varid{unfunctorize}\;(\Varid{injOptic}\;\Varid{f}\;\Varid{g}){}\<[E]%
\\
\>[B]{}\mathrel{=}\Varid{injOptic}\;\Varid{f}\;\Varid{g}{}\<[E]%
\\
\>[B]{}\mathrel{=}\Conid{Adapter}\;\Varid{f}\;\Varid{g}{}\<[E]%
\ColumnHook
\end{hscode}\resethooks

By \textcite{derivation_theorem}, \ensuremath{\Conid{ProfOptic}\;(\Conid{Functorize}\;\Conid{Adapter})} is a
profunctor encoding for \ensuremath{\Conid{Adapter}}.

Since every functor in \ensuremath{\Conid{Functorize}\;\Conid{Adapter}} is isomorphic to \ensuremath{\Conid{Id}}, using
the wedge law we get that every instance of
\ensuremath{\Conid{Enhancing}\;(\Conid{Functorize}\;\Conid{Adapter})} is uniquely determined by
\ensuremath{\Varid{enhance}\;@\Conid{Id}}. Since \ensuremath{\Varid{enhance}\;@\Conid{Id}\mathrel{=}\Varid{dimap}\;\Varid{unId}\;\Conid{Id}},
\ensuremath{\Conid{Enhancing}\;(\Conid{Functorize}\;\Conid{Adapter})} adds no constraints to \ensuremath{\Conid{Profunctor}},
and therefore
\ensuremath{\Conid{Adapter}\;\Varid{a}\;\Varid{b}\;\Varid{s}\;\Varid{t}\cong \mathbf{forall}\;\Varid{p}\hsforall \hsdot{\circ }{.\;}\Conid{Profunctor}\;\Varid{p}\Rightarrow \Varid{p}\;\Varid{a}\;\Varid{b}\to \Varid{p}\;\Varid{s}\;\Varid{t}}.

\hypertarget{lens}{%
\subsection{Lens}\label{lens}}

\hypertarget{definition-3}{%
\subsubsection{Definition}\label{definition-3}}

\begin{hscode}\SaveRestoreHook
\column{B}{@{}>{\hspre}l<{\hspost}@{}}%
\column{E}{@{}>{\hspre}l<{\hspost}@{}}%
\>[B]{}\mathbf{data}\;\Conid{Lens}\;\Varid{a}\;\Varid{b}\;\Varid{s}\;\Varid{t}\mathrel{=}\Conid{Lens}\;\{\mskip1.5mu \Varid{get}\mathbin{::}\Varid{s}\to \Varid{a},\Varid{put}\mathbin{::}\Varid{b}\to \Varid{s}\to \Varid{t}\mskip1.5mu\}{}\<[E]%
\ColumnHook
\end{hscode}\resethooks

\smallskip

\begin{hscode}\SaveRestoreHook
\column{B}{@{}>{\hspre}l<{\hspost}@{}}%
\column{5}{@{}>{\hspre}l<{\hspost}@{}}%
\column{9}{@{}>{\hspre}l<{\hspost}@{}}%
\column{E}{@{}>{\hspre}l<{\hspost}@{}}%
\>[B]{}\mathbf{instance}\;\Conid{OpticFamily}\;\Conid{Lens}\;\mathbf{where}{}\<[E]%
\\
\>[B]{}\hsindent{5}{}\<[5]%
\>[5]{}\Varid{injOptic}\;\Varid{f}\;\Varid{g}\mathrel{=}\Conid{Lens}\;\Varid{f}\;(\Varid{const}\hsdot{\circ }{.\;}\Varid{g}){}\<[E]%
\\
\>[B]{}\hsindent{5}{}\<[5]%
\>[5]{}(\circ_{op})\;(\Conid{Lens}\;\Varid{get1}\;\Varid{put1})\;(\Conid{Lens}\;\Varid{get}_{\mathrm{2}}\;\Varid{put}_{\mathrm{2}})\mathrel{=}{}\<[E]%
\\
\>[5]{}\hsindent{4}{}\<[9]%
\>[9]{}\Conid{Lens}\;(\Varid{get}_{\mathrm{2}}\hsdot{\circ }{.\;}\Varid{get}_{\mathrm{1}})\;(\lambda \Varid{y}\;\Varid{s}\to \Varid{put}_{\mathrm{1}}\;(\Varid{put}_{\mathrm{2}}\;\Varid{y}\;(\Varid{get}_{\mathrm{1}}\;\Varid{s}))\;\Varid{s}){}\<[E]%
\\
\>[B]{}\hsindent{5}{}\<[5]%
\>[5]{}\Varid{mapOptic}\;(\Conid{Lens}\;\Varid{get}\;\Varid{put})\;\Varid{f}\;\Varid{s}\mathrel{=}\Varid{put}\;(\Varid{f}\;(\Varid{get}\;\Varid{s}))\;\Varid{s}{}\<[E]%
\ColumnHook
\end{hscode}\resethooks

\hypertarget{analyzing-the-functorization-1}{%
\subsubsection{Analyzing the
functorization}\label{analyzing-the-functorization-1}}

Let's analyze \ensuremath{\Conid{Functorize}\;\Conid{Lens}}:

Let \ensuremath{\Varid{f}\;\mathbin{\in}\;\Conid{Functorize}\;\Conid{Lens}}. Let \ensuremath{\Conid{Lens}\;\Varid{get}\;\Varid{put}\mathrel{=}\Varid{enhanceFop}\;@\Varid{f}}.

\ensuremath{\Varid{enhanceFop}\;@\Varid{f}} is a lawful lens:

\begin{hscode}\SaveRestoreHook
\column{B}{@{}>{\hspre}l<{\hspost}@{}}%
\column{18}{@{}>{\hspre}l<{\hspost}@{}}%
\column{49}{@{}>{\hspre}l<{\hspost}@{}}%
\column{60}{@{}>{\hspre}l<{\hspost}@{}}%
\column{E}{@{}>{\hspre}l<{\hspost}@{}}%
\>[B]{}\Varid{put}\;(\Varid{get}\;\Varid{s})\;\Varid{s}\mathrel{=}{}\<[18]%
\>[18]{}\Varid{mapOptic}\;(\Varid{enhanceFop}\;@\Varid{f})\;\Varid{id}\mathrel{=}{}\<[49]%
\>[49]{}\Varid{fmap}\;\Varid{id}\mathrel{=}{}\<[60]%
\>[60]{}\Varid{id}{}\<[E]%
\ColumnHook
\end{hscode}\resethooks

\smallskip

\begin{hscode}\SaveRestoreHook
\column{B}{@{}>{\hspre}l<{\hspost}@{}}%
\column{4}{@{}>{\hspre}l<{\hspost}@{}}%
\column{E}{@{}>{\hspre}l<{\hspost}@{}}%
\>[B]{}\lambda \Varid{b}\to \Varid{get}\;(\Varid{put}\;\Varid{b}\;\Varid{fa})\mathbin{::}\mathbf{forall}\;\Varid{b}\hsforall \hsdot{\circ }{.\;}\Varid{b}\to \Varid{b}{}\<[E]%
\\
\>[B]{}\mathrel{=}\mbox{\commentbegin   parametricity   \commentend}{}\<[E]%
\\
\>[B]{}\hsindent{4}{}\<[4]%
\>[4]{}\Varid{id}{}\<[E]%
\ColumnHook
\end{hscode}\resethooks

\smallskip

\begin{hscode}\SaveRestoreHook
\column{B}{@{}>{\hspre}l<{\hspost}@{}}%
\column{10}{@{}>{\hspre}l<{\hspost}@{}}%
\column{48}{@{}>{\hspre}l<{\hspost}@{}}%
\column{E}{@{}>{\hspre}l<{\hspost}@{}}%
\>[B]{}\Varid{put}\;\Varid{a}\mathrel{=}{}\<[10]%
\>[10]{}\Varid{mapOptic}\;(\Varid{enhanceFop}\;@\Varid{f})\;(\Varid{const}\;\Varid{a})\mathrel{=}{}\<[48]%
\>[48]{}\Varid{fmap}\;(\Varid{const}\;\Varid{a}){}\<[E]%
\ColumnHook
\end{hscode}\resethooks

Thus

\begin{hscode}\SaveRestoreHook
\column{B}{@{}>{\hspre}l<{\hspost}@{}}%
\column{18}{@{}>{\hspre}l<{\hspost}@{}}%
\column{46}{@{}>{\hspre}l<{\hspost}@{}}%
\column{64}{@{}>{\hspre}l<{\hspost}@{}}%
\column{E}{@{}>{\hspre}l<{\hspost}@{}}%
\>[B]{}\Varid{put}\;\Varid{a}\hsdot{\circ }{.\;}\Varid{put}\;\Varid{b}\mathrel{=}{}\<[18]%
\>[18]{}\Varid{fmap}\;(\Varid{const}\;\Varid{a}\hsdot{\circ }{.\;}\Varid{const}\;\Varid{b})\mathrel{=}{}\<[46]%
\>[46]{}\Varid{fmap}\;(\Varid{const}\;\Varid{a})\mathrel{=}{}\<[64]%
\>[64]{}\Varid{put}\;\Varid{a}{}\<[E]%
\ColumnHook
\end{hscode}\resethooks

\bigskip

Knowing that lawful lenses have an equivalent \emph{residual}
expression, we can try and write such an equivalence:

\begin{hscode}\SaveRestoreHook
\column{B}{@{}>{\hspre}l<{\hspost}@{}}%
\column{E}{@{}>{\hspre}l<{\hspost}@{}}%
\>[B]{}\Varid{toProduct}\mathbin{::}\Conid{Functorize}\;\Varid{op}\;\Varid{f}\Rightarrow \Varid{f}\;\Varid{a}\to (\Varid{f}\;(),\Varid{a}){}\<[E]%
\\
\>[B]{}\Varid{toProduct}\;\Varid{fa}\mathrel{=}(\Varid{put}\;()\;\Varid{fa},\Varid{get}\;\Varid{fa}){}\<[E]%
\\
\>[B]{}\Varid{fromProduct}\mathbin{::}\Conid{Functorize}\;\Varid{op}\;\Varid{f}\Rightarrow (\Varid{f}\;(),\Varid{a})\to \Varid{f}\;\Varid{a}{}\<[E]%
\\
\>[B]{}\Varid{fromProduct}\;(\Varid{f1},\Varid{a})\mathrel{=}\Varid{put}\;\Varid{a}\;\Varid{f1}{}\<[E]%
\ColumnHook
\end{hscode}\resethooks

Since \ensuremath{\Varid{enhanceFop}\;@\Varid{f}} is lawful, those functions are mutual inverses:

\begin{hscode}\SaveRestoreHook
\column{B}{@{}>{\hspre}l<{\hspost}@{}}%
\column{4}{@{}>{\hspre}l<{\hspost}@{}}%
\column{E}{@{}>{\hspre}l<{\hspost}@{}}%
\>[B]{}\Varid{toProduct}\;(\Varid{fromProduct}\;(\Varid{f1},\Varid{a})){}\<[E]%
\\
\>[B]{}\mathrel{=}{}\<[4]%
\>[4]{}\Varid{toProduct}\;(\Varid{put}\;\Varid{a}\;\Varid{f1}){}\<[E]%
\\
\>[B]{}\mathrel{=}{}\<[4]%
\>[4]{}(\Varid{put}\;()\;(\Varid{put}\;\Varid{a}\;\Varid{f1}),\Varid{get}\;(\Varid{put}\;\Varid{a}\;\Varid{f1})){}\<[E]%
\\
\>[B]{}\mathrel{=}{}\<[4]%
\>[4]{}(\Varid{put}\;()\;\Varid{f1},\Varid{a}){}\<[E]%
\\
\>[B]{}\mathrel{=}{}\<[4]%
\>[4]{}(\Varid{put}\;(\Varid{get}\;\Varid{f1})\;\Varid{f1},\Varid{a}){}\<[E]%
\\
\>[B]{}\mathrel{=}{}\<[4]%
\>[4]{}(\Varid{f1},\Varid{a}){}\<[E]%
\\[\blanklineskip]%
\>[B]{}\Varid{fromProduct}\;(\Varid{toProduct}\;\Varid{fa}){}\<[E]%
\\
\>[B]{}\mathrel{=}{}\<[4]%
\>[4]{}\Varid{fromProduct}\;(\Varid{put}\;()\;\Varid{fa},\Varid{get}\;\Varid{fa}){}\<[E]%
\\
\>[B]{}\mathrel{=}{}\<[4]%
\>[4]{}\Varid{put}\;(\Varid{get}\;\Varid{fa})\;(\Varid{put}\;()\;\Varid{fa}){}\<[E]%
\\
\>[B]{}\mathrel{=}{}\<[4]%
\>[4]{}\Varid{put}\;(\Varid{get}\;\Varid{fa})\;\Varid{fa}{}\<[E]%
\\
\>[B]{}\mathrel{=}{}\<[4]%
\>[4]{}\Varid{fa}{}\<[E]%
\ColumnHook
\end{hscode}\resethooks

We capture this property in the \ensuremath{\Conid{IsProduct}} typeclass:

\begin{hscode}\SaveRestoreHook
\column{B}{@{}>{\hspre}l<{\hspost}@{}}%
\column{5}{@{}>{\hspre}l<{\hspost}@{}}%
\column{E}{@{}>{\hspre}l<{\hspost}@{}}%
\>[B]{}\mathbf{class}\;\Conid{IsProduct}\;\Varid{f}\;\mathbf{where}{}\<[E]%
\\
\>[B]{}\hsindent{5}{}\<[5]%
\>[5]{}\Varid{toProduct}\mathbin{::}\Varid{f}\;\Varid{a}\to (\Varid{f}\;(),\Varid{a}){}\<[E]%
\\
\>[B]{}\hsindent{5}{}\<[5]%
\>[5]{}\Varid{fromProduct}\mathbin{::}(\Varid{f}\;(),\Varid{a})\to \Varid{f}\;\Varid{a}{}\<[E]%
\\
\>[B]{}\hsindent{5}{}\<[5]%
\>[5]{}\mbox{\onelinecomment  toProduct and fromProduct should be inverses}{}\<[E]%
\ColumnHook
\end{hscode}\resethooks

We therefore have that \ensuremath{\Conid{Functorize}\;\Conid{Lens}\cong \Conid{IsProduct}}.

\hypertarget{deriving-a-profunctor-encoding-1}{%
\subsubsection{Deriving a profunctor
encoding}\label{deriving-a-profunctor-encoding-1}}

We can now define the following instance:

\begin{hscode}\SaveRestoreHook
\column{B}{@{}>{\hspre}l<{\hspost}@{}}%
\column{5}{@{}>{\hspre}l<{\hspost}@{}}%
\column{E}{@{}>{\hspre}l<{\hspost}@{}}%
\>[B]{}\mathbf{instance}\;\Conid{Functorize}\;\Conid{Lens}\;((,)\;\Varid{c})\;\mathbf{where}{}\<[E]%
\\
\>[B]{}\hsindent{5}{}\<[5]%
\>[5]{}\Varid{enhanceFop}\mathbin{::}\Conid{Lens}\;\Varid{a}\;\Varid{b}\;(\Varid{c},\Varid{a})\;(\Varid{c},\Varid{b}){}\<[E]%
\\
\>[B]{}\hsindent{5}{}\<[5]%
\>[5]{}\Varid{enhanceFop}\mathrel{=}\Conid{Lens}\;\Varid{snd}\;(\Varid{fmap}\hsdot{\circ }{.\;}\Varid{const}){}\<[E]%
\ColumnHook
\end{hscode}\resethooks

and prove that it is lawful (using \textcite{map_enhance_id}):

\begin{hscode}\SaveRestoreHook
\column{B}{@{}>{\hspre}l<{\hspost}@{}}%
\column{4}{@{}>{\hspre}l<{\hspost}@{}}%
\column{E}{@{}>{\hspre}l<{\hspost}@{}}%
\>[B]{}\Varid{mapOptic}\;\Varid{enhanceFop}\;\Varid{id}\;(\Varid{c},\Varid{a}){}\<[E]%
\\
\>[B]{}\mathrel{=}{}\<[4]%
\>[4]{}\Varid{fmap}\;(\Varid{const}\;(\Varid{id}\;(\Varid{snd}\;(\Varid{c},\Varid{a}))))\;(\Varid{c},\Varid{a}){}\<[E]%
\\
\>[B]{}\mathrel{=}{}\<[4]%
\>[4]{}\Varid{fmap}\;(\Varid{const}\;\Varid{a})\;(\Varid{c},\Varid{a}){}\<[E]%
\\
\>[B]{}\mathrel{=}{}\<[4]%
\>[4]{}(\Varid{c},\Varid{a}){}\<[E]%
\\[\blanklineskip]%
\>[B]{}\Varid{enhanceFop}\ \circ_{op}\ \Varid{injOptic}\;\Varid{f}\;\Varid{g}{}\<[E]%
\\
\>[B]{}\mathrel{=}{}\<[4]%
\>[4]{}\Conid{Lens}\;\Varid{snd}\;(\Varid{fmap}\hsdot{\circ }{.\;}\Varid{const})\ \circ_{op}\ \Conid{Lens}\;\Varid{f}\;(\Varid{const}\hsdot{\circ }{.\;}\Varid{g}){}\<[E]%
\\
\>[B]{}\mathrel{=}{}\<[4]%
\>[4]{}\Conid{Lens}\;(\Varid{f}\hsdot{\circ }{.\;}\Varid{snd})\;(\lambda \Varid{b}\;\Varid{s}\to \Varid{fmap}\;(\Varid{const}\;(\Varid{const}\;(\Varid{g}\;\Varid{b})\;(\Varid{snd}\;\Varid{s})))\;\Varid{s}){}\<[E]%
\\
\>[B]{}\mathrel{=}{}\<[4]%
\>[4]{}\Conid{Lens}\;(\Varid{snd}\hsdot{\circ }{.\;}\Varid{fmap}\;\Varid{f})\;(\lambda \Varid{b}\;\Varid{s}\to \Varid{fmap}\;(\Varid{const}\;(\Varid{g}\;\Varid{b}))\;\Varid{s}){}\<[E]%
\\
\>[B]{}\mathrel{=}{}\<[4]%
\>[4]{}\Conid{Lens}\;(\Varid{snd}\hsdot{\circ }{.\;}\Varid{fmap}\;\Varid{f})\;(\lambda \Varid{b}\;(\Varid{c},\Varid{a})\to (\Varid{c},\Varid{g}\;\Varid{b})){}\<[E]%
\\
\>[B]{}\mathrel{=}{}\<[4]%
\>[4]{}\Conid{Lens}\;(\Varid{snd}\hsdot{\circ }{.\;}\Varid{fmap}\;\Varid{f})\;(\lambda \Varid{b}\to \Varid{fmap}\;\Varid{g}\hsdot{\circ }{.\;}\Varid{fmap}\;(\Varid{const}\;\Varid{b})){}\<[E]%
\\
\>[B]{}\mathrel{=}{}\<[4]%
\>[4]{}\Varid{injOptic}\;(\Varid{fmap}\;\Varid{f})\;(\Varid{fmap}\;\Varid{g})\ \circ_{op}\ \Varid{enhanceFop}{}\<[E]%
\ColumnHook
\end{hscode}\resethooks

\bigskip

Parametricity makes the wedge condition always hold:

Let \ensuremath{\Varid{f}\;\mathbin{\in}\;\Conid{Functorize}\;\Conid{Lens}}. Let \ensuremath{\Conid{Lens}\;\Varid{get}\;\Varid{put}\mathrel{=}\Varid{enhanceFop}\;@\Varid{f}}.

We saw that \ensuremath{\Varid{enhanceFop}\;@\Varid{f}} is lawful, and that
\ensuremath{\Varid{put}\;\Varid{b}\mathrel{=}\Varid{fmap}\;(\Varid{const}\;\Varid{b})}.

Let \ensuremath{\Varid{h}\mathbin{::}\mathbf{forall}\;\Varid{a}\hsforall \hsdot{\circ }{.\;}\Varid{f}\;\Varid{a}\to \Varid{a}}. By parametricity, given \ensuremath{\Varid{f}\mathbin{::}\Varid{a}\to \Varid{b}}, we
have \ensuremath{\Varid{f}\hsdot{\circ }{.\;}\Varid{h}\mathrel{=}\Varid{h}\hsdot{\circ }{.\;}\Varid{fmap}\;\Varid{f}}. Thus
\ensuremath{\Varid{const}\;\Varid{x}\mathrel{=}\Varid{const}\;\Varid{x}\hsdot{\circ }{.\;}\Varid{h}\mathrel{=}\Varid{h}\hsdot{\circ }{.\;}\Varid{fmap}\;(\Varid{const}\;\Varid{x})\mathrel{=}\Varid{h}\hsdot{\circ }{.\;}\Varid{put}\;\Varid{x}}. So
\ensuremath{\Varid{h}\;\Varid{s}\mathrel{=}\Varid{h}\;(\Varid{put}\;(\Varid{get}\;\Varid{s})\;\Varid{s})\mathrel{=}\Varid{const}\;(\Varid{get}\;\Varid{s})\;\Varid{s}\mathrel{=}\Varid{get}\;\Varid{s}}, and \ensuremath{\Varid{h}\mathrel{=}\Varid{get}}.

Now let \ensuremath{\Varid{f},\Varid{g}\;\mathbin{\in}\;\Conid{Functorize}\;\Conid{Lens}} and \ensuremath{\alpha\mathbin{::}\mathbf{forall}\;\Varid{a}\hsforall \hsdot{\circ }{.\;}\Varid{f}\;\Varid{a}\to \Varid{g}\;\Varid{a}}.

Let \ensuremath{\Conid{Lens}\;\Varid{get}\;\Varid{put}\mathrel{=}\Varid{enhanceFop}\;@\Varid{f}}, \ensuremath{\Conid{Lens}\;\Varid{get'}\;\Varid{put'}\mathrel{=}\Varid{enhanceFop}\;@\Varid{g}}.

By parametricity,
\ensuremath{\alpha\hsdot{\circ }{.\;}\Varid{put}\;\Varid{b}\mathrel{=}\alpha\hsdot{\circ }{.\;}\Varid{fmap}\;(\Varid{const}\;\Varid{b})\mathrel{=}\Varid{fmap}\;(\Varid{const}\;\Varid{b})\hsdot{\circ }{.\;}\alpha\mathrel{=}\Varid{put'}\;\Varid{b}}.
Using the previous result, we also have
\ensuremath{\Varid{get'}\hsdot{\circ }{.\;}\alpha\mathbin{::}\mathbf{forall}\;\Varid{a}\hsforall \hsdot{\circ }{.\;}\Varid{f}\;\Varid{a}\to \Varid{a}}, hence \ensuremath{\Varid{get'}\hsdot{\circ }{.\;}\alpha\mathrel{=}\Varid{get}}

Therefore:

\begin{hscode}\SaveRestoreHook
\column{B}{@{}>{\hspre}l<{\hspost}@{}}%
\column{4}{@{}>{\hspre}l<{\hspost}@{}}%
\column{E}{@{}>{\hspre}l<{\hspost}@{}}%
\>[B]{}\Varid{injOptic}\;\Varid{id}\;\alpha\ \circ_{op}\ \Varid{enhanceFop}\;@\Varid{f}{}\<[E]%
\\
\>[B]{}\mathrel{=}{}\<[4]%
\>[4]{}\Varid{injOptic}\;\Varid{id}\;\alpha\ \circ_{op}\ \Conid{Lens}\;\Varid{get}\;\Varid{put}{}\<[E]%
\\
\>[B]{}\mathrel{=}{}\<[4]%
\>[4]{}\Conid{Lens}\;\Varid{get}\;(\lambda \Varid{b}\to \alpha\hsdot{\circ }{.\;}\Varid{put}\;\Varid{b}){}\<[E]%
\\
\>[B]{}\mathrel{=}{}\<[4]%
\>[4]{}\Conid{Lens}\;\Varid{get}\;(\lambda \Varid{b}\to \Varid{put'}\;\Varid{b}\hsdot{\circ }{.\;}\alpha){}\<[E]%
\\
\>[B]{}\mathrel{=}{}\<[4]%
\>[4]{}\Conid{Lens}\;(\Varid{get'}\hsdot{\circ }{.\;}\alpha)\;(\lambda \Varid{b}\to \Varid{put'}\;\Varid{b}\hsdot{\circ }{.\;}\alpha){}\<[E]%
\\
\>[B]{}\mathrel{=}{}\<[4]%
\>[4]{}\Varid{injOptic}\;\alpha\;\Varid{id}\ \circ_{op}\ \Conid{Lens}\;\Varid{get'}\;\Varid{put'}{}\<[E]%
\\
\>[B]{}\mathrel{=}{}\<[4]%
\>[4]{}\Varid{injOptic}\;\alpha\;\Varid{id}\ \circ_{op}\ \Varid{enhanceFop}\;@\Varid{g}{}\<[E]%
\ColumnHook
\end{hscode}\resethooks

\bigskip

We can now define the inverse to \ensuremath{\Varid{unfunctorize}}:

\begin{hscode}\SaveRestoreHook
\column{B}{@{}>{\hspre}l<{\hspost}@{}}%
\column{E}{@{}>{\hspre}l<{\hspost}@{}}%
\>[B]{}\Varid{lensToIso}\;(\Conid{Lens}\;\Varid{get}\;\Varid{put})\mathrel{=}\Conid{IsoOptic}\;(\lambda \Varid{s}\to (\Varid{s},\Varid{get}\;\Varid{s}))\;(\lambda (\Varid{s},\Varid{b})\to \Varid{put}\;\Varid{b}\;\Varid{s}){}\<[E]%
\ColumnHook
\end{hscode}\resethooks

We omit the proof that they are inverses, but note that they only are so
if we require lawful lenses.

We get both a profunctor encoding for \ensuremath{\Conid{Lens}}, and an alternative
expression via the isomorphism encoding:

\begin{hscode}\SaveRestoreHook
\column{B}{@{}>{\hspre}l<{\hspost}@{}}%
\column{15}{@{}>{\hspre}l<{\hspost}@{}}%
\column{E}{@{}>{\hspre}l<{\hspost}@{}}%
\>[B]{}\Conid{Lens}\;\Varid{a}\;\Varid{b}\;\Varid{s}\;\Varid{t}{}\<[15]%
\>[15]{}\cong \mathbf{exists}\;\Varid{f}\hsexists \hsdot{\circ }{.\;}\Conid{Functorize}\;\Conid{Lens}\;\Varid{f}\Rightarrow (\Varid{s}\to \Varid{f}\;\Varid{a},\Varid{f}\;\Varid{b}\to \Varid{t}){}\<[E]%
\\
\>[15]{}\cong \mathbf{exists}\;\Varid{f}\hsexists \hsdot{\circ }{.\;}\Conid{IsProduct}\;\Varid{f}\Rightarrow (\Varid{s}\to \Varid{f}\;\Varid{a},\Varid{f}\;\Varid{b}\to \Varid{t}){}\<[E]%
\ColumnHook
\end{hscode}\resethooks

Since \ensuremath{\Conid{IsProduct}\;\Varid{f}\mathbin{\mathbf{\Leftrightarrow}}\mathbf{exists}\;\Varid{c}\hsexists \hsdot{\circ }{.\;}\Varid{f}\cong (\Varid{c},\mathbin{-})}, we finally get:

\begin{hscode}\SaveRestoreHook
\column{B}{@{}>{\hspre}l<{\hspost}@{}}%
\column{15}{@{}>{\hspre}l<{\hspost}@{}}%
\column{E}{@{}>{\hspre}l<{\hspost}@{}}%
\>[B]{}\Conid{Lens}\;\Varid{a}\;\Varid{b}\;\Varid{s}\;\Varid{t}{}\<[15]%
\>[15]{}\cong \mathbf{exists}\;\Varid{c}\hsexists \Rightarrow (\Varid{s}\to (\Varid{c},\Varid{a}),(\Varid{c},\Varid{b})\to \Varid{t}){}\<[E]%
\ColumnHook
\end{hscode}\resethooks

Similarly, we get the usual lens profunctor encoding through the same
isomorphism.

\hypertarget{setter}{%
\subsection{Setter}\label{setter}}

\hypertarget{definition-4}{%
\subsubsection{Definition}\label{definition-4}}

\begin{hscode}\SaveRestoreHook
\column{B}{@{}>{\hspre}l<{\hspost}@{}}%
\column{5}{@{}>{\hspre}l<{\hspost}@{}}%
\column{E}{@{}>{\hspre}l<{\hspost}@{}}%
\>[B]{}\mathbf{data}\;\Conid{Setter}\;\Varid{a}\;\Varid{b}\;\Varid{s}\;\Varid{t}\mathrel{=}\Conid{Setter}\;\{\mskip1.5mu \Varid{unSetter}\mathbin{::}(\Varid{a}\to \Varid{b})\to (\Varid{s}\to \Varid{t})\mskip1.5mu\}{}\<[E]%
\\[\blanklineskip]%
\>[B]{}\mathbf{instance}\;\Conid{OpticFamily}\;\Conid{Setter}\;\mathbf{where}{}\<[E]%
\\
\>[B]{}\hsindent{5}{}\<[5]%
\>[5]{}\Varid{injOptic}\;\Varid{f}\;\Varid{g}\mathrel{=}\Conid{Setter}\;(\Varid{dimap}\;\Varid{f}\;\Varid{g}){}\<[E]%
\\
\>[B]{}\hsindent{5}{}\<[5]%
\>[5]{}(\circ_{op})\;(\Conid{Setter}\;\Varid{f1})\;(\Conid{Setter}\;\Varid{f2})\mathrel{=}\Conid{Setter}\;(\Varid{f1}\hsdot{\circ }{.\;}\Varid{f2}){}\<[E]%
\\
\>[B]{}\hsindent{5}{}\<[5]%
\>[5]{}\Varid{mapOptic}\;(\Conid{Setter}\;\Varid{f})\mathrel{=}\Varid{f}{}\<[E]%
\ColumnHook
\end{hscode}\resethooks

\hypertarget{analyzing-the-functorization-2}{%
\subsubsection{Analyzing the
functorization}\label{analyzing-the-functorization-2}}

Since \ensuremath{\Varid{mapOptic}\;(\Conid{Setter}\;\Varid{f})\mathrel{=}\Varid{f}} and \ensuremath{\Varid{mapOptic}\;\Varid{enhanceFop}\mathrel{=}\Varid{fmap}},
necessarily \ensuremath{\Varid{enhanceFop}\mathrel{=}\Conid{Setter}\;\Varid{fmap}}.

\begin{hscode}\SaveRestoreHook
\column{B}{@{}>{\hspre}l<{\hspost}@{}}%
\column{5}{@{}>{\hspre}l<{\hspost}@{}}%
\column{E}{@{}>{\hspre}l<{\hspost}@{}}%
\>[B]{}\mathbf{instance}\;\Conid{Functor}\;\Varid{f}\Rightarrow \Conid{Functorize}\;\Conid{Setter}\;\Varid{f}\;\mathbf{where}{}\<[E]%
\\
\>[B]{}\hsindent{5}{}\<[5]%
\>[5]{}\Varid{enhanceFop}\mathrel{=}\Conid{Setter}\;\Varid{fmap}{}\<[E]%
\ColumnHook
\end{hscode}\resethooks

The \ensuremath{\Varid{enhanceFop}} laws are straightforward by parametricity.

Thus \ensuremath{\Conid{Functorize}\;\Conid{Setter}\cong \Conid{Functor}}.

\hypertarget{deriving-a-profunctor-encoding-2}{%
\subsubsection{Deriving a profunctor
encoding}\label{deriving-a-profunctor-encoding-2}}

The choice of a correct functor is not trivial. Since \ensuremath{\Conid{Setter}} has a
known profunctor encoding, the easiest solution is to look at the
implementation of one of the existing profunctor optic libraries. We
therefore define the following functor:

\begin{hscode}\SaveRestoreHook
\column{B}{@{}>{\hspre}l<{\hspost}@{}}%
\column{5}{@{}>{\hspre}l<{\hspost}@{}}%
\column{E}{@{}>{\hspre}l<{\hspost}@{}}%
\>[B]{}\mathbf{newtype}\;\Conid{CPS}\;\Varid{t}\;\Varid{b}\;\Varid{a}\mathrel{=}\Conid{CPS}\;\{\mskip1.5mu \Varid{unCPS}\mathbin{::}(\Varid{a}\to \Varid{b})\to \Varid{t}\mskip1.5mu\}{}\<[E]%
\\[\blanklineskip]%
\>[B]{}\mathbf{instance}\;\Conid{Functor}\;(\Conid{CPS}\;\Varid{t}\;\Varid{b})\;\mathbf{where}{}\<[E]%
\\
\>[B]{}\hsindent{5}{}\<[5]%
\>[5]{}\Varid{fmap}\;\Varid{f}\mathrel{=}\Conid{CPS}\hsdot{\circ }{.\;}(\hsdot{\circ }{.\;}(\hsdot{\circ }{.\;}\Varid{f}))\hsdot{\circ }{.\;}\Varid{unCPS}{}\<[E]%
\ColumnHook
\end{hscode}\resethooks

The inverse to \ensuremath{\Varid{unfunctorize}} is then:

\begin{hscode}\SaveRestoreHook
\column{B}{@{}>{\hspre}l<{\hspost}@{}}%
\column{E}{@{}>{\hspre}l<{\hspost}@{}}%
\>[B]{}\Varid{setterToIso}\;(\Conid{Setter}\;\Varid{f})\mathrel{=}\Conid{IsoOptic}\;(\Conid{CPS}\hsdot{\circ }{.\;}\Varid{flip}\;\Varid{f})\;((\mathbin{\$}\Varid{id})\hsdot{\circ }{.\;}\Varid{unCPS}){}\<[E]%
\ColumnHook
\end{hscode}\resethooks

We get that \ensuremath{\Conid{Adapter}\cong \Conid{ProfOptic}\;\Conid{Functor}}.

The \ensuremath{\Conid{IsoOptic}} representation also gives us
\ensuremath{\Conid{Adapter}\;\Varid{a}\;\Varid{b}\;\Varid{s}\;\Varid{t}\cong \mathbf{exists}\;\Varid{f}\hsexists \hsdot{\circ }{.\;}\Conid{Functor}\;\Varid{f}\Rightarrow (\Varid{s}\to \Varid{f}\;\Varid{a},\Varid{f}\;\Varid{b}\to \Varid{t})}.

\hypertarget{deriving-a-new-profunctor-optic-a-case-study}{%
\section{Deriving a new profunctor optic: a case
study}\label{deriving-a-new-profunctor-optic-a-case-study}}

So far we have looked at optics that have a known profunctor encoding.
To illustrate the overall insight gained on profunctor optics and their
derivations, we explore the derivation of a new profunctor optic.

\hypertarget{definition-5}{%
\subsection{Definition}\label{definition-5}}

In an important part of the literature about lenses, lenses feature a
\ensuremath{\Varid{create}} function along with the usual \ensuremath{\Varid{get}} and \ensuremath{\Varid{put}}. We dub this new
type of lens \emph{achromatic lens}. \bigskip

\begin{defn}[Achromatic lens]

We call achromatic lens a value of the following datatype:

\begin{hscode}\SaveRestoreHook
\column{B}{@{}>{\hspre}l<{\hspost}@{}}%
\column{3}{@{}>{\hspre}l<{\hspost}@{}}%
\column{E}{@{}>{\hspre}l<{\hspost}@{}}%
\>[B]{}\mathbf{data}\;\Conid{AchLens}\;\Varid{a}\;\Varid{b}\;\Varid{s}\;\Varid{t}\mathrel{=}\Conid{AchLens}\;\{\mskip1.5mu {}\<[E]%
\\
\>[B]{}\hsindent{3}{}\<[3]%
\>[3]{}\Varid{get}\mathbin{::}\Varid{s}\to \Varid{a},{}\<[E]%
\\
\>[B]{}\hsindent{3}{}\<[3]%
\>[3]{}\Varid{put}\mathbin{::}\Varid{b}\to \Varid{s}\to \Varid{t},{}\<[E]%
\\
\>[B]{}\hsindent{3}{}\<[3]%
\>[3]{}\Varid{create}\mathbin{::}\Varid{b}\to \Varid{t}{}\<[E]%
\\
\>[B]{}\mskip1.5mu\}{}\<[E]%
\ColumnHook
\end{hscode}\resethooks

Typical laws for such lenses are the following:

\begin{itemize}
\tightlist
\item
  (GetPut) \ensuremath{\Varid{get}\;(\Varid{put}\;\Varid{b}\;\Varid{s})\mathrel{=}\Varid{b}}
\item
  (PutGet) \ensuremath{\Varid{put}\;(\Varid{get}\;\Varid{s})\;\Varid{s}\mathrel{=}\Varid{s}}
\item
  (PutPut) \ensuremath{\Varid{put}\;\Varid{b}\hsdot{\circ }{.\;}\Varid{put}\;\Varid{b'}\mathrel{=}\Varid{put}\;\Varid{b}}
\item
  (GetCreate) \ensuremath{\Varid{get}\hsdot{\circ }{.\;}\Varid{create}\mathrel{=}\Varid{id}}
\end{itemize}

They extend the usual lens very-well-behavedness laws.

\end{defn}

\bigskip

\begin{prop}

\ensuremath{\Conid{AchLens}} is an optic family.

\end{prop}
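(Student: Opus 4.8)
The plan is to exhibit the \ensuremath{\Conid{OpticFamily}} instance for \ensuremath{\Conid{AchLens}} and then discharge the axioms by routine equational reasoning, essentially reusing the computations already done for \ensuremath{\Conid{Lens}}. First I would define the three required methods, omitting \ensuremath{\Varid{id}_{\Varid{op}}} as permitted by \textcite{omit_idoptic}. For \ensuremath{\Varid{injOptic}\;\Varid{f}\;\Varid{g}} take \ensuremath{\Varid{get}\mathrel{=}\Varid{f}}, let \ensuremath{\Varid{put}} discard its source argument and return \ensuremath{\Varid{g}\;\Varid{b}}, and set \ensuremath{\Varid{create}\mathrel{=}\Varid{g}} — these are the only well-typed options using just \ensuremath{\Varid{f}} and \ensuremath{\Varid{g}}. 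For \ensuremath{\circ_{op}}, reuse the standard lens composition on the \ensuremath{\Varid{get}}/\ensuremath{\Varid{put}} parts (\ensuremath{\Varid{get}\mathrel{=}\Varid{get}_{\mathrm{2}}\circ\Varid{get}_{\mathrm{1}}} and \ensuremath{\Varid{put}\;\Varid{y}\;\Varid{s}\mathrel{=}\Varid{put}_{\mathrm{1}}\;(\Varid{put}_{\mathrm{2}}\;\Varid{y}\;(\Varid{get}_{\mathrm{1}}\;\Varid{s}))\;\Varid{s}}) and take \ensuremath{\Varid{create}\mathrel{=}\Varid{create}_{\mathrm{1}}\circ\Varid{create}_{\mathrm{2}}}, again forced by the types. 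Finally \ensuremath{\Varid{mapOptic}\;(\Conid{AchLens}\;\Varid{get}\;\Varid{put}\;\anonymous)\;\Varid{f}\;\Varid{s}\mathrel{=}\Varid{put}\;(\Varid{f}\;(\Varid{get}\;\Varid{s}))\;\Varid{s}}, ignoring \ensuremath{\Varid{create}} exactly as for plain lenses.

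Next I would verify the axioms. Associativity and unitality of \ensuremath{\circ_{op}} on the \ensuremath{\Varid{get}} and \ensuremath{\Varid{put}} components are literally the same calculations as in the \ensuremath{\Conid{Lens}} case, and on the \ensuremath{\Varid{create}} component they reduce to associativity and unitality of ordinary function composition. The \ensuremath{\Varid{injOptic}} compatibility laws unfold immediately: \ensuremath{\Varid{injOptic}\;\Varid{id}\;\Varid{id}} is visibly \ensuremath{\Varid{id}_{\Varid{op}}}, and \ensuremath{\Varid{injOptic}\;(\Varid{f'}\circ\Varid{f})\;(\Varid{g}\circ\Varid{g'})\mathrel{=}\Varid{injOptic}\;\Varid{f}\;\Varid{g}\ \circ_{op}\ \Varid{injOptic}\;\Varid{f'}\;\Varid{g'}} holds because both sides produce an \ensuremath{\Conid{AchLens}} whose \ensuremath{\Varid{get}} is \ensuremath{\Varid{f'}\circ\Varid{f}}, whose \ensuremath{\Varid{create}} is \ensuremath{\Varid{g}\circ\Varid{g'}}, and whose \ensuremath{\Varid{put}} ignores its source and returns \ensuremath{\Varid{g}\;(\Varid{g'}\;\Varid{y})}. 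The two \ensuremath{\Varid{mapOptic}} laws are direct: \ensuremath{\Varid{mapOptic}\;(\Varid{injOptic}\;\Varid{f}\;\Varid{g})\;\Varid{h}\mathrel{=}\Varid{g}\circ\Varid{h}\circ\Varid{f}} by a one-line unfolding, and \ensuremath{\Varid{mapOptic}} commutes with \ensuremath{\circ_{op}} by the identical computation to the \ensuremath{\Conid{Lens}} case, since \ensuremath{\Varid{create}} never appears in \ensuremath{\Varid{mapOptic}}.

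There is no genuine obstacle here; the proof is pure structural bookkeeping, so in the write-up most of it can simply be referred back to the \ensuremath{\Conid{Lens}} proof. The only things needing care are keeping the variances of the four arrows straight when writing \ensuremath{\Varid{injOptic}} and \ensuremath{\circ_{op}}, and — as was stressed for ordinary lenses — noting that none of these equalities uses the achromatic-lens laws (GetPut, PutGet, PutPut, GetCreate): they hold for arbitrary \ensuremath{\Conid{AchLens}} values. One point worth flagging is that the \ensuremath{\Conid{OpticFamily}} axioms constrain the \ensuremath{\Varid{create}} component only through \ensuremath{\Varid{injOptic}}-functoriality, so \ensuremath{\Varid{create}_{\mathrm{1}}\circ\Varid{create}_{\mathrm{2}}} is the canonical choice but its coherence with \ensuremath{\Varid{put}} (via GetCreate) will only matter later, when lawfulness and \ensuremath{\Conid{Functorize}\;\Conid{AchLens}} are analysed.
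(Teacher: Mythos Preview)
Your proposal is correct and matches the paper's proof essentially line for line: the paper gives exactly the same \ensuremath{\Conid{OpticFamily}\;\Conid{AchLens}} instance (same \ensuremath{\Varid{injOptic}}, same \ensuremath{\circ_{op}} with \ensuremath{\Varid{create}\mathrel{=}\Varid{create}_{\mathrm{1}}\circ\Varid{create}_{\mathrm{2}}}, same \ensuremath{\Varid{mapOptic}} ignoring \ensuremath{\Varid{create}}) and then simply says the laws are ``essentially the same as for \ensuremath{\Conid{Lens}}, so we omit it.'' Your write-up in fact spells out slightly more of the law verification than the paper does, and your closing remark about the choice of \ensuremath{\Varid{mapOptic}} versus the alternative \ensuremath{\Varid{create}\circ\Varid{f}\circ\Varid{get}} anticipates the paper's subsequent remark on exactly that point.
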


\begin{proof}

All achromatic lenses are lenses, therefore we can reuse most of the
definition of the \ensuremath{\Conid{OpticFamily}\;\Conid{Lens}} instance.

\begin{hscode}\SaveRestoreHook
\column{B}{@{}>{\hspre}l<{\hspost}@{}}%
\column{5}{@{}>{\hspre}l<{\hspost}@{}}%
\column{9}{@{}>{\hspre}l<{\hspost}@{}}%
\column{11}{@{}>{\hspre}l<{\hspost}@{}}%
\column{13}{@{}>{\hspre}l<{\hspost}@{}}%
\column{E}{@{}>{\hspre}l<{\hspost}@{}}%
\>[B]{}\mathbf{instance}\;\Conid{OpticFamily}\;\Conid{AchLens}\;\mathbf{where}{}\<[E]%
\\
\>[B]{}\hsindent{5}{}\<[5]%
\>[5]{}\Varid{injOptic}\mathbin{::}(\Varid{s}\to \Varid{a})\to (\Varid{b}\to \Varid{t})\to \Conid{AchLens}\;\Varid{a}\;\Varid{b}\;\Varid{s}\;\Varid{t}{}\<[E]%
\\
\>[B]{}\hsindent{5}{}\<[5]%
\>[5]{}\Varid{injOptic}\;\Varid{f}\;\Varid{g}\mathrel{=}\Conid{AchLens}\;\Varid{get}\;\Varid{put}\;\Varid{create}{}\<[E]%
\\
\>[5]{}\hsindent{4}{}\<[9]%
\>[9]{}\mathbf{where}{}\<[E]%
\\
\>[9]{}\hsindent{2}{}\<[11]%
\>[11]{}\Varid{get}\mathrel{=}\Varid{f}{}\<[E]%
\\
\>[9]{}\hsindent{2}{}\<[11]%
\>[11]{}\Varid{put}\;\Varid{b}\;\anonymous \mathrel{=}\Varid{g}\;\Varid{b}{}\<[E]%
\\
\>[9]{}\hsindent{2}{}\<[11]%
\>[11]{}\Varid{create}\mathrel{=}\Varid{g}{}\<[E]%
\\[\blanklineskip]%
\>[B]{}\hsindent{5}{}\<[5]%
\>[5]{}(\circ_{op})\mathbin{::}\Conid{AchLens}\;\Varid{a}\;\Varid{b}\;\Varid{s}\;\Varid{t}\to \Conid{AchLens}\;\Varid{x}\;\Varid{y}\;\Varid{a}\;\Varid{b}\to \Conid{AchLens}\;\Varid{x}\;\Varid{y}\;\Varid{s}\;\Varid{t}{}\<[E]%
\\
\>[B]{}\hsindent{5}{}\<[5]%
\>[5]{}(\circ_{op})\;(\Conid{AchLens}\;\Varid{get}_{\mathrm{1}}\;\Varid{put}_{\mathrm{1}}\;\Varid{create}_{\mathrm{1}})\;(\Conid{AchLens}\;\Varid{get}_{\mathrm{2}}\;\Varid{put}_{\mathrm{2}}\;\Varid{create}_{\mathrm{2}})\mathrel{=}{}\<[E]%
\\
\>[5]{}\hsindent{4}{}\<[9]%
\>[9]{}\Conid{AchLens}\;\Varid{get}\;\Varid{put}\;\Varid{create}{}\<[E]%
\\
\>[9]{}\hsindent{2}{}\<[11]%
\>[11]{}\mathbf{where}{}\<[E]%
\\
\>[11]{}\hsindent{2}{}\<[13]%
\>[13]{}\Varid{get}\mathrel{=}\Varid{get}_{\mathrm{2}}\hsdot{\circ }{.\;}\Varid{get}_{\mathrm{1}}{}\<[E]%
\\
\>[11]{}\hsindent{2}{}\<[13]%
\>[13]{}\Varid{put}\;\Varid{y}\;\Varid{s}\mathrel{=}\Varid{put}_{\mathrm{1}}\;(\Varid{put}_{\mathrm{2}}\;\Varid{y}\;(\Varid{get}_{\mathrm{1}}\;\Varid{s}))\;\Varid{s}{}\<[E]%
\\
\>[11]{}\hsindent{2}{}\<[13]%
\>[13]{}\Varid{create}\mathrel{=}\Varid{create1}\hsdot{\circ }{.\;}\Varid{create}_{\mathrm{2}}{}\<[E]%
\\[\blanklineskip]%
\>[B]{}\hsindent{5}{}\<[5]%
\>[5]{}\Varid{mapOptic}\mathbin{::}\Conid{AchLens}\;\Varid{a}\;\Varid{b}\;\Varid{s}\;\Varid{t}\to (\Varid{a}\to \Varid{b})\to (\Varid{s}\to \Varid{t}){}\<[E]%
\\
\>[B]{}\hsindent{5}{}\<[5]%
\>[5]{}\Varid{mapOptic}\;(\Conid{AchLens}\;\Varid{get}\;\Varid{put}\;\Varid{create})\;\Varid{f}\;\Varid{s}\mathrel{=}\Varid{put}\;(\Varid{f}\;(\Varid{get}\;\Varid{s}))\;\Varid{s}{}\<[E]%
\ColumnHook
\end{hscode}\resethooks

The proof of the laws is essentially the same as for \ensuremath{\Conid{Lens}}, so we omit
it.

\end{proof}

\bigskip

\begin{rmk}

We could have chosen \ensuremath{\Varid{mapOptic}\;(\Conid{AchLens}\;\{\mskip1.5mu \mathinner{\ldotp\ldotp}\mskip1.5mu\})\;\Varid{f}\mathrel{=}\Varid{create}\hsdot{\circ }{.\;}\Varid{f}\hsdot{\circ }{.\;}\Varid{get}}.
However, this would have changed the structure of \ensuremath{\Conid{Functorize}\;\Conid{AchLens}}.
Notably, \ensuremath{\Varid{enhanceFop}} would verify \ensuremath{\Varid{create}\hsdot{\circ }{.\;}\Varid{get}\mathrel{=}\Varid{id}} and
\ensuremath{\Varid{get}\hsdot{\circ }{.\;}\Varid{create}\mathrel{=}\Varid{id}}, thus \ensuremath{\Conid{Functorize}\;\Conid{AchLens}} would be reduced to the
\ensuremath{\Conid{Id}} functor, and \ensuremath{\Conid{AchLens}} would not have a profunctor encoding.

\end{rmk}

\bigskip

\hypertarget{analyzing-the-functorization-3}{%
\subsection{Analyzing the
functorization}\label{analyzing-the-functorization-3}}

Let \ensuremath{\Varid{f}\;\mathbin{\in}\;\Conid{Functorize}\;\Conid{AchLens}}. Let
\ensuremath{\Conid{AchLens}\;\Varid{get}\;\Varid{put}\;\Varid{create}\mathrel{=}\Varid{enhanceFop}\;@\Varid{f}}.

From \textcite{lens}, we know that \ensuremath{\Varid{enhanceFop}\;@\Varid{f}} is a lawful \ensuremath{\Conid{Lens}}.

By parametricity, \ensuremath{\Varid{get}\hsdot{\circ }{.\;}\Varid{create}\mathbin{::}\mathbf{forall}\;\Varid{a}\hsforall \hsdot{\circ }{.\;}\Varid{a}\to \Varid{a}} must be the
identity, therefore \ensuremath{\Varid{enhanceFop}\;@\Varid{f}} is also a lawful \ensuremath{\Conid{AchLens}}.

Since it is a lawful lens, we know that \ensuremath{\Varid{f}\cong (\Varid{f}\;(),\mathbin{-})}.

This is not enough to characterize \ensuremath{\Conid{Functorize}\;\Conid{AchLens}} though.
Achromatic lenses have a little more structure than that: they have a
value \ensuremath{\Varid{create}\;()\mathbin{::}\Varid{f}\;()}.

We can therefore define a subclass of \ensuremath{\Conid{IsProduct}} that contains the
functors \ensuremath{\Varid{f}} that also have a distinguished \ensuremath{\Varid{f}\;()}:

\begin{hscode}\SaveRestoreHook
\column{B}{@{}>{\hspre}l<{\hspost}@{}}%
\column{5}{@{}>{\hspre}l<{\hspost}@{}}%
\column{E}{@{}>{\hspre}l<{\hspost}@{}}%
\>[B]{}\mathbf{class}\;\Conid{IsProduct}\;\Varid{f}\Rightarrow \Conid{IsPointedProduct}\;\Varid{f}\;\mathbf{where}{}\<[E]%
\\
\>[B]{}\hsindent{5}{}\<[5]%
\>[5]{}\Varid{f1}\mathbin{::}\Varid{f}\;(){}\<[E]%
\ColumnHook
\end{hscode}\resethooks

We know that \ensuremath{\Varid{f}\;\mathbin{\in}\;\Conid{IsPointedProduct}}. Conversely, \ensuremath{\Conid{IsPointedProduct}}
embeds into \ensuremath{\Conid{Functorize}\;\Conid{AchLens}}:

\begin{hscode}\SaveRestoreHook
\column{B}{@{}>{\hspre}l<{\hspost}@{}}%
\column{5}{@{}>{\hspre}l<{\hspost}@{}}%
\column{9}{@{}>{\hspre}l<{\hspost}@{}}%
\column{11}{@{}>{\hspre}l<{\hspost}@{}}%
\column{E}{@{}>{\hspre}l<{\hspost}@{}}%
\>[B]{}\mathbf{instance}\;\Conid{IsPointedProduct}\;\Varid{f}\Rightarrow \Conid{Functorize}\;\Conid{AchLens}\;\Varid{f}\;\mathbf{where}{}\<[E]%
\\
\>[B]{}\hsindent{5}{}\<[5]%
\>[5]{}\Varid{enhanceFop}\mathbin{::}\Conid{AchLens}\;\Varid{a}\;\Varid{b}\;(\Varid{f}\;\Varid{a})\;(\Varid{f}\;\Varid{b}){}\<[E]%
\\
\>[B]{}\hsindent{5}{}\<[5]%
\>[5]{}\Varid{enhanceFop}\mathrel{=}\Conid{AchLens}\;\Varid{get}\;\Varid{put}\;\Varid{create}{}\<[E]%
\\
\>[5]{}\hsindent{4}{}\<[9]%
\>[9]{}\mathbf{where}{}\<[E]%
\\
\>[9]{}\hsindent{2}{}\<[11]%
\>[11]{}\Varid{get}\mathrel{=}\Varid{snd}\hsdot{\circ }{.\;}\Varid{toProduct}{}\<[E]%
\\
\>[9]{}\hsindent{2}{}\<[11]%
\>[11]{}\Varid{put}\;\Varid{b}\mathrel{=}\Varid{fromProduct}\hsdot{\circ }{.\;}\Varid{fmap}\;(\Varid{const}\;\Varid{b})\hsdot{\circ }{.\;}\Varid{toProduct}{}\<[E]%
\\
\>[9]{}\hsindent{2}{}\<[11]%
\>[11]{}\Varid{create}\;\Varid{b}\mathrel{=}\Varid{fromProduct}\;(\Varid{f1},\Varid{b}){}\<[E]%
\ColumnHook
\end{hscode}\resethooks

The laws are mostly identical to then lens case. In particular, the
wedge condition similarly holds by parametricity.

Thus \ensuremath{\Conid{Functorize}\;\Conid{AchLens}\cong \Conid{IsPointedProduct}}.

\hypertarget{deriving-a-profunctor-encoding-3}{%
\subsection{Deriving a profunctor
encoding}\label{deriving-a-profunctor-encoding-3}}

By the derivation theorem (\textcite{derivation_theorem}), to get a
profunctor encoding it is both necessary and sufficient to construct an
inverse to \ensuremath{\Varid{unfunctorize}\mathbin{::}\Conid{IsoOptic}\;(\Conid{Functorize}\;\Conid{AchLens})\leadsto \Conid{AchLens}}.

The \ensuremath{\Conid{Lens}\leadsto \Conid{IsoOptic}\;(\Conid{Functorize}\;\Conid{Lens})} transformation is defined using
the \ensuremath{(\Varid{c},\mathbin{-})} functor. Since achromatic lenses are lenses, we expect the
transformation to be similar. However we cannot create an instance for
the \ensuremath{(\Varid{c},\mathbin{-})} functor since we cannot write the
\ensuremath{\Varid{create}\mathbin{::}\mathbf{forall}\;\Varid{a}\hsforall \hsdot{\circ }{.\;}\Varid{a}\to (\Varid{c},\Varid{a})} function for an arbitrary type \ensuremath{\Varid{c}}. We
would need a default value for \ensuremath{\Varid{c}} to fill in the left side of the
product.

To get ``\ensuremath{\Varid{c}} with a default value'', we can simply wrap \ensuremath{\Varid{c}} in a \ensuremath{\Conid{Maybe}}
container, and consider the \ensuremath{(\Conid{Maybe}\;\Varid{c},\mathbin{-})} functor instead.

Since \ensuremath{\Conid{Functorize}\;\Conid{AchLens}\cong \Conid{IsPointedProduct}}, we only have to write
an \ensuremath{\Conid{IsPointedProduct}} instance for this functor. We can also reuse
\ensuremath{\mathbf{instance}\;\Conid{IsProduct}\;((,)\;\Varid{x})} with \ensuremath{\Varid{x}\mathrel{=}\Conid{Maybe}\;\Varid{c}}.

\begin{hscode}\SaveRestoreHook
\column{B}{@{}>{\hspre}l<{\hspost}@{}}%
\column{5}{@{}>{\hspre}l<{\hspost}@{}}%
\column{E}{@{}>{\hspre}l<{\hspost}@{}}%
\>[B]{}\mathbf{instance}\;\Conid{IsPointedProduct}\;((,)\;(\Conid{Maybe}\;\Varid{c}))\;\mathbf{where}{}\<[E]%
\\
\>[B]{}\hsindent{5}{}\<[5]%
\>[5]{}\Varid{f1}\mathrel{=}(\Conid{Nothing},()){}\<[E]%
\ColumnHook
\end{hscode}\resethooks

We automatically get a \ensuremath{\Conid{Functorize}\;\Conid{AchLens}} instance for \ensuremath{(\Conid{Maybe}\;\Varid{c},\mathbin{-})}.
\ensuremath{\Varid{enhanceFop}} is then equal to:

\begin{hscode}\SaveRestoreHook
\column{B}{@{}>{\hspre}l<{\hspost}@{}}%
\column{E}{@{}>{\hspre}l<{\hspost}@{}}%
\>[B]{}\Varid{enhanceFop}\mathbin{::}\Conid{AchLens}\;\Varid{a}\;\Varid{b}\;(\Conid{Maybe}\;\Varid{c},\Varid{a})\;(\Conid{Maybe}\;\Varid{c},\Varid{b}){}\<[E]%
\\
\>[B]{}\Varid{enhanceFop}\mathrel{=}\Conid{AchLens}\;\Varid{snd}\;(\lambda \Varid{b}\to \Varid{fmap}\;(\Varid{const}\;\Varid{b}))\;((,)\;\Conid{Nothing}){}\<[E]%
\ColumnHook
\end{hscode}\resethooks

We are now ready to write \ensuremath{\Varid{achLensToIso}}:

\begin{hscode}\SaveRestoreHook
\column{B}{@{}>{\hspre}l<{\hspost}@{}}%
\column{5}{@{}>{\hspre}l<{\hspost}@{}}%
\column{7}{@{}>{\hspre}l<{\hspost}@{}}%
\column{E}{@{}>{\hspre}l<{\hspost}@{}}%
\>[B]{}\Varid{achLensToIso}\mathbin{::}\Conid{AchLens}\leadsto \Conid{IsoOptic}\;\Conid{IsPointedProduct}{}\<[E]%
\\
\>[B]{}\Varid{achLensToIso}\;(\Conid{AchLens}\;\Varid{get}\;\Varid{put}\;\Varid{create})\mathrel{=}\Conid{IsoOptic}\;\alpha\;\beta{}\<[E]%
\\
\>[B]{}\hsindent{5}{}\<[5]%
\>[5]{}\mathbf{where}{}\<[E]%
\\
\>[5]{}\hsindent{2}{}\<[7]%
\>[7]{}\alpha\mathbin{::}\Varid{s}\to (\Conid{Maybe}\;\Varid{s},\Varid{a}){}\<[E]%
\\
\>[5]{}\hsindent{2}{}\<[7]%
\>[7]{}\alpha\;\Varid{s}\mathrel{=}(\Conid{Just}\;\Varid{s},\Varid{get}\;\Varid{s}){}\<[E]%
\\
\>[5]{}\hsindent{2}{}\<[7]%
\>[7]{}\beta\mathbin{::}(\Conid{Maybe}\;\Varid{s},\Varid{b})\to \Varid{t}{}\<[E]%
\\
\>[5]{}\hsindent{2}{}\<[7]%
\>[7]{}\beta\;(\Conid{Just}\;\Varid{s},\Varid{b})\mathrel{=}\Varid{put}\;\Varid{b}\;\Varid{s}{}\<[E]%
\\
\>[5]{}\hsindent{2}{}\<[7]%
\>[7]{}\beta\;(\Conid{Nothing},\Varid{b})\mathrel{=}\Varid{create}\;\Varid{b}{}\<[E]%
\ColumnHook
\end{hscode}\resethooks

We need \ensuremath{\Varid{achLensToIso}} to be the inverse of \ensuremath{\Varid{unfunctorize}}; see
\textcite{proof_deriving-a-profunctor-encoding-3} for the proof.
\bigskip

Hence
\ensuremath{\Conid{AchLens}\cong \Conid{ProfOptic}\;(\Conid{Functorize}\;\Conid{AchLens})\cong \Conid{ProfOptic}\;\Conid{IsPointedProduct}}.

The associated profunctor encoding is as follows:

\begin{hscode}\SaveRestoreHook
\column{B}{@{}>{\hspre}l<{\hspost}@{}}%
\column{5}{@{}>{\hspre}l<{\hspost}@{}}%
\column{E}{@{}>{\hspre}l<{\hspost}@{}}%
\>[B]{}\mathbf{class}\;\Conid{Distinguished}\;\Varid{a}\;\mathbf{where}{}\<[E]%
\\
\>[B]{}\hsindent{5}{}\<[5]%
\>[5]{}\Varid{distinguished}\mathbin{::}\Varid{a}{}\<[E]%
\\[\blanklineskip]%
\>[B]{}\mathbf{class}\;\Conid{PointedCartesian}\;\Varid{p}\;\mathbf{where}{}\<[E]%
\\
\>[B]{}\hsindent{5}{}\<[5]%
\>[5]{}\Varid{pointedSecond}\mathbin{::}\Conid{Distinguished}\;\Varid{c}\Rightarrow \Varid{p}\;\Varid{a}\;\Varid{b}\to \Varid{p}\;(\Varid{c},\Varid{a})\;(\Varid{c},\Varid{b}){}\<[E]%
\\[\blanklineskip]%
\>[B]{}\Conid{AchLens}\;\Varid{a}\;\Varid{b}\;\Varid{s}\;\Varid{t}\cong \mathbf{forall}\;\Varid{p}\hsforall \hsdot{\circ }{.\;}\Conid{PointedCartesian}\;\Varid{p}\Rightarrow \Varid{p}\;\Varid{a}\;\Varid{b}\to \Varid{p}\;\Varid{s}\;\Varid{t}{}\<[E]%
\ColumnHook
\end{hscode}\resethooks

We also have \ensuremath{\Conid{AchLens}\cong \Conid{IsoOptic}\;\Conid{IsPointedProduct}}.

This means that achromatic lenses, like lenses, have an equivalent
residual expression. Unlike lenses however, the residual has a
distinguished element:

\begin{hscode}\SaveRestoreHook
\column{B}{@{}>{\hspre}l<{\hspost}@{}}%
\column{E}{@{}>{\hspre}l<{\hspost}@{}}%
\>[B]{}\Conid{AchLens}\;\Varid{a}\;\Varid{b}\;\Varid{s}\;\Varid{t}\cong \mathbf{exists}\;\Varid{c}\hsexists \hsdot{\circ }{.\;}(\Varid{c},\Varid{s}\to \Varid{c}\mathbin{\mathbf{\times}}\Varid{a},\Varid{c}\mathbin{\mathbf{\times}}\Varid{b}\to \Varid{t}){}\<[E]%
\ColumnHook
\end{hscode}\resethooks

\hypertarget{conclusion}{%
\chapter{Conclusion}\label{conclusion}}

\hypertarget{discussion}{%
\section{Discussion}\label{discussion}}

This thesis provides two major insights into the nature of profunctor
optics:

First, profunctor optics are best understood through the lens
\footnote{pun intended} of isomorphism optics. Iso optics are a
reasonably intuitive description of optics and, as demonstrated by the
number of theorems derived in \textcite{isomorphism-optics}, are easy to
reason about formally.

An isomorphism optic encoding can also provide insights into the
structure of an optic family, like the residual representation of
lenses.

Secondly, given an optic family, we know which functor monoid is of
interest to try and derive a profunctor encoding. Moreover, the
derivation theorem greatly limits the difficulty of doing so. Most of
the remaining difficulty is in choosing the right functor when defining
the inverse to \ensuremath{\Varid{unfunctorize}}. Properties of the encoding (preservation
of dimap, composition and map) follow for free.

Finally, we have tested the usefulness of those theorems by successfully
using them to derive properties of several optic families.

\hypertarget{related-work}{%
\section{Related work}\label{related-work}}

Even though lenses have been extensively studied in the bidirectional
transformations
literature\autocite{first_lenses}\autocite{mlenses}\autocite{coalg_update_lenses},
work on optics (a.k.a \emph{functional references}) has mostly been done
informally through blog posts and IRC chats, and often by
non-researchers. The only published paper about optics is
\autocite{profunctor_optics}.

Despite this lack of academic attention, the research on optics is quite
active.

Isomorphism optics have been described as an alternative to other
encodings for lenses by T. Van Laarhoven\autocite{isomorphism_lenses},
and extended to the other common optics by R.
O'Connor\autocite{r6_profunctor_hierarchy}.

On profunctor optics, two representation theorems have been proved in
categorical terms for lenses, adapters and
prisms\autocite{repr_theorem_for_sopf}\autocite{bartosz_profops_categorically}.
The second one\autocite{bartosz_profops_categorically} notably proves
the isomorphism between iso optics and profunctor optics.

As far as the author knows however, this paper is the first attempt at a
truly general characterization of profunctor optics. Moreover, most
derivations usually gloss over important properties of the encoding like
preservation of composition.

\hypertarget{future-work}{%
\section{Future work}\label{future-work}}

The representation theorem opens up the way to deriving a lot of
properties about profunctor optics. Among the areas of interest would be
a general characterization of lawfulness for optic families, that could
be easily checked for a given profunctor optic.

The representation theorem may also be extended to describe ``one-way''
optics (Folds, Getters, \ldots).

The derivation theorem gives insights into the derivation of new
profunctor optics, but the story is not complete. Ideally, the
appropriate reverse transformation should be derived generically,
assuming some stronger axioms on optic families.

Finally, profunctor optics seem to have deep algebraic structure. For
example, \ensuremath{\Conid{Enhancing}} profunctors can be described as coalgebras of an
appropriate comonad. Profunctor optics also form a semi-lattice, with
composition acting as join\autocite{profunctor_optics}. Discovering the
full algebraic picture would pave the way to more general and robust
abstractions in the form of powerful tools for the functional
programmer.

\printbibliography

\begin{appendices}

\hypertarget{working-implementation}{%
\chapter{Working implementation}\label{working-implementation}}

To make the reasoning clearer, a number of simplifications have been
made in the paper regarding the actual definitions and implementations
presented. This appendix includes a working version of the notions
presented. It can be compiled using GHC 8.0.2 and the
\text{\ttfamily constraints}\autocite{constraints} package.

%auto-ignore
% Generated by pandoc syntax highlighting from impl.hs
\begin{Shaded}
\begin{Highlighting}[]
\OtherTok{\{{-}\# LANGUAGE ConstraintKinds \#{-}\}}
\OtherTok{\{{-}\# LANGUAGE RankNTypes \#{-}\}}
\OtherTok{\{{-}\# LANGUAGE ExistentialQuantification \#{-}\}}
\OtherTok{\{{-}\# LANGUAGE TypeOperators \#{-}\}}
\OtherTok{\{{-}\# LANGUAGE MultiParamTypeClasses \#{-}\}}
\OtherTok{\{{-}\# LANGUAGE KindSignatures \#{-}\}}
\OtherTok{\{{-}\# LANGUAGE FlexibleContexts \#{-}\}}
\OtherTok{\{{-}\# LANGUAGE FlexibleInstances \#{-}\}}
\OtherTok{\{{-}\# LANGUAGE ScopedTypeVariables \#{-}\}}
\OtherTok{\{{-}\# LANGUAGE InstanceSigs \#{-}\}}
\OtherTok{\{{-}\# LANGUAGE DeriveFunctor \#{-}\}}
\OtherTok{\{{-}\# LANGUAGE UndecidableSuperClasses \#{-}\}}
\OtherTok{\{{-}\# LANGUAGE TypeApplications \#{-}\}}
\OtherTok{\{{-}\# LANGUAGE AllowAmbiguousTypes \#{-}\}}

\KeywordTok{module} \DataTypeTok{ComposeRecords} \KeywordTok{where}

\KeywordTok{import} \DataTypeTok{Data.Constraint}

\KeywordTok{type}\NormalTok{ (}\OperatorTok{\textasciitilde{}\textasciitilde{}>}\NormalTok{) p q }\OtherTok{=} \KeywordTok{forall}\NormalTok{ a b s t}\OperatorTok{.}\NormalTok{ p a b s t }\OtherTok{{-}>}\NormalTok{ q a b s t}

\KeywordTok{type}\NormalTok{ (}\OperatorTok{+}\NormalTok{) }\OtherTok{=} \DataTypeTok{Either}

\OtherTok{lmap ::}\NormalTok{ (a }\OtherTok{{-}>}\NormalTok{ a\textquotesingle{}) }\OtherTok{{-}>}\NormalTok{ a }\OperatorTok{+}\NormalTok{ b }\OtherTok{{-}>}\NormalTok{ a\textquotesingle{} }\OperatorTok{+}\NormalTok{ b}
\NormalTok{lmap f }\OtherTok{=} \FunctionTok{either}\NormalTok{ (}\DataTypeTok{Left} \OperatorTok{.}\NormalTok{ f) }\DataTypeTok{Right}

\KeywordTok{newtype} \DataTypeTok{Id}\NormalTok{ a }\OtherTok{=} \DataTypeTok{Id}\NormalTok{ \{}\OtherTok{ unId ::}\NormalTok{ a \}}

\KeywordTok{newtype} \DataTypeTok{Compose}\NormalTok{ f g x }\OtherTok{=} \DataTypeTok{Compose}\NormalTok{ \{}\OtherTok{ unCompose ::}\NormalTok{ f (g x) \}}

\KeywordTok{class}\NormalTok{ c }\DataTypeTok{Id} \OtherTok{=>} \DataTypeTok{FunctorMonoid}\NormalTok{ c }\KeywordTok{where}
\OtherTok{    functorProof ::} \KeywordTok{forall}\NormalTok{ f}\OperatorTok{.}\NormalTok{ c f }\OperatorTok{:{-}} \DataTypeTok{Functor}\NormalTok{ f}
\OtherTok{    fmapFMonoid ::} \KeywordTok{forall}\NormalTok{ f x y}\OperatorTok{.}\NormalTok{ c f }\OtherTok{=>}\NormalTok{ (x }\OtherTok{{-}>}\NormalTok{ y) }\OtherTok{{-}>}\NormalTok{ f x }\OtherTok{{-}>}\NormalTok{ f y}
\NormalTok{    fmapFMonoid }\OtherTok{=} \FunctionTok{fmap}\NormalTok{ \textbackslash{}\textbackslash{} (}\OtherTok{functorProof ::}\NormalTok{ c f }\OperatorTok{:{-}} \DataTypeTok{Functor}\NormalTok{ f)}

\OtherTok{    composeProof ::} \KeywordTok{forall}\NormalTok{ f g}\OperatorTok{.}\NormalTok{ (c f, c g) }\OperatorTok{:{-}}\NormalTok{ c (}\DataTypeTok{Compose}\NormalTok{ f g)}

\KeywordTok{instance} \DataTypeTok{Functor} \DataTypeTok{Id} \KeywordTok{where}
    \FunctionTok{fmap}\NormalTok{ f }\OtherTok{=} \DataTypeTok{Id} \OperatorTok{.}\NormalTok{ f }\OperatorTok{.}\NormalTok{ unId}

\KeywordTok{instance}\NormalTok{ (}\DataTypeTok{Functor}\NormalTok{ f, }\DataTypeTok{Functor}\NormalTok{ g) }\OtherTok{=>} \DataTypeTok{Functor}\NormalTok{ (}\DataTypeTok{Compose}\NormalTok{ f g) }\KeywordTok{where}
    \FunctionTok{fmap}\NormalTok{ f }\OtherTok{=} \DataTypeTok{Compose} \OperatorTok{.} \FunctionTok{fmap}\NormalTok{ (}\FunctionTok{fmap}\NormalTok{ f) }\OperatorTok{.}\NormalTok{ unCompose}

\KeywordTok{instance} \DataTypeTok{FunctorMonoid} \DataTypeTok{Functor} \KeywordTok{where}
\NormalTok{    functorProof }\OtherTok{=} \DataTypeTok{Sub} \DataTypeTok{Dict}
\NormalTok{    composeProof }\OtherTok{=} \DataTypeTok{Sub} \DataTypeTok{Dict}

\KeywordTok{class} \DataTypeTok{Profunctor}\NormalTok{ p }\KeywordTok{where}
\OtherTok{    dimap ::}\NormalTok{ (a\textquotesingle{} }\OtherTok{{-}>}\NormalTok{ a) }\OtherTok{{-}>}\NormalTok{ (b }\OtherTok{{-}>}\NormalTok{ b\textquotesingle{}) }\OtherTok{{-}>}\NormalTok{ (p a b }\OtherTok{{-}>}\NormalTok{ p a\textquotesingle{} b\textquotesingle{})}

\KeywordTok{instance} \DataTypeTok{Profunctor}\NormalTok{ (}\OtherTok{{-}>}\NormalTok{) }\KeywordTok{where}
\NormalTok{    dimap f g h }\OtherTok{=}\NormalTok{ g }\OperatorTok{.}\NormalTok{ h }\OperatorTok{.}\NormalTok{ f}

\KeywordTok{class}\NormalTok{ (}\DataTypeTok{FunctorMonoid}\NormalTok{ c, }\DataTypeTok{Profunctor}\NormalTok{ p) }\OtherTok{=>} \DataTypeTok{Enhancing}\NormalTok{ c p }\KeywordTok{where}
\OtherTok{    enhance ::} \KeywordTok{forall}\NormalTok{ f a b}\OperatorTok{.}\NormalTok{ c f }\OtherTok{=>}\NormalTok{ p a b }\OtherTok{{-}>}\NormalTok{ p (f a) (f b)}

\KeywordTok{instance} \DataTypeTok{FunctorMonoid}\NormalTok{ c }\OtherTok{=>} \DataTypeTok{Enhancing}\NormalTok{ c (}\OtherTok{{-}>}\NormalTok{) }\KeywordTok{where}
\NormalTok{    enhance }\OtherTok{=}\NormalTok{ fmapFMonoid }\OperatorTok{@}\NormalTok{c}

\KeywordTok{class} \DataTypeTok{OpticFamily}\NormalTok{ (}\OtherTok{op ::} \OperatorTok{*} \OtherTok{{-}>} \OperatorTok{*} \OtherTok{{-}>} \OperatorTok{*} \OtherTok{{-}>} \OperatorTok{*} \OtherTok{{-}>} \OperatorTok{*}\NormalTok{) }\KeywordTok{where}
\OtherTok{    idOptic ::}\NormalTok{ op a b a b}
\NormalTok{    idOptic }\OtherTok{=}\NormalTok{ injOptic }\FunctionTok{id} \FunctionTok{id}
\OtherTok{    composeOptic ::}\NormalTok{ op a b s t }\OtherTok{{-}>}\NormalTok{ op x y a b }\OtherTok{{-}>}\NormalTok{ op x y s t}
\OtherTok{    injOptic ::}\NormalTok{ (s }\OtherTok{{-}>}\NormalTok{ a) }\OtherTok{{-}>}\NormalTok{ (b }\OtherTok{{-}>}\NormalTok{ t) }\OtherTok{{-}>}\NormalTok{ op a b s t}
\OtherTok{    mapOptic ::}\NormalTok{ op a b s t }\OtherTok{{-}>}\NormalTok{ (a }\OtherTok{{-}>}\NormalTok{ b) }\OtherTok{{-}>}\NormalTok{ (s }\OtherTok{{-}>}\NormalTok{ t)}

\OtherTok{(.:.) ::} \DataTypeTok{OpticFamily}\NormalTok{ op }\OtherTok{=>}\NormalTok{ op a b s t }\OtherTok{{-}>}\NormalTok{ op x y a b }\OtherTok{{-}>}\NormalTok{ op x y s t}
\NormalTok{(}\OperatorTok{.:.}\NormalTok{) }\OtherTok{=}\NormalTok{ composeOptic}

\KeywordTok{instance} \DataTypeTok{OpticFamily}\NormalTok{ op }\OtherTok{=>} \DataTypeTok{Profunctor}\NormalTok{ (op a b) }\KeywordTok{where}
\NormalTok{    dimap f g op }\OtherTok{=}\NormalTok{ injOptic f g }\OperatorTok{.:.}\NormalTok{ op}

\KeywordTok{data} \DataTypeTok{IsoOptic}\NormalTok{ c a b s t }\OtherTok{=} \KeywordTok{forall}\NormalTok{ f}\OperatorTok{.}\NormalTok{ c f }\OtherTok{=>} \DataTypeTok{IsoOptic}\NormalTok{ (s }\OtherTok{{-}>}\NormalTok{ f a) (f b }\OtherTok{{-}>}\NormalTok{ t)}

\OtherTok{enhanceIso ::}\NormalTok{ c f }\OtherTok{=>} \DataTypeTok{IsoOptic}\NormalTok{ c a b (f a) (f b)}
\NormalTok{enhanceIso }\OtherTok{=} \DataTypeTok{IsoOptic} \FunctionTok{id} \FunctionTok{id}

\OtherTok{composeIso ::} \KeywordTok{forall}\NormalTok{ c a b ta tb tta ttb}\OperatorTok{.} \DataTypeTok{FunctorMonoid}\NormalTok{ c}
      \OtherTok{=>} \DataTypeTok{IsoOptic}\NormalTok{ c ta tb tta ttb }\OtherTok{{-}>} \DataTypeTok{IsoOptic}\NormalTok{ c a b ta tb}
          \OtherTok{{-}>} \DataTypeTok{IsoOptic}\NormalTok{ c a b tta ttb}
\NormalTok{composeIso}
\NormalTok{    (}\DataTypeTok{IsoOptic}\NormalTok{ (}\OtherTok{alpha1 ::}\NormalTok{ tta }\OtherTok{{-}>}\NormalTok{ f ta) (}\OtherTok{beta1 ::}\NormalTok{ f tb }\OtherTok{{-}>}\NormalTok{ ttb))}
\NormalTok{    (}\DataTypeTok{IsoOptic}\NormalTok{ (}\OtherTok{alpha2 ::}\NormalTok{ ta }\OtherTok{{-}>}\NormalTok{ g a) (}\OtherTok{beta2 ::}\NormalTok{ g b }\OtherTok{{-}>}\NormalTok{ tb)) }\OtherTok{=}
\NormalTok{      (}\DataTypeTok{IsoOptic}\NormalTok{ \textbackslash{}\textbackslash{} (}\OtherTok{composeProof ::}\NormalTok{ (c f, c g) }\OperatorTok{:{-}}\NormalTok{ c (}\DataTypeTok{Compose}\NormalTok{ f g))) alpha beta}
        \KeywordTok{where}
\OtherTok{            alpha ::}\NormalTok{ tta }\OtherTok{{-}>} \DataTypeTok{Compose}\NormalTok{ f g a}
\NormalTok{            alpha }\OtherTok{=} \DataTypeTok{Compose} \OperatorTok{.}\NormalTok{ fmapFMonoid }\OperatorTok{@}\NormalTok{c alpha2 }\OperatorTok{.}\NormalTok{ alpha1}
\OtherTok{            beta ::} \DataTypeTok{Compose}\NormalTok{ f g b }\OtherTok{{-}>}\NormalTok{ ttb}
\NormalTok{            beta }\OtherTok{=}\NormalTok{ beta1 }\OperatorTok{.}\NormalTok{ fmapFMonoid }\OperatorTok{@}\NormalTok{c beta2 }\OperatorTok{.}\NormalTok{ unCompose}

\KeywordTok{instance} \DataTypeTok{FunctorMonoid}\NormalTok{ c }\OtherTok{=>} \DataTypeTok{OpticFamily}\NormalTok{ (}\DataTypeTok{IsoOptic}\NormalTok{ c) }\KeywordTok{where}
\NormalTok{    composeOptic }\OtherTok{=}\NormalTok{ composeIso}
\NormalTok{    injOptic f g }\OtherTok{=} \DataTypeTok{IsoOptic}\NormalTok{ (}\DataTypeTok{Id} \OperatorTok{.}\NormalTok{ f) (g }\OperatorTok{.}\NormalTok{ unId)}
\NormalTok{    mapOptic (}\DataTypeTok{IsoOptic}\NormalTok{ alpha beta) f }\OtherTok{=}\NormalTok{ beta }\OperatorTok{.}\NormalTok{ fmapFMonoid }\OperatorTok{@}\NormalTok{c f }\OperatorTok{.}\NormalTok{ alpha}

\KeywordTok{type} \DataTypeTok{ProfOptic}\NormalTok{ c a b s t }\OtherTok{=} \KeywordTok{forall}\NormalTok{ p}\OperatorTok{.} \DataTypeTok{Enhancing}\NormalTok{ c p }\OtherTok{=>}\NormalTok{ p a b }\OtherTok{{-}>}\NormalTok{ p s t}

\KeywordTok{newtype} \DataTypeTok{WrapProfOptic}\NormalTok{ c a b s t }\OtherTok{=} \DataTypeTok{WPO}\NormalTok{ \{}\OtherTok{ unWPO ::} \DataTypeTok{ProfOptic}\NormalTok{ c a b s t \}}

\KeywordTok{instance} \DataTypeTok{FunctorMonoid}\NormalTok{ c }\OtherTok{=>} \DataTypeTok{OpticFamily}\NormalTok{ (}\DataTypeTok{WrapProfOptic}\NormalTok{ c) }\KeywordTok{where}
\NormalTok{    injOptic f g }\OtherTok{=} \DataTypeTok{WPO} \OperatorTok{$}\NormalTok{ dimap f g}
\NormalTok{    composeOptic (}\DataTypeTok{WPO}\NormalTok{ l1) (}\DataTypeTok{WPO}\NormalTok{ l2) }\OtherTok{=} \DataTypeTok{WPO} \OperatorTok{$}\NormalTok{ l1 }\OperatorTok{.}\NormalTok{ l2}
\NormalTok{    mapOptic (}\DataTypeTok{WPO}\NormalTok{ l) }\OtherTok{=}\NormalTok{ l}

\KeywordTok{class} \DataTypeTok{Functor}\NormalTok{ f }\OtherTok{=>} \DataTypeTok{Functorize}\NormalTok{ op f }\KeywordTok{where}
\OtherTok{    enhanceFop ::}\NormalTok{ op a b (f a) (f b)}

\KeywordTok{instance} \DataTypeTok{OpticFamily}\NormalTok{ op }\OtherTok{=>} \DataTypeTok{Functorize}\NormalTok{ op }\DataTypeTok{Id} \KeywordTok{where}
\NormalTok{    enhanceFop }\OtherTok{=}\NormalTok{ injOptic unId }\DataTypeTok{Id}

\KeywordTok{instance}\NormalTok{ (}\DataTypeTok{OpticFamily}\NormalTok{ op, }\DataTypeTok{Functorize}\NormalTok{ op f, }\DataTypeTok{Functorize}\NormalTok{ op g) }\OtherTok{=>}
        \DataTypeTok{Functorize}\NormalTok{ op (}\DataTypeTok{Compose}\NormalTok{ f g) }\KeywordTok{where}
\NormalTok{    enhanceFop }\OtherTok{=}\NormalTok{ injOptic unCompose }\DataTypeTok{Compose} \OperatorTok{.:.}\NormalTok{ enhanceFop }\OperatorTok{.:.}\NormalTok{ enhanceFop}

\KeywordTok{instance} \DataTypeTok{OpticFamily}\NormalTok{ op }\OtherTok{=>} \DataTypeTok{FunctorMonoid}\NormalTok{ (}\DataTypeTok{Functorize}\NormalTok{ op) }\KeywordTok{where}
\NormalTok{    functorProof }\OtherTok{=} \DataTypeTok{Sub} \DataTypeTok{Dict}
\NormalTok{    composeProof }\OtherTok{=} \DataTypeTok{Sub} \DataTypeTok{Dict}

\KeywordTok{class}\NormalTok{ (}\DataTypeTok{OpticFamily}\NormalTok{ op, }\DataTypeTok{FunctorMonoid}\NormalTok{ c) }\OtherTok{=>} \DataTypeTok{Enhanceable}\NormalTok{ c op }\KeywordTok{where}
\OtherTok{    enhanceOp ::}\NormalTok{ c f }\OtherTok{=>}\NormalTok{ op a b (f a) (f b)}

\KeywordTok{instance} \DataTypeTok{FunctorMonoid}\NormalTok{ c }\OtherTok{=>} \DataTypeTok{Enhanceable}\NormalTok{ c (}\DataTypeTok{IsoOptic}\NormalTok{ c) }\KeywordTok{where}
\NormalTok{    enhanceOp }\OtherTok{=}\NormalTok{ enhanceIso}

\KeywordTok{instance} \DataTypeTok{FunctorMonoid}\NormalTok{ c }\OtherTok{=>} \DataTypeTok{Enhanceable}\NormalTok{ c (}\DataTypeTok{WrapProfOptic}\NormalTok{ c) }\KeywordTok{where}
\NormalTok{    enhanceOp }\OtherTok{=} \DataTypeTok{WPO}\NormalTok{ (enhance }\OperatorTok{@}\NormalTok{c)}

\KeywordTok{instance} \DataTypeTok{OpticFamily}\NormalTok{ op }\OtherTok{=>} \DataTypeTok{Enhanceable}\NormalTok{ (}\DataTypeTok{Functorize}\NormalTok{ op) op }\KeywordTok{where}
\NormalTok{    enhanceOp }\OtherTok{=}\NormalTok{ enhanceFop}

\OtherTok{enhanceToArrow ::} \KeywordTok{forall}\NormalTok{ c op}\OperatorTok{.} \DataTypeTok{Enhanceable}\NormalTok{ c op }\OtherTok{=>} \DataTypeTok{IsoOptic}\NormalTok{ c }\OperatorTok{\textasciitilde{}\textasciitilde{}>}\NormalTok{ op}
\NormalTok{enhanceToArrow (}\DataTypeTok{IsoOptic}\NormalTok{ alpha beta) }\OtherTok{=}\NormalTok{ injOptic alpha beta }\OperatorTok{.:.}\NormalTok{ enhanceOp }\OperatorTok{@}\NormalTok{c}

\KeywordTok{instance} \DataTypeTok{Enhanceable}\NormalTok{ c op }\OtherTok{=>} \DataTypeTok{Enhancing}\NormalTok{ c (op a b) }\KeywordTok{where}
\NormalTok{    enhance l }\OtherTok{=}\NormalTok{ enhanceOp }\OperatorTok{@}\NormalTok{c }\OperatorTok{.:.}\NormalTok{ l}

\OtherTok{profEnhanceToArrow ::} \DataTypeTok{Enhanceable}\NormalTok{ c op }\OtherTok{=>} \DataTypeTok{ProfOptic}\NormalTok{ c }\OperatorTok{\textasciitilde{}\textasciitilde{}>}\NormalTok{ op}
\NormalTok{profEnhanceToArrow l }\OtherTok{=}\NormalTok{ l idOptic}

\OtherTok{isoToProf ::} \KeywordTok{forall}\NormalTok{ c}\OperatorTok{.} \DataTypeTok{IsoOptic}\NormalTok{ c }\OperatorTok{\textasciitilde{}\textasciitilde{}>} \DataTypeTok{ProfOptic}\NormalTok{ c}
\NormalTok{isoToProf }\OtherTok{=}\NormalTok{ unWPO }\OperatorTok{@}\NormalTok{c }\OperatorTok{.}\NormalTok{ enhanceToArrow}

\OtherTok{profToIso ::} \KeywordTok{forall}\NormalTok{ c}\OperatorTok{.} \DataTypeTok{FunctorMonoid}\NormalTok{ c }\OtherTok{=>} \DataTypeTok{ProfOptic}\NormalTok{ c }\OperatorTok{\textasciitilde{}\textasciitilde{}>} \DataTypeTok{IsoOptic}\NormalTok{ c}
\NormalTok{profToIso }\OtherTok{=}\NormalTok{ profEnhanceToArrow }\OperatorTok{@}\NormalTok{c}

\OtherTok{unfunctorize ::} \DataTypeTok{OpticFamily}\NormalTok{ op }\OtherTok{=>} \DataTypeTok{IsoOptic}\NormalTok{ (}\DataTypeTok{Functorize}\NormalTok{ op) }\OperatorTok{\textasciitilde{}\textasciitilde{}>}\NormalTok{ op}
\NormalTok{unfunctorize }\OtherTok{=}\NormalTok{ enhanceToArrow}

\KeywordTok{class}\NormalTok{ (}\DataTypeTok{OpticFamily}\NormalTok{ op, }\DataTypeTok{FunctorMonoid}\NormalTok{ c) }\OtherTok{=>} \DataTypeTok{ProfEnc}\NormalTok{ c op }\KeywordTok{where}
\OtherTok{    encodeProf ::}\NormalTok{ op }\OperatorTok{\textasciitilde{}\textasciitilde{}>} \DataTypeTok{ProfOptic}\NormalTok{ c}
\OtherTok{    decodeProf ::} \DataTypeTok{ProfOptic}\NormalTok{ c }\OperatorTok{\textasciitilde{}\textasciitilde{}>}\NormalTok{ op}

\KeywordTok{class} \DataTypeTok{OpticFamily}\NormalTok{ op }\OtherTok{=>} \DataTypeTok{ProfEncF}\NormalTok{ op }\KeywordTok{where}
\OtherTok{    concreteToIso ::}\NormalTok{ op }\OperatorTok{\textasciitilde{}\textasciitilde{}>} \DataTypeTok{IsoOptic}\NormalTok{ (}\DataTypeTok{Functorize}\NormalTok{ op)}

\KeywordTok{instance} \DataTypeTok{ProfEncF}\NormalTok{ op }\OtherTok{=>} \DataTypeTok{ProfEnc}\NormalTok{ (}\DataTypeTok{Functorize}\NormalTok{ op) op }\KeywordTok{where}
\NormalTok{    encodeProf }\OtherTok{=}\NormalTok{ isoToProf }\OperatorTok{.}\NormalTok{ concreteToIso}
\NormalTok{    decodeProf l }\OtherTok{=}\NormalTok{ enhanceToArrow }\OperatorTok{$}\NormalTok{ profToIso }\OperatorTok{@}\NormalTok{(}\DataTypeTok{Functorize}\NormalTok{ op) l}

\CommentTok{{-}{-} Adapter}
\KeywordTok{data} \DataTypeTok{Adapter}\NormalTok{ a b s t }\OtherTok{=} \DataTypeTok{Adapter}\NormalTok{ (s }\OtherTok{{-}>}\NormalTok{ a) (b }\OtherTok{{-}>}\NormalTok{ t)}

\KeywordTok{instance} \DataTypeTok{OpticFamily} \DataTypeTok{Adapter} \KeywordTok{where}
\NormalTok{    injOptic f g }\OtherTok{=} \DataTypeTok{Adapter}\NormalTok{ f g}
\NormalTok{    composeOptic (}\DataTypeTok{Adapter}\NormalTok{ f1 g1) (}\DataTypeTok{Adapter}\NormalTok{ f2 g2) }\OtherTok{=}
        \DataTypeTok{Adapter}\NormalTok{ (f2 }\OperatorTok{.}\NormalTok{ f1) (g1 }\OperatorTok{.}\NormalTok{ g2)}
\NormalTok{    mapOptic (}\DataTypeTok{Adapter}\NormalTok{ f g) }\OtherTok{=}\NormalTok{ dimap f g}

\KeywordTok{instance} \DataTypeTok{ProfEncF} \DataTypeTok{Adapter} \KeywordTok{where}
\NormalTok{    concreteToIso (}\DataTypeTok{Adapter}\NormalTok{ f g) }\OtherTok{=}\NormalTok{ injOptic f g}

\CommentTok{{-}{-} Lens}
\KeywordTok{data} \DataTypeTok{Lens}\NormalTok{ a b s t }\OtherTok{=} \DataTypeTok{Lens}\NormalTok{ \{}\OtherTok{ lget ::}\NormalTok{ s }\OtherTok{{-}>}\NormalTok{ a,}\OtherTok{ lput ::}\NormalTok{ b }\OtherTok{{-}>}\NormalTok{ s }\OtherTok{{-}>}\NormalTok{ t \}}

\KeywordTok{instance} \DataTypeTok{OpticFamily} \DataTypeTok{Lens} \KeywordTok{where}
\NormalTok{    injOptic f g }\OtherTok{=} \DataTypeTok{Lens}\NormalTok{ f (}\FunctionTok{const} \OperatorTok{.}\NormalTok{ g)}
\NormalTok{    composeOptic (}\DataTypeTok{Lens}\NormalTok{ get1 put1) (}\DataTypeTok{Lens}\NormalTok{ get2 put2) }\OtherTok{=}
        \DataTypeTok{Lens}\NormalTok{ (get2 }\OperatorTok{.}\NormalTok{ get1) (\textbackslash{}y s }\OtherTok{{-}>}\NormalTok{ put1 (put2 y (get1 s)) s)}
\NormalTok{    mapOptic (}\DataTypeTok{Lens}\NormalTok{ get put) f s }\OtherTok{=}\NormalTok{ put (f (get s)) s}

\KeywordTok{instance} \DataTypeTok{Functorize} \DataTypeTok{Lens}\NormalTok{ ((,) c) }\KeywordTok{where}
\OtherTok{    enhanceFop ::} \DataTypeTok{Lens}\NormalTok{ a b (c, a) (c, b)}
\NormalTok{    enhanceFop }\OtherTok{=} \DataTypeTok{Lens} \FunctionTok{snd}\NormalTok{ (}\FunctionTok{fmap} \OperatorTok{.} \FunctionTok{const}\NormalTok{)}

\KeywordTok{instance} \DataTypeTok{ProfEncF} \DataTypeTok{Lens} \KeywordTok{where}
\NormalTok{    concreteToIso (}\DataTypeTok{Lens}\NormalTok{ get put) }\OtherTok{=}
        \DataTypeTok{IsoOptic}\NormalTok{ (\textbackslash{}s }\OtherTok{{-}>}\NormalTok{ (s, get s)) (\textbackslash{}(s, b) }\OtherTok{{-}>}\NormalTok{ put b s)}

\OtherTok{getProf ::} \DataTypeTok{ProfOptic}\NormalTok{ (}\DataTypeTok{Functorize} \DataTypeTok{Lens}\NormalTok{) a b s t }\OtherTok{{-}>}\NormalTok{ (s }\OtherTok{{-}>}\NormalTok{ a)}
\NormalTok{getProf l }\OtherTok{=}\NormalTok{ get}
    \KeywordTok{where} \DataTypeTok{Lens}\NormalTok{ get \_ }\OtherTok{=}\NormalTok{ decodeProf }\OperatorTok{@}\NormalTok{(}\DataTypeTok{Functorize} \DataTypeTok{Lens}\NormalTok{) l}

\OtherTok{first ::} \DataTypeTok{Lens}\NormalTok{ a b (a, c) (b, c)}
\NormalTok{first }\OtherTok{=} \DataTypeTok{Lens} \FunctionTok{fst}\NormalTok{ (\textbackslash{}x (\_, y) }\OtherTok{{-}>}\NormalTok{ (x, y))}

\OtherTok{firstOf4 ::} \DataTypeTok{Lens}\NormalTok{ a a\textquotesingle{} (((a, b), c), d) (((a\textquotesingle{}, b), c), d)}
\NormalTok{firstOf4 }\OtherTok{=}\NormalTok{ first }\OperatorTok{.:.}\NormalTok{ first }\OperatorTok{.:.}\NormalTok{ first}

\OtherTok{first\textquotesingle{} ::} \DataTypeTok{ProfOptic}\NormalTok{ (}\DataTypeTok{Functorize} \DataTypeTok{Lens}\NormalTok{) a b (a, c) (b, c)}
\NormalTok{first\textquotesingle{} }\OtherTok{=}\NormalTok{ encodeProf }\OperatorTok{@}\NormalTok{(}\DataTypeTok{Functorize} \DataTypeTok{Lens}\NormalTok{) first}

\OtherTok{firstOf4\textquotesingle{} ::} \DataTypeTok{ProfOptic}\NormalTok{ (}\DataTypeTok{Functorize} \DataTypeTok{Lens}\NormalTok{) a a\textquotesingle{} (((a, b), c), d) (((a\textquotesingle{}, b), c), d)}
\NormalTok{firstOf4\textquotesingle{} }\OtherTok{=}\NormalTok{ first\textquotesingle{} }\OperatorTok{.}\NormalTok{ first\textquotesingle{} }\OperatorTok{.}\NormalTok{ first\textquotesingle{}}

\OtherTok{getFirstOf4 ::}\NormalTok{ (((a, b), c), d) }\OtherTok{{-}>}\NormalTok{ a}
\NormalTok{getFirstOf4 }\OtherTok{=}\NormalTok{ getProf firstOf4\textquotesingle{}}

\CommentTok{{-}{-} Prism}
\KeywordTok{data} \DataTypeTok{Prism}\NormalTok{ a b s t }\OtherTok{=} \DataTypeTok{Prism}\NormalTok{ \{}\OtherTok{ pmatch ::}\NormalTok{ s }\OtherTok{{-}>}\NormalTok{ t }\OperatorTok{+}\NormalTok{ a,}\OtherTok{ pbuild ::}\NormalTok{ b }\OtherTok{{-}>}\NormalTok{ t \}}

\KeywordTok{instance} \DataTypeTok{OpticFamily} \DataTypeTok{Prism} \KeywordTok{where}
\NormalTok{    injOptic f g }\OtherTok{=} \DataTypeTok{Prism}\NormalTok{ (}\DataTypeTok{Right} \OperatorTok{.}\NormalTok{ f) g}
\NormalTok{    composeOptic (}\DataTypeTok{Prism}\NormalTok{ m1 b1) (}\DataTypeTok{Prism}\NormalTok{ m2 b2) }\OtherTok{=}
        \DataTypeTok{Prism}\NormalTok{ (}\FunctionTok{either} \DataTypeTok{Left} \FunctionTok{id} \OperatorTok{.} \FunctionTok{fmap}\NormalTok{ (lmap b1 }\OperatorTok{.}\NormalTok{ m2) }\OperatorTok{.}\NormalTok{ m1) (b1 }\OperatorTok{.}\NormalTok{ b2)}
\NormalTok{    mapOptic (}\DataTypeTok{Prism}\NormalTok{ match build) f }\OtherTok{=} \FunctionTok{either} \FunctionTok{id} \FunctionTok{id} \OperatorTok{.} \FunctionTok{fmap}\NormalTok{ (build }\OperatorTok{.}\NormalTok{ f) }\OperatorTok{.}\NormalTok{ match}

\KeywordTok{instance} \DataTypeTok{Functorize} \DataTypeTok{Prism}\NormalTok{ ((}\OperatorTok{+}\NormalTok{) c) }\KeywordTok{where}
\OtherTok{    enhanceFop ::} \DataTypeTok{Prism}\NormalTok{ a b (c }\OperatorTok{+}\NormalTok{ a) (c }\OperatorTok{+}\NormalTok{ b)}
\NormalTok{    enhanceFop }\OtherTok{=} \DataTypeTok{Prism}\NormalTok{ (}\FunctionTok{either}\NormalTok{ (}\DataTypeTok{Left} \OperatorTok{.} \DataTypeTok{Left}\NormalTok{) }\DataTypeTok{Right}\NormalTok{) }\DataTypeTok{Right}

\KeywordTok{instance} \DataTypeTok{ProfEncF} \DataTypeTok{Prism} \KeywordTok{where}
\NormalTok{    concreteToIso (}\DataTypeTok{Prism}\NormalTok{ match build) }\OtherTok{=} \DataTypeTok{IsoOptic}\NormalTok{ match (}\FunctionTok{either} \FunctionTok{id}\NormalTok{ build)}

\CommentTok{{-}{-} Setter}
\KeywordTok{data} \DataTypeTok{Setter}\NormalTok{ a b s t }\OtherTok{=} \DataTypeTok{Setter}\NormalTok{ ((a }\OtherTok{{-}>}\NormalTok{ b) }\OtherTok{{-}>}\NormalTok{ (s }\OtherTok{{-}>}\NormalTok{ t))}

\KeywordTok{instance} \DataTypeTok{OpticFamily} \DataTypeTok{Setter} \KeywordTok{where}
\NormalTok{    injOptic f g }\OtherTok{=} \DataTypeTok{Setter}\NormalTok{ (dimap f g)}
\NormalTok{    composeOptic (}\DataTypeTok{Setter}\NormalTok{ f1) (}\DataTypeTok{Setter}\NormalTok{ f2) }\OtherTok{=} \DataTypeTok{Setter}\NormalTok{ (f1 }\OperatorTok{.}\NormalTok{ f2)}
\NormalTok{    mapOptic (}\DataTypeTok{Setter}\NormalTok{ f) }\OtherTok{=}\NormalTok{ f}

\KeywordTok{newtype} \DataTypeTok{CPS}\NormalTok{ t b a }\OtherTok{=} \DataTypeTok{CPS}\NormalTok{ \{}\OtherTok{ unCPS ::}\NormalTok{ (a }\OtherTok{{-}>}\NormalTok{ b) }\OtherTok{{-}>}\NormalTok{ t \}}

\KeywordTok{instance} \DataTypeTok{Functor}\NormalTok{ (}\DataTypeTok{CPS}\NormalTok{ t b) }\KeywordTok{where}
    \FunctionTok{fmap}\NormalTok{ f }\OtherTok{=} \DataTypeTok{CPS} \OperatorTok{.}\NormalTok{ (}\OperatorTok{.}\NormalTok{ (}\OperatorTok{.}\NormalTok{ f)) }\OperatorTok{.}\NormalTok{ unCPS}

\KeywordTok{instance} \DataTypeTok{Functor}\NormalTok{ f }\OtherTok{=>} \DataTypeTok{Functorize} \DataTypeTok{Setter}\NormalTok{ f }\KeywordTok{where}
\NormalTok{    enhanceFop }\OtherTok{=} \DataTypeTok{Setter} \FunctionTok{fmap}

\KeywordTok{instance} \DataTypeTok{ProfEncF} \DataTypeTok{Setter} \KeywordTok{where}
\NormalTok{    concreteToIso (}\DataTypeTok{Setter}\NormalTok{ f) }\OtherTok{=} \DataTypeTok{IsoOptic}\NormalTok{ (}\DataTypeTok{CPS} \OperatorTok{.} \FunctionTok{flip}\NormalTok{ f) ((}\OperatorTok{$} \FunctionTok{id}\NormalTok{) }\OperatorTok{.}\NormalTok{ unCPS)}
\end{Highlighting}
\end{Shaded}

\hypertarget{proofs}{%
\chapter{Proofs}\label{proofs}}

Some longer proofs are included here instead of the main body.

\hypertarget{section}{%
\section{\texorpdfstring{\Cref{lens_is_profop_isproduct}}{}}\label{section}}

\label{proof_lens_is_profop_isproduct}

Direction \ensuremath{\Conid{Cartesian}\;\Varid{p}\Rightarrow \Conid{Enhancing}\;\Conid{IsProduct}\;\Varid{p}\Rightarrow \Conid{Cartesian}\;\Varid{p}}:

\begin{hscode}\SaveRestoreHook
\column{B}{@{}>{\hspre}l<{\hspost}@{}}%
\column{3}{@{}>{\hspre}l<{\hspost}@{}}%
\column{E}{@{}>{\hspre}l<{\hspost}@{}}%
\>[B]{}\Varid{second'}{}\<[E]%
\\
\>[B]{}\mathrel{=}\mbox{\commentbegin   \ensuremath{\mathbf{instance}\;\Conid{Enhancing}\;\Conid{IsProduct}\;\Varid{p}\Rightarrow \Conid{Cartesian}\;\Varid{p}}   \commentend}{}\<[E]%
\\
\>[B]{}\hsindent{3}{}\<[3]%
\>[3]{}\Varid{enhance}\;@(\Varid{c},\mathbin{-}){}\<[E]%
\\
\>[B]{}\mathrel{=}\mbox{\commentbegin   \ensuremath{\mathbf{instance}\;\Conid{Cartesian}\;\Varid{p}\Rightarrow \Conid{Enhancing}\;\Conid{IsProduct}\;\Varid{p}}   \commentend}{}\<[E]%
\\
\>[B]{}\hsindent{3}{}\<[3]%
\>[3]{}\Varid{dimap}\;(\Varid{fromProduct}\;@(\Varid{c},\mathbin{-}))\;(\Varid{toProduct}\;@(\Varid{c},\mathbin{-}))\hsdot{\circ }{.\;}\Varid{second}{}\<[E]%
\\
\>[B]{}\mathrel{=}\mbox{\commentbegin   \ensuremath{\Conid{Profunctor}} law   \commentend}{}\<[E]%
\\
\>[B]{}\hsindent{3}{}\<[3]%
\>[3]{}\Varid{dimap}\;(\Varid{fromProduct}\;@(\Varid{c},\mathbin{-}))\;\Varid{id}\hsdot{\circ }{.\;}\Varid{dimap}\;\Varid{id}\;(\Varid{toProduct}\;@(\Varid{c},\mathbin{-}))\hsdot{\circ }{.\;}\Varid{second}{}\<[E]%
\\
\>[B]{}\mathrel{=}\mbox{\commentbegin   \ensuremath{\mathbf{instance}\;\Conid{IsProduct}\;(\Varid{c},\mathbin{-})}   \commentend}{}\<[E]%
\\
\>[B]{}\hsindent{3}{}\<[3]%
\>[3]{}\Varid{dimap}\;(\Varid{fromProduct}\;@(\Varid{c},\mathbin{-}))\;\Varid{id}\hsdot{\circ }{.\;}\Varid{dimap}\;\Varid{id}\;(\lambda (\Varid{r},\Varid{a})\to ((\Varid{r},()),\Varid{a}))\hsdot{\circ }{.\;}\Varid{second}{}\<[E]%
\\
\>[B]{}\mathrel{=}\mbox{\commentbegin   definition of \ensuremath{\Varid{first}}   \commentend}{}\<[E]%
\\
\>[B]{}\hsindent{3}{}\<[3]%
\>[3]{}\Varid{dimap}\;(\Varid{fromProduct}\;@(\Varid{c},\mathbin{-}))\;\Varid{id}\hsdot{\circ }{.\;}\Varid{dimap}\;\Varid{id}\;(\Varid{first}\;(\lambda \Varid{r}\to (\Varid{r},())))\hsdot{\circ }{.\;}\Varid{second}{}\<[E]%
\\
\>[B]{}\mathrel{=}\mbox{\commentbegin   \ensuremath{\Conid{Cartesian}} law   \commentend}{}\<[E]%
\\
\>[B]{}\hsindent{3}{}\<[3]%
\>[3]{}\Varid{dimap}\;(\Varid{fromProduct}\;@(\Varid{c},\mathbin{-}))\;\Varid{id}\hsdot{\circ }{.\;}\Varid{dimap}\;(\Varid{first}\;(\lambda \Varid{r}\to (\Varid{r},())))\;\Varid{id}\hsdot{\circ }{.\;}\Varid{second}{}\<[E]%
\\
\>[B]{}\mathrel{=}\mbox{\commentbegin   definition of \ensuremath{\Varid{first}}   \commentend}{}\<[E]%
\\
\>[B]{}\hsindent{3}{}\<[3]%
\>[3]{}\Varid{dimap}\;(\Varid{fromProduct}\;@(\Varid{c},\mathbin{-}))\;\Varid{id}\hsdot{\circ }{.\;}\Varid{dimap}\;(\lambda (\Varid{r},\Varid{a})\to ((\Varid{r},()),\Varid{a}))\;\Varid{id}\hsdot{\circ }{.\;}\Varid{second}{}\<[E]%
\\
\>[B]{}\mathrel{=}\mbox{\commentbegin   \ensuremath{\mathbf{instance}\;\Conid{IsProduct}\;(\Varid{c},\mathbin{-})}   \commentend}{}\<[E]%
\\
\>[B]{}\hsindent{3}{}\<[3]%
\>[3]{}\Varid{dimap}\;(\Varid{fromProduct}\;@(\Varid{c},\mathbin{-}))\;\Varid{id}\hsdot{\circ }{.\;}\Varid{dimap}\;(\Varid{toProduct}\;@(\Varid{c},\mathbin{-}))\;\Varid{id}\hsdot{\circ }{.\;}\Varid{second}{}\<[E]%
\\
\>[B]{}\mathrel{=}\mbox{\commentbegin   \ensuremath{\Conid{Profunctor}} law   \commentend}{}\<[E]%
\\
\>[B]{}\hsindent{3}{}\<[3]%
\>[3]{}\Varid{dimap}\;(\Varid{toProduct}\;@(\Varid{c},\mathbin{-})\hsdot{\circ }{.\;}\Varid{fromProduct}\;@(\Varid{c},\mathbin{-}))\;\Varid{id}\hsdot{\circ }{.\;}\Varid{second}{}\<[E]%
\\
\>[B]{}\mathrel{=}\mbox{\commentbegin   \ensuremath{\Conid{IsProduct}} law   \commentend}{}\<[E]%
\\
\>[B]{}\hsindent{3}{}\<[3]%
\>[3]{}\Varid{dimap}\;\Varid{id}\;\Varid{id}\hsdot{\circ }{.\;}\Varid{second}{}\<[E]%
\\
\>[B]{}\mathrel{=}\mbox{\commentbegin   \ensuremath{\Conid{Profunctor}} law   \commentend}{}\<[E]%
\\
\>[B]{}\hsindent{3}{}\<[3]%
\>[3]{}\Varid{second}{}\<[E]%
\ColumnHook
\end{hscode}\resethooks

Direction
\ensuremath{\Conid{Enhancing}\;\Conid{IsProduct}\;\Varid{p}\Rightarrow \Conid{Cartesian}\;\Varid{p}\Rightarrow \Conid{Enhancing}\;\Conid{IsProduct}\;\Varid{p}}:

\begin{hscode}\SaveRestoreHook
\column{B}{@{}>{\hspre}l<{\hspost}@{}}%
\column{3}{@{}>{\hspre}l<{\hspost}@{}}%
\column{E}{@{}>{\hspre}l<{\hspost}@{}}%
\>[B]{}\Varid{enhance'}\;@\Varid{f}{}\<[E]%
\\
\>[B]{}\mathrel{=}\mbox{\commentbegin   \ensuremath{\mathbf{instance}\;\Conid{Cartesian}\;\Varid{p}\Rightarrow \Conid{Enhancing}\;\Conid{IsProduct}\;\Varid{p}}   \commentend}{}\<[E]%
\\
\>[B]{}\hsindent{3}{}\<[3]%
\>[3]{}\Varid{dimap}\;(\Varid{fromProduct}\;@\Varid{f})\;(\Varid{toProduct}\;@\Varid{f})\hsdot{\circ }{.\;}\Varid{second}{}\<[E]%
\\
\>[B]{}\mathrel{=}\mbox{\commentbegin   \ensuremath{\mathbf{instance}\;\Conid{Enhancing}\;\Conid{IsProduct}\;\Varid{p}\Rightarrow \Conid{Cartesian}\;\Varid{p}}   \commentend}{}\<[E]%
\\
\>[B]{}\hsindent{3}{}\<[3]%
\>[3]{}\Varid{dimap}\;(\Varid{fromProduct}\;@\Varid{f})\;(\Varid{toProduct}\;@\Varid{f})\hsdot{\circ }{.\;}\Varid{enhance}\;@(\Varid{f}\;(),\mathbin{-}){}\<[E]%
\\
\>[B]{}\mathrel{=}\mbox{\commentbegin   \ensuremath{\Conid{Profunctor}} law   \commentend}{}\<[E]%
\\
\>[B]{}\hsindent{3}{}\<[3]%
\>[3]{}\Varid{dimap}\;(\Varid{fromProduct}\;@\Varid{f})\;\Varid{id}\hsdot{\circ }{.\;}\Varid{dimap}\;\Varid{id}\;(\Varid{toProduct}\;@\Varid{f})\hsdot{\circ }{.\;}\Varid{enhance}\;@(\Varid{f}\;(),\mathbin{-}){}\<[E]%
\\
\>[B]{}\mathrel{=}\mbox{\commentbegin   \ensuremath{\Conid{Enhancing}} law   \commentend}{}\<[E]%
\\
\>[B]{}\hsindent{3}{}\<[3]%
\>[3]{}\Varid{dimap}\;(\Varid{fromProduct}\;@\Varid{f})\;\Varid{id}\hsdot{\circ }{.\;}\Varid{dimap}\;(\Varid{toProduct}\;@\Varid{f})\;\Varid{id}\hsdot{\circ }{.\;}\Varid{enhance}\;@\Varid{f}{}\<[E]%
\\
\>[B]{}\mathrel{=}\mbox{\commentbegin   \ensuremath{\Conid{Profunctor}} law   \commentend}{}\<[E]%
\\
\>[B]{}\hsindent{3}{}\<[3]%
\>[3]{}\Varid{dimap}\;(\Varid{toProduct}\;@\Varid{f}\hsdot{\circ }{.\;}\Varid{fromProduct}\;@\Varid{f})\;\Varid{id}\hsdot{\circ }{.\;}\Varid{enhance}\;@\Varid{f}{}\<[E]%
\\
\>[B]{}\mathrel{=}\mbox{\commentbegin   \ensuremath{\Conid{IsProduct}} law   \commentend}{}\<[E]%
\\
\>[B]{}\hsindent{3}{}\<[3]%
\>[3]{}\Varid{dimap}\;\Varid{id}\;\Varid{id}\hsdot{\circ }{.\;}\Varid{enhance}\;@\Varid{f}{}\<[E]%
\\
\>[B]{}\mathrel{=}\mbox{\commentbegin   \ensuremath{\Conid{Profunctor}} law   \commentend}{}\<[E]%
\\
\>[B]{}\hsindent{3}{}\<[3]%
\>[3]{}\Varid{enhance}\;@\Varid{f}{}\<[E]%
\ColumnHook
\end{hscode}\resethooks

\hypertarget{section-1}{%
\section{\texorpdfstring{\Cref{iso_is_opfam}}{}}\label{section-1}}

\label{proof_iso_is_opfam}

\begin{hscode}\SaveRestoreHook
\column{B}{@{}>{\hspre}l<{\hspost}@{}}%
\column{4}{@{}>{\hspre}l<{\hspost}@{}}%
\column{14}{@{}>{\hspre}l<{\hspost}@{}}%
\column{E}{@{}>{\hspre}l<{\hspost}@{}}%
\>[B]{}\Varid{injOptic}\;\Varid{f}\;\Varid{g}\ \circ_{op}\ \Conid{IsoOptic}\;\alpha\;\beta{}\<[E]%
\\
\>[B]{}\mathrel{=}{}\<[4]%
\>[4]{}\Conid{IsoOptic}\;(\Conid{Id}\hsdot{\circ }{.\;}\Varid{f})\;(\Varid{g}\hsdot{\circ }{.\;}\Varid{unId})\ \circ_{op}\ \Conid{IsoOptic}\;\alpha\;\beta{}\<[E]%
\\
\>[B]{}\mathrel{=}{}\<[4]%
\>[4]{}\Conid{IsoOptic}\;{}\<[14]%
\>[14]{}(\Conid{Compose}\hsdot{\circ }{.\;}\Varid{fmap}\;@\Conid{Id}\;\alpha\hsdot{\circ }{.\;}\Conid{Id}\hsdot{\circ }{.\;}\Varid{f})\;{}\<[E]%
\\
\>[14]{}(\Varid{g}\hsdot{\circ }{.\;}\Varid{unId}\hsdot{\circ }{.\;}\Varid{fmap}\;@\Conid{Id}\;\beta\hsdot{\circ }{.\;}\Varid{unCompose}){}\<[E]%
\\
\>[B]{}\mathrel{=}{}\<[4]%
\>[4]{}\Conid{IsoOptic}\;(\Conid{Compose}\hsdot{\circ }{.\;}\Conid{Id}\hsdot{\circ }{.\;}\alpha\hsdot{\circ }{.\;}\Varid{f})\;(\Varid{g}\hsdot{\circ }{.\;}\beta\hsdot{\circ }{.\;}\Varid{unId}\hsdot{\circ }{.\;}\Varid{unCompose}){}\<[E]%
\\
\>[B]{}\mathrel{=}{}\<[4]%
\>[4]{}\Conid{IsoOptic}\;(\Varid{unId}\hsdot{\circ }{.\;}\Varid{unCompose}\hsdot{\circ }{.\;}\Conid{Compose}\hsdot{\circ }{.\;}\Conid{Id}\hsdot{\circ }{.\;}\alpha\hsdot{\circ }{.\;}\Varid{f})\;(\Varid{g}\hsdot{\circ }{.\;}\beta){}\<[E]%
\\
\>[B]{}\mathrel{=}{}\<[4]%
\>[4]{}\Conid{IsoOptic}\;(\alpha\hsdot{\circ }{.\;}\Varid{f})\;(\Varid{g}\hsdot{\circ }{.\;}\beta){}\<[E]%
\ColumnHook
\end{hscode}\resethooks

Thus

\begin{hscode}\SaveRestoreHook
\column{B}{@{}>{\hspre}l<{\hspost}@{}}%
\column{E}{@{}>{\hspre}l<{\hspost}@{}}%
\>[B]{}\Varid{injOptic}\;\Varid{id}\;\Varid{id}\ \circ_{op}\ \Varid{l}\mathrel{=}\Varid{l}{}\<[E]%
\ColumnHook
\end{hscode}\resethooks

and

\begin{hscode}\SaveRestoreHook
\column{B}{@{}>{\hspre}l<{\hspost}@{}}%
\column{E}{@{}>{\hspre}l<{\hspost}@{}}%
\>[B]{}\Varid{injOptic}\;\Varid{f}\;\Varid{g}\ \circ_{op}\ \Varid{injOptic}\;\Varid{f'}\;\Varid{g'}\mathrel{=}\Varid{injOptic}\;(\Varid{f'}\hsdot{\circ }{.\;}\Varid{f})\;(\Varid{g}\hsdot{\circ }{.\;}\Varid{g'}){}\<[E]%
\ColumnHook
\end{hscode}\resethooks

\smallskip

\begin{hscode}\SaveRestoreHook
\column{B}{@{}>{\hspre}l<{\hspost}@{}}%
\column{4}{@{}>{\hspre}l<{\hspost}@{}}%
\column{14}{@{}>{\hspre}l<{\hspost}@{}}%
\column{E}{@{}>{\hspre}l<{\hspost}@{}}%
\>[B]{}\Conid{IsoOptic}\;\alpha\;\beta\ \circ_{op}\ \Varid{injOptic}\;\Varid{id}\;\Varid{id}{}\<[E]%
\\
\>[B]{}\mathrel{=}{}\<[4]%
\>[4]{}\Conid{IsoOptic}\;\alpha\;\beta\ \circ_{op}\ \Conid{IsoOptic}\;\Conid{Id}\;\Varid{unId}{}\<[E]%
\\
\>[B]{}\mathrel{=}{}\<[4]%
\>[4]{}\Conid{IsoOptic}\;{}\<[14]%
\>[14]{}(\Conid{Compose}\hsdot{\circ }{.\;}\Varid{fmap}\;@\Varid{f}\;\Conid{Id}\hsdot{\circ }{.\;}\alpha)\;(\beta\hsdot{\circ }{.\;}\Varid{fmap}\;@\Varid{f}\;\Varid{unId}\hsdot{\circ }{.\;}\Varid{unCompose}){}\<[E]%
\\
\>[B]{}\mathrel{=}\mbox{\commentbegin   equality of iso optics   \commentend}{}\<[E]%
\\
\>[B]{}\hsindent{4}{}\<[4]%
\>[4]{}\Conid{IsoOptic}\;(\Varid{fmap}\;@\Varid{f}\;\Varid{unId}\hsdot{\circ }{.\;}\Varid{unCompose}\hsdot{\circ }{.\;}\Conid{Compose}\hsdot{\circ }{.\;}\Varid{fmap}\;@\Varid{f}\;\Conid{Id}\hsdot{\circ }{.\;}\alpha)\;\beta{}\<[E]%
\\
\>[B]{}\mathrel{=}{}\<[4]%
\>[4]{}\Conid{IsoOptic}\;\alpha\;\beta{}\<[E]%
\ColumnHook
\end{hscode}\resethooks

Thus

\begin{hscode}\SaveRestoreHook
\column{B}{@{}>{\hspre}l<{\hspost}@{}}%
\column{E}{@{}>{\hspre}l<{\hspost}@{}}%
\>[B]{}\Varid{l}\ \circ_{op}\ \Varid{injOptic}\;\Varid{id}\;\Varid{id}\mathrel{=}\Varid{l}{}\<[E]%
\ColumnHook
\end{hscode}\resethooks

\smallskip

\begin{hscode}\SaveRestoreHook
\column{B}{@{}>{\hspre}l<{\hspost}@{}}%
\column{4}{@{}>{\hspre}l<{\hspost}@{}}%
\column{E}{@{}>{\hspre}l<{\hspost}@{}}%
\>[B]{}\Varid{mapOptic}\;(\Varid{injOptic}\;\Varid{f}\;\Varid{g})\;\Varid{h}{}\<[E]%
\\
\>[B]{}\mathrel{=}{}\<[4]%
\>[4]{}\Varid{g}\hsdot{\circ }{.\;}\Varid{unId}\hsdot{\circ }{.\;}\Varid{fmap}\;\Varid{h}\hsdot{\circ }{.\;}\Conid{Id}\hsdot{\circ }{.\;}\Varid{f}{}\<[E]%
\\
\>[B]{}\mathrel{=}{}\<[4]%
\>[4]{}\Varid{g}\hsdot{\circ }{.\;}\Varid{h}\hsdot{\circ }{.\;}\Varid{f}{}\<[E]%
\\[\blanklineskip]%
\>[B]{}\Varid{mapOptic}\;(\Conid{IsoOptic}\;\alpha\;\beta\ \circ_{op}\ \Conid{IsoOptic}\;\alpha'\;\beta')\;\Varid{h}{}\<[E]%
\\
\>[B]{}\mathrel{=}{}\<[4]%
\>[4]{}\beta\hsdot{\circ }{.\;}\Varid{fmap}\;\beta'\hsdot{\circ }{.\;}\Varid{unCompose}\hsdot{\circ }{.\;}\Varid{fmap}\;\Varid{h}\hsdot{\circ }{.\;}\Conid{Compose}\hsdot{\circ }{.\;}\Varid{fmap}\;\alpha'\hsdot{\circ }{.\;}\alpha{}\<[E]%
\\
\>[B]{}\mathrel{=}{}\<[4]%
\>[4]{}\beta\hsdot{\circ }{.\;}\Varid{fmap}\;\beta'\hsdot{\circ }{.\;}\Varid{fmap}\;(\Varid{fmap}\;\Varid{h})\hsdot{\circ }{.\;}\Varid{fmap}\;\alpha'\hsdot{\circ }{.\;}\alpha{}\<[E]%
\\
\>[B]{}\mathrel{=}{}\<[4]%
\>[4]{}\beta\hsdot{\circ }{.\;}\Varid{fmap}\;(\beta'\hsdot{\circ }{.\;}\Varid{fmap}\;\Varid{h}\hsdot{\circ }{.\;}\alpha')\hsdot{\circ }{.\;}\alpha{}\<[E]%
\\
\>[B]{}\mathrel{=}{}\<[4]%
\>[4]{}\Varid{mapOptic}\;(\Conid{IsoOptic}\;\alpha\;\beta)\;(\Varid{mapOptic}\;(\Conid{IsoOptic}\;\alpha'\;\beta')\;\Varid{h}){}\<[E]%
\ColumnHook
\end{hscode}\resethooks

Therefore \ensuremath{\Conid{IsoOptic}\;\sigma} is a valid optic family.

\hypertarget{section-2}{%
\section{\texorpdfstring{\Cref{enhancetoarrow_is_opfam_morphism}}{}}\label{section-2}}

\label{proof_enhancetoarrow_is_opfam_morphism}

Preservation of \ensuremath{\Varid{injOptic}}:

\begin{hscode}\SaveRestoreHook
\column{B}{@{}>{\hspre}l<{\hspost}@{}}%
\column{4}{@{}>{\hspre}l<{\hspost}@{}}%
\column{E}{@{}>{\hspre}l<{\hspost}@{}}%
\>[B]{}\Varid{enhanceToArrow}\;\Varid{enhanceOp}\;(\Varid{injOptic}\;\Varid{f}\;\Varid{g}){}\<[E]%
\\
\>[B]{}\mathrel{=}{}\<[4]%
\>[4]{}\Varid{injOptic}\;(\Conid{Id}\hsdot{\circ }{.\;}\Varid{f})\;(\Varid{g}\hsdot{\circ }{.\;}\Varid{unId})\ \circ_{op}\ \Varid{enhanceOp}\;@\Conid{Id}{}\<[E]%
\\
\>[B]{}\mathrel{=}\mbox{\commentbegin   \ensuremath{\Varid{injOptic}} law   \commentend}{}\<[E]%
\\
\>[B]{}\hsindent{4}{}\<[4]%
\>[4]{}\Varid{injOptic}\;\Varid{f}\;\Varid{g}\ \circ_{op}\ \Varid{injOptic}\;\Conid{Id}\;\Varid{unId}\ \circ_{op}\ \Varid{enhanceOp}\;@\Conid{Id}{}\<[E]%
\\
\>[B]{}\mathrel{=}\mbox{\commentbegin   \ensuremath{\Conid{Enhanceable}} law   \commentend}{}\<[E]%
\\
\>[B]{}\hsindent{4}{}\<[4]%
\>[4]{}\Varid{injOptic}\;\Varid{f}\;\Varid{g}\ \circ_{op}\ \Varid{injOptic}\;\Conid{Id}\;\Varid{unId}\ \circ_{op}\ \Varid{injOptic}\;\Varid{unId}\;\Conid{Id}{}\<[E]%
\\
\>[B]{}\mathrel{=}\mbox{\commentbegin   \ensuremath{\Varid{injOptic}} law   \commentend}{}\<[E]%
\\
\>[B]{}\hsindent{4}{}\<[4]%
\>[4]{}\Varid{injOptic}\;\Varid{f}\;\Varid{g}\ \circ_{op}\ \Varid{injOptic}\;(\Varid{unId}\hsdot{\circ }{.\;}\Conid{Id})\;(\Varid{unId}\hsdot{\circ }{.\;}\Conid{Id}){}\<[E]%
\\
\>[B]{}\mathrel{=}{}\<[4]%
\>[4]{}\Varid{injOptic}\;\Varid{f}\;\Varid{g}\ \circ_{op}\ \Varid{injOptic}\;\Varid{id}\;\Varid{id}{}\<[E]%
\\
\>[B]{}\mathrel{=}\mbox{\commentbegin   \ensuremath{\Conid{OpticFamily}} law   \commentend}{}\<[E]%
\\
\>[B]{}\hsindent{4}{}\<[4]%
\>[4]{}\Varid{injOptic}\;\Varid{f}\;\Varid{g}{}\<[E]%
\ColumnHook
\end{hscode}\resethooks

Preservation of composition:

\begin{hscode}\SaveRestoreHook
\column{B}{@{}>{\hspre}l<{\hspost}@{}}%
\column{4}{@{}>{\hspre}l<{\hspost}@{}}%
\column{7}{@{}>{\hspre}l<{\hspost}@{}}%
\column{E}{@{}>{\hspre}l<{\hspost}@{}}%
\>[B]{}\Varid{enhanceToArrow}\;\Varid{enhanceOp}\;(\Conid{IsoOptic}\;\alpha\;\beta)\ \circ_{op}\ {}\<[E]%
\\
\>[B]{}\hsindent{7}{}\<[7]%
\>[7]{}\Varid{enhanceToArrow}\;\Varid{enhanceOp}\;(\Conid{IsoOptic}\;\alpha'\;\beta'){}\<[E]%
\\
\>[B]{}\mathrel{=}{}\<[4]%
\>[4]{}\Varid{injOptic}\;\alpha\;\beta\ \circ_{op}\ \Varid{enhanceOp}\;@\Varid{f}\ \circ_{op}\ \Varid{injOptic}\;\alpha'\;\beta'\ \circ_{op}\ \Varid{enhanceOp}\;@\Varid{g}{}\<[E]%
\\
\>[B]{}\mathrel{=}\mbox{\commentbegin   \ensuremath{\Varid{enhanceOp}} law   \commentend}{}\<[E]%
\\
\>[B]{}\hsindent{4}{}\<[4]%
\>[4]{}\Varid{injOptic}\;\alpha\;\beta\ \circ_{op}\ \Varid{injOptic}\;(\Varid{fmap}\;\alpha')\;(\Varid{fmap}\;\beta')\ \circ_{op}\ {}\<[E]%
\\
\>[4]{}\hsindent{3}{}\<[7]%
\>[7]{}\Varid{enhanceOp}\;@\Varid{f}\ \circ_{op}\ \Varid{enhanceOp}\;@\Varid{g}{}\<[E]%
\\
\>[B]{}\mathrel{=}\mbox{\commentbegin   \ensuremath{\Varid{injOptic}} law   \commentend}{}\<[E]%
\\
\>[B]{}\hsindent{4}{}\<[4]%
\>[4]{}\Varid{injOptic}\;(\Varid{fmap}\;\alpha'\hsdot{\circ }{.\;}\alpha)\;(\beta\hsdot{\circ }{.\;}\Varid{fmap}\;\beta')\ \circ_{op}\ {}\<[E]%
\\
\>[4]{}\hsindent{3}{}\<[7]%
\>[7]{}\Varid{enhanceOp}\;@\Varid{f}\ \circ_{op}\ \Varid{enhanceOp}\;@\Varid{g}{}\<[E]%
\\
\>[B]{}\mathrel{=}\mbox{\commentbegin   \ensuremath{\Varid{injOptic}} law   \commentend}{}\<[E]%
\\
\>[B]{}\hsindent{4}{}\<[4]%
\>[4]{}\Varid{injOptic}\;(\Conid{Compose}\hsdot{\circ }{.\;}\Varid{fmap}\;\alpha'\hsdot{\circ }{.\;}\alpha)\;(\beta\hsdot{\circ }{.\;}\Varid{fmap}\;\beta'\hsdot{\circ }{.\;}\Varid{unCompose})\ \circ_{op}\ {}\<[E]%
\\
\>[4]{}\hsindent{3}{}\<[7]%
\>[7]{}\Varid{injOptic}\;\Varid{unCompose}\;\Conid{Compose}\ \circ_{op}\ \Varid{enhanceOp}\;@\Varid{f}\ \circ_{op}\ \Varid{enhanceOp}\;@\Varid{g}{}\<[E]%
\\
\>[B]{}\mathrel{=}\mbox{\commentbegin   \ensuremath{\Varid{enhanceOp}} law   \commentend}{}\<[E]%
\\
\>[B]{}\hsindent{4}{}\<[4]%
\>[4]{}\Varid{injOptic}\;(\Conid{Compose}\hsdot{\circ }{.\;}\Varid{fmap}\;\alpha'\hsdot{\circ }{.\;}\alpha)\;(\beta\hsdot{\circ }{.\;}\Varid{fmap}\;\beta'\hsdot{\circ }{.\;}\Varid{unCompose})\ \circ_{op}\ {}\<[E]%
\\
\>[4]{}\hsindent{3}{}\<[7]%
\>[7]{}\Varid{enhanceOp}\;@(\Conid{Compose}\;\Varid{f}\;\Varid{g}){}\<[E]%
\\
\>[B]{}\mathrel{=}{}\<[4]%
\>[4]{}\Varid{enhanceToArrow}\;\Varid{enhanceOp}\;(\Conid{IsoOptic}\;\alpha\;\beta\ \circ_{op}\ \Conid{IsoOptic}\;\alpha'\;\beta'){}\<[E]%
\ColumnHook
\end{hscode}\resethooks

Preservation of \ensuremath{\Varid{mapOptic}}:

\begin{hscode}\SaveRestoreHook
\column{B}{@{}>{\hspre}l<{\hspost}@{}}%
\column{4}{@{}>{\hspre}l<{\hspost}@{}}%
\column{E}{@{}>{\hspre}l<{\hspost}@{}}%
\>[B]{}\Varid{mapOptic}\;(\Varid{enhanceToArrow}\;\Varid{enhanceOp}\;(\Conid{IsoOptic}\;\alpha\;\beta)){}\<[E]%
\\
\>[B]{}\mathrel{=}{}\<[4]%
\>[4]{}\Varid{mapOptic}\;(\Varid{injOptic}\;\alpha\;\beta\ \circ_{op}\ \Varid{enhanceOp}){}\<[E]%
\\
\>[B]{}\mathrel{=}\mbox{\commentbegin   \ensuremath{\Varid{mapOptic}} law   \commentend}{}\<[E]%
\\
\>[B]{}\hsindent{4}{}\<[4]%
\>[4]{}\Varid{mapOptic}\;(\Varid{injOptic}\;\alpha\;\beta)\hsdot{\circ }{.\;}\Varid{mapOptic}\;\Varid{enhanceOp}{}\<[E]%
\\
\>[B]{}\mathrel{=}\mbox{\commentbegin   \ensuremath{\Varid{mapOptic}} law   \commentend}{}\<[E]%
\\
\>[B]{}\hsindent{4}{}\<[4]%
\>[4]{}(\lambda \Varid{f}\to \beta\hsdot{\circ }{.\;}\Varid{f}\hsdot{\circ }{.\;}\alpha)\hsdot{\circ }{.\;}\Varid{mapOptic}\;\Varid{enhanceOp}{}\<[E]%
\\
\>[B]{}\mathrel{=}{}\<[4]%
\>[4]{}\lambda \Varid{f}\to \beta\hsdot{\circ }{.\;}\Varid{mapOptic}\;\Varid{enhanceOp}\;\Varid{f}\hsdot{\circ }{.\;}\alpha{}\<[E]%
\\
\>[B]{}\mathrel{=}\mbox{\commentbegin   \ensuremath{\Varid{enhanceOp}} law   \commentend}{}\<[E]%
\\
\>[B]{}\hsindent{4}{}\<[4]%
\>[4]{}\lambda \Varid{f}\to \beta\hsdot{\circ }{.\;}\Varid{fmap}\;\Varid{f}\hsdot{\circ }{.\;}\alpha{}\<[E]%
\\
\>[B]{}\mathrel{=}{}\<[4]%
\>[4]{}\Varid{mapOptic}\;(\Conid{IsoOptic}\;\alpha\;\beta){}\<[E]%
\ColumnHook
\end{hscode}\resethooks

\hypertarget{section-3}{%
\section{\texorpdfstring{\Cref{functorize_is_fmonoid}}{}}\label{section-3}}

\label{proof_functorize_is_fmonoid}

\begin{hscode}\SaveRestoreHook
\column{B}{@{}>{\hspre}l<{\hspost}@{}}%
\column{4}{@{}>{\hspre}l<{\hspost}@{}}%
\column{E}{@{}>{\hspre}l<{\hspost}@{}}%
\>[B]{}\Varid{mapOptic}\;(\Varid{enhanceFop}\;@\Conid{Id})\;\Varid{id}{}\<[E]%
\\
\>[B]{}\mathrel{=}{}\<[4]%
\>[4]{}\Varid{mapOptic}\;(\Varid{injOp}\;\Varid{unId}\;\Conid{Id})\;\Varid{id}{}\<[E]%
\\
\>[B]{}\mathrel{=}\mbox{\commentbegin   \ensuremath{\Varid{mapOptic}} law   \commentend}{}\<[E]%
\\
\>[B]{}\hsindent{4}{}\<[4]%
\>[4]{}\Conid{Id}\hsdot{\circ }{.\;}\Varid{id}\hsdot{\circ }{.\;}\Varid{unId}{}\<[E]%
\\
\>[B]{}\mathrel{=}{}\<[4]%
\>[4]{}\Varid{id}{}\<[E]%
\\[\blanklineskip]%
\>[B]{}\Varid{enhanceFop}\;@\Conid{Id}\ \circ_{op}\ \Varid{injOptic}\;\Varid{f}\;\Varid{g}{}\<[E]%
\\
\>[B]{}\mathrel{=}{}\<[4]%
\>[4]{}\Varid{injOptic}\;(\Varid{f}\hsdot{\circ }{.\;}\Varid{unId})\;(\Conid{Id}\hsdot{\circ }{.\;}\Varid{g}){}\<[E]%
\\
\>[B]{}\mathrel{=}\mbox{\commentbegin   \ensuremath{\mathbf{instance}\;\Conid{Functor}\;\Conid{Id}}   \commentend}{}\<[E]%
\\
\>[B]{}\hsindent{4}{}\<[4]%
\>[4]{}\Varid{injOptic}\;(\Varid{unId}\hsdot{\circ }{.\;}\Varid{fmap}\;\Varid{f})\;(\Varid{fmap}\;\Varid{g}\hsdot{\circ }{.\;}\Conid{Id}){}\<[E]%
\\
\>[B]{}\mathrel{=}\Varid{injOptic}\;(\Varid{fmap}\;\Varid{f})\;(\Varid{fmap}\;\Varid{g})\ \circ_{op}\ \Varid{enhanceFop}\;@\Conid{Id}{}\<[E]%
\\[\blanklineskip]%
\>[B]{}\Varid{injOptic}\;\Varid{id}\;\alpha\ \circ_{op}\ \Varid{enhanceFop}\;@\Conid{Id}{}\<[E]%
\\
\>[B]{}\mathrel{=}\Varid{injOptic}\;\Varid{unId}\;(\alpha\hsdot{\circ }{.\;}\Conid{Id}){}\<[E]%
\ColumnHook
\end{hscode}\resethooks

\smallskip

\begin{hscode}\SaveRestoreHook
\column{B}{@{}>{\hspre}l<{\hspost}@{}}%
\column{4}{@{}>{\hspre}l<{\hspost}@{}}%
\column{7}{@{}>{\hspre}l<{\hspost}@{}}%
\column{9}{@{}>{\hspre}l<{\hspost}@{}}%
\column{E}{@{}>{\hspre}l<{\hspost}@{}}%
\>[B]{}\Varid{mapOptic}\;(\Varid{enhanceFop}\;@(\Conid{Compose}\;\Varid{f}\;\Varid{g}))\;\Varid{id}{}\<[E]%
\\
\>[B]{}\mathrel{=}\mbox{\commentbegin   \ensuremath{\Varid{mapOptic}} law   \commentend}{}\<[E]%
\\
\>[B]{}\hsindent{4}{}\<[4]%
\>[4]{}(\Varid{mapOptic}\;(\Varid{injOptic}\;\Varid{unCompose}\;\Conid{Compose}){}\<[E]%
\\
\>[4]{}\hsindent{3}{}\<[7]%
\>[7]{}\hsdot{\circ }{.\;}\Varid{mapOptic}\;(\Varid{enhanceFop}\;@\Varid{f}){}\<[E]%
\\
\>[4]{}\hsindent{3}{}\<[7]%
\>[7]{}\hsdot{\circ }{.\;}\Varid{mapOptic}\;(\Varid{enhanceFop}\;@\Varid{g}))\;\Varid{id}{}\<[E]%
\\
\>[B]{}\mathrel{=}\mbox{\commentbegin   \ensuremath{\Varid{enhanceFop}} law   \commentend}{}\<[E]%
\\
\>[B]{}\hsindent{4}{}\<[4]%
\>[4]{}(\Varid{mapOptic}\;(\Varid{injOptic}\;\Varid{unCompose}\;\Conid{Compose}))\;{}\<[E]%
\\
\>[4]{}\hsindent{3}{}\<[7]%
\>[7]{}(\Varid{mapOptic}\;(\Varid{enhanceFop}\;@\Varid{f})\;\Varid{id}){}\<[E]%
\\
\>[B]{}\mathrel{=}\mbox{\commentbegin   \ensuremath{\Varid{enhanceFop}} law   \commentend}{}\<[E]%
\\
\>[B]{}\hsindent{4}{}\<[4]%
\>[4]{}\Varid{mapOptic}\;(\Varid{injOptic}\;\Varid{unCompose}\;\Conid{Compose})\;\Varid{id}{}\<[E]%
\\
\>[B]{}\mathrel{=}\mbox{\commentbegin   \ensuremath{\Varid{mapOptic}} law   \commentend}{}\<[E]%
\\
\>[B]{}\hsindent{4}{}\<[4]%
\>[4]{}\Conid{Compose}\hsdot{\circ }{.\;}\Varid{id}\hsdot{\circ }{.\;}\Varid{unCompose}{}\<[E]%
\\
\>[B]{}\mathrel{=}\Varid{id}{}\<[E]%
\\[\blanklineskip]%
\>[B]{}\Varid{enhanceFop}\;@(\Conid{Compose}\;\Varid{f}\;\Varid{g})\ \circ_{op}\ \Varid{injOptic}\;\Varid{f}\;\Varid{g}{}\<[E]%
\\
\>[B]{}\mathrel{=}{}\<[4]%
\>[4]{}\Varid{injOptic}\;\Varid{unCompose}\;\Conid{Compose}\ \circ_{op}\ \Varid{enhanceFop}\;@\Varid{f}\ \circ_{op}\ {}\<[E]%
\\
\>[4]{}\hsindent{5}{}\<[9]%
\>[9]{}\Varid{enhanceFop}\;@\Varid{g}\ \circ_{op}\ \Varid{injOptic}\;\Varid{f}\;\Varid{g}{}\<[E]%
\\
\>[B]{}\mathrel{=}{}\<[4]%
\>[4]{}\Varid{injOptic}\;\Varid{unCompose}\;\Conid{Compose}\ \circ_{op}\ {}\<[E]%
\\
\>[4]{}\hsindent{3}{}\<[7]%
\>[7]{}\Varid{injOptic}\;(\Varid{fmap}\;(\Varid{fmap}\;\Varid{f}))\;(\Varid{fmap}\;(\Varid{fmap}\;\Varid{g}))\ \circ_{op}\ {}\<[E]%
\\
\>[4]{}\hsindent{3}{}\<[7]%
\>[7]{}\Varid{enhanceFop}\;@\Varid{f}\ \circ_{op}\ \Varid{enhanceFop}\;@\Varid{g}{}\<[E]%
\\
\>[B]{}\mathrel{=}\mbox{\commentbegin   \ensuremath{\mathbf{instance}\;\Conid{Functor}\;\Conid{Compose}}   \commentend}{}\<[E]%
\\
\>[B]{}\hsindent{4}{}\<[4]%
\>[4]{}\Varid{injOptic}\;(\Varid{fmap}\;\Varid{f})\;(\Varid{fmap}\;\Varid{g})\ \circ_{op}\ \Varid{enhanceFop}\;@(\Conid{Compose}\;\Varid{f}\;\Varid{g}){}\<[E]%
\ColumnHook
\end{hscode}\resethooks

\hypertarget{section-4}{%
\section{\texorpdfstring{\Cref{deriving-a-profunctor-encoding-3}}{}}\label{section-4}}

\label{proof_deriving-a-profunctor-encoding-3}

\begin{hscode}\SaveRestoreHook
\column{B}{@{}>{\hspre}l<{\hspost}@{}}%
\column{4}{@{}>{\hspre}l<{\hspost}@{}}%
\column{8}{@{}>{\hspre}l<{\hspost}@{}}%
\column{10}{@{}>{\hspre}l<{\hspost}@{}}%
\column{11}{@{}>{\hspre}l<{\hspost}@{}}%
\column{27}{@{}>{\hspre}l<{\hspost}@{}}%
\column{28}{@{}>{\hspre}l<{\hspost}@{}}%
\column{E}{@{}>{\hspre}l<{\hspost}@{}}%
\>[B]{}\Varid{unfunctorize}\;(\Conid{IsoOptic}\;\alpha\;\beta){}\<[E]%
\\
\>[B]{}\mathrel{=}{}\<[4]%
\>[4]{}\Varid{injOptic}\;\alpha\;\beta\ \circ_{op}\ \Varid{enhanceFop}{}\<[E]%
\\
\>[B]{}\mathrel{=}{}\<[4]%
\>[4]{}\Varid{injOptic}\;\alpha\;\beta\ \circ_{op}\ (\Conid{AchLens}\;\Varid{snd}\;(\lambda \Varid{b}\to \Varid{fmap}\;(\Varid{const}\;\Varid{b}))\;((,)\;\Conid{Nothing})){}\<[E]%
\\
\>[B]{}\mathrel{=}{}\<[4]%
\>[4]{}\Conid{AchLens}\;(\Varid{snd}\hsdot{\circ }{.\;}\alpha)\;(\lambda \Varid{b}\to \beta\hsdot{\circ }{.\;}\Varid{fmap}\;(\Varid{const}\;\Varid{b})\hsdot{\circ }{.\;}\alpha)\;(\beta\hsdot{\circ }{.\;}(,)\;\Conid{Nothing}){}\<[E]%
\\[\blanklineskip]%
\>[B]{}\Varid{unfunctorize}\;(\Varid{achLensToIso}\;(\Conid{AchLens}\;\Varid{get}\;\Varid{put}\;\Varid{create})){}\<[E]%
\\
\>[B]{}\mathrel{=}{}\<[4]%
\>[4]{}\Varid{unfunctorize}\;(\Conid{IsoOptic}\;{}\<[28]%
\>[28]{}(\lambda \Varid{s}\to (\Conid{Just}\;\Varid{s},\Varid{get}\;\Varid{s}))\;{}\<[E]%
\\
\>[28]{}(\lambda (\Varid{ms},\Varid{b})\to \Varid{maybe}\;(\Varid{create}\;\Varid{b})\;(\Varid{put}\;\Varid{b})\;\Varid{ms}){}\<[E]%
\\
\>[B]{}\mathrel{=}{}\<[4]%
\>[4]{}\Conid{AchLens}\;{}\<[E]%
\\
\>[4]{}\hsindent{4}{}\<[8]%
\>[8]{}(\Varid{snd}\hsdot{\circ }{.\;}(\lambda \Varid{s}\to (\Conid{Just}\;\Varid{s},\Varid{get}\;\Varid{s})))\;{}\<[E]%
\\
\>[4]{}\hsindent{4}{}\<[8]%
\>[8]{}(\lambda \Varid{b}\to (\lambda (\Varid{ms},\Varid{b})\to \Varid{maybe}\;(\Varid{create}\;\Varid{b})\;(\Varid{put}\;\Varid{b})\;\Varid{ms})\hsdot{\circ }{.\;}{}\<[E]%
\\
\>[8]{}\hsindent{3}{}\<[11]%
\>[11]{}\Varid{fmap}\;(\Varid{const}\;\Varid{b})\hsdot{\circ }{.\;}(\lambda \Varid{s}\to (\Conid{Just}\;\Varid{s},\Varid{get}\;\Varid{s})))\;{}\<[E]%
\\
\>[4]{}\hsindent{4}{}\<[8]%
\>[8]{}((\lambda (\Varid{ms},\Varid{b})\to \Varid{maybe}\;(\Varid{create}\;\Varid{b})\;(\Varid{put}\;\Varid{b})\;\Varid{ms})\hsdot{\circ }{.\;}(,)\;\Conid{Nothing}){}\<[E]%
\\
\>[B]{}\mathrel{=}{}\<[4]%
\>[4]{}\Conid{AchLens}\;{}\<[E]%
\\
\>[4]{}\hsindent{4}{}\<[8]%
\>[8]{}\Varid{get}\;{}\<[E]%
\\
\>[4]{}\hsindent{4}{}\<[8]%
\>[8]{}(\lambda \Varid{b}\to (\lambda (\Varid{ms},\Varid{b})\to \Varid{maybe}\;(\Varid{create}\;\Varid{b})\;(\Varid{put}\;\Varid{b})\;\Varid{ms})\hsdot{\circ }{.\;}(\lambda \Varid{s}\to (\Conid{Just}\;\Varid{s},\Varid{b})))\;{}\<[E]%
\\
\>[4]{}\hsindent{4}{}\<[8]%
\>[8]{}\Varid{create}{}\<[E]%
\\
\>[B]{}\mathrel{=}{}\<[4]%
\>[4]{}\Conid{AchLens}\;\Varid{get}\;(\lambda \Varid{b}\to (\lambda \Varid{s}\to \Varid{put}\;\Varid{b}\;\Varid{s}))\;\Varid{create}{}\<[E]%
\\
\>[B]{}\mathrel{=}{}\<[4]%
\>[4]{}\Conid{AchLens}\;\Varid{get}\;\Varid{put}\;\Varid{create}{}\<[E]%
\\[\blanklineskip]%
\>[B]{}\Varid{achLensToIso}\;(\Varid{unfunctorize}\;(\Conid{IsoOptic}\;\alpha\;\beta)){}\<[E]%
\\
\>[B]{}\mathrel{=}{}\<[4]%
\>[4]{}\Varid{achLensToIso}\;(\Conid{AchLens}\;{}\<[27]%
\>[27]{}(\Varid{snd}\hsdot{\circ }{.\;}\alpha)\;{}\<[E]%
\\
\>[27]{}(\lambda \Varid{b}\to \beta\hsdot{\circ }{.\;}\Varid{fmap}\;(\Varid{const}\;\Varid{b})\hsdot{\circ }{.\;}\alpha)\;{}\<[E]%
\\
\>[27]{}(\beta\hsdot{\circ }{.\;}(,)\;\Conid{Nothing})){}\<[E]%
\\
\>[B]{}\mathrel{=}{}\<[4]%
\>[4]{}\Conid{IsoOptic}\;{}\<[E]%
\\
\>[4]{}\hsindent{4}{}\<[8]%
\>[8]{}(\lambda \Varid{s}\to (\Conid{Just}\;\Varid{s},\Varid{snd}\;(\alpha\;\Varid{s})))\;{}\<[E]%
\\
\>[4]{}\hsindent{4}{}\<[8]%
\>[8]{}(\lambda \mathbf{case}{}\<[E]%
\\
\>[8]{}\hsindent{2}{}\<[10]%
\>[10]{}(\Conid{Just}\;\Varid{s},\Varid{b})\to \beta\hsdot{\circ }{.\;}\Varid{fmap}\;(\Varid{const}\;\Varid{b})\hsdot{\circ }{.\;}\alpha\mathbin{\$}\Varid{s})\;{}\<[E]%
\\
\>[8]{}\hsindent{2}{}\<[10]%
\>[10]{}(\Conid{Nothing},\Varid{b})\to \beta\;(\Conid{Nothing},\Varid{b}){}\<[E]%
\\
\>[4]{}\hsindent{4}{}\<[8]%
\>[8]{}){}\<[E]%
\\
\>[B]{}\mathrel{=}{}\<[4]%
\>[4]{}\Conid{IsoOptic}\;{}\<[E]%
\\
\>[4]{}\hsindent{4}{}\<[8]%
\>[8]{}(\lambda \Varid{s}\to (\Conid{Just}\;\Varid{s},\Varid{snd}\;(\alpha\;\Varid{s})))\;{}\<[E]%
\\
\>[4]{}\hsindent{4}{}\<[8]%
\>[8]{}(\beta\hsdot{\circ }{.\;}\lambda \mathbf{case}{}\<[E]%
\\
\>[8]{}\hsindent{2}{}\<[10]%
\>[10]{}(\Conid{Just}\;\Varid{s},\Varid{b})\to (\Varid{fst}\;(\alpha\;\Varid{s}),\Varid{b}){}\<[E]%
\\
\>[8]{}\hsindent{2}{}\<[10]%
\>[10]{}(\Conid{Nothing},\Varid{b})\to (\Conid{Nothing},\Varid{b}){}\<[E]%
\\
\>[4]{}\hsindent{4}{}\<[8]%
\>[8]{}){}\<[E]%
\\
\>[B]{}\mathrel{=}{}\<[4]%
\>[4]{}\Conid{IsoOptic}\;{}\<[E]%
\\
\>[4]{}\hsindent{4}{}\<[8]%
\>[8]{}(\lambda \Varid{s}\to (\Conid{Just}\;\Varid{s},\Varid{snd}\;(\alpha\;\Varid{s})))\;{}\<[E]%
\\
\>[4]{}\hsindent{4}{}\<[8]%
\>[8]{}(\beta\hsdot{\circ }{.\;}\Varid{first}\;(\lambda \mathbf{case}{}\<[E]%
\\
\>[8]{}\hsindent{2}{}\<[10]%
\>[10]{}\Conid{Just}\;\Varid{s}\to \Varid{fst}\;(\alpha\;\Varid{s}){}\<[E]%
\\
\>[8]{}\hsindent{2}{}\<[10]%
\>[10]{}\Conid{Nothing}\to \Conid{Nothing}{}\<[E]%
\\
\>[4]{}\hsindent{4}{}\<[8]%
\>[8]{})){}\<[E]%
\\
\>[B]{}\mathrel{=}{}\<[4]%
\>[4]{}\Conid{IsoOptic}\;(\lambda \Varid{s}\to (\Conid{Just}\;\Varid{s},\Varid{snd}\;(\alpha\;\Varid{s})))\;(\beta\hsdot{\circ }{.\;}\Varid{first}\;(\Varid{maybe}\;\Conid{Nothing}\;(\Varid{fst}\hsdot{\circ }{.\;}\alpha))){}\<[E]%
\\
\>[B]{}\mathrel{=}\mbox{\commentbegin   equality of iso optics   \commentend}{}\<[E]%
\\
\>[B]{}\hsindent{4}{}\<[4]%
\>[4]{}\Conid{IsoOptic}\;(\Varid{first}\;(\Varid{maybe}\;\Conid{Nothing}\;(\Varid{fst}\hsdot{\circ }{.\;}\alpha))\hsdot{\circ }{.\;}\lambda \Varid{s}\to (\Conid{Just}\;\Varid{s},\Varid{snd}\;(\alpha\;\Varid{s})))\;\beta{}\<[E]%
\\
\>[B]{}\mathrel{=}{}\<[4]%
\>[4]{}\Conid{IsoOptic}\;(\lambda \Varid{s}\to (\Varid{maybe}\;\Conid{Nothing}\;(\Varid{fst}\hsdot{\circ }{.\;}\alpha)\;(\Conid{Just}\;\Varid{s}),\Varid{snd}\;(\alpha\;\Varid{s})))\;\beta{}\<[E]%
\\
\>[B]{}\mathrel{=}{}\<[4]%
\>[4]{}\Conid{IsoOptic}\;(\lambda \Varid{s}\to \Varid{fst}\;(\alpha\;\Varid{s}),\Varid{snd}\;(\alpha\;\Varid{s})))\;\beta{}\<[E]%
\\
\>[B]{}\mathrel{=}{}\<[4]%
\>[4]{}\Conid{IsoOptic}\;\alpha\;\beta{}\<[E]%
\ColumnHook
\end{hscode}\resethooks

\end{appendices}

\end{document}